\definecolor{Dark Ruby Red}{HTML}{a51c1f}
\definecolor{Dark Blue Sapphire}{HTML}{0e6072}
\definecolor{Dark Gamboge}{HTML}{be7c00}
\definecolor{Blue Sapphire}{HTML}{005f73} 
\definecolor{Blue Matt}{HTML}{094c7b} 
\definecolor{Blue Dark}{HTML}{1B4C6E} 
\definecolor{Golden}{HTML}{E27400}
\definecolor{Brown}{HTML}{ae740e}
\definecolor{Gamboge}{HTML}{ee9b00}
\definecolor{Ruby Red}{HTML}{9b2226}
\definecolor{Blue Marine}{HTML}{022687}
\newrobustcmd{\rankex}[1]{\kl[\rankex]{\mathit{rank}_{\exists}^{#1}}}
\knowledge{\rankex}{notion}
\newrobustcmd{\rankax}[1]{\kl[\rankax]{\mathit{rank}_{\forall}^{#1}}}
\knowledge{\rankax}{notion}
\newrobustcmd{\Lsgeq}[1]{\kl[\Lsgeq]L_{\geq #1}}
\newrobustcmd{\LAsgeq}[2]{\kl[\LAsgeq]L^{#2}_{\geq #1}}
\knowledge{\Lsgeq}[\LAsgeq]{notion}
\newrobustcmd{\Lsem}[1]{\kl[\Lsem]L^{\mathrm{sem}}_{\geq #1}}
\knowledge{\Lsem}{notion}
\newrobustcmd{\seq}{\mathrel{\kl[\seq]\approx}}
\knowledge{\seq}{notion}
\newrobustcmd{\lang}{\kl[\lang]L}
\newrobustcmd{\langA}[1]{\kl[\langA]L^{#1}}
\knowledge{\lang}[\langA]{notion}
\newrobustcmd{\lsim}{\mathrel{\kl[\lsim]\sim}}
\knowledge{\lsim}{notion}
\newrobustcmd{\Lpgeq}[1]{\kl[\Lpgeq]L^+_{\geq #1}}
\knowledge{\Lpgeq}{notion}
\newrobustcmd{\Lmgeq}[1]{\kl[\Lmgeq]L^-_{\geq #1}}
\knowledge{\Lmgeq}{notion}
\newrobustcmd{\Lp}{\kl[\Lp]L^+}
\knowledge{\Lp}{notion}
\newrobustcmd{\Lm}{\kl[\Lm]L^-}
\knowledge{\Lm}{notion}
\newrobustcmd{\Aageq}[1]{\kl[\Aageq]\Aa{\geq #1}}
\knowledge{\Aageq}{notion}
\newcommand{\boolP}{\mathsf{Bool}^+}
\newcommand{\init}{\mathsf{init}}
\newrobustcmd{\wordGame}[2]{\kl[\wordGame]{\mathcal{G}(#1,#2)}}
\newrobustcmd{\wordGameA}[1]{\kl[\wordGameA]{\mathcal{G}(#1,\Aa)}}
\knowledge{\wordGame}[\wordGameA]{notion}
\newrobustcmd\parity{\kl[\parity]{\mathsf{Parity}}}
\knowledge{\parity}{notion}
\newrobustcmd\sem[1]{\kl[\sem]{{[\![ #1 ]\!]}}}
\knowledge{\sem}{notion}
\newrobustcmd{\wmu}{\kl[\wmu]{\wh{\mu}}}
\knowledge{\wmu}{notion}
\knowledge{\ancestor}[\ancestorA|\ancestorAA|\descendantA|\descendantAA]{notion}
\newrobustcmd{\tree}{\kl[\tree]{\mathsf{Tree}}}
\knowledge{\tree}{notion}
\newrobustcmd{\classx}[1]{\kl[\classx]{[#1]_x}} 
\newrobustcmd{\classxx}[2]{\kl[\classxx]{[#1]_{#2}}}
\knowledge{\classx}[\classxx|\scl]{notion}
\newrobustcmd{\eqx}{\mathrel{\kl[\eqx]{\sim_x}}} 
\newrobustcmd{\eqxx}[1]{\mathrel{\kl[\eqxx]{\sim_{#1}}}}
\knowledge{\eqx}[\eqxx]{notion}
\newrobustcmd{\transxy}{\kl[\transxy]{\d_x^y}} 
\newrobustcmd{\transxxy}[1]{\kl[\transxxy]{\d_{#1}^y}}
\newrobustcmd{\transxyy}[1]{\kl[\transxyy]{\d_{x}^{#1}}}
\newrobustcmd{\transxxyy}[2]{\kl[\transxxyy]{\d_{#1}^{#2}}}
\knowledge{\transxy}[\transxxy|\transxyy|\transxxyy]{notion}
\newrobustcmd{\SA}{\kl[\SA]{\mathit{SA}}} 
\knowledge{\SA}{notion}
\newrobustcmd{\SR}{\kl[\SA]{\mathit{SR}}} 
\knowledge{\SR}{notion}
\newrobustcmd{\Lsafex}{\kl[\Lsafex]{L_{x}}} 
\newrobustcmd{\Lsafexx}[1]{\kl[\Lsafexx]{L_{#1}}}
\newrobustcmd{\LsafexxA}[2]{\kl[\LsafexxA]{L^{#2}_{#1}}}
\knowledge{\Lsafex}[\Lsafexx|\LsafexxA]{notion}
\newrobustcmd{\fleq}{\mathrel{\kl[\fleq]\preccurlyeq}}
\newrobustcmd{\fleqx}{\mathrel{\kl[\fleqx]{\preccurlyeq_x}}}
\newrobustcmd{\fleqxx}[1]{\mathrel{\kl[\fleqxx]{\preccurlyeq_{#1}}}}
\knowledge{\fleq}[\fleqx|fleqxx]{notion}
\newrobustcmd{\dual}[1]{\kl[\dual]{\bar{#1}}}
\knowledge{\dual}{notion}
\newrobustcmd{\graph}{\kl[\graph]G}
\knowledge{\graph}{notion}
\newrobustcmd{\graphgeq}[1]{\kl[\graphgeq]G_{\geq #1}}
\knowledge{\graphgeq}{notion}
\newcommand{\littletaller}{\mathchoice{\vphantom{\big|}}{}{}{}}
\newcommand\restr[2]{{
		\left.\kern-\nulldelimiterspace 
		#1 
		\littletaller 
		\right|_{#2} 
}}
\newcommand\downpriority[1]{{
		\left.\kern-\nulldelimiterspace 
		\littletaller 
		\right\downarrow_{#1} 
}}
\newcounter{sarrow}
\newcommand\run[1]{%
	\stepcounter{sarrow}%
	\mathrel{\begin{tikzpicture}[baseline= {( $ (current bounding box.south) + (0,-0.5ex) $ )}]
			\node[inner sep=.5ex] (\thesarrow) {$\scriptstyle #1$};
			\path[draw,<-,decorate,
			decoration={zigzag,amplitude=0.7pt,segment length=1.2mm,pre=lineto,pre length=4pt}] 
			(\thesarrow.south east) -- (\thesarrow.south west);
	\end{tikzpicture}}%
}
\renewcommand{\run}[1]{\xrightarrow{#1}}
\newcommand{\lrun}[1]{\run{\phantom{.}#1\phantom{.}}}
\newrobustcmd{\tc}{\mathrel{\kl[\tc]{\equiv_L}}}
\knowledge{\tc}{notion}
\newrobustcmd{\tccls}[1]{\kl[\tccls]{[#1]_{\tc}}} 
\knowledge{\tccls}{notion}
\newrobustcmd{\Aatcd}{\kl[\Aatcd]{\Aa_{d,\tc}}}
\newrobustcmd{\Aatc}{\kl[\Aatc]{\Aa_{\tc}}}
\knowledge{\Aatc}[\Aatcd]{notion}
\newcommand{\Ttc}{\T^{\tc}} 
\newcommand{\Qtc}{Q^{\tc}} 
\newcommand{\deltc}{\delta^{\tc}} 
\newcommand{\phitc}{\mu^{\tc}} 
\newcommand{\qintc}{q_0^{\tc}} 
\newrobustcmd{\Ls}{\kl[\Ls]{L^s}}
\knowledge{\Ls}{notion} 
\newcommand{\bu}{\bar{u}} 
\newcommand{\bv}{\bar{v}}
\newcommand{\Aar}{\safeminA}
\newcommand{\Aal}{\lowA}
\newrobustcmd{\llayer}{\kl[\llayer]{\mathit{layer}}}
\knowledge{\llayer}{notion}
\newrobustcmd{\Qgeq}[1]{\kl[\Qgeq]{Q_{\geq #1}}}
\newrobustcmd{\Qgeqx}{\kl[\Qgeqx]{Q_{\geq x}}}
\newrobustcmd{\QgeqOne}{\kl[\QgeqOne]{Q_{\geq 1}}}
\knowledge{\Qgeq}[\Qgeqx|\QgeqOne]{notion}
\newrobustcmd{\actx}[1]{\mathrel{\kl[\actx]{\act{{#1}}_x}}}
\newrobustcmd{\actxx}[2]{\mathrel{\kl[\actxx]{\act{{#1}}_{#2}}}}
\newcommand{\runx}[1]{\run{#1}_x}
\newcommand{\runxx}[2]{\run{#1}_{#2}}
\knowledge{\actx}[\actxx]{notion}
\newcommand{\scl}[2]{\kl[\scl]{[#1]_{\seq_{#2}}}}
\newrobustcmd{\safemin}[1]{\kl[\safemin]{#1^{\mathrm{smin}}}}
\newrobustcmd{\safeminA}{\kl[\safeminA]{\Aa^{\mathrm{smin}}}}
\knowledge{\safemin}[\safeminA]{notion}
\newrobustcmd{\sep}[1]{\kl[\sep]{#1^{\mathrm{sep}}}}
\newrobustcmd{\sepA}{\kl[\sepA]{\Aa^{\mathit{sep}}}}
\knowledge{\sep}[\sepA]{notion}
\newrobustcmd{\low}[1]{\kl[\low]{#1^{\mathrm{low}}}}
\newrobustcmd{\lowA}{\kl[\lowA]{\Aa^{\mathit{low}}}}
\newcommand{\Tlow}{\Tt^{\mathit{low}}}
\newcommand{\mlow}{\mu^{\mathit{low}}}
\knowledge{\low}[\lowA]{notion}
\newrobustcmd{\central}[1]{\kl[\central]{#1^{\mathrm{ctl}}}}
\newrobustcmd{\centralA}{\kl[\centralA]{\Aa^{\mathit{ctl}}}}
\knowledge{\central}[\centralA]{notion}
\mathchardef\hyphen=45 
\newrobustcmd\inv[1]{#1^{-1}}
\newcommand{\tand}{\text{ and }}
\newcommand{\tor}{\text{ or }}
\newcommand{\tfor}{\text{ for }}
\newcommand{\tin}{\text{ in }}
\newcommand{\Nat}{\ensuremath{\mathbb{N}}}
\newcommand{\es}{\emptyset}
\newcommand{\incl}{\subseteq}
\newcommand{\sq}[1]{[#1]}
\newcommand{\di}[1]{\langle #1 \rangle}
\newcommand{\sqq}[1]{\sq{\cdot }}
\newcommand{\ddi}[1]{\di{\cdot }}
\newcommand{\set}[1]{\{#1\}}
\renewcommand\bar[1]{\overline{#1}}
\newcommand{\act}[1]{\mathrel{\stackrel{#1}{\longrightarrow}}}
\renewcommand{\d}{\delta}
\newcommand{\e}{\varepsilon}
\newcommand{\f}{\varphi}
\newcommand{\m}{\mu}
\newcommand{\s}{\sigma}
\renewcommand{\t}{\tau}
\newcommand{\w}{\omega}
\newcommand{\D}{\Delta}
\renewcommand{\S}{\Sigma}
\newcommand{\Aa}{\mathcal{A}}
\newcommand{\A}{\mathcal{A}}
\newcommand{\Bb}{\mathcal{B}}
\newcommand{\Gg}{\mathcal{G}}
\newcommand{\Tt}{\mathcal{T}}
\newcommand{\T}{\mathcal{T}}
\newcommand{\MSO}{\mathrm{MSO}}
\newcommand{\LOGSPACE}{\text{{\sc Logspace}}}
\def\sqr#1#2{\vbox
 {\hrule height#2
  \mbox{\vrule width#2 height#1 \kern#1 \vrule width#2}%
  \hrule height#2}}
\newcommand{\wh}[1]{\widehat{#1}}
\title{Layered automata: A canonical model for automata over infinite words}
\author{Antonio Casares}{University Kaiserslautern-Landau, Germany \and \url{https://antonio-casares.github.io/}}{antonio.casares@rptu.de}{https://orcid.org/0000-0002-6539-2020}{Partially supported by Deutsche Forschungsgemeinschaft (grant number 522843867) and European Research Council (grant number 101089343).
Part of this work was done while Casares was at the University of Warsaw, Poland, supported by the Polish National Science Centre (NCN) grant ``Polynomial finite state computation'' (2022/46/A/ST6/00072).}
\author{Christof L\"oding}{RWTH Aachen University, Germany \url{}}{}{https://orcid.org/0000-0002-1529-2806}{}
\author{Igor Walukiewicz}{CNRS, Bordeaux University, France\url{}}{}{https://orcid.org/000-0001-8952-7201}{}
\authorrunning{A. Casares and C. L\"oding and I. Walukiewicz} 
\keywords{Omega automata, history determinism, canonical automata} 
\begin{document}

\maketitle

\maketitle 
\begin{abstract}
  We introduce layered automata, a subclass of alternating parity automata that
  generalises deterministic automata. 
  Assuming a consistency property, these
  automata are history deterministic and 0-1 probabilistic. We show that every
  omega-regular language is recognised by a unique minimal consistent layered
  automaton, and that this canonical form can be computed in polynomial
  time from every layered or deterministic automaton. 
  We further establish that for layered automata both consistency checking and inclusion testing can
  be performed in polynomial time. Much like deterministic finite automata,
  minimal consistent layered automata admit a characterisation based on
  congruences.

\end{abstract}

\paragraph*{}
This document contains hyperlinks.
\AP Each occurrence of a "notion" is linked to its ""definition"".
On an electronic device, the reader can click on words or symbols (or just hover over them on some PDF readers) to see their definition.

\newpage

\tableofcontents

\newpage

\newpage
\section{Introduction}\label{sec:intro}
Automata over infinite words ($\omega$-automata in the following) were first introduced by B\"uchi to show the decidability of monadic second-order logic over $(\Nat,<)$~\cite{Buchi.Decision1962}. 
Since then, they have found numerous applications in the verification and synthesis of reactive systems~\cite{Cla.Eme.Sis.Automatic1986,Pnu.Ros.Synthesis1989,Kupferman.Handbook2018}.

A key limitation of current $\w$-automata theory is the lack of minimal canonical
models. Minimisation is instrumental for practical applications, particularly
when automata are derived from specifications, like LTL formulas, through
hierarchical translations. 
In this process, each constructor in the
specification corresponds to an operation on the automaton — some of which
necessitate a powerset construction. 
We know from the case of finite words and trees that minimisation after each operation
is key to efficient translations~\cite{KlarlundMS02}.
Minimisation is also important for theory.
One immediate example is learning algorithms for $\omega$-regular languages, but
more broadly, it indicates that we have gained insights into the structural
properties of acceptors. 
Such insights may eventually allow us to
revisit long-standing open problems, such as the index problem for tree automata \cite{ColcombetL08,IdirL25}. 

Unlike to the case of automata over finite words, deterministic
$\omega$-automata do not have minimal canonical models, as a language may admit several
non-isomorphic deterministic automata with a minimal number of states. Moreover,
the minimisation problem is NP-hard for all kinds of deterministic
$\omega$-automata expressive enough to recognise all $\omega$-regular
languages~\cite{Schewe.MinimisingDet2010,Casares.Chromatic2022,Abu.Ehl.NPhard2025}.
While there are some canonical ways to represent $\omega$-regular languages,
such as $\w$-semigroups or congruences, these representations can be potentially exponentially
larger than "deterministic parity automata", and furthermore, they cannot be used directly in applications from verification and synthesis. 

"History deterministic" (HD) automata are a subclass of non-deterministic automata for which non-determinism can be resolved on the fly. They were originally introduced under the name of \emph{good for games}~\cite{Hen.Pit.GFG2006}\footnote{The related notion of \emph{good for trees automata} was previously introduced in~\cite{Kup.Saf.Var.Relating1996Relating}.}, as they are exactly those automata that can be composed with any game while preserving the winner. This property allows their use in applications in verification and synthesis instead of deterministic automata. Every "deterministic automaton" is "history deterministic", but a "history deterministic" automaton can be exponentially smaller than the smallest deterministic one~\cite{Kup.Skr.Determinisation2015}. "History determinism" has been generalised to "alternating $\omega$-automata"~\cite{Colcombet.HDR2013,Bok.Leh.Alternating19}. History deterministic alternating automata have the same good compositionality properties, making them suitable for applications in verification and synthesis, while being even smaller than non-deterministic HD automata~\cite[Lemma~6]{BKLS.AlternatingGFG2020}.

Other types of automata that have been actively studied recently are those in which the non-determinism can be resolved randomly~\cite{AvniK15,Hen.Pra.The.Resolving2025,PPSTTY.Resolving2025}, or automata that compose faithfully with MDPs~\cite{HPSS0W.GoodForMdp2020}. In particular, \emph{"0-1 probabilistic automata"} are automata in which a random walk over an infinite word is accepting with probability $1$ if and only if the word belongs to the semantics of the automaton; otherwise, it is rejecting with probability $1$.
Such automata are in particular suitable for the study of MDPs~\cite{Hen.Pra.The.Resolving2025}.
\smallskip

In this work, we propose a new model of acceptors for "$\omega$-regular languages" that we call "layered automata".
Our starting point for this model  is the result of Abu Radi and
Kupferman~\cite{Abu.Kup.Minimization2022} stating that (transition-based) "history deterministic" "coB\"uchi automata" can be minimised in polynomial time and admit a canonical minimal representation. In \cite{Lod.Wal.Congruences2025} it has furthermore been shown that this canonical automaton can be described in a natural way using a congruence over pairs of finite words.
However, "HD" "coB\"uchi automata" can only recognise a fragment of "$\omega$-regular languages": those that are in the $\Sigma_2^0$ level in the Borel hierarchy, or equivalently, persistent properties in the Manna-Pnueli hierarchy~\cite{Man.Pnu.Hierarchy1989}.

The model of "layered automata" that we propose in this work can be seen as an extension of the minimisation result of Abu Radi and
Kupferman to all "$\omega$-regular languages". 
This is in line with some recent works such as \emph{chains of coB\"uchi
automata}~\cite{Ehl.Sch.Natural2022} and \emph{rerailing
automata}~\cite{Ehlers.Rerailing2025} to which we compare in Section~\ref{subsec:comparison}. 
Orthogonally, "layered automata" can also be seen as a generalisation of the Zielonka tree of a Muller language~\cite{Zielonka.Infinite1998} and of signature automata for positional languages~\cite{Cas.Ohl.Positional2024}.
We provide ample evidence that layered automata have good structural and
algorithmic properties.

\subsection*{Contributions}
We introduce \emph{"layered automata"}, a formalism to represent all
$\omega$-regular languages. 
One can think of a "layered automaton" as a representation of an "alternating
parity automaton" with some good structural properties. 
If a layered automaton satisfies an additional property of \emph{"consistency"},
then the associated alternating automaton is "history deterministic".

"Consistent" "layered automata" combine the structure of both minimal "history deterministic" "coB\"uchi automata" (Proposition~\ref{prop:minimalCoBuchi-are-layered}) and Zielonka trees. 
They also encompass "deterministic" parity automata (Proposition~\ref{prop:deterministic-are-layered}).
They enjoy remarkably good properties. Most notably, they admit a canonical minimal form that is computable in polynomial time.

This canonicity result can be stated as follows:

\begin{theorem}
    Every "$\omega$-regular language" $L$ can be recognised by a "consistent" "layered automaton" $\Aa_L$ such that every equivalent "consistent" "layered automaton" contains a "subautomaton" admitting a surjective "morphism@@lay" to $\Aa_L$.
    Moreover, $\Aa_L$ can be computed in polynomial time from any given "consistent" "layered automaton" recognising $L$.
\end{theorem}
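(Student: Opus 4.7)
The plan is to construct $\Aa_L$ as the output of a canonical polynomial-time pipeline applicable to any "consistent" "layered automaton" recognising $L$. The pipeline composes three operations already prepared in the notation: a "normalisation" phase enforcing "N1" and "N2" (producing $\Aanc$ then $\Aann$), a "safe minimisation" step $\safeminA$ that, layer by layer, quotients by the Abu Radi--Kupferman style equivalence on the "safe language" $\Lsafex$ of each layer, and a "centralisation" step $\centralA$ picking canonical representatives within each $x$-component. Since every $\omega$-regular language admits a consistent layered automaton — for example by converting a DPA via Proposition~\ref{prop:deterministic-are-layered} — this yields both the existence of $\Aa_L$ and its polynomial-time construction from any consistent input. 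The bulk of this half of the proof is operation-by-operation bookkeeping: preservation of language, preservation of "consistency", non-increase of state count, and a polynomial-time bound.

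For the universal property I would characterise the pipeline output intrinsically via the congruence $\tc$ on pairs of finite words: the states of $\Aa_L$ correspond to equivalence classes $\tccls{\cdot}$ combining a "safe language" fingerprint at each layer with the "tree" structure of nested components. Given any equivalent consistent layered automaton $\Bb$, I would isolate the "subautomaton" $\Bb'$ formed by the states occurring along "central" runs of a "longest suffix resolver", and define a map $\Bb' \to \Aa_L$ sending each state to its $\tc$-class. The key lemma is that this map is a surjective "strong morphism" of layered automata; applying it to $\Aa_L$ itself shows that $\Aa_L$ is determined up to "isomorphism@@lay" by the universal property, which in turn pins down the pipeline output regardless of the starting automaton.

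I expect the main obstacle to be controlling the interaction between layers. In the coB\"uchi setting of Abu Radi and Kupferman there is effectively a single layer to minimise; here, safe-minimising layer $x$ must not disturb consistency or centrality at other layers, and centralisation has to be compatible with the safe-minimised outcome at every layer simultaneously. I plan to handle this by induction from the innermost layer outward, using that strong morphisms preserve "strong acceptance" to lift the inductive hypothesis, together with a commutation lemma showing that operations acting on disjoint layers commute up to "isomorphism@@lay". A secondary difficulty is proving surjectivity of $\Bb' \to \Aa_L$: this requires that the trimming $\Bb \leadsto \Bb'$ misses no $\tc$-class, which I would settle by a reachability argument based on runs of the "longest suffix resolver", showing that every class admits at least one central representative inside $\Bb$.
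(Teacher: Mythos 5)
Your first half coincides with the paper's proof: existence comes from Proposition~\ref{prop:deterministic-are-layered}, and $\Aa_L$ is produced by exactly the pipeline the paper uses in Theorem~\ref{thm:minimisation} (normalisation enforcing N1/N2, then safe minimisation, then centralisation, in that order), with correctness of each step handled through morphisms preserving strong acceptance and with the subautomaton-plus-surjective-morphism bookkeeping you describe. One caveat: your plan to control the interaction of layers by a ``commutation lemma for operations on disjoint layers'' does not reflect how these operations behave — safe minimisation quotients all layers simultaneously through the nested relations $\seq_x$, and centralisation deletes whole subtrees spanning several layers; the paper instead proves directly that each later operation preserves the properties established earlier (e.g.\ Lemma~\ref{lem:centralisation-is-safe}), and notes that the order genuinely matters.

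The genuine gap is in the canonicity half. What makes the statement true with a \emph{single} $\Aa_L$ — independent of which consistent layered automaton you start the pipeline from — is the uniqueness theorem (Theorem~\ref{thm:concise-are-unique}): any two equivalent consistent layered automata that are normal, centralised and safe minimal are isomorphic. The paper proves this by a layer-by-layer construction of the isomorphism whose engine is the existence of central sequences (Lemma~\ref{lem:central-sequence} and Claim~\ref{claim:states-with-same-safeL}); combined with Theorem~\ref{thm:minimisation} this yields Corollary~\ref{cor:canonicity}, i.e.\ the universal property. Your proposal replaces this by mapping states of an arbitrary equivalent automaton $\Bb$ to their classes under the congruence $\tc$, but this route is both under-specified and, as developed in the paper, circular: a state of $\Bb$ carries no canonical tuple (different runs reaching it give tuples you would still have to prove $\tc$-equivalent), the classes reached need not be pointed and hence need not be states of $\Aatc$, the set of states visited by longest-suffix-resolver runs is not obviously a layered subautomaton (resolvers act on leaf states of $\sem{\Bb}$, whereas subautomata and morphisms are defined on all layers), and — most importantly — the paper's proof that $\Aatc$ \emph{is} the minimal automaton (Theorem~\ref{thm:Aatc-minimal}) itself invokes Theorem~\ref{thm:concise-are-unique}, so the congruence cannot substitute for the uniqueness argument without re-proving its hardest ingredients (pointedness, central sequences for $\tc$-classes, Section~\ref{sec:congruences}) from scratch. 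Also note the congruence here is over $x$-tuples of finite words, not pairs; pairs only suffice in the coB\"uchi case. As it stands, your sketch establishes that every input maps onto \emph{its own} minimised automaton, but not that all these minimised automata are one and the same $\Aa_L$.
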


We detail further properties of "layered automata".

\subparagraph{Applicability.} 
"Consistent" "layered automata" are representations of alternating parity
automata that are guaranteed to be both "history deterministic" and "0-1 probabilistic" (Propositions~\ref{prop:layered-are-HD} and~\ref{prop:layered-are-probabilistic}).
Therefore, "consistent" "layered automata" have the key properties that make them suitable for applications in verification and  synthesis.
Moreover, they provide an easy-to-check sufficient condition for "HDness" and being "0-1 probabilistic", as "layered automata" are defined syntactically and checking "consistency" can be done in polynomial time (Proposition~\ref{prop:consistent-PTIME}).

As "layered automata" generalise minimal coB\"uchi automata, they can be exponentially smaller than minimal "deterministic parity automata"~\cite{Kup.Skr.Determinisation2015}.

We also show that emptiness and inclusion of "consistent" "layered automata" can be done in polynomial time (Propositions~\ref{prop:emptiness-PTIME} and~\ref{prop:inclusion-PTIME}).

\subparagraph{Canonicity, morphisms, and congruences.} We show that every "$\omega$-regular language" can be recognised by a unique minimal "consistent" "layered automaton".
More precisely, we show that for every "$\omega$-regular language" $L$ there is a "consistent" "layered automaton" $\Aa_L$ that is "normal", "centralised" and "safe minimal" and any "layered automaton" with these three properties is "isomorphic" to $\Aa_L$ (Theorem~\ref{thm:concise-are-unique}).
Moreover, every "consistent" "layered automaton" "recognising@@lay" $L$ contains a "subautomaton" that admits a surjective "morphism@@lay" to $\Aa_L$ (Corollary~\ref{cor:canonicity}). Furthermore, we provide a characterisation in terms of congruences for the minimal "layered automaton" $\Aa_L$ (Theorem~\ref{thm:Aatc-correct}).
This characterisation is purely based on the language $L$, and does not depend on a given representation as an automaton. 
To the best of our knowledge, these are the first results of this type for an
automaton model capable of recognising all  $\omega$-regular languages and having the
minimal form never bigger than "deterministic parity automata".

\subparagraph{Minimisation in PTIME.} 
We show that given a "consistent" "layered automaton" $\Aa$, we can construct in polynomial time the minimal "layered automaton" for $L(\Aa)$ (Theorem~\ref{thm:minimisation}). Since every "deterministic parity automaton" is trivially a "consistent" "layered automaton", this provides an efficient minimisation procedure for "deterministic parity automata" into "layered automata". This does not contradict the above-mentioned lower bounds on minimisation, as the result of the procedure is an alternating automaton.

\subparagraph{Relation to other models.}
We provide examples and results comparing and relating "layered automata" to
other models, such as chains of coB\"uchi automata and Zielonka trees. We also
discuss how "layered automata" can be used to characterise important properties
of $\omega$-regular languages, such as positionality,
following~\cite{Cas.Ohl.Positional2024}. While we leave a more detailed study of
all these relations as future work, our discussion shows that "layered automata"
have many interesting connections to various other well-established and more
recent formalisms from the theory of $\omega$-regular languages. 

\subsection*{Related work}
The starting point of our work is the minimisation procedure for "HD" "coB\"uchi automata" from Abu
Radi and Kupferman~\cite{Abu.Kup.Minimization2022}. By duality, this entails a minimisation procedure for universal "HD" B\"uchi automata, and the notion of "layered automata" can be seen as 
a natural extension to arbitrary parity indices that mixes nondeterministic and universal branching.
"Layered automata" are  also inspired by
signature and $\e$-complete automata
from~\cite[Sect~3.3]{Cas.Ohl.Positional2024}, as well as the minimisation
procedure for them. 
The proof that all "layered automata" are "history deterministic" generalises ideas from~\cite[Lemma~3]{Kup.Skr.Determinisation2015}.
The congruence-based characterisation is an extension of the congruences for "coB\"uchi automata" from~\cite[Sect~3]{Lod.Wal.Congruences2025}, using conditions closely related to the priority mapping on finite words defined in~\cite[Def~3.1,3.5]{Boh.Lod.Constructing2024}.

Various representations of "$\omega$-regular languages"  admitting--in some sense--canonical models exist in the literature. 
These include some algebraic approaches based on congruences, such as Arnold's
congruence~\cite{Arnold.Congruence1885}, Wilke-algebras and
$\omega$-semigroups~\cite{Wilke.Eilenberg1991,Per.Pin.OmegaSemigroup1995},
families of right congruences~\cite[Def.~5]{Mal.Sta.Syntactic1997} and families
of DFAs~\cite{Klarlund94,Ang.Fis.Learning2016}. These objects have the
disadvantage that they can be exponentially larger than deterministic parity
automata (it is folklore that already for regular languages of finite words, the
syntactic monoid can be of size $n^n$ for the minimal DFA being of size $n$, see
\cite[Thm.~3]{HolzerK02} for a concrete example; this exponential gap transfers
to all the aforementioned formalisms).
While their size is incomparable to the size of a smallest "DPA" for a language, 
they cannot directly be used in applications like synthesis.

Another representation, taking the form of a specific "deterministic parity automaton" (DPA), is proposed in~\cite[Sect.~3]{Boh.Lod.Constructing2024}.
There, the ``precise DPA'' of a language is defined as a product of deterministic Mealy
machines that are extracted directly from the language in terms of a coloring
function. However, the precise DPA 
can be exponentially larger than a smallest "DPA".

More recently, some canonical representations based on "HD" automata have been introduced.
In 2022, Ehlers and Schewe proposed to represent an "$\omega$-regular language" as a decreasing chain of coB\"uchi-recognisable languages~\cite{Ehl.Sch.Natural2022}. 
Each of these languages can be given by the minimal "HD" "coB\"uchi automaton" from Abu Radi and Kupferman, leading to a canonical representation. This formalism has subsequently been studied under the name \emph{COCOA}~\cite{Ehl.Kha.Fully2024,Ehl.Kha.NaturallyColored2024}.
Minimal "layered automata" are clearly related to COCOA, but there are some important differences. 
A key difference is that "layered automata" bring a direct connection with "alternating" "history deterministic" automata--which are widely accepted as the class of $\omega$-automata suitable for verification and synthesis.
Notably, we conjecture that canonical "layered automata" are minimal amongst all "history deterministic" "alternating parity automata" (Conjecture~\ref{conj:minimality-HD}). The size comparison between both models is a bit subtle. 
While COCOA can be smaller than "layered automata", in order to obtain an "HD" automaton from a COCOA, one needs to perform a product of all its components \cite{Ehl.Kha.Fully2024}.
We give a more detailed comparison with COCOA in Section~\ref{subsec:comparison}.

In 2025, Ehlers introduced \emph{rerailing automata}~\cite[Def.~1]{Ehlers.Rerailing2025} and showed that they can be minimised efficiently.
These are automata where each run is associated with a "priority" (the minimal appearing infinitely often), and a word is accepted if the maximal priority of its runs is even. Moreover, they must satisfy a semantic ``rerailing property'', inspired by the properties of minimal "HD" "coB\"uchi automata".
In~\cite[Sect.~VI]{Ehlers.Rerailing2025}, rerailing automata are shown to be usable in reactive synthesis, but their exact relation with "HD" automata is not yet established.

\section{Preliminaries}\label{sec:preliminaries}
\AP We let $[d] = \{1,\dots,d\}$, and we call the elements of $[d]$ ""priorities"". We use the variables $x,y$ to denote "priorities".
\AP The ""parity condition"" is the language over $[d]^\omega$ given by:
\[ \intro*\parity = \{ x_1x_2\dots \in [d]^\omega \mid \liminf x_i \text{ is even}\}. \;
\text{ (Note the use of min-parity in this paper.)}\]

\subparagraph{Transition systems.}
\AP A ""transition system"" over a set $\Sigma$ is an edge-labelled directed
graph $\T = (Q,\Delta)$, where $\Delta \subseteq Q\times \Sigma \times Q$.
If $(q,a,q')\in \D$, we also write $q\act{a}q'$.
\AP It is ""complete"" if for all $q\in Q$ and  $a\in \Sigma$, there is some $(q,a,q')\in \Delta$.
\AP It is ""deterministic@@TS"" if for every $q\in Q$ and $a\in \Sigma$ there is at
most one $q'$ such that $(q,a,q')\in \Delta$.
We prefer to represent the transitions of deterministic "transition systems"
by a function $\delta\colon Q\times \Sigma \to Q\cup \{\bot\}$, where
$\delta(q,a) = q'$ if $(q,a,q')\in \Delta$, and $\delta(q,a) = \bot$ if no such
transition exists.

\AP For a word $w\in \S^*$ we write $q\run{w}q'$ if there is a
path from $q$ to $q'$ whose label is $w$, which we call a ""run"".
We sometimes omit $q'$, writing just $q\run{w}$, if $q'$ is irrelevant.
We write $q\act{w}\bot$ it there is no path with label $w$
from $q$. 

If $\T$ is a "transition system" over a product alphabet $\S\times [d]$, by an abuse of notation we call a \emph{run over $w\in \Sigma^\omega$} any path having $w$ as labels in the $\S$-component.
\AP In this case, we also say that a run ""is accepting@@run""
if its sequence of $[d]$-labels satisfies the "parity condition".
We write $q\act{a:x}q'$ if $(q,(a,x),q')\in \T$, and $q\act{a: \geq x}q'$ if $(q,(a,y),q')\in \T$ for some $y\geq x$, similarly for $> x$, etc.
We write $q\run{w: \geq x} q'$ if the minimal priority over the path is $\geq x$, similarly for $> x$, etc.

\AP A ""morphism"" between two "transition systems" $\T = (Q,\D)$,
$\T'=(Q',\D')$ over the same set~$\Sigma$ is a function $\mu\colon Q \to Q'$
such that if $(p,a,q)\in \D$ then $(\mu(p),a,\mu(q))\in \D'$.
Note that any "run" over $w$ in $\T$ from $q$ maps to a "run" over $w$ in $\T'$ from $\mu(q)$.

\subparagraph{Deterministic parity automata.} \AP A ""deterministic parity automaton"" is a "complete" "deterministic" "transition system" $\Aa = (Q,\D)$ over the
set $\Sigma\times [d]$ with a designated initial state $q_\init\in Q$.

A word $w\in \S^\omega$ is accepted by the automaton if the unique run over $w$ satisfies the "parity condition".
The language of the automaton is the set of words it accepts.
\AP A language is ""$\omega$-regular"" if it can be recognised by a "deterministic parity automaton". 
The class of "$\omega$-regular" languages admits many equivalent definitions,
for instance, through definability in $\MSO$ logic, or recognition by
non-deterministic B\"uchi automata, see for instance~\cite{Gra.Tho.Wil.Book2002}.

\subparagraph{Alternating parity automata.}
\AP In full generality, an ""alternating parity automaton"" is given by a set of states together with a transition function of the form $ \delta \colon Q\times \Sigma \to \boolP(Q\times[d])$. 
In this work, all alternating automata are of a simpler form, which we call \emph{"simple by priorities" alternating automata}. These correspond to automata where all transitions are either
\[\delta(q,a) = \bigwedge_{p\in Q'} (p,x) \; \text{ for some even } x, \;\; \tor \;\; \delta(q,a) = \bigvee_{p\in Q'} (p,x) \; \text{ for some odd } x.\]
We refer to~\cite{Bok.Leh.Alternating19} for definitions in the general case.
We give next a formal, alternative presentation of the subclass of "simple by priorities" alternating automata.

\AP A ""simple by priorities"" automaton is a "complete" "transition system" $\Aa = (Q,\D)$
over the set $\Sigma\times [d]$ with a designated ""initial state"" $q_\init\in
Q$,
satisfying that for every $q\in Q$ and $a\in \Sigma$ 
all $a$-transitions from $q$ have the same "priority".
\AP So there is $x$ a ""priority of action $a$"" in $q$, meaning that for all
$(q,(a,y),q')\in \Delta$ we have $y=x$.


The semantics of a "simple by priorities" automaton is defined via game where
the "priority of an action" determines who makes the move. Given
an input word $w=a_1a_2\dots\in \Sigma^\omega$, we define the game $\Gg(\Aa,w)$
where two players, called Eve and Adam, build a "run" over $w$ in $\Aa$ as follows:
\begin{itemize}
    \item The initial position is the state $q_1 = q_\init$.
    \item Suppose the current run constructed in a play ends in a state $q_i$,
    and $x_i$ is the priority of $a_i$ in $q_i$.   
    If $x_i$ is odd then Eve picks a transition $q_i \act{a_i:x_i} q_{i+1}$,
    otherwise Adam makes the choice.
    \item Eve wins if the constructed path is "accepting".
\end{itemize}

\AP We say that $w$ is ""accepted"" by $\Aa$ if Eve has a winning strategy in the game $\Gg(\Aa,w)$.
Note that, by determinacy of parity games~\cite{Martin.Determinacy1975}, on the contrary Adam has a winning strategy.
\AP The ""language recognised@@alt"" by $\Aa$, written $\intro*\lang(\Aa)$, is the set of words it "accepts".
\AP The ""language of a state"" $q\in Q$ is the language recognised by the
automaton obtained by fixing $q$ as the "initial state". We denote it
$L(\Aa,q)$, or just $\lang(q)$ if $\Aa$ is clear from the context.

\subparagraph{Semantic determinism.}
\AP An automaton $\Aa$ is ""semantically deterministic@@alt"" if whenever we have
$p\act{a}q_1$ and $p\act{a}q_2$ then $L(\Aa,q_1)=L(\Aa,q_2)$. Equivalently, if
$p\act{a}q$ then $L(\Aa,q)=a^{-1}L(\Aa,p)$.

\begin{remark}
    Let $\Aa$ be a "semantically deterministic" automaton "recognising@@alt" a language~$L$. Let $u$ be a word labelling a path from the initial state to $q$. Then, the "language@@state" of $q$ is:
    \[ u^{-1}L = \{w\in \Sigma^\omega \mid uw\in L\}.\]
\end{remark}

\subparagraph{Resolvers and history determinism.}
Consider a "simple by priorities automaton" $\Aa$.
\AP A ""resolver"" for Eve in $\Aa$ is a function $\sigma\colon (\set{q_\init}\cup\Delta^+) \times
\Sigma \to Q$, satisfying that whenever a run $\rho\in (\set{q_\init}\cup\Delta^+)$ ends in a state
$q$, and the "priority of" $a$ in $q$ is odd then $q'=\sigma(\rho,a)$ is an
$a$-successor of $q$, namely $q\act{a}q'$. 
\AP A "run" is ""consistent with"" $\sigma$ (or a ""$\sigma$-run"")
if for every prefix $\rho\cdot (q,(a,x),q')$ with $x$ odd, we have
$\sigma(\rho,a)=q'$.
The ""language accepted by an Eve-resolver"" $\sigma$ is
\[ L(\Aa,\sigma) = \{w\in \Sigma^\omega \mid \text{ every "$\sigma$-run" over $w$ is accepting}\}. \]
Note that we always have $L(\Aa,\sigma) \subseteq L(\Aa)$.
\AP We say that the "Eve-resolver" $\sigma$ is ""valid"" if $L(\Aa,\sigma) = L(\Aa)$.
We define symmetrically ""resolvers for Adam"" and their languages (note that if $\tau$ is an "Adam-resolver", then $L(\Aa,\tau) \supseteq L(\Aa)$).
For two fixed resolvers $\sigma$ and $\tau$ for Eve and Adam, respectively, every word admits a unique run that is "consistent with" both $\sigma$ and $\tau$. We write $L(\Aa,\sigma,\tau)$ for the set of words for which this run is accepting.

\AP An "alternating automaton" is ""history deterministic"" (HD) if both Eve and Adam have "valid" "resolvers". That is, if there are "resolvers" $\sigma$ and $\tau$ for Eve and Adam such that
\[L(\Aa)=L(\Aa,\sigma)=L(\Aa,\tau)=L(\Aa,\sigma,\tau).\]


\section{Layered automata}\label{sec:layered}
In this section we introduce "layered automata", the central notion of this paper. 
We develop their definition in two stages: the syntactic structure of "layered
automata" (Section~\ref{subsec:syntactic}), and the language they recognise via an associated "alternating
parity automaton" (Section~\ref{subsec:semantics-alternating}).
Thanks to the structure of a "layered automaton" the associated automaton is
guaranteed to be "simple by priorities".

Next, we introduce "uniform semantic determinism", a slightly stronger notion
than "semantic determinism".
"Semantic determinism" is a necessary condition for an
automaton to be HD.
For "layered automata", "uniform semantic determinism" yields two significant
consequences: the associated alternating automaton is both "history
deterministic" and "0-1 probabilistic" (Propositions~\ref{prop:layered-are-HD}
and~\ref{prop:layered-are-probabilistic}).

Our main technical tool, presented in Section~\ref{sec:layered acceptance}, is a
characterisation of acceptance by "layered automata" that does not rely on
alternation.
For this we introduce a notion of "consistency", which characterises "uniform semantic determinism". For "layered automata", we can test "consistency" in polynomial time (Proposition~\ref{prop:consistent-PTIME}).
Furthermore, for "consistent" "layered automata", emptiness and language inclusion are decidable in polynomial time (Propositions~\ref{prop:emptiness-PTIME} and~\ref{prop:inclusion-PTIME}).

\subsection{Syntactic definition of layered automata}\label{subsec:syntactic}

 \AP A ""layered automaton"" $\Aa$ is a tuple of "deterministic" transition systems
$\T_1, \dots, \T_d$ together with "morphisms" $\mu_x:
\T_{x+1} \to \T_{x}$, for $1\leq x <d$. 
Additionally, we require that $\T_1$ is "complete", has a distinguished
"initial state" $q_\init$ in $\T_1$, and all states of $\T_1$ are reachable from
$q_\init$.\footnote{The choice of starting at index $1$
(instead of for instance $0$) is arbitrary. It has been chosen so that layered
automata with two layers correspond to the well understood class of  HD "coB\"uchi automata".}

\AP We say that $\T_x$ is the ""$x$-th layer"" of the automaton.
We use $Q_x$ to denote the set of states of $\T_x$, 
which we refer to as ""$x$-states"".
We assume $Q_1,\dots,Q_d$ pairwise disjoint. 
We let $\d_x:Q_x\times\S \to Q_x\cup\{\bot\}$ be 
the (partial) transition function of $\T_x$ (which is required to be total for $\T_1$).
\AP We write $p\intro*\actx{a} q$ to emphasize that a transition is taken in the "$x$-th layer" of the automaton.
Sometimes we write $Q_{x+1}$ for $x$ some layer; we assume that this is the empty set when $x=d$.

\AP We write $\intro*\Qgeqx$ for the union of $Q_x,\dots,Q_d$, 
so $\QgeqOne$ is the union of all the sets of states.

\AP For every $x$, we define $\wmu_x\colon Q_{\geq x} \to Q_x$ by
\[\intro*\wmu_x(q)=\mu_x(\mu_{x+1}(\dots
\mu_{y-1}(q)\dots)), \, \tfor q\in Q_y, \; y>x \;\ ; \;\; \tand \;\; \wmu_x(q)= q,  \,\tfor q\in Q_x.\]

\AP For a state $q$, let $\intro*\llayer(q)$ be the layer $x$ such that $q\in
Q_x$.
For a pair $(q,a)$ of a state and a letter, let $\llayer(q,a)$ be the biggest
layer $x$ such that $\d_x(\wmu_x(q),a)$ is defined.
Observe that $\llayer(q,a)$ is always defined as $\T_1$ is complete.

\AP We say that $p$ is an ""ancestor"" of $q$ if $\wmu_x(q) = p$, for $x = \llayer(p)$.
\AP Given $q,p\in Q_{\geq x}$, we write $q\intro*\eqx p$ if $\wmu_x(q) = \wmu_x(p)$ and we let $ \intro*\classx{q} = \wmu_x^{-1}(\wmu_x(q))$.
\AP A state $q\in Q_{\geq 1}$ is a ""leaf state"" if it is not an image of any $\mu_x$.

\subparagraph{The forest of a layered automaton.}
\AP We can view a "layered automaton" as a ""forest@@representation"" with nodes
$Q_{\geq 1}$. Each state in $\T_1$ is the root of a tree, a node $q\in Q_{x+1}$
has $\mu_x(q)\in Q_{x}$ as the parent.
The levels of this "forest" are labelled from $1$ (top level) to $d$.

In this way, $\llayer(q)$ is the level of $q$ in this forest.
The function $\wmu_x(q)$ designates the "ancestor" at level $x$ of a node $q\in Q_{\geq x}$. We have that $q\eqx p$ if $p$ and $q$ have the same "ancestor" at level $x$.
Equivalently, $\classx{q}$ is the set of nodes of the subtree rooted at the "ancestor" of $q$ at level $x$.
As expected, "leaf states" are the states that are leaves in this forest.
Observe that more than one level can have leave states.

\begin{example}
  In Figure~\ref{fig:layered-example-trees}, we show two layered automata, represented as trees. 
  Each transition system constitutes a level of the tree.
  The morphisms $\mu_x$, corresponding to the parent relation, are represented as dotted arrows.
  
  In these two examples, all "leaf states" are on the same layer. This needs not to be the case. Figure~\ref{fig:congruence-aba} (Section~\ref{sec:congruences}, Example~\ref{ex:congruence-aba}) shows a "layered automaton" with "leaf states" at different layers.
\end{example}
\begin{figure}[h!]
  \centering
  \includegraphics[width=0.56\textwidth]{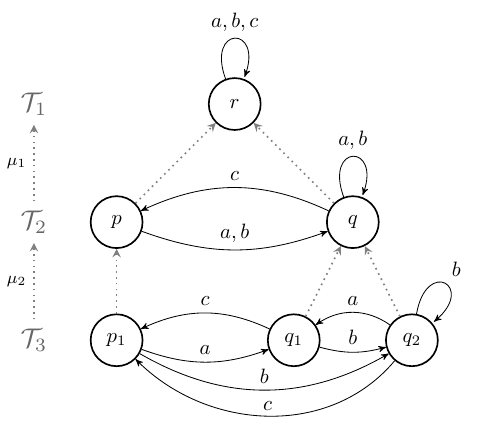}
  \hspace{2mm}
  \includegraphics[width=0.40\textwidth]{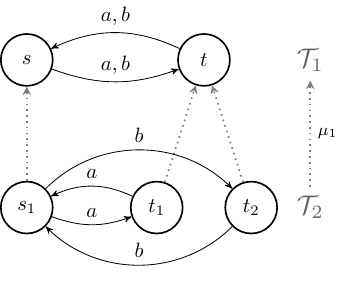}
  \caption{Two "layered automata", represented as trees.}\label{fig:layered-example-trees}
\end{figure}

\subsection{Semantics via alternating parity automata}\label{subsec:semantics-alternating}
We now define acceptance for "layered automata" by associating to each "layered
automaton" $\Aa$ an "alternating automaton" $\sem{\Aa}$, called the "semantics
of" $\Aa$; the language of $\Aa$ is defined as the language of $\sem{\Aa}$. The
tree structure of $\Aa$ constrains the transitions of $\sem{\Aa}$ to be "simple
by priorities". With this semantics in place, we introduce "uniform semantic
determinism", a slight strengthening of "semantic determinism". This
property ensures, that all "leaf states" with the same ancestor in
layer~$1$ accept the same language — a constraint we later characterise via "consistency" (Section~\ref{sec:layered acceptance}).

\subsubsection{Definition}
\AP We associate with a "layered automaton" $\Aa$ its ""semantics automaton"" $\intro*\sem{\Aa}$, which is a "simple by priorities" "alternating parity automaton" defined as follows.
The states $Q$ are the "leaf states" of $\Aa$.
The "initial state" is any "leaf state" $q$ with $\wmu_1(q)=q_\init$ the "initial state" in $\T_1$.\footnote{Formally, this defines a family of automata. 
 In general, these automata are not necessarily equivalent.
 However, for the automata considered in this paper (see notion of "uniformly SD" and "consistency" below), the choice of initial state is irrelevant.}

The transitions are
\[ \Delta = \left\{(q,a,x,q') \in Q\times \Sigma \times [d] \times Q  \;\; \mid \;\; 
\begin{array}{l}
  x=\llayer(q,a), \tand\\
   \wmu_x(q) \act{a} \wmu_x(q') \tin \T_x
\end{array} 
 \right\}.\]
Intuitively, when in a leaf $q$ and given a letter $a$ we consider
$\llayer(q,a)=x$, that is the biggest $x$ such that the transition $\wmu_x(q)\act{a}_x p$ is defined.
There is a transition from $q$ to every leaf $p'\in\wmu^{-1}_x(p)$.
All these transitions have priority $x$.
If $x$ is odd, Eve chooses a leaf under $p$, otherwise it is Adam.

\begin{remark}
  Whenever there is a transition $q\act{a:x}q'$ in $\sem{\Aa}$, the there is
  also a transition from $q$ to every "leaf state" $q''\in
  \classx{q'}$, and to no other states. This property justifies the notation
  $q\act{a:x}\classx{q'}$.
\end{remark}



In the following, we focus almost exclusively on "layered automata" such
that $\sem{\Aa}$ is "semantically deterministic". 
We actually work with a slightly stronger property.
\begin{definition}
  \AP A layered automaton $\Aa$ is ""uniformly semantically deterministic"" if
  for every pair of "leaf states" $p\sim_1 q$ we have $L(\sem{\Aa},p) =
  L(\sem{\Aa},q)$.   
\end{definition}

As the name suggests we have the following.
\begin{lemma}\label{lem:SD-1-equivalent}
  If a "layered automaton" $\Aa$ is "uniformly semantically deterministic" then it is  "semantically deterministic".
\end{lemma}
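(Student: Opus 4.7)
The plan is to unfold the definitions and observe that the $\sim_x$ relations are nested inside $\sim_1$, so that the uniform hypothesis on $\sim_1$-equivalent leaves transfers to all $a$-successors in $\sem{\Aa}$.

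First I would fix a "leaf state" $p$ of $\Aa$ and two transitions $p \act{a} q_1$ and $p \act{a} q_2$ in $\sem{\Aa}$, and aim to prove $L(\sem{\Aa},q_1) = L(\sem{\Aa},q_2)$. Unfolding the definition of the transition relation of $\sem{\Aa}$, both of these transitions must carry the same priority $x = \llayer(p,a)$, and moreover we must have $\wmu_x(p) \actx{a} \wmu_x(q_1)$ and $\wmu_x(p) \actx{a} \wmu_x(q_2)$ in $\T_x$. Since $\T_x$ is a "deterministic" transition system, these two $a$-successors of $\wmu_x(p)$ coincide, so $\wmu_x(q_1) = \wmu_x(q_2)$, i.e., $q_1 \eqx q_2$.

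Next I would show that $\eqx$ refines $\eqxx{1}$. This is a direct induction on $x$: applying $\mu_1 \circ \mu_2 \circ \cdots \circ \mu_{x-1}$ to the common value $\wmu_x(q_1) = \wmu_x(q_2)$ gives $\wmu_1(q_1) = \wmu_1(q_2)$, so $q_1 \eqxx{1} q_2$. Since both $q_1$ and $q_2$ are "leaf states" (they are states of $\sem{\Aa}$), the hypothesis of "uniform semantic determinism" applies and yields $L(\sem{\Aa},q_1) = L(\sem{\Aa},q_2)$, which is exactly the definition of "semantic determinism" for $\sem{\Aa}$.

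There is no real obstacle here: the statement is essentially a reformulation saying that the stronger requirement "same level-$1$ ancestor implies same language" subsumes the local compatibility required by "semantic determinism", once one notices that the transitions of $\sem{\Aa}$ are themselves defined by the level-$x$ morphism and can only branch within a single $\eqx$-class, which a fortiori lies inside a single $\eqxx{1}$-class.
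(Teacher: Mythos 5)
Your proof is correct and follows exactly the same route as the paper's (one-line) argument: both $a$-successors of a leaf state in $\sem{\Aa}$ have the same image under $\wmu_x$ (by determinism of $\T_x$), hence the same image under $\wmu_1$, and then uniform semantic determinism gives equality of their languages. You have merely spelled out in full the details the paper leaves implicit.
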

\begin{proof}
  If $q\act{a} p_1$ and $q\act{a}p_2$ in $\sem{\Aa}$, then $\wmu_1(p_1)=\wmu_1(p_2)$, hence the result.
\end{proof}

In Proposition~\ref{prop:consistent-PTIME} we show that we can check 
in polynomial time if a given layered automaton is uniformly semantically deterministic.

\begin{remark}
    \AP If $\Aa$ is "uniformly semantically deterministic" then $\T_1$ admits a
    "morphism" into the ""automaton of residuals"" of $L = L(\sem{\Aa})$: the
    "transition system" with states the sets $u^{-1}L$ and transitions $u^{-1}L
    \act{a} (ua)^{-1}L$. This is thanks to our assumption that all states in
    $\T_1$ are reachable from $q_\init$.
\end{remark}

\subsubsection{Examples}\label{subsec:examples}

\begin{example}
  In Figure~\ref{fig:alternating-example-3letters}, we show the alternating parity automaton corresponding to the "layered automaton" on the left of Figure~\ref{fig:layered-example-trees}.
  It recognises the language
  \[ L = \text{Words with finitely many } cc \text{ and infinitely many } aa.\]

  The states are the leaves of the tree.
  Whenever a letter $c$ is seen from state $p_1$, priority $1$ is produced, and
  Eve can choose to go to any state.
  Whenever a letter $a$ is seen from state $q_1$, priority $2$ is produced, and
  Adam can choose to go to $q_1$ or $q_2$.
  The rest of transitions produce priority $3$, and correspond to the transitions at level $3$ in the "layered automaton".
\end{example}
\begin{figure}[h!]
  \centering
  \includegraphics[]{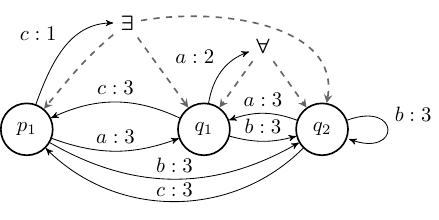}
  \caption{Alternating parity automaton for the "layered automaton" on the left of Figure~\ref{fig:layered-example-trees}.}\label{fig:alternating-example-3letters}
\end{figure}

\begin{example}
  In Figure~\ref{fig:alternating-example-mod2}, we show the alternating parity automaton corresponding to the "layered automaton" on the right of Figure~\ref{fig:layered-example-trees}.
  It recognises the language
  \[ L = (\Sigma^2)^*(aa+bb)^\omega.\]
  In this case, the automaton turns out to be "deterministic".
  The "transition system" $\T_1$ is the automaton of residuals of this language,
  which counts the parity (even or odd) of positions in the word.
\end{example}
\begin{figure}[h!]
  \centering
  \includegraphics[]{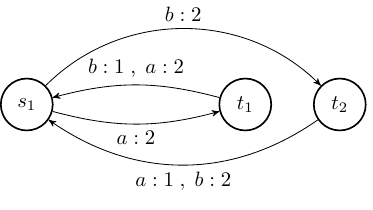}
  \caption{Alternating parity automaton for the "layered automaton" on the right of Figure~\ref{fig:layered-example-trees}.}\label{fig:alternating-example-mod2}
\end{figure}

\subparagraph{Deterministic parity automata as layered automata.} 
"Deterministic automata" can be seen as a special case of "layered automata", as shown in the next proposition.

\begin{proposition}\label{prop:deterministic-are-layered}
  Let $\Bb$ be a "deterministic parity automaton". 
  There exists a "layered automaton" $\Aa$ such that $\sem{\Aa} = \Bb$.
\end{proposition}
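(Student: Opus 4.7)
The plan is to turn each priority of $\Bb$ into its own layer. Write $\Bb = (Q_\Bb, \d_\Bb, q_\init)$ with $\d_\Bb \colon Q_\Bb \times \Sigma \to Q_\Bb \times [d]$; without loss of generality I assume every state of $\Bb$ is reachable from $q_\init$. For each $1 \leq x \leq d$ I take a disjoint copy $Q_x = \{q^{(x)} : q \in Q_\Bb\}$ and define the transition system $\T_x$ on $Q_x$ by retaining only those $\Bb$-transitions whose priority is at least $x$:
\[ p^{(x)} \actx{a} q^{(x)} \quad\text{iff}\quad \d_\Bb(p,a) = (q,y) \text{ for some } y \geq x. \]
The morphism $\mu_x \colon \T_{x+1} \to \T_x$ is the obvious relabelling $q^{(x+1)} \mapsto q^{(x)}$, and the initial state of $\Aa$ is $q_\init^{(1)}$.

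Next I would check this is a valid layered automaton. Each $\T_x$ inherits determinism from $\Bb$; the system $\T_1$ is complete because every $\Bb$-transition has priority at least $1$; each $\mu_x$ is a morphism since any transition in $\T_{x+1}$ has priority $\geq x+1 \geq x$ and hence also belongs to $\T_x$; and reachability in $\T_1$ from $q_\init^{(1)}$ follows from the corresponding property of $\Bb$.

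Finally, I would identify $\sem{\Aa}$ with $\Bb$ under the bijection $q \leftrightarrow q^{(d)}$. Since $q^{(x)} = \mu_x(q^{(x+1)})$ for $x<d$ while no state of $Q_d$ lies in the image of any $\mu_x$, the leaves of $\Aa$ are exactly $Q_d$, so the state set of $\sem{\Aa}$ is in bijection with $Q_\Bb$. For a leaf $q^{(d)}$ we have $\wmu_x(q^{(d)}) = q^{(x)}$, so $\llayer(q^{(d)}, a)$ equals the largest $x$ for which $\d_x(q^{(x)}, a)$ is defined; by construction this is exactly the $\Bb$-priority $y$ of $\d_\Bb(q,a) = (q', y)$. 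Moreover every $\mu_x$ is injective, so each class $\classxx{q^{(d)}}{y}$ is a singleton, and the unique transition of $\sem{\Aa}$ out of $q^{(d)}$ on $a$ is $q^{(d)} \act{a : y} (q')^{(d)}$, matching $\Bb$ under the bijection. There is no genuine obstacle here: the argument is essentially bookkeeping, and the only delicate point is keeping the indices of the disjoint copies straight and tracking the correspondence between the largest layer carrying an $a$-transition and the $\Bb$-priority of that transition.
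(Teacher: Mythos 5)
Your proposal is correct and is essentially the paper's own construction: the paper likewise takes $\T_x$ to be a disjoint copy of $\Bb$ restricted to transitions of priority $\geq x$, with identity-relabelling morphisms, and notes that $\sem{\Aa}=\Bb$ is a direct check. Your write-up merely spells out that verification (leaves are the layer-$d$ copies, $\llayer(q,a)$ equals the $\Bb$-priority, classes are singletons) in more detail.
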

\begin{proof}
  Let $\Bb_{\geq x}$ be the "transition system" obtained by restricting $\Bb$ to
  transitions with priority $\geq x$. Let $\Aa$ be the "layered automaton" with
  "transition systems" $\T_x = \set{x}\times\Bb_{\geq x}$ and "morphisms"
  $\mu_x((x+1,q)) = (x,q)$ for all priorities $x$ and states $q$. (Note that the $x$-component is just added for obtaining disjointness of the sets $Q_x$.)
  It is direct to check that $\sem{\Aa} = \Bb$.
\end{proof}

\subparagraph{Minimal HD coB\"uchi automata.} 
\AP A (nondeterministic) ""coB\"uchi automaton"" is a ($\Sigma\times [1,2]$)-"transition system".
Kuperberg and Skrzypczak showed that "history deterministic" "coB\"uchi
automata" can be pruned not only into "semantically deterministic", but also
into ""safe deterministic"" automata at the same time.
The later property meaning that the non-deterministic choices
only appear in transitions carrying priority
$1$~\cite{Kup.Skr.Determinisation2015}.
Such automata are "simple by priorities automata" with "priorities" $[1,2]$ as for every $q\in Q$ and $a\in \S$ there is at most one transition $q\act{a:2}q'$.

Building on this work, Abu Radi and Kupferman showed that "HD" "coB\"uchi" automata can be minimised in polynomial time, and that the minimal automaton is unique if it is assumed "$1$-saturated"~\cite{Abu.Kup.Minimization2022}. 
\AP An automaton is ""$1$-saturated"" if whenever $q\act{a:1}p$, then
$q\act{a:1}p'$ for all $p'$ with $L(p') = L(p)$.
This minimisation is a special of the minimisation of "layered
automata" presented here. 

\begin{proposition}\label{prop:minimalCoBuchi-are-layered}
  Let $\Bb$ be a "semantically deterministic", "safe deterministic", "$1$-saturated" "coB\"uchi automaton".
  There exists a "layered automaton" $\Aa$ such that $\sem{\Aa} = \Bb$.
\end{proposition}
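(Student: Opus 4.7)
The plan is to take $\Aa$ to have two layers, where $\T_2$ captures the safe (priority-$2$) edges of $\Bb$ and $\T_1$ is the quotient of $\Bb$ under language equivalence. Concretely, set $Q_2 = Q^\Bb$ with $p \act{a} q$ in $\T_2$ whenever $p \act{a:2} q$ in $\Bb$; let $Q_1 = \{[q] : q \in Q^\Bb\}$ where $[q] = \{q' \in Q^\Bb : L(\Bb, q') = L(\Bb, q)\}$, and define $\mu_1(q) = [q]$. The transitions of $\T_1$ are $[p] \act{a} [q]$ whenever there is some $\Bb$-transition $p' \act{a} q'$ with $p' \in [p]$ and $q' \in [q]$, and the initial state of $\T_1$ is $[q_0^\Bb]$, where $q_0^\Bb$ is the initial state of $\Bb$.

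Verifying that this is a well-formed "layered automaton" is straightforward. By the "safe determinism" of $\Bb$, at most one priority-$2$ transition leaves each state on each letter, so $\T_2$ is "deterministic@@TS". For $\T_1$, "semantic determinism" of $\Bb$ guarantees that every $\Bb$-successor of a state in $[p]$ on letter $a$ lies in the single class $[q]$ with $L(\Bb, q) = a^{-1} L(\Bb, p)$; hence $\T_1$ is well-defined and deterministic, and it is "complete" by completeness of $\Bb$. The map $\mu_1$ is a "morphism@@TS" by construction, since every priority-$2$ edge of $\Bb$ projects to a $\T_1$-edge between the corresponding classes. (If we wish, we can restrict to the reachable part of $\T_1$ to ensure all states are reachable from $[q_0^\Bb]$.)

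It remains to check that $\sem{\Aa}$ coincides with $\Bb$. The "leaf states" of $\Aa$ are exactly the elements of $Q_2 = Q^\Bb$, since every $1$-state is an image of $\mu_1$; and $q_0^\Bb$ serves as "initial state" of $\sem{\Aa}$ because $\wmu_1(q_0^\Bb) = [q_0^\Bb]$. For a leaf $q$ and letter $a$, recall that since "safe deterministic" "coB\"uchi automata" are "simple by priorities" (as noted just before the statement), all transitions $q \act{a} \cdot$ in $\Bb$ share one "priority". If it is $2$, then $\delta_2(q,a)$ is defined, so $\llayer(q,a) = 2$ and $\sem{\Aa}$ produces the single transition $q \act{a:2} \delta_2(q,a)$, matching $\Bb$. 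If the common priority is $1$, then $\delta_2(q,a) = \bot$, so $\llayer(q,a) = 1$ and $\sem{\Aa}$ yields transitions $q \act{a:1} p$ for every leaf $p$ in the class $\delta_1([q], a)$.

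The delicate point is precisely here, and this is where "$1$-saturation" enters in an essential way: without it, $\sem{\Aa}$ would include priority-$1$ transitions to all language-equivalent successors, whereas $\Bb$ might only list some of them. By the "$1$-saturation" hypothesis, however, the set of $\Bb$-targets of priority-$1$ transitions from $q$ on $a$ is closed under language equivalence and hence coincides exactly with the class $\delta_1([q], a)$. All three hypotheses on $\Bb$ are therefore genuinely used: safe determinism gives determinism of $\T_2$, semantic determinism makes $\T_1$ well-defined, and $1$-saturation makes the priority-$1$ transitions of $\sem{\Aa}$ match those of $\Bb$.
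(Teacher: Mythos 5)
Your proposal is correct and takes essentially the same route as the paper: the paper also sets $\T_2 = \Bb_{\geq 2}$ and takes $\T_1$ to be the automaton of residuals of $\Bb$ with $\mu_1(q) = L(q)$, which coincides (by semantic determinism, on the reachable part) with your quotient of $\Bb$ under language equivalence of states. The paper declares the verification "immediate"; your spelled-out check, including the use of safe determinism for $\T_2$, semantic determinism for $\T_1$, and $1$-saturation to match the priority-$1$ transitions of $\sem{\Aa}$ with those of $\Bb$, is exactly the intended argument.
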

\begin{proof}
  Let $\T_1$ be the automaton of residuals of $\Bb$ and $\T_2 = \Bb_{\geq 2}$, with $\mu_1(q) = L(q)$.
  It is immediate to check that $\sem{\Aa} = \Bb$.
\end{proof}

\subsubsection{From alternating parity automata to layered automata}\label{sec:alternating-to-layered}
We have introduced layered automata as a tree shaped structure that is 
interpreted as an alternating parity automaton. 
Here we offer another perspective on this class.
We characterise directly alternating parity automata coming from "layered
automata" in terms of equivalence relations on states that should
be well-behaved.


Consider a "simple by priorities" alternating parity automaton together with a family of $d$ nested equivalence relations on its states, that is:
\begin{equation*}
   Q\times Q\supseteq \; \intro*\lsim_1 \;\supseteq \; \lsim_2 \;\supseteq \dots \supseteq  \; \lsim_{d}.
\end{equation*}

\AP We consider the following properties. For all $a\in \S$, $x\in [d]$ and $p,q\in Q$:
\begin{itemize}
 \item \textbf{(""$x$-determinism"")} If $p\act{a:x}q$ and $p\act{a:x}q'$, then   $q'\sim_x q$.\\
 \item \textbf{(""$x$-coherence"")} If $p\lsim_x p'$ and $p\act{a:\geq x}q$  then $p'\act{a : \geq x}q'$ for some $q'\sim_x q$.\\
 \item \textbf{(""$x$-saturation"")} If $p\act{a:x}q$ then   for all $q'\sim_x q$ there is a transition $p\act{a:x}q'$. 
\end{itemize}

\begin{proposition}
  Let $\Bb$ be a "simple by priorities" automaton.
  Then, there is a "layered automaton" $\Aa$ such that $\Bb = \sem{\Aa}$ if and only if $\Bb$ can be equipped with $d$ nested equivalence relations $\lsim_1,\dots,\lsim_d$ satisfying  "$x$-determinism", "$x$-coherence" and "$x$-saturation".
\end{proposition}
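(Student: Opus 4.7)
My plan is to prove the two directions of the biconditional constructively, reading off the equivalence relations from a "layered" structure in one direction and rebuilding the layers from the equivalences in the other.

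For $(\Rightarrow)$, suppose $\Bb = \sem{\Aa}$. The states of $\Bb$ are the "leaf states" of $\Aa$, possibly sitting at different layers. On this set I would define $p \lsim_x q$ to hold iff either $p = q$, or both $p,q \in \Qgeqx$ and $\wmu_x(p) = \wmu_x(q)$: two leaves are $\lsim_x$-equivalent exactly when they share an "ancestor" at layer $x$, with leaves situated above layer $x$ kept as singleton classes. Nestedness $\lsim_{x+1}\;\subseteq\;\lsim_x$ is immediate from the identity $\wmu_x = \mu_x \circ \wmu_{x+1}$. The three axioms then follow by unwinding the definition of $\sem{\Aa}$: "$x$-determinism" uses the determinism of $\T_x$ (two $a{:}x$-successors of the same source lie in the same fibre $\wmu_x^{-1}(\cdot)$); "$x$-coherence" uses that the existence of an $a{:}{\geq}x$ transition in $\sem{\Aa}$ from $p$ depends only on $\wmu_x(p)$; and "$x$-saturation" is exactly the remark after the definition of $\sem{\Aa}$ stating that the $a{:}x$-successors of $p$ form the full class $\classx{q}$.

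For $(\Leftarrow)$, suppose $\Bb$ is equipped with nested equivalences satisfying the three axioms. I take $Q_y := Q/\lsim_y$ as the state set of the $y$-th layer $\T_y$, let the natural projections $\mu_y : [p]_{y+1} \mapsto [p]_y$ be the morphisms (well-defined by nestedness), and set $\d_y([p]_y, a) = [q]_y$ whenever some $p' \in [p]_y$ admits a transition $p' \act{a:\geq y} q$ in $\Bb$, and $\bot$ otherwise. Independence of the representative $p'$ uses "$y$-coherence"; single-valuedness of $\d_y$ (independence of the target $q$) follows from "$x$-determinism" applied with $x$ the priority of $(p,a)$ in $\Bb$, combined with nestedness $\lsim_x\;\subseteq\;\lsim_y$. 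Completeness of $\T_1$ follows from that of $\Bb$ (all priorities are $\geq 1$), and each $\mu_y$ is a morphism since priority $\geq y+1$ implies priority $\geq y$. To check $\sem{\Aa} = \Bb$, I unwind: $\llayer(q,a)$ recovers the priority of $(q,a)$ in $\Bb$, and the $a{:}x$-successors of $q$ in $\sem{\Aa}$ form the class $[q'']_x$ for any single $a{:}x$-successor $q''$ in $\Bb$, which by "$x$-saturation" is exactly the set of $a{:}x$-successors of $q$ in $\Bb$.

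The main subtlety I anticipate is the handling of leaves sitting at layers strictly below $d$ in the forward direction: these must be treated as singleton $\lsim_x$-classes for every $x$ above their own layer, hence the clause ``both defined and equal'' in the definition above. Correspondingly, the backward construction places every leaf at layer $d$, which implicitly requires $\lsim_d$ to be the identity on $Q$; this should either be folded into the statement or taken as a convention on what ``$d$ nested equivalence relations on $Q$'' is meant to mean here. With this convention fixed, the three axioms propagate cleanly through both constructions and the biconditional is established.
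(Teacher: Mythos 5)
Your proof is correct and takes essentially the same route as the paper: in the forward direction you define $p\lsim_x q$ as ``same ancestor at layer $x$'' (the paper does exactly this), and in the backward direction you rebuild each layer $\T_x$ as the quotient of $\Bb$ by $\lsim_x$ with transitions induced by the priority-${\geq}x$ transitions, which is the paper's construction verbatim. The two fine points you flag --- treating leaves lying above layer $x$ as singleton $\lsim_x$-classes, and the fact that literal equality $\Bb=\sem{\Aa}$ forces $\lsim_d$ to be (taken as) the identity since all leaves of the reconstructed automaton sit on layer $d$ --- are legitimate but minor; the paper silently subsumes them in its ``routine check''.
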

\begin{proof}
  Assume that $\Bb = \sem{\Aa}$ for a "layered automaton" $\Aa$.
  Then, we can define $q\lsim_x p$ if $\wmu_x(q) =
  \wmu_x(p)$ and $q,p\in Q_{\geq x}$.
  By definition of $\sem{\Aa}$, it is immediate that the three properties hold.

  Assume that $\Bb$ can be equipped with $d$ nested equivalence relations $\lsim_1,\dots,\lsim_d$ satisfying the three properties above. 
  Let $\T_x$ be the "transition system" having as states the $\lsim_x$-classes of $\Bb$, and $[p]_x \actx{a} [q]_x$ if $p\act{a:\geq x} q'$ in $\Bb$ for some $q'\in [q]_x$ (which, by "$x$-saturation", implies that $p\act{a:\geq x} q'$ for all $q'\in [q]_x$).
  These transitions are well-defined by "$x$-coherence".
  The "transition system" $\T_x$ is "deterministic" by "$x$-determinism".
  We define $\mu_{x+1}([q]_{x+1}) = [q]_x$, which is a "morphism" by definition of the transitions of $\T_x$.
  The fact that $\Bb = \sem{\Aa}$ is a routine check.
\end{proof}

We note, however, that it may be possible to equip $\Bb$ with different equivalence relations satisfying the three properties above.
In that case, the proof above provides two different "layered automata" $\Aa, \Aa'$ such that $\sem{\Aa} = \sem{\Aa'} = \Bb$.


\subsection{Semantics via safe languages}\label{sec:layered acceptance}
This section offers an alternative characterization of layered automaton
languages that avoids explicit reference to alternation. While our earlier
definition used the classical concept of alternating automata and allowed us to
introduce "uniform semantic determinism", we now present a more direct approach.
We introduce a restriction called "consistency" directly on "layered automata", which we prove equivalent to
"uniform semantic determinism". For consistent automata, we can characterize the
language of $\sem{\Aa}$ through an "acceptance@@layered" condition that does not
rely on alternation. This "acceptance@@layered" notion closely parallels the
natural color of words~\cite{Ehl.Sch.Natural2022} and the acceptance mechanisms
of rerailing automata~\cite{Ehlers.Rerailing2025}, as discussed further in
Section~\ref{subsec:comparison}.

\subparagraph{Safe languages.}  The ""$x$-safe language"" of a state $q\in Q_{\geq x}$ is defined
as
\[   \intro*\Lsafex(q) =  \set{w\in\S^*\cup\S^\w \mid \text{ the run over $w$ from } \wmu_x(q) \text{ is defined in $\T_x$}}. \]
That is, $\Lsafex(q)$ are the words on which there is a path from the "ancestor"
of $q$ at level $x$. Note that by "completeness" of $\T_1$ we have $\Lsafexx{1}(q) =
\S^*\cup\S^\w$ for every state $q$.
When we need to specify the automaton $\Aa$, we write $\LsafexxA{x}{\Aa}(q)$.

\begin{example}
  For the "layered automaton" on the left of Figure~\ref{fig:layered-example-trees}, we have
  \begin{align*}
  \Lsafexx{2}(q) =& \text{ Words without the factor } cc,\\
  \Lsafexx{3}(q_2) =& \text{ Words without the factors } cc \tand aa.
  \end{align*}
\end{example}

We have the following equalities relating the words
labelling paths with no "priority" $<x$ in $\sem{\Aa}$ and "$x$-safe languages". 
Recall that we write $\run{w:\geq x}$ if the minimal priority on this path is $\geq x$.

\begin{align*}
  \Lsafex(q) & =  \set{w\in\S^*\cup\S^\w \mid \text{there is a path } \wmu_x(q)\run{w}_x \text{ in $\T_x$}} \\
  & =  \set{ w\in\S^*\cup\S^\w \mid \text{there is a path } \classx{q} \run{w : \geq x} \text{ in } \sem{\Aa}} \\
  & =  \set{ w\in\S^*\cup\S^\w \mid \text{no path } \classx{q} \lrun{w} \text{ in } \sem{\Aa} \text{ sees a priority $<x$}} .
\end{align*}

\begin{definition}
  Consider a "layered automaton" $\Aa$.
  A word $w\in\S^\w$ is ""ultimately safe"" in layer $x$ if there is a decomposition $uw'=w$ and a state
  $p$ on layer $x$ such that $q_\init\act{u}_1\wmu_1(p)$ and $w'\in \Lsafex(p)$.
  We say that $w$ is ""accepted@@layered"" by $\Aa$ if the maximal layer in which $w$ is "ultimately safe" is even.
  If it is odd then we say that $w$ is ""rejected@@layered"" by $\Aa$.
\end{definition}

We also introduce a related notion of "strong acceptance" that allows us to
define "consistency", and eventually show the equivalence between acceptance by
$\Aa$ and the language of $\sem{\Aa}$

\subparagraph{Strong acceptance.}
We say that $w\in\S^\w$ is ""strongly accepted by $p$"" if $w\in\Lsafex(p)$, $p$
is on some even layer $x$, and for every decomposition $w = uw'$, there is no $p'\in Q_{x+1}$ with
$\wmu_x(p)\runx{u}\wmu_x(p')$ and $w'\in \Lsafexx{x+1}(p')$. 
The notion of $w$ being ""strongly rejected"" is defined similarly but for $x$ odd.

Intuitively, "strong acceptance" implies that on every run from $\classx{p}$ over
$w$ in $\sem{\Aa}$, only priorities $\ge x$ occur and $x$ occurs infinitely
often, regardless the choices of the players.
As $x$ is even, all these runs are accepting.

A priori, a word can be both "strongly accepted" by some state, and "strongly rejected" by a different state.
We forbid this by imposing an extra condition called "consistency", defined below.
We show that under the "consistency" assumption, all the acceptance mechanisms we introduce are equivalent: acceptance by the "alternating automaton" $\sem{\Aa}$, "acceptance@@layered" in $\Aa$, or existence (resp. non-existence) of a suffix that is "strongly accepted" (resp. "strongly rejected") by a state in a suitable residual (see Corollary~\ref{cor:equivalences-acceptance}).

\begin{example}
  In the "layered automaton" on the left of Figure~\ref{fig:layered-example-trees}, any word containing infinitely many factors $cc$'s is "strongly rejected" by the state $r\in Q_1$.
  The state $q\in Q_2$ "strongly accepts" the words that contain no factor $cc$ and infinitely many factors $aa$.
\end{example}

The next two lemmas link  "strong acceptance" in $\Aa$ and the
acceptance by $\sem{\Aa}$. 
\begin{lemma}\label{lem:language-p-strong-acc}
  Let $\Aa$ be a "layered automaton", $p$ a "leaf state".
  If a word $w$ is "strongly accepted"  by $\wmu_x(p)$, for some even $x$,   then $w\in  L(\sem{\Aa},p)$. 
  Analogously, if $w$ is "strongly rejected"  by $\wmu_x(p)$, for some odd $x$, then $w\not\in
  L(\sem{\Aa},p)$
\end{lemma}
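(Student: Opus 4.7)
Both halves of the statement reduce to a single claim: in every play of $\Gg(\sem{\Aa},w)$ starting at $p$, independently of the players' choices, the emitted priority sequence satisfies $\liminf_i y_i = x$. Once this is established, the parity of $x$ immediately decides the winner and yields $w \in L(\sem{\Aa},p)$ when $x$ is even and $w \notin L(\sem{\Aa},p)$ when $x$ is odd.

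Let $q = \wmu_x(p)$ and consider the unique infinite run $q_0 q_1 q_2 \dots$ of $w = a_1 a_2 \dots$ in $\T_x$ from $q_0 = q$, which exists because $w \in \Lsafex(p)$ is part of both strong acceptance and strong rejection. The first step is to prove the invariant that the leaf $p_i$ reached after reading $a_1 \dots a_i$ always satisfies $\wmu_x(p_i) = q_i$, irrespective of who chose what. Assuming the invariant at step $i$, the priority emitted on $a_{i+1}$ is $y_i = \llayer(p_i, a_{i+1}) \geq x$, since the $\T_x$-transition $q_i \actx{a_{i+1}} q_{i+1}$ exists. Any successor leaf $p_{i+1}$ selected by the game lies in $\wmu_{y_i}^{-1}(r)$, where $r$ denotes the $\T_{y_i}$-successor of $\wmu_{y_i}(p_i)$ on $a_{i+1}$; composing the morphisms $\mu_x, \dots, \mu_{y_i-1}$ and invoking determinism of $\T_x$, one obtains $\wmu_x(p_{i+1}) = q_{i+1}$. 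In particular $y_i \geq x$ in every play.

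The main step is to rule out $\liminf y_i > x$. Suppose for contradiction that $y_i \geq x+1$ for all $i \geq N$. Set $u = a_1 \dots a_N$, $w' = a_{N+1} a_{N+2} \dots$, and $p' = \wmu_{x+1}(p_N)$. Because each $y_i \geq x+1$ for $i \geq N$, the $\T_{x+1}$-transition on $a_{i+1}$ out of $\wmu_{x+1}(p_i)$ is defined at every such $i$; these transitions chain into an infinite run of $w'$ from $p'$ in $\T_{x+1}$, witnessing $w' \in \Lsafexx{x+1}(p')$. The invariant gives $\wmu_x(p') = q_N$ and $\wmu_x(q) \runx{u} q_N$, so the triple $(u, w', p')$ directly contradicts the strong acceptance (respectively, rejection) of $w$ by $q$. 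Therefore $\liminf y_i = x$, concluding both claims.

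I expect the principal subtlety to lie in the morphism-composition argument for the invariant $\wmu_x(p_i) = q_i$: it is exactly what makes the $\T_x$-image of the play depend only on $w$, reducing the alternating game to a single deterministic run in $\T_x$. Once this is in hand, the strong-acceptance (resp.\ rejection) clause has been tailored precisely to forbid any infinite escape to layer $x+1$, so the parity conclusion is automatic.
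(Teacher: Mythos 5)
Your proof is correct and follows essentially the same route as the paper's: show that every play from $p$ over $w$ emits only priorities $\geq x$ (because $w\in\Lsafex(p)$) and emits $x$ infinitely often, since an eventual escape of the form $p\run{u}q\run{w':>x}$ would yield a state $p'\in Q_{x+1}$ with $\wmu_x(p)\runx{u}\wmu_x(p')$ and $w'\in\Lsafexx{x+1}(p')$, contradicting strong acceptance (resp.\ rejection). The only difference is presentational: you spell out via the invariant $\wmu_x(p_i)=q_i$ the correspondence between runs of $\sem{\Aa}$ and runs in $\T_x$/$\T_{x+1}$, which the paper instead takes from the displayed equalities relating $x$-safe languages to $\geq x$ paths in $\sem{\Aa}$.
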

\begin{proof}
  We show that any "run@@lay" over $w$ from $p$ produces the "priority" $x$ infinitely often (regardless the strategies of Eve and Adam). Since $w\in \Lsafex(p)$, only priorities $\geq x$ are produced by any run over $w$ from $p$. 
  If such a run does not produce priority $x$ infinitely often, then it is of the form $p \lrun{u} q \run{w': >x}$, a contradiction.
\end{proof}

\begin{lemma}\label{lem:layered-acceptance-strong}
  If $\Aa$ "accepts@@layered" $w$ then there is a decomposition $uw'=w$
  and a state $p$ on some even layer $x$ such that $q_\init\act{u}_1\wmu_1(p)$ and $w'$ is "strongly accepted" from $p$.
\end{lemma}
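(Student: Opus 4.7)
The plan is to exploit the maximality of the accepting layer. Let $x$ be the maximal layer in which $w$ is "ultimately safe"; since $\Aa$ "accepts@@layered" $w$, by definition $x$ is even. Fix any witness of this ultimate safety: a decomposition $w = uw'$ together with a state $p \in Q_x$ such that $q_\init \act{u}_1 \wmu_1(p)$ and $w' \in \Lsafex(p)$. I will show that this pair $(u,p)$ already satisfies the conclusion of the lemma, namely that $w'$ is "strongly accepted" by $p$.

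The first two clauses of "strong acceptance" are immediate from the choice: $w' \in \Lsafex(p)$ and $p$ lies on the even layer $x$. For the remaining clause, suppose for a contradiction that there exist a decomposition $w' = u_1 w''$ and a state $p' \in Q_{x+1}$ with $\wmu_x(p) \runx{u_1} \wmu_x(p')$ and $w'' \in \Lsafexx{x+1}(p')$. The composition $\mu_1 \circ \mu_2 \circ \dots \circ \mu_{x-1} : \T_x \to \T_1$ is a "morphism" of "transition systems" by the definition of a "layered automaton", so the run $\wmu_x(p) \runx{u_1} \wmu_x(p')$ in $\T_x$ projects to a run $\wmu_1(p) \act{u_1}_1 \wmu_1(p')$ in $\T_1$. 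Concatenating with $q_\init \act{u}_1 \wmu_1(p)$ yields $q_\init \act{uu_1}_1 \wmu_1(p')$. Together with $w'' \in \Lsafexx{x+1}(p')$ and $p' \in Q_{x+1}$, this exhibits the decomposition $w = (uu_1) w''$ as a witness that $w$ is "ultimately safe" at layer $x+1$, contradicting the maximality of $x$. Hence $(u, p)$ is a witness of strong acceptance, as required.

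I do not expect a real obstacle: the underlying observation is that "strong acceptance" at layer $x$ is essentially the statement that one cannot descend further in the layer hierarchy, which coincides exactly with the maximality condition in the definition of "acceptance@@layered". The only step requiring care is lifting a run of $\T_x$ to a run of $\T_1$ along the tree morphisms $\mu_1, \dots, \mu_{x-1}$, which is immediate from the "morphism" property. In particular, note that the argument goes through uniformly when $x = d$: then $Q_{x+1} = \es$ and the third clause of strong acceptance holds vacuously.
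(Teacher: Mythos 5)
Your proof is correct and follows essentially the same route as the paper: take the maximal layer in which $w$ is "ultimately safe" (even by acceptance), fix its witness $(u,p)$, and derive strong acceptance from maximality. The paper leaves the final step as "direct to check"; you simply spell it out, including the projection of the $\T_x$-run to $\T_1$ along the morphisms, which is exactly the intended argument.
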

\begin{proof}
  Consider the maximal layer $x$ where $w$ is "ultimately safe", which is even as $w$ is "accepted@@layered".  
  By definition, we have a decomposition $uw'=w$, with a state $p$, a path $q_\init\act{u}_1\wmu_1(p)$, and  $w'\in
  \Lsafex(p)$. 
  It is direct to check that $w'$ is "strongly accepted" from $p$, because
  otherwise $x$ would not be the maximal layer where $w$ is "ultimately safe". 
\end{proof}

To get the desired characterisation we need one more condition that corresponds to "uniform semantic determinism".


\subparagraph{Consistency} \AP We say that a "layered automaton" $\Aa$ is
""consistent"" if there is no pair of states $p,p'\in
Q_{\geq 1}$ with $\wmu_1(p) = \wmu_1(p')$, and a word $w\in \S^\omega$ such that $w$ is "strongly
accepted" by $p$ and $w$ is "strongly rejected" by $p'$.

As stated below, this notion is equivalent to "uniform semantic determinism" of $\sem{\Aa}$.
We show in Proposition~\ref{prop:consistent-PTIME} that this property can be checked in polynomial time.

We state the key result saying that the semantics via "alternating parity automata" and via "strong acceptance" coincide for "consistent" "layered automata".
The proof, which requires introducing "longest suffix resolvers" in $\sem{\Aa}$, is relegated to Section~\ref{subsec:layered-HD} below.

\begin{restatable}{theorem}{equivSemantics}
  \label{thm:equivSemantics}
  Let $\Aa$ be a "consistent" "layered automaton".
  A word $w\in\S^\w$ is "accepted@@layered" by $\Aa$ if and only if $w\in
  L(\sem{\Aa})$.
  Moreover, $\sem{\Aa}$ is "uniformly semantically deterministic". 
\end{restatable}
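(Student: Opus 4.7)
My approach is to construct explicitly two "resolvers" --- $\sigma$ for Eve and $\tau$ for Adam in $\sem{\Aa}$, the so-called "longest suffix resolvers" --- and to show that they are both "valid" and jointly induce a run that is "accepting@@run" exactly when $w$ is "accepted@@layered" by $\Aa$. This will simultaneously yield $L(\sem{\Aa}) = L(\Aa)$ and, because the resolvers' choices depend only on the ancestor data $\wmu_y(\cdot)$ of the current state (not on which particular leaf of a given $\lsim_1$-class is occupied), "uniform semantic determinism" will fall out as a corollary.

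\textbf{Defining the resolvers.} Suppose Eve must resolve a transition from a leaf $p$ on letter $a$ with odd priority $x = \llayer(p,a)$, after having read the prefix $u$; let $\bar p$ denote the $\T_x$-target of $\wmu_x(p)$ under $a$. She considers all pairs $(y, v)$ with $y > x$, $v$ a suffix of $u a$, and some $q \in Q_y$ satisfying $\wmu_x(q) = \bar p$ and $v \in \Lsafexx{y}(q)$; she picks the pair maximising $y$ first and then $|v|$, and moves to some leaf descendant of the corresponding~$q$. Adam's resolver is defined symmetrically on transitions with even priority.

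\textbf{Correctness.} Suppose $w$ is "accepted@@layered" by $\Aa$ with maximal witnessing even layer $x$. By Lemma~\ref{lem:layered-acceptance-strong}, there is a factorisation $w = u w'$ and a state $p$ on layer $x$ with $q_\init \act{u}_1 \wmu_1(p)$ such that $w'$ is "strongly accepted" by $p$. I claim that on the $\sigma$-run over $w$ the layer selected by Eve's resolver stabilises: it cannot remain strictly above $x$ from some point on, as that would make $w$ "ultimately safe" at a layer larger than $x$ and contradict maximality; and it must reach layer $x$ infinitely often, because "strong acceptance" forces every prefix of $w'$ eventually to leave $\Lsafexx{x+1}(q)$ for every admissible $q$, which triggers Eve's resolver to commit to layer $x$ and produces priority $x$. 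A symmetric argument handles Adam on "rejected@@layered" words. Every word falls into exactly one of the two categories --- "completeness" of $\T_1$ guarantees that layer~$1$ is always "ultimately safe", and "consistency" rules out the two outcomes coexisting for the same word --- hence $L(\sem{\Aa}) = L(\Aa)$.

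\textbf{Main obstacle.} The technical crux is the stabilisation claim above: the greedy longest-suffix rule must genuinely converge to the correct layer, never being misled into producing a priority strictly below~$x$ infinitely often on an accepted word. This is precisely the type of failure that "consistency" forbids: a word being simultaneously "strongly accepted" by some state and "strongly rejected" by a $\lsim_1$-equivalent state would encode exactly such a conflict. Once stabilisation is in place, "uniform semantic determinism" is immediate: the longest suffix resolvers are invariant under refinement of the current leaf within its $\lsim_1$-class, so any two such leaves define identical $\sigma$- and $\tau$-runs and hence recognise the same language in $\sem{\Aa}$.
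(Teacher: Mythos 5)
Your overall architecture matches the paper's (longest-suffix resolvers for both players, then a contradiction with "consistency"), but the proposal has a genuine gap at exactly the point you flag as the ``main obstacle'': the stabilisation claim is never proved, and attributing it to "consistency" gets the logic backwards. The dangerous scenario is that the $\sigma$-run has odd liminf priority $x'$ (possibly $x' < x$, since after reading $u$ the run may sit at a leaf that is \emph{not} a descendant of $\wmu_x(p)$, only $\sim_1$-equivalent to it). To derive a contradiction with "consistency" one must first show that the tail of the word is "strongly rejected" at layer $x'$ from the current state's ancestor; only then can it be contrasted with the "strongly accepted" witness at the even layer. That first step is \emph{not} a consequence of "consistency" at all: it is a purely structural pigeonhole argument about the resolver --- each time priority $x'$ is produced, one candidate ``support'' state in $Q_{x'+1}$ is permanently eliminated, so if some $(x'+1)$-safe continuation existed the resolver would latch onto it and stop producing $x'$ after at most $|Q_{x'+1}|$ occurrences (this is the paper's Lemma~\ref{lem:longest-suffix-ensures}). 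Without this lemma, nothing prevents the greedy rule from being ``misled'' infinitely often, and your proof does not close.

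Two further problems. First, your resolver is not the one that supports this counting argument: you require a suffix $v$ of the word read so far with $v \in \Lsafexx{y}(q)$, i.e.\ readable \emph{from} the candidate's ancestor, whereas the relevant quantity is the longest suffix labelling a path \emph{ending at} $\wmu_{x+1}(q')$ in $\T_{x+1}$ (this is what makes eliminated supports stay eliminated); moreover maximising the layer $y$ before the suffix length can override the $(x{+}1)$-suffix maximisation on which the elimination argument rests, and you give no replacement argument. Second, the final step for "uniform semantic determinism" is wrong as stated: two $\sim_1$-equivalent leaves can lie under different subtrees, so $\llayer(\cdot,a)$ and hence the priorities and successor classes of their first transitions differ --- the $\sigma$- and $\tau$-runs are not ``identical''. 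The correct route (the paper's) is to observe that "acceptance@@layered" by $(\Aa,q)$ depends only on $\wmu_1(q)$ and that the resolver lemma holds for every initial state, so $L(\sem{\Aa},p)=L(\Aa,p)=L(\Aa,q)=L(\sem{\Aa},q)$ for $p\sim_1 q$; this is recoverable once the escape lemma is in place, but your ``identical runs'' shortcut is not.
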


\begin{corollary}\label{cor:consistent-iff-unifSD}
  A "layered automaton" is "consistent" if and only if it is "uniformly semantic deterministic".
\end{corollary}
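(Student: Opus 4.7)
The plan is to handle the two directions separately. The implication from consistency to uniform semantic determinism is already contained in Theorem~\ref{thm:equivSemantics}, whose second clause asserts precisely this, so nothing new needs to be done for that direction.

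For the converse I will argue by contraposition. Suppose $\Aa$ is not consistent: there exist states $p,p'\in Q_{\geq 1}$ with $\wmu_1(p) = \wmu_1(p')$ and a word $w\in\Sigma^\omega$ that is strongly accepted by $p$ and strongly rejected by $p'$. By the definitions of strong acceptance and strong rejection, $p$ sits on some even layer $x$ and $p'$ sits on some odd layer $x'$. My goal is to produce two $\sim_1$-equivalent leaf states of $\Aa$ whose languages in $\sem{\Aa}$ disagree on $w$, which directly contradicts uniform semantic determinism.

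To do this, I pick a leaf descendant $\hat p$ of $p$ and a leaf descendant $\hat p'$ of $p'$ in the forest representation of $\Aa$; these exist because the forest is finite. By construction $\wmu_x(\hat p) = p$ and $\wmu_{x'}(\hat p') = p'$, so Lemma~\ref{lem:language-p-strong-acc} applied on the even side gives $w\in L(\sem{\Aa},\hat p)$, and applied on the odd side gives $w\notin L(\sem{\Aa},\hat p')$. Finally, $\wmu_1(\hat p) = \wmu_1(p) = \wmu_1(p') = \wmu_1(\hat p')$, hence $\hat p \sim_1 \hat p'$, which is the contradiction sought. I do not foresee any real obstacle here: once one notices that leaf descendants of $p$ and $p'$ are the right witnesses for Lemma~\ref{lem:language-p-strong-acc}, the converse direction reduces to an application of that lemma together with the definition of $\sim_1$, on top of the forward direction already delivered by Theorem~\ref{thm:equivSemantics}.
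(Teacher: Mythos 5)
Your proposal is correct and follows the paper's own route: the forward direction is exactly the second clause of Theorem~\ref{thm:equivSemantics}, and the converse is the paper's ``follows directly from the definitions and Lemma~\ref{lem:language-p-strong-acc}'' argument, which you have simply spelled out by passing to leaf descendants $\hat p,\hat p'$ (so that $\wmu_x(\hat p)=p$, $\wmu_{x'}(\hat p')=p'$, and $\hat p\sim_1\hat p'$) before applying that lemma. No gaps; this is the intended proof with the details made explicit.
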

\begin{proof}
  The left to right implication is given by the theorem above.
  The other direction follows directly from the definitions and Lemma~\ref{lem:language-p-strong-acc}.
\end{proof}

\begin{corollary}\label{cor:equivalences-acceptance}
  Let $\Aa$ be a "consistent" "layered automaton" and let $w\in \Sigma^\omega$. The following are equivalent:
  \begin{itemize}
  \item $w\in L(\sem{\Aa})$,
  \item $w$ is "accepted@@layered" by $\Aa$,
  \item There is a decomposition $uw' = w$ and a state
  $p$ such that $q_\init\act{u}_1\wmu_1(p)$ and $p$ "strongly accepts" $w'$, and
  \item There is no decomposition $uw' = w$ and a state
  $p$ such that $q_\init\act{u}_1\wmu_1(p)$ and $p$ "strongly rejects" $w'$.
\end{itemize}
\end{corollary}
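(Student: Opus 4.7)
The plan is to chain three tools already established in the paper: Theorem~\ref{thm:equivSemantics}, which gives (1)~$\Leftrightarrow$~(2) and tells us that $\sem{\Aa}$ is "uniformly semantically deterministic"; Lemma~\ref{lem:layered-acceptance-strong} (and its rejection analogue), relating "acceptance@@layered" to decompositions with a "strongly accepts" witness (and dually for rejection); and Lemma~\ref{lem:language-p-strong-acc}, which transfers "strong acceptance" or "strongly rejected" suffixes into the language of $\sem{\Aa}$. The starting observation is that since $\T_1$ is "complete", every $w\in\Sigma^\omega$ is "ultimately safe" at layer~$1$, so the maximal such layer always exists, and every $w$ is exactly one of "accepted@@layered" or "rejected@@layered". "Consistency" will enter the argument only once, through Theorem~\ref{thm:equivSemantics}.

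First I would record that (1)~$\Leftrightarrow$~(2) is exactly Theorem~\ref{thm:equivSemantics}, and that (2)~$\Rightarrow$~(3) is Lemma~\ref{lem:layered-acceptance-strong}. For (3)~$\Rightarrow$~(1), suppose we have a decomposition $w = uw'$ with $q_\init\act{u}_1\wmu_1(p)$ and $p$ on some even layer $x$ that "strongly accepts" $w'$. Pick any "leaf state" $\tilde p$ with $\wmu_x(\tilde p)=p$ (which exists by the "forest@@representation" structure of $\Aa$); then in particular $\wmu_1(\tilde p)=\wmu_1(p)$. Lemma~\ref{lem:language-p-strong-acc} yields $w'\in L(\sem{\Aa},\tilde p)$, and "semantic determinism" of $\sem{\Aa}$ (obtained from Theorem~\ref{thm:equivSemantics} via Lemma~\ref{lem:SD-1-equivalent}) gives $L(\sem{\Aa},\tilde p)=u^{-1}L(\sem{\Aa})$, so $w\in L(\sem{\Aa})$.

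Finally, I would prove (1)~$\Leftrightarrow$~(4) by the same scheme with the roles of acceptance and rejection dualised. The contrapositive of (1)~$\Rightarrow$~(4) uses the rejection half of Lemma~\ref{lem:language-p-strong-acc}: a decomposition $w=uw'$ with $p$ "strongly rejects" $w'$ produces $w'\notin u^{-1}L(\sem{\Aa})$ by the very same leaf-descendant and semantic-determinism argument, contradicting (1). Conversely, for (4)~$\Rightarrow$~(2), I would observe that the proof of Lemma~\ref{lem:layered-acceptance-strong} is symmetric in the parity of the maximal layer at which $w$ is "ultimately safe": if that layer were odd (so $w$ were "rejected@@layered"), exactly the same reasoning would produce a "strongly rejected" decomposition, contradicting~(4); hence the maximum must be even, and $w$ is "accepted@@layered". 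The only point that requires any care is ensuring that the leaf descendant $\tilde p$ exists when applying Lemma~\ref{lem:language-p-strong-acc} to a state $p$ that is not itself a leaf, which is immediate from the tree structure; beyond that, the argument is routine bookkeeping with the lemmas above.
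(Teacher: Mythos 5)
Your proposal is correct and follows essentially the route the paper intends for this corollary: (1)$\Leftrightarrow$(2) is Theorem~\ref{thm:equivSemantics}, (2)$\Rightarrow$(3) is Lemma~\ref{lem:layered-acceptance-strong}, (3)$\Rightarrow$(1) and the contrapositive of (1)$\Rightarrow$(4) go through a leaf descendant and Lemma~\ref{lem:language-p-strong-acc}, and (4)$\Rightarrow$(2) uses the accepted/rejected dichotomy (every word is ultimately safe on layer~$1$ by completeness of $\T_1$) together with the symmetric, rejection version of Lemma~\ref{lem:layered-acceptance-strong}.

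One point of precision: in the step (3)$\Rightarrow$(1) you write that plain "semantic determinism" (via Lemma~\ref{lem:SD-1-equivalent}) yields $L(\sem{\Aa},\tilde p)=u^{-1}L(\sem{\Aa})$. As stated this is not justified, because semantic determinism only identifies $L(q)$ with $u^{-1}L$ for states $q$ actually reachable on $u$ in $\sem{\Aa}$, and your chosen leaf $\tilde p$ (a descendant of the strongly accepting state $p$) need not be reachable on $u$ from the initial leaf: a run over $u$ in $\sem{\Aa}$ only reaches leaves in the appropriate higher-layer class of its last transition, not all leaves sharing the layer-$1$ ancestor $\wmu_1(p)$. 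What you need, and what Theorem~\ref{thm:equivSemantics} in fact provides, is "uniform semantic determinism": take any leaf $q_u$ reached on $u$ from the initial leaf (it satisfies $\wmu_1(q_u)=\wmu_1(\tilde p)$), use semantic determinism to get $L(\sem{\Aa},q_u)=u^{-1}L(\sem{\Aa})$, and then uniform semantic determinism to conclude $L(\sem{\Aa},\tilde p)=L(\sem{\Aa},q_u)$. The same adjustment is needed in your argument for (1)$\Rightarrow$(4). With this one-line fix the proof is complete.
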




\subsection{Layered automata are history deterministic and 0-1 probabilistic}\label{subsec:layered-HD}
In this section we prove Theorem~\ref{thm:equivSemantics} and the following propositions.

\begin{proposition}\label{prop:layered-are-HD}
  For every "consistent" "layered automaton" $\Aa$, the alternating parity automaton $\sem{\Aa}$ is "history deterministic".
\end{proposition}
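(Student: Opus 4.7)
The plan is to establish HDness of $\sem{\Aa}$ by exhibiting valid resolvers for both players. The two constructions are entirely dual — swapping the roles of even and odd priorities — so I describe Eve's resolver $\sigma$ in detail and obtain Adam's resolver $\tau$ by the same recipe applied to even-priority actions. The construction is the \emph{longest suffix resolver} anticipated in the preliminaries and is the same device that I will use to prove Theorem~\ref{thm:equivSemantics}; the proposition will follow as a by-product of that analysis.

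Concretely, after reading a prefix $u$ and reaching a leaf state $q$ in $\sem{\Aa}$, suppose the next letter $a$ has odd priority $x = \llayer(q,a)$, and let $q' \in Q_x$ be the $\T_x$-successor of $\wmu_x(q)$ on $a$. Eve must pick a leaf state in $\classx{q'}$; she picks the leaf $p$ whose ancestors $\wmu_y(p)$ at layers $y > x$ maximise, lexicographically from the highest layer downwards, the length of the longest suffix $v$ of $ua$ with $v \in \Lsafexx{y}(\wmu_y(p))$. Intuitively, Eve aligns her choice with the deepest layer that has been safely sustained for the longest time.

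To show $\sigma$ is valid, take any $w \in L(\sem{\Aa})$. By Theorem~\ref{thm:equivSemantics} and Lemma~\ref{lem:layered-acceptance-strong}, there is a decomposition $w = u w'$ and a leaf state $p$ on an even layer $x$ such that $q_\init \actxx{u}{1} \wmu_1(p)$ and $w'$ is strongly accepted by $\wmu_x(p)$. I claim that every $\sigma$-run $\rho$ over $w$ eventually lands in $\classx{p}$; from that moment on, Lemma~\ref{lem:language-p-strong-acc} combined with the strong acceptance of $w'$ guarantees that $\rho$ only produces priorities $\geq x$ and produces $x$ infinitely often, hence $\rho$ is accepting. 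The symmetric argument for $\tau$ takes a strongly rejecting witness provided by Corollary~\ref{cor:equivalences-acceptance} whenever $w \notin L(\sem{\Aa})$, and shows that every $\tau$-run is rejecting.

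The main obstacle is proving the alignment claim: that Eve's greedy longest-suffix choices force $\rho$ to eventually coincide with $\wmu_x(p)$ at layer $x$. The argument proceeds by observing that as more letters of $w'$ are consumed, the longest suffix of the input that remains in $\Lsafex(\wmu_x(p))$ grows without bound, since $w'$ itself is entirely safe from $\wmu_x(p)$. By the maximality clause in the definition of $\sigma$, any $\sigma$-run must eventually commit to an $x$-ancestor admitting such unboundedly long safe suffixes. Here the "consistency" hypothesis is essential: it rules out a competing leaf simultaneously strongly rejecting some suffix of $w$, which would otherwise invalidate the comparison that certifies Eve's choices track $\wmu_x(p)$, and it is also what makes the longest-suffix rule well-defined independently of the ambiguity in how $u$ is factored.
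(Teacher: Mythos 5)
There is a genuine gap, on two levels. First, your validity argument is circular as structured: you take $w\in L(\sem{\Aa})$ and invoke Theorem~\ref{thm:equivSemantics} to extract a strongly accepted suffix, but in this paper that theorem is itself a consequence of the resolver analysis you are in the middle of carrying out (its proof is exactly Lemma~\ref{lem:lss-are-reslovers}), and you do not supply an independent proof of it. The non-circular way to set this up is to argue from the layered acceptance condition directly: show that if $w$ is accepted in the layered sense then every $\sigma$-run is accepting, and dually for $\tau$; since every word is either accepted or rejected in the layered sense, this yields simultaneously $L(\sem{\Aa})=L(\Aa,\sigma)=L(\Aa,\tau)$ and the equivalence of the two semantics.

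Second, and more seriously, your central ``alignment claim'' --- that every $\sigma$-run over $w=uw'$ eventually enters $[p]_x$ for the particular witness $p$ strongly accepting $w'$ --- is both stronger than what is needed and not something the longest-suffix rule can force. Eve only makes choices when an odd priority $y$ is produced, and such a choice only fixes ancestors at layers above $y$; the layer-$x$ ancestor evolves deterministically through $\T_x$ while priorities are $\geq x$, and is reset by Adam's choices (or by lower deterministic layers) when lower priorities occur, so there is no mechanism steering the run into the class of one specific witness, and the run may perfectly well accept through a different class or a different even layer. The argument you sketch for it (``the longest safe suffix for $\wmu_x(p)$ grows without bound, so by maximality the run commits to it'') also does not go through: the resolver compares suffixes only among the leaves of the class forced by the current transition, and a long safe suffix at the moment of a choice says nothing about the future. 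What actually makes the longest-suffix resolver work is the pigeonhole/support-elimination argument of Lemma~\ref{lem:longest-suffix-ensures} (every further occurrence of the odd priority $x$ after the resolver's choice eliminates one candidate $(x+1)$-state as a support, so $x$ can occur only boundedly often whenever escape is possible), used inside a proof by contradiction: a rejecting $\sigma$-run stuck at an odd priority would, by that lemma, yield a suffix strongly rejected from some state $\sim_1$-equivalent to a state strongly accepting the same suffix (obtained via Lemma~\ref{lem:layered-acceptance-strong}), contradicting consistency. Both this quantitative lemma (including its initialised version, needed because the resolver's suffix comparison depends on the whole history) and the contradiction with consistency are the heart of the proof, and they are absent from your proposal; consistency appears in your text only as a remark supporting the unproved alignment claim rather than as the source of the contradiction.
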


\AP We say that  a "simple by priorities parity automaton" $\Aa$ is
a ""0-1 probabilistic automaton"" if for every random walk $\rho$
over a word $w\in \Sigma^\omega$ starting in $q_\init$, it holds:
\begin{itemize}
    \item $\mathrm{Pr}(\rho \text{ "is accepting@@run" }) = 1 \; \Longleftrightarrow \; w\in L(\Aa)$, and 
    \item $\mathrm{Pr}(\rho \text{ "is accepting@@run" }) = 0 \; \Longleftrightarrow \; w\notin L(\Aa)$.
\end{itemize}
We note that "0-1 probabilistic automata" are \emph{good for MDPs} in the sense of~\cite{HPSS0W.GoodForMdp2020}
See~\cite{Bai.Gro.Ber.Probabilistic2012,Hen.Pra.The.Resolving2025} for related notions.

\begin{proposition}\label{prop:layered-are-probabilistic}
  For every "consistent" "layered automaton" $\Aa$, the alternating parity automaton $\sem{\Aa}$ is a "0-1 probabilistic automaton".
\end{proposition}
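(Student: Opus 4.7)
The plan is to reduce the probabilistic statement to the deterministic strong-acceptance characterisation from Corollary~\ref{cor:equivalences-acceptance}, via a two-stage argument: a deterministic \emph{absorption lemma} controlling what happens once the random walk enters a distinguished configuration, followed by a probabilistic argument showing that such a configuration is entered almost surely.

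The absorption lemma asserts that if $w'$ is strongly accepted by $\hat p$ on an even layer $x$, then every run in $\sem{\Aa}$ over $w'$ starting from any leaf $q$ with $\wmu_x(q) = \hat p$ produces only priorities $\geq x$ and priority $x$ infinitely often, and is therefore accepting. The proof relies on the morphism property, which forces $\wmu_x(q_i)$ to evolve deterministically along the $\T_x$-run from $\hat p$, so that the priority $\d_x(\wmu_x(q_i), w'_i)$ is always available. If some run produced only priorities strictly above $x$ from some position $j$ onwards, then $p' = \wmu_{x+1}(q_j) \in Q_{x+1}$ would satisfy $\hat p \act{w'_{[0,j)}}_x \wmu_x(p')$ and $w'_{[j,\infty)} \in \Lsafexx{x+1}(p')$, directly contradicting strong acceptance of $w'$ by $\hat p$. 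The dual statement for strong rejection on odd layers follows symmetrically.

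For $w \in L(\sem{\Aa})$, Corollary~\ref{cor:equivalences-acceptance} gives a decomposition $w = u w'$ with a leaf $p$ on some even layer $x$ such that $q_\init \act{u}_1 \wmu_1(p)$ and $w'$ is strongly accepted by $\wmu_x(p)$. I would then build a run in $\sem{\Aa}$ from $q_\init$ over $u$ ending at a leaf $q^\star$ with $\wmu_x(q^\star) = \wmu_x(p)$ by picking, at each priority-odd step, a successor descending from the layer-$x$ ancestor prescribed by the deterministic $\T_x$-run aiming at $\wmu_x(p)$; the morphism property guarantees that such a successor always exists. Since each random choice has probability at least $1/K$ (where $K$ is the maximum out-branching of $\sem{\Aa}$), the random walk follows this particular run with probability at least $K^{-|u|} > 0$, and then accepts almost surely by the absorption lemma. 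Applying the same argument from every intermediate configuration $q_i$---the suffix $w_{[i,\infty)}$ remains in $L(\sem{\Aa}, q_i)$ by uniform semantic determinism (Corollary~\ref{cor:consistent-iff-unifSD})---and partitioning time into successive rescue windows of uniformly positive success probability, a Borel--Cantelli argument upgrades positive probability to probability $1$ of absorption into the accepting region. The case $w \notin L$ is handled symmetrically, using the strong-rejection clause of the absorption lemma.

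The main obstacle is ensuring a uniform lower bound on the per-window rescue probability, which requires bounding the length of the rescue prefix by a quantity depending only on $\Aa$ rather than on $w$. I would establish this via a pumping argument on the finite layer-$1$ transition system, combined with the fact that, by uniform semantic determinism, once the layer-$1$ run revisits a previously seen state the same rescue construction becomes available from that state; this yields a finite success probability in each window of length bounded by $|Q_{\geq 1}|$, which is all the geometric iteration needs.
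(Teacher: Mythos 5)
Your first stage is fine: the absorption lemma is exactly the paper's Lemma~\ref{lem:language-p-strong-acc}, and its proof is as you sketch it. The gap is in the second stage. First, the run over $u$ from $q_\init$ ending at a leaf $q^\star$ with $\wmu_x(q^\star)=\wmu_x(p)$ need not exist: Corollary~\ref{cor:equivalences-acceptance} only guarantees that the \emph{layer-$1$} run over $u$ reaches $\wmu_1(p)$; there is in general no layer-$x$ run over (any part of) $u$ arriving at $\wmu_x(p)$ to aim at, and the walk can only branch into the subtree of $\wmu_x(p)$ at steps whose priority is small, and those steps are dictated by the word, not by you. Concretely, take the two-leaf automaton for the language \emph{some letter occurs finitely often} over $\{a,b\}$ (one layer-$1$ state; layer $2$ has $A$ with only a $b$-loop and $B$ with only an $a$-loop), initial leaf $A$, and $w=ba^\omega$: the relevant decomposition is $u=b$, $w'=a^\omega$ strongly accepted by $B$, yet the unique run over $u$ from $A$ has priority $2$ and stays at $A$. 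Second, and more seriously, the uniform per-window rescue bound needed for your geometric/Borel--Cantelli step is false: for $w=b^Na^\omega$ read from $A$ in the same automaton, during the entire $b^N$-block no absorbing configuration exists at all (no state strongly accepts any suffix $b^ka^\omega$ with $k\geq 1$, and the walk is even stuck deterministically at $A$), so no window length depending only on $\Aa$, such as $|Q_{\geq 1}|$, can give positive rescue probability. The pumping/uniform-SD repair does not help, because whether (and where) some state strongly accepts the remaining suffix depends on that actual suffix of the fixed word, not merely on which layer-$1$ state is revisited; you cannot shorten the decomposition prefix of a given word by pumping.

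The paper's proof avoids both problems by bounding the \emph{bad} event rather than forcing absorption. Fix a priority $x$ of the wrong parity and a finite prefix after which the walk has produced only priorities $\geq x$ from some leaf $p$. Since $w\in L$ (resp.\ $w\notin L$), the remaining word is not strongly rejected (resp.\ not strongly accepted) at layer $x$, so there exists \emph{one} run over it that eventually uses only priorities $\geq x+1$. Consequently, at each occurrence of priority $x$ beyond a finite prefix, at least one of the at most $n$ successor leaves in the target class is $(x+1)$-safe for the whole remaining word; the walk picks it with probability at least $1/n$, after which all priorities stay $\geq x+1$ forever (Lemma~\ref{lem:random-walk-ensures}). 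Hence producing $x$ infinitely often has probability $0$, and a countable union over priorities, prefixes and reached states concludes. This argument is driven by the priority-$x$ events themselves, so it needs neither reachability of a specific subtree within $u$ nor any uniform time window; if you want to salvage your write-up, replace your rescue-window stages by this reset-counting argument.
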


We start by introducing \emph{"longest suffix resolvers"}.
This is a class of "resolvers" witnessing "HDness" of "layered automata" and enjoying some useful properties.

\subparagraph{Longest suffix resolvers.}
For a "priority" $x$, a finite word $u$, and an "$x$-state" $q$, we define the ""longest $x$-suffix of $u$ to
$q$"" to be the longest suffix $v$ of $u$ such that $p\runx{v} q$ in $\Tt_x$, for some $p$.
Observe that this suffix can be empty.

We let $Q$ and $\Delta$ be the sets of states and transitions of $\sem{\Aa}$, respectively.

\AP The ""longest suffix resolver""\footnote{Formally, we define a family of resolvers, as there are some non-deterministic choices in the definition. By a slight abuse of notation, we call \emph{the} longest suffix resolver any of them.} for Eve is a "resolver" $\s:(\{q_\init\}\cup \D^+)\times\S\to Q$ such that whenever we have a play $p_0\run{v} p \act{a:x}[q]_x$ in $\sem{\Aa}$  with $x$ odd, Eve chooses any state $q'\in \classx{q}$ admitting the "longest $(x+1)$-suffix" of $va$ to $\wh\m_{x+1}(q')$.
The "longest suffix resolver" for Adam is defined symmetrically.

The next lemma states the main property of "longest suffix resolvers". 
It guarantees that a play leaves an odd layer whenever there is a run with this property. 
The statement uses a more general notion of a ""longest suffix resolver
initialised"" to $u\in \S^*$.
It is a function $\s_{u}:(\{q_\init\}\cup \D^+)\times\S\to Q$ such that whenever we have a play
$p_0\run{v} p \act{a:x}[q]_x$ in $\sem{\Aa}$ with $x$ odd, Eve chooses any state
$q'\in \classx{q}$ admitting the "longest $(x+1)$-suffix" of $uva$ to
$\wh\m_{x+1}(q')$; instead of the longest suffix of $va$, as a "non-initialised
resolver" would. 
So the previous notion amounts to initialisation with the empty word.

\begin{lemma}\label{lem:longest-suffix-ensures}
  Let $x$ be an odd "priority" and $p$ a "leaf state" in a "layered automaton" $\Aa$. 
  Suppose $w = u\tilde{w}\in\S^\w$  admits a run $p \run{u : \geq x}q\run{\tilde{w}:\geq x+1}$ in $\sem{\Aa}$ (equivalently, $w\in \Lsafex(p)$ but not "strongly rejected" by $\wmu_x(p)$). 
  Then, for every $u_0$, any "longest suffix resolver" $\sigma$ for Eve initialised to $u_0$
  ensures that on any "$\sigma$-run" on the word $w$ from $p$, eventually all priorities are  $\geq x+1$.
\end{lemma}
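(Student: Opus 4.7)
The plan is to argue by contradiction. Suppose the $\sigma$-run $\rho$ on $w$ from $p$ sees priority $x$ at infinitely many positions $n_1 < n_2 < \dots$, all with $n_k > |u|$. For every $n$ along this run with $\rho_n \in \Qgeq{x+1}$ let $\ell(n)$ denote the length of the longest suffix of $u_0 w[1..n]$ that is readable in $\T_{x+1}$ and whose path ends at $\wmu_{x+1}(\rho_n)$. Two immediate observations make $\ell$ a useful potential: (i) leaves in a common $(x+1)$-class share the same $(x+1)$-ancestor, so Adam's choices never alter $\wmu_{x+1}(\rho_n)$; and (ii) between two consecutive priority-$x$ positions every transition of $\sem{\Aa}$ has priority $\geq x+1$ and hence projects to a transition of $\T_{x+1}$, so $\ell$ grows by at least one at each intermediate step.

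Next I use the witness run $p \runx{u} q \run{\tilde w:\geq x+1}$ to obtain a lower bound at every priority-$x$ step. As both the $\sigma$-run and the witness stay in $\Lsafex(p)$ and $\T_x$ is deterministic, their $x$-ancestor trajectories coincide, giving $\wmu_x(\rho_n) = \wmu_x(r_n)$, where $r_n$ is the witness state at position $n$. For $n > |u|$ the witness then provides a $\T_{x+1}$-path $\wmu_{x+1}(q) \runxx{w[|u|+1..n]}{x+1} \wmu_{x+1}(r_n)$, i.e.\ a suffix of length $n-|u|$ that is readable in $\T_{x+1}$ and ends inside $\classxx{\rho_n}{x}$. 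Since $\sigma$ maximises the suffix length over that class, $\ell(n_k) \geq n_k - |u|$ for every $k$, and the starting position $s_k := |u_0| + n_k - \ell(n_k) + 1$ of the selected suffix in $u_0 w$ lies in the bounded interval $[1, |u_0|+|u|+1]$.

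To each $k$ attach the triple $(s_k, r_{\mathrm{start},k}, r^\ast_k)$, where $r_{\mathrm{start},k}$ is some starting state in $\T_{x+1}$ of a path witnessing the longest suffix and $r^\ast_k := \wmu_{x+1}(\rho_{n_k})$ its endpoint. The first coordinate is bounded by the inequality above and the remaining two lie in the finite state set of $\T_{x+1}$, so pigeonhole produces indices $k < k'$ with identical triple $(s_0, r^\ast_{\mathrm{start}}, r^\ast)$. The path read at step $n_k$ is a prefix of the one read at step $n_{k'}$ and both share the same starting state, so determinism of $\T_{x+1}$ forces the longer path to pass through $r^\ast$ after the first $|u_0|+n_k-s_0+1$ letters and return to $r^\ast$ after the remaining $n_{k'}-n_k$ letters. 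This yields a cycle $r^\ast \runxx{w[n_k+1..n_{k'}]}{x+1} r^\ast$ in $\T_{x+1}$.

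This cycle closes the contradiction: since $\wmu_{x+1}(\rho_{n_k}) = r^\ast$ and Adam cannot alter $\wmu_{x+1}$, determinism of $\T_{x+1}$ forces the $(x+1)$-ancestor trajectory of $\rho$ in the interval $[n_k, n_{k'}]$ to follow exactly this cycle, so every transition of $\sem{\Aa}$ in $[n_k+1, n_{k'}]$ has priority $\geq x+1$, contradicting the assumption that priority $x$ occurs at $n_{k+1} \leq n_{k'}$. The most delicate step will be setting up the pigeonhole on the correct triple: recording the starting state of the $\T_{x+1}$-path in addition to its starting position and endpoint is what lets determinism of $\T_{x+1}$ extract a genuine cycle rather than a mere coincidence of endpoints.
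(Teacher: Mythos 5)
Your proof is correct, and it shares the paper's key insight but finishes with a different combinatorial mechanism. Both arguments rest on the same two pillars: since $w\in\Lsafex(p)$, no run from $p$ over $w$ ever sees a priority $<x$, and at each priority-$x$ step the witness run $p\run{u:\geq x}q\run{\tilde w:\geq x+1}$ together with the maximisation in the definition of the resolver forces the selected longest $(x+1)$-suffix to reach back at least to position $|u|$, so its starting position stays in a bounded window. The paper then runs a monotone elimination argument: it anchors the longest-suffix path at that window, calls the $\T_{x+1}$-state sitting there a \emph{support}, and shows that each further occurrence of priority $x$ permanently kills the current support (the ever-growing tail is no longer readable from it), so priority $x$ occurs at most $|Q_{x+1}|$ more times. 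You instead assume infinitely many priority-$x$ steps, pigeonhole on the triple of start position, start state and end state of the selected suffix path, use determinism of $\T_{x+1}$ to extract a cycle over the segment between two such steps, and then argue by induction along that segment that every transition there has priority $\geq x+1$ (a transition of priority $\geq x+1$ moves the $(x+1)$-ancestor deterministically in $\T_{x+1}$, whichever leaf either player picks), contradicting the occurrence of priority $x$ inside the segment. The paper's route yields an explicit quantitative bound on the number of priority-$x$ occurrences, whereas yours yields only eventual stabilisation, which is all the lemma requires. Two cosmetic remarks: passing from the negated conclusion to the assumption that priority $x$ occurs infinitely often silently uses that priorities $<x$ never occur at all (you invoke this only implicitly when asserting that both runs have the same $x$-ancestor trajectory), and in your last step it is not only Adam but also Eve whose choices cannot change $\wmu_{x+1}$ along transitions of priority $\geq x+1$ --- your induction in fact relies on exactly this correct, player-independent statement.
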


\begin{proof}
  Consider a play from $p$ according a "longest suffix resolver" for Eve, initialised to some $u_0$, that  sees "priority" $x$ after some prefix $uw'$ of $u\tilde{w}$:
  \begin{equation*}
    p\run{u : \geq x}q\run{w' : \geq x}q'\act{a:x}q''\ .
  \end{equation*}
  In this case, the "longest suffix resolver" initialised to $u_0$ finds some suffix $v$ of $u_0uw'$ with
  $p_{va}\run{va}_{x+1}\wmu_{x+1}(q'')$ in $\T_{x+1}$ (with $p_{va}$ some state in this automaton).
  Observe that by assumption of the lemma, $w'$ is a suffix of $v$ since $\wmu_{x+1}(q)\run{w'a}_{x+1} \wmu_{x+1}(q'')$.
  So this path on $v$ can be presented as
  $p_{va}\run{v'}_{x+1}q_{w'a}\run{w'a}_{x+1}\wmu_{x+1}(q'')$, for some $v'$ and $q_{w'a}\in \T_{x+1}$.
  We call $q_{w'a}$ a support of $w'a$.
  Now consider the next time when the longest suffix run sees "priority" $x$, namely 
  \begin{equation*}
    p\run{u : \geq x}q\run{w' : \geq x}q'\act{a:x}q''\run{w'' : >x}q^{(3)}\act{b:x}\ .
  \end{equation*}
  This means that we have
  $q_{w'a}\run{w'a}_{x+1}\wmu_{x+1}(q'')\run{w''}_{x+1}\wmu_{x+1}(q^{(3)})$ with no
  $b$-transition from $\wmu_{x+1}(q^{(3)})$ in $\T_{x+1}$.
  Hence, $q_{w'a}$ cannot be a support for $w'aw''b$ and a new support is found
  by the reasoning above. 
  This argument shows that each time a play following the "longest suffix
  resolver" initialised to $u_0$ meets "priority" $x$, one "$(x+1)$-state" is eliminated
  as a potential support.
  Hence, the number of times such a play can
  see "priority" $x$ is bounded by the number of "$(x+1)$-states".
\end{proof}


For the statement of the next lemma recall that $(\Aa,q)$ stands for $\Aa$ where
$q$ is taken to be the initial state. 
For layered automata, the initial state should come from layer $1$, while the
initial state of $\sem{\Aa}$ is a leaf of $\Aa$.
\begin{lemma}\label{lem:lss-are-reslovers}
  Consider $\Aa$ a "consistent" "layered automaton", $q$ a state of $\sem{\Aa}$, and $\sigma$ and $\tau$
  "longest suffix resolvers" for Eve and Adam in $\sem{\Aa}$, respectively.
  If a word $w$ is "accepted@@layered" by $(\Aa,\wmu_1(q))$ then every $\s$-run of
  $(\sem{\Aa},q)$ over $w$ is accepting. 
  Symmetrically, if a word $w$ is "rejected@@layered" by $(\Aa,\wmu_1(q))$ then every
  $\t$-run of  $(\sem{\Aa},q)$ over $w$ is rejecting.
\end{lemma}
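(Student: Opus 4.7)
The plan is to prove both statements by contradiction, using Lemma~\ref{lem:longest-suffix-ensures} to trap a ``bad'' resolver-run and "consistency" of $\Aa$ to close the argument. I describe the accepting case; the rejecting case is completely symmetric under the duality between Eve/Adam and odd/even "priorities".

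Assume $w$ is "accepted@@layered" by $(\Aa,\wmu_1(q))$. Applying Lemma~\ref{lem:layered-acceptance-strong}, I decompose $w = u w'$ with some state $p$ on an even layer $x$ such that $\wmu_1(q) \act{u}_1 \wmu_1(p)$ in $\T_1$ and $w'$ is "strongly accepted" by $p$. Fix an arbitrary $\sigma$-run $\rho$ from $q$ on $w$ in $\sem{\Aa}$ and suppose, for contradiction, that $\rho$ is rejecting. Then the smallest "priority" appearing infinitely often along $\rho$ is some odd $y$, and there is a prefix $uu_0$ of $w$ (extending $u$) after which every priority on $\rho$ is $\geq y$. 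Let $q'$ be the leaf state reached at that moment and write $w = uu_0 w_1$. Since priorities stay $\geq y$ from $q'$ onward, all further states lie on layer $\geq y$; in particular $q'$ sits on layer $\geq y$ and $w_1 \in \Lsafexx{y}(q')$.

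Now I apply the contrapositive of Lemma~\ref{lem:longest-suffix-ensures}, viewing $\sigma$ as a "longest suffix resolver initialised" to $uu_0$: because the $\sigma$-run from $q'$ on $w_1$ never eventually remains $\geq y+1$, there can be no decomposition $w_1 = u'\tilde{w}$ admitting a run $q' \run{u':\geq y} q'' \run{\tilde{w}:\geq y+1}$ in $\sem{\Aa}$, which combined with $w_1 \in \Lsafexx{y}(q')$ shows that $w_1$ is "strongly rejected" by $\wmu_y(q')$. On the accepting side, since $w' \in \Lsafex(p)$, reading $u_0$ in $\T_x$ from $\wmu_x(p)$ leads to a state $q_x$, and rereading the definition of "strong acceptance" of $w'$ by $p$ on decompositions refining $u_0$ shows directly that $w_1$ is "strongly accepted" by $q_x$ on the same even layer $x$. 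Determinism of $\T_1$ together with the fact that $\wmu_1$ is a "morphism" from $\T_x$ into $\T_1$ forces $\wmu_1(q_x) = \wmu_1(\wmu_y(q'))$, so $q_x$ and $\wmu_y(q')$ share a common layer-$1$ ancestor—contradicting "consistency" of $\Aa$.

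The main obstacle I anticipate is the last transfer step: verifying that strong acceptance of $w'$ by $p$ genuinely propagates to strong acceptance of the suffix $w_1$ by the shifted state $q_x$, while simultaneously aligning the layer-$1$ ancestor of $q_x$ with that of $\wmu_y(q')$. Both facts are routine consequences of determinism of $\T_1$ and of the commuting behaviour of the $\wmu_x$'s, but the bookkeeping is delicate, especially in degenerate cases such as $u_0 = \varepsilon$ (where $q_x = \wmu_x(p)$) or when $q'$ lies exactly on layer $y$.
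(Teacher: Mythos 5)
Your proposal is correct and follows essentially the same route as the paper's proof: extract a strongly accepted suffix via Lemma~\ref{lem:layered-acceptance-strong}, view the tail of the rejecting $\sigma$-run as a run of a "longest suffix resolver initialised" to the prefix read so far, use (the contrapositive of) Lemma~\ref{lem:longest-suffix-ensures} to conclude the suffix is "strongly rejected", and contradict "consistency" after matching the layer-$1$ ancestors by determinism of $\T_1$. The transfer of strong acceptance along $u_0$ and the alignment of decompositions that you flag as delicate are exactly the (routine) steps the paper also performs.
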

\begin{proof}
  We start by recalling Lemma~\ref{lem:layered-acceptance-strong} saying that if $w$ is "accepted@@layered" by $(\Aa,q)$ then
  there is a decomposition $u'w'=w$ and a state $p'$ on some even layer $y$ such
  that  $q\act{u'}_1\wmu_1(p')$, and $w'$ is "strongly accepted" from $p'$.

  Suppose the "longest suffix resolver" of Eve rejects when Eve plays on $w$ in $(\sem{\Aa},q)$. 
   Take a non-accepting run consistent with this "resolver",
  \[q\run{u''}p''\run{w'' : \geq x} \; \text{ in } \sem{\Aa},\]
  where for some odd $x$, eventually only priorities $\geq x$ appear and $x$
  appears infinitely often.
  We can assume that $u'$ is a prefix of $u''$. 

  First, we show that $w''$ is "strongly rejected" from $\wmu_x(p'')$ in $\Aa$.
  Since the run in $\sem{\Aa}$ avoids priorities $<x$, we have $w''\in \Lsafex(p'')$.  
  If $w''$ is not "strongly rejected", then by Lemma~\ref{lem:longest-suffix-ensures} the "longest suffix resolver" for Eve
  initialised to $u''$ eventually sees only "priorities" $\geq x+1$ from $p''$; a
  contradiction with the fact that $x$ appears infinitely often in the run in $\sem{\Aa}$.

  Now let $u''=u'v$ ($u'$ is assumed to be a prefix of $u''$). We have a path $p'\actxx{v}{y} p_y$ in the even layer $\T_y$, and $w''$ is "strongly accepted" from $p_y$.
  Moreover, $\wmu_1(p'') = \wmu_1(p_y)$.
  This contradicts the "consistency" of $\Aa$.
\end{proof}

The above lemma implies the first part of Theorem~\ref{thm:equivSemantics},
stating that for any "consistent" "layered automaton"
$\Aa$, "acceptance@@layered" by $\Aa$ is the same as acceptance by $\sem{\Aa}$.
It also proves HDness of $\sem{\Aa}$ (Proposition~\ref{prop:layered-are-HD}).

\begin{corollary}\label{cor:layered-is-HD}
  Let $\Aa$ be a "consistent" "layered automaton".
  Then, $\sem{\Aa}$ is "history deterministic", and "longest suffix resolvers" are "valid resolvers".
\end{corollary}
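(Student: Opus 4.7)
The plan is to derive the corollary as a direct repackaging of Lemma~\ref{lem:lss-are-reslovers} together with the language-equivalence half of Theorem~\ref{thm:equivSemantics}. Let $q$ be the "initial state" of $\sem{\Aa}$, so that $\wmu_1(q) = q_\init$, and let $\sigma$ and $\tau$ denote a "longest suffix resolver" for Eve and for Adam in $\sem{\Aa}$, respectively.

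First I would check that $\sigma$ is "valid", i.e., $L(\sem{\Aa},\sigma) = L(\sem{\Aa})$. The inclusion $L(\sem{\Aa},\sigma) \subseteq L(\sem{\Aa})$ holds for every Eve-"resolver" by definition. For the converse, I would take $w \in L(\sem{\Aa})$; by Theorem~\ref{thm:equivSemantics} the word $w$ is "accepted@@layered" by $\Aa$, so Lemma~\ref{lem:lss-are-reslovers} guarantees that every "$\sigma$-run" of $\sem{\Aa}$ over $w$ "is accepting@@run", which is exactly the statement $w \in L(\sem{\Aa},\sigma)$. A symmetric argument yields $L(\sem{\Aa}) = L(\sem{\Aa},\tau)$: the inclusion $L(\sem{\Aa}) \subseteq L(\sem{\Aa},\tau)$ holds by definition of "Adam-resolver", and if $w \notin L(\sem{\Aa})$ then Theorem~\ref{thm:equivSemantics} makes $w$ "rejected@@layered", so the symmetric half of Lemma~\ref{lem:lss-are-reslovers} forces every "$\tau$-run" over $w$ to be rejecting, whence $w \notin L(\sem{\Aa},\tau)$.

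To complete the HD condition $L(\sem{\Aa}) = L(\sem{\Aa},\sigma) = L(\sem{\Aa},\tau) = L(\sem{\Aa},\sigma,\tau)$, it remains to verify the last equality. The unique run consistent with both $\sigma$ and $\tau$ is in particular both a "$\sigma$-run" and a "$\tau$-run"; so if $w \in L(\sem{\Aa})$ the argument above shows that every $\sigma$-run accepts, and in particular this distinguished run does; if $w \notin L(\sem{\Aa})$ then every $\tau$-run rejects, and again this distinguished run does. I do not anticipate any real obstacle here: the entire technical content has already been carried out in the proof of Lemma~\ref{lem:lss-are-reslovers} (which itself relied on Lemma~\ref{lem:longest-suffix-ensures} and "consistency"), and the corollary merely translates that statement into the vocabulary of "history determinism" and "valid resolvers".
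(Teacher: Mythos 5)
Your proposal is correct and takes essentially the same route as the paper: the corollary is read off directly from Lemma~\ref{lem:lss-are-reslovers}, with the first part of Theorem~\ref{thm:equivSemantics} (which the paper derives from that same lemma, before the corollary, so there is no circularity) providing the translation between membership in $L(\sem{\Aa})$ and being "accepted@@layered"/"rejected@@layered", exactly as you use it. Your handling of the three equalities $L(\sem{\Aa})=L(\sem{\Aa},\sigma)=L(\sem{\Aa},\tau)=L(\sem{\Aa},\sigma,\tau)$, including the joint-run argument, matches the intended (unwritten) proof.
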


It remains to see that a consistent $\sem{\Aa}$ is uniformly semantically
deterministic.

\begin{lemma}
  If $\Aa$ is a "consistent" automaton then it is "uniformly semantically deterministic".
\end{lemma}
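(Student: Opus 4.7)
The plan is to derive this as an immediate corollary of Lemma~\ref{lem:lss-are-reslovers}. Take two "leaf states" $p,q$ of $\Aa$ with $p\eqxx{1}q$, that is, $\wmu_1(p)=\wmu_1(q)$. Our goal is to show that $L(\sem{\Aa},p)=L(\sem{\Aa},q)$.

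The key observation is that "layered acceptance", although stated relative to $q_\init$, extends verbatim to any layer-$1$ state as starting point, and its value at a word $w\in\S^\w$ depends only on that state. Moreover, since $\T_1$ is "complete", for every $w\in\S^\w$ and every $s\in Q_1$ the word $w$ is "ultimately safe" at layer~$1$ from $s$ (take the empty prefix). Hence the maximal layer of ultimate safety is always well defined, and its parity cleanly partitions $\S^\w$ into the words "accepted@@layered" by $(\Aa,s)$ and those "rejected@@layered" by $(\Aa,s)$.

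Now fix a "longest suffix resolver" $\sigma$ for Eve and a "longest suffix resolver" $\tau$ for Adam in $\sem{\Aa}$. Applying Lemma~\ref{lem:lss-are-reslovers} twice, once with initial state $p$ and once with initial state $q$, we obtain that for each $r\in\{p,q\}$: if $w$ is "accepted@@layered" by $(\Aa,\wmu_1(r))$ then every "$\sigma$-run" of $(\sem{\Aa},r)$ on $w$ is accepting, so $w\in L(\sem{\Aa},r)$; and symmetrically, if $w$ is "rejected@@layered" by $(\Aa,\wmu_1(r))$ then every "$\tau$-run" of $(\sem{\Aa},r)$ on $w$ is rejecting, so $w\notin L(\sem{\Aa},r)$. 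Combined with the partition from the previous paragraph, this shows that $L(\sem{\Aa},r)$ is exactly the set of words "accepted@@layered" by $(\Aa,\wmu_1(r))$. Since $\wmu_1(p)=\wmu_1(q)$, the two sets coincide, and therefore $L(\sem{\Aa},p)=L(\sem{\Aa},q)$.

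There is no genuine obstacle here: the substantive work is already contained in Lemmas~\ref{lem:longest-suffix-ensures} and~\ref{lem:lss-are-reslovers}, and the present statement is essentially a packaging of the fact that both Eve's and Adam's longest-suffix resolvers remain valid when $\sem{\Aa}$ is started from any leaf in the same $\sim_1$-class. The only minor point worth emphasising is that "ultimately safe" and "accepted@@layered" relativise without change to an arbitrary layer-$1$ starting state, which is exactly the generality in which Lemma~\ref{lem:lss-are-reslovers} is phrased.
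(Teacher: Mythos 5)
Your proof is correct and follows essentially the same route as the paper: it relies on the observation that "acceptance@@layered" depends only on the layer-$1$ state $\wmu_1(\cdot)$ together with Lemma~\ref{lem:lss-are-reslovers} applied with $p$ and $q$ as initial states, yielding $L(\sem{\Aa},p)=L(\Aa,p)=L(\Aa,q)=L(\sem{\Aa},q)$. You merely spell out the details (completeness of $\T_1$ giving the accept/reject partition, and both resolver directions) that the paper leaves implicit.
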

\begin{proof}
  We note that the definition of "acceptance@@layered" (by safe languages) in $(\Aa,q)$ only depends on $\wmu_1(q)$. 
  Since Lemma~\ref{lem:lss-are-reslovers} holds for any  choice of initial state in $\Aa$, we have that for any $p \eqxx{1} q$
  \[ L(\sem{\Aa},p) = L(\Aa,p) = L(\Aa,q) = L(\sem{\Aa},q).\qedhere\]
\end{proof}

\subparagraph{Random strategies.}
Proposition~\ref{prop:layered-are-probabilistic} admits an almost identical proof.

\begin{lemma}\label{lem:random-walk-ensures}
  Let $p$ be a "leaf state" in a "layered automaton" $\Aa$.
  Let $uw\in\S^\w$ admitting a run $p\run{u : \geq x}q\run{w : \geq x+1}$ in $\sem{\Aa}$.
Then, any random walk on the word $uw$ from $p$ eventually produces only priorities $\geq x+1$. 
\end{lemma}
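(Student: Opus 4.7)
My proof plan closely mirrors that of Lemma~\ref{lem:longest-suffix-ensures}, replacing the deterministic ``longest suffix'' choices with probabilistic ones. The strategy is to exhibit a uniform constant $\epsilon > 0$ such that, at every priority-$x$ occurrence encountered by the random walk, the conditional probability that no further priority-$x$ ever appears is at least $\epsilon$; the almost-sure conclusion then follows by the standard argument that the probability of seeing more than $k\cdot|Q_{x+1}|$ priority-$x$ occurrences is bounded by $(1-\epsilon)^k\to 0$.

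The argument hinges on two observations. First, the random walk never produces a priority strictly smaller than $x$: as in the proof of Lemma~\ref{lem:longest-suffix-ensures}, the $\wmu_x$-projection of its state is forced by determinism of $\T_x$ at every priority-$\geq x$ transition, so it coincides with the $\wmu_x$-projection of the hypothesised run $p\run{u:\geq x}q\run{w:\geq x+1}$, which has priorities $\geq x$ throughout. Second, once the walk reaches a leaf $r$ whose $\wmu_{x+1}(r)$ can read the remaining suffix inside $\T_{x+1}$, it produces only priorities $\geq x+1$ from then on: the next transition has priority $\geq x+1$ because the $\T_{x+1}$-transition is available, $\wmu_{x+1}$ then advances deterministically in $\T_{x+1}$, and the property propagates by induction. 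Hence it is enough to show that such an absorbing state is reached almost surely.

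To bound the probability of reaching the absorbing state, I would use a ``longest suffix resolver'' as a shadow strategy. At any priority-$x$ transition, the active player chooses uniformly at random among the at most $M$ available successors (where $M$ is the maximum out-degree of $\sem{\Aa}$), so the random choice matches the LSS-prescribed successor with probability at least $1/M$. Crucially, because the walk's $\wmu_x$-projection agrees with that of the witness run, a new run of the form ``priorities $\geq x$ followed by priorities $\geq x+1$'' can be constructed from the walk's current state, so the hypothesis of Lemma~\ref{lem:longest-suffix-ensures} is available there. Matching LSS at the next $|Q_{x+1}|$ priority-$x$ occurrences therefore drives the walk into the absorbing condition, an event of conditional probability at least $(1/M)^{|Q_{x+1}|}$, which supplies the required $\epsilon$.

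The main technical obstacle is this last step: justifying that the bound of Lemma~\ref{lem:longest-suffix-ensures} can be re-invoked from every state reached by the random walk after arbitrary deviations. This rests precisely on the $\wmu_x$-alignment established in the first observation, which lets one exhibit from any reachable state a fresh continuation ending in a $\T_{x+1}$-safe tail and thus restore the premise of Lemma~\ref{lem:longest-suffix-ensures}.
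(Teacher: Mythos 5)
Your plan is sound, and its two structural observations are exactly the facts the paper's proof rests on: the walk never sees a priority below $x$ because its $\wmu_x$-projection is forced along the same deterministic $\T_x$-path as the witness run, and any leaf whose $\wmu_{x+1}$-projection survives on the remaining suffix is absorbing. However, the paper's argument is much more direct and never invokes longest-suffix resolvers: at any transition of priority exactly $x$ taken after the prefix $u$, the witness run is at that moment itself at some leaf of the very class $\classx{\tilde q}$ that the walk jumps into, and from that leaf the remaining suffix lies in $\Lsafexx{x+1}$; hence at every single priority-$x$ step the walk hits an absorbing leaf with probability at least $1/n$ (with $n$ the number of leaf states), which immediately gives the bound $(1-1/n)^k$. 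Your ``re-establishment'' argument, specialised to the current transition, is precisely this observation, so the shadow-strategy detour through Lemma~\ref{lem:longest-suffix-ensures}, with its weaker constant $(1/M)^{|Q_{x+1}|}$ per block, buys nothing here.

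If you do keep the detour, one step is not licensed by the lemma as stated. Lemma~\ref{lem:longest-suffix-ensures} is about $\sigma$-runs, i.e.\ runs in which \emph{every} choice of the resolving player follows the resolver, whereas your matching event only constrains the walk at priority-$x$ transitions; between two such transitions the walk may deviate from the resolver at odd priorities $\geq x+2$, so it is not a $\sigma$-run, and requiring matching at all such choices would destroy the uniform lower bound $\epsilon$, since the number of constrained choices inside a block is unbounded. The fix is to note that the counting argument in the proof of Lemma~\ref{lem:longest-suffix-ensures} only uses the choices made at priority-$x$ transitions, because the $\wmu_{x+1}$-projection evolves deterministically along transitions of priority $\geq x+1$; you also need the Adam-side analogue of that lemma when $x$ is even (the paper only states the Eve/odd case). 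With these two remarks, and with your hypothesis-restoration step written out (following the word from the current state and redirecting, at the first later priority-$x$ transition, to a leaf below the witness's surviving $\T_{x+1}$-state), your proof goes through.
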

\begin{proof}
  Let $n$ be the number of "leaf states".
  Let $w = uw'a\tilde{w}$ and let 
  \begin{equation*}
    p\run{u : \geq x}q'\run{w' : \geq x}q''\act{a:x} \classx{\tilde{q}}
  \end{equation*}
  be a random walk from $p$ on a prefix of $uw$, producing priority $x$.
  By assumption, there is at least one state $\tilde{q}'$ in $\classx{\tilde{q}}$ such that $\tilde{w}\in \Lsafexx{x+1}(\tilde{q}')$.
  Therefore, the random walk goes to this state with probability at least $1/n$. The probability of seeing $k$ transitions with priority $x$ after the prefix $u$ is therefore less than $(1-1/n)^k \xrightarrow[k \to \infty]{} 0$.
\end{proof}

\begin{proof}[Proof of Proposition~\ref{prop:layered-are-probabilistic}]
  Let $w\notin L(\Aa)$ (the case $w\in L(\Aa)$ is symmetric), and let $x$ be an even "priority". 
  We show that the probability that a random walk over $w$ produces $x$ as minimal priority infinitely often is $0$.
  The probability of this event is the probability of obtaining a path of the form:
  \[q_\init \run{u} p \run{w':\geq x} \; \text{ producing $x$ infinitely often, with } w = uw'.\]
  For every such path $q_\init \run{u} p$, since $w'\in \Lsafex(p)$ and is not "strongly rejected" by $\wmu_x(p)$, there is a run $p\run{u' : \geq x}q\run{w'' : \geq x+1}$, with $w' = u'w''$.
  Therefore, by Lemma~\ref{lem:random-walk-ensures}, the probability of obtaining a path $p \run{w':\geq x}$ producing $x$ infinitely often is $0$.
\end{proof}

\subsection{Some decision procedures in PTIME}\label{subsec:decision-procedures}

We show that we can check whether a given layered automata $\Aa$ is "consistent" in polynomial time. 
We can also check emptiness and language inclusion of "consistent" "layered automata" in polynomial time.

\begin{proposition}\label{prop:consistent-PTIME}
It is decidable in polynomial time whether a given layered automaton is "consistent".
\end{proposition}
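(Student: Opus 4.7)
The plan is to reduce consistency checking to solving a polynomial number of polynomial-size games, one per candidate witnessing pair of states. Recall that $\Aa$ fails to be "consistent" precisely when there are states $p \in Q_{\geq 1}$ at some even layer $x$ and $p' \in Q_{\geq 1}$ at some odd layer $y$ with $\wmu_1(p) = \wmu_1(p')$, together with a word $w \in \Sigma^\omega$ that is simultaneously "strongly accepted" by $p$ and "strongly rejected" by $p'$. Since there are only $O(|Q_{\geq 1}|^2)$ candidate pairs, it is enough to decide, for each such pair, whether a witness $w$ exists, in polynomial time.

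For a fixed pair $(p,p')$ I would set up a two-player game $G(p,p')$ in which Eve tries to construct a conflicting word and Adam tries to refute one of the two universal clauses in the definitions of strong acceptance and strong rejection. The positions have the form $(s_x, s_y, t_{x+1}, t_{y+1})$, where $s_x \in Q_x$ and $s_y \in Q_y$ track the deterministic runs in $\T_x$ (from $p$) and $\T_y$ (from $p'$), and $t_{x+1} \in Q_{x+1} \cup \{\bot\}$, $t_{y+1} \in Q_{y+1} \cup \{\bot\}$ record the lift currently being watched by Adam in each of the two subordinate layers ($\bot$ meaning no active commitment). At each step Eve picks a letter $a \in \Sigma$; the components $s_x, s_y$ and any non-$\bot$ tracker are updated via $\d_x, \d_y, \d_{x+1}, \d_{y+1}$, a dead tracker reverting to $\bot$. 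Whenever $t_{x+1} = \bot$ Adam may commit to any $p'' \in \mu_x^{-1}(s_x) \cap Q_{x+1}$, and symmetrically for $t_{y+1}$. Eve's winning condition is the conjunction of (i) safety, namely the $\T_x$- and $\T_y$-runs never die; and (ii) two Büchi conditions, namely each of the slots $t_{x+1}$ and $t_{y+1}$ returns to $\bot$ infinitely often. The arena has $|Q_x| \cdot |Q_y| \cdot (|Q_{x+1}|+1) \cdot (|Q_{y+1}|+1)$ positions and the winning condition reduces, through the usual product with a two-valued counter, to a Büchi game of polynomial size, which is solvable in polynomial time.

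The central lemma I would then prove is that Eve has a winning strategy in $G(p,p')$ if and only if there exists some $w \in \Sigma^\omega$ strongly accepted by $p$ and strongly rejected by $p'$. The easy direction uses any such $w$ as an oblivious Eve-strategy: safety follows from $w \in \Lsafexx{x}(p) \cap \Lsafexx{y}(p')$, and whenever Adam commits to a lift $p''$ at some time $u$ the tracker must die on $w[u:]$ by the universal clause in the definition of "strong acceptance" (symmetrically for $\T_{y+1}$). The harder direction, which will be the main obstacle, is the extraction of a single witness $w$ from a possibly adaptive Eve-strategy. Here I would let Adam play a universal ``challenger'' strategy that, whenever a slot is free, cycles through every lift of the current layered state, so that Eve's response against this Adam has to kill every lift at every relevant time; the tree structure of the layered automaton, which ensures only finitely many lifts at any state and forces recurrence of $\bot$-visits through the Büchi condition, lets one argue that this extracted play is a word $w \in \mathit{SA}(p) \cap \mathit{SR}(p')$. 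Once this correctness claim is in place, the algorithm enumerates the polynomially many pairs $(p,p')$, solves the corresponding Büchi game for each, and declares $\Aa$ consistent iff Adam wins every such game, for a total polynomial runtime.
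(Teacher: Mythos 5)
Your reduction is the same as the paper's: enumerate the polynomially many candidate pairs $(p,p')$ sharing a layer-$1$ ancestor, and for each pair solve a polynomial-size game in which one player supplies letters while the other tracks, one at a time, lifts into layers $x+1$ and $y+1$, with a safety-plus-two-B\"uchi winning condition; this is exactly the generalised B\"uchi game of the paper, and your easy direction (an oblivious strategy playing a conflicting word $w$ wins, since every committed tracker must die by the universal clause of strong acceptance/rejection) is fine.

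The gap is in the hard direction, precisely the step you flag as the main obstacle. Your \emph{universal challenger} Adam, who cycles through the lifts of the current layer-$x$ state whenever his slot is free, does not certify strong acceptance of the extracted word. Strong acceptance requires that for \emph{every} position $u$ of $w$ and \emph{every} lift in $\mu_x^{-1}$ of the state reached at that position, the deterministic continuation in $\T_{x+1}$ eventually dies. A hypothetical surviving run is a moving target: at each free moment it sits at some state of $Q_{x+1}$, but which state changes over time, and since Eve observes Adam's commitments she can kill every state Adam actually commits to while keeping the uncommitted survivor perpetually out of phase with Adam's cycle. Hence satisfying your B\"uchi condition (every committed tracker resets) against the cycling Adam does not imply that every lift at every position dies; finiteness of $Q_{x+1}$ alone provides no synchronisation. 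The paper closes exactly this hole by letting Refuter resolve his choices with a longest suffix resolver and invoking Lemma~\ref{lem:longest-suffix-ensures}: if some run from some position stayed safe in layer $x+1$ forever, the longest-suffix choices would eventually follow a safe run, the tracker would never reset, and Eve could not win the play, a contradiction. Replacing your cycling challenger by the longest suffix resolver in both components repairs the extraction, and the rest of your construction then goes through as in the paper.
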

\begin{proof}
  We check for a pair $p,p'\in Q_{\geq 1}$ conflicting with the definition of "consistency", that is, $\wmu_1(p)=\wmu_1(p')$ and there is a word $w$ "strongly accepted" by $p$ and "strongly rejected" by $p'$.
  There are polynomially many candidates, 
  so we can iterate through all of them. Given $p\in Q_x$ and $p'\in Q_{x'}$, with $x$
  even and $x'$ odd, we can check whether there exists a conflicting word $w \in
  \Sigma^\omega$ by a game. In this game, a player \emph{Prover} tries to
  find a word witnessing an inconsistency, and \emph{Refuter} tries to show
  that the constructed word is not a witness.
  
  The game starts in $(p,p')$. In each round, Prover plays the next letter, and
  Refuter chooses the next pair of states according to the transition function
  on the chosen letters. Prover wins if on the run from $p$ only
  priorities $\ge x$ occur and $x$ occurs infinitely often, and on every run
  from $p'$ on $w$, only priorities $\ge x'$ occur and $x'$ occurs infinitely
  often. This is a generalised Büchi game which can be solved in polynomial
  time. We claim that Prover has a winning strategy if and only if $w$ is "strongly
  accepted" by $p$ and 
  "strongly rejected" by $p'$.
  Clearly, if this condition is satisfied, then Prover can play the word $w \in \Sigma^\omega$, and all the runs that Refuter can play are such that Prover wins the game.

  For the other direction, assume that Prover has a winning strategy in the game and let her play according to the strategy. Any word produced by this strategy has the property that all priorities visited from $p$ are $\ge x$, and all priorities visited from $p'$ are $\ge x'$. Let Refuter use a "longest suffix resolver" in both components for selecting the runs, and let $w \in \Sigma^\omega$ be the word produced by Prover's strategy against this strategy of Refuter. If there is a run on $w$ from $p$ that does not produce $x$ infinitely often, then this run produces only priorities $\ge x+1$ from some point onwards. The "longest suffix resolver" used by Refuter for selecting the run would then result in a run that sees only priorities $\ge x+1$ from some point onwards (see Lemma~\ref{lem:longest-suffix-ensures}). Similarly for $p'$.
\end{proof}

\begin{proposition}\label{prop:emptiness-PTIME}
Given a "consistent" "layered automaton" $\Aa$, we can check in polynomial time whether $L(\Aa) = \emptyset$.
\end{proposition}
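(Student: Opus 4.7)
The plan is to combine the semantic characterisation of acceptance from Corollary~\ref{cor:equivalences-acceptance} with a polynomial collection of polynomial-size Büchi games, in the style of Proposition~\ref{prop:consistent-PTIME}. By that corollary, $L(\Aa)\neq\emptyset$ if and only if there exist an even layer $x$ and a state $p\in Q_x$ such that some infinite word is strongly accepted by $p$; the reachability condition $q_\init\act{u}_1\wmu_1(p)$ is automatic since every state of $\T_1$ is reachable from $q_\init$. There are only polynomially many pairs $(x,p)$ with $x$ even, so it suffices to test existence of a strongly accepted word from each such $p$ in polynomial time.

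For fixed $(x,p)$, the plan encodes the existence of such a word as a two-player Büchi game $\Gg_{p,x}$ on the polynomial-size arena $Q_x\times(Q_{x+1}\cup\{\bot\})$, with initial position $(p,\bot)$. In each round at $(q,s)$, Prover plays a letter $a\in\Sigma$ (losing if $\delta_x(q,a)$ is undefined); otherwise $q$ updates to $q':=\delta_x(q,a)$. If $s=\bot$, Refuter chooses either $\bot$ or a fresh challenge $s'\in\wmu_x^{-1}(q')\cap Q_{x+1}$; if $s\neq\bot$ then $s$ advances to $\delta_{x+1}(s,a)$ when defined and becomes $\bot$ (the threat dies) otherwise. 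Prover's Büchi objective is that positions with second coordinate $\bot$ occur infinitely often. This Büchi game on $O(|Q_x|\cdot|Q_{x+1}|)$ positions is solvable in polynomial time, and iterating over all such pairs $(x,p)$ gives the overall algorithm.

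The central claim is that Prover wins $\Gg_{p,x}$ from $(p,\bot)$ iff some word is strongly accepted by $p$. The direction from a strongly accepted word to Prover's win is immediate: Prover plays that word obliviously and each challenge Refuter raises eventually dies by strong acceptance, so the Büchi condition is satisfied. The main obstacle is the converse, which requires extracting a single strongly accepted word from what is a priori an adaptive winning strategy for Prover. The planned argument is contrapositive: if no word is strongly accepted, then for every word Prover might play there exists some $(i,s)$ with $s\in\wmu_x^{-1}(q_i)\cap Q_{x+1}$ whose $\T_{x+1}$-trajectory remains infinite on the suffix from position~$i$; Refuter wins by activating such a surviving $s$ at time $i$, permanently locking the play inside $Q_x\times Q_{x+1}$ and violating the Büchi condition. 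Turning this informal ``observe-and-activate'' response into a proper game strategy will rely on the finite branching of Refuter's moves and the memoryless determinacy of Büchi games.
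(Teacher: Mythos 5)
Your reduction and game are essentially the ones in the paper: by Corollary~\ref{cor:equivalences-acceptance} and reachability of all layer-$1$ states, $L(\Aa)\neq\emptyset$ iff some state on an even layer strongly accepts some word, and for each candidate state this is tested by a polynomial-size B\"uchi game in which Prover plays letters, the layer-$x$ run evolves deterministically, and Refuter maintains a layer-$(x+1)$ challenge whose death (and reset) is the B\"uchi event. Your easy direction (a strongly accepted word gives Prover an oblivious win) is correct.

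The gap is in the converse, which is exactly the technical heart of the statement, and your sketch does not close it. The observe-and-activate response is not a Refuter strategy: Refuter must choose when and which child to challenge online, while the word being produced depends on Refuter's own choices, since Prover's (possibly history-dependent) winning strategy adapts to the current challenge. So the hypothesis that no word is strongly accepted, which for each \emph{fixed} word yields a surviving pair $(i,s)$, does not by itself yield a single strategy that works against every adaptive Prover; and neither finite branching nor memoryless determinacy of B\"uchi games repairs this, since determinacy only tells you that one of the two players has a positional winning strategy, not which one, and the positional Prover strategy still produces different words against different Refuter behaviours. The paper closes precisely this step with the longest suffix resolver: Refuter, whenever the challenge slot is free, challenges a child reached by the longest suffix of the letters played so far that is still alive on layer $x+1$; Lemma~\ref{lem:longest-suffix-ensures} (a counting argument over potential support states on layer $x+1$) guarantees that on \emph{whatever} word arises in the resulting play, if that word is not strongly accepted then from some point on the challenge never dies, so the B\"uchi condition fails. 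Equivalently, one runs Prover's winning strategy against this particular Refuter and concludes that the single word produced in that play is strongly accepted. You need to supply this (or an equivalent) argument; note that any correct implementation of an oldest-surviving-thread Refuter is essentially a reinvention of the longest suffix resolver, so the cleanest fix is to invoke Lemma~\ref{lem:longest-suffix-ensures} directly, as the paper does.
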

\begin{proof}
  Since we assume that all states in $\T_1$ are reachable, $L(\Aa) \not= \emptyset$ if and only if there is some state $p_0\in Q_{\geq 1}$ in an even layer $x$ that "strongly accepts" some word.

  We can check if this is the case via a B\"uchi game similar to the one considered above.
  Two players play over pairs $(p,q)\in Q_x \times Q_{x+1}$.
  The starting position is $(p_0,q_0)$, for $q_0$ any state in $\mu_x^{-1}(p_0)$.
  At each step, Prover gives the next letter such that the obtained word is safe from $p_0$. The next state $p\actx{a} p'$ is given deterministically.
  If transition $q\actxx{a}{x+1}q'$ exists, this transition is taken. If not,
  Refuter can chose to reset to a state in $\mu_x^{-1}(p')$.
  Prover wins if there are infinitely many resets at level $x+1$.
  It is clear that Prover wins this game if and only if $p_0$ "strongly accepts"
  some word. 
  Indeed, Refuter can use the "longest suffix resolver" to win whenever the
  produced word is not "strongly accepted".
\end{proof}

\begin{proposition}\label{prop:inclusion-PTIME}
Given two "consistent" "layered automaton" $\Aa$ and $\Aa'$, we can check in polynomial time whether $L(\Aa) \subseteq L(\Aa')$.
\end{proposition}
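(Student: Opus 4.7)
The plan is to reduce language inclusion to non-emptiness of $L(\Aa) \cap \overline{L(\Aa')}$ and attack the latter by a game construction that parallels the one used in the proof of Proposition~\ref{prop:consistent-PTIME}, but played on the product of the two automata.

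First, I would show the following characterisation of non-inclusion: $L(\Aa) \not\subseteq L(\Aa')$ if and only if there exist a word $u \in \Sigma^*$, an even layer $x$ with a state $p \in Q_x(\Aa)$ such that $q^{\Aa}_\init \runxx{u}{1} \wmu_1(p)$, an odd layer $x'$ with a state $p' \in Q'_{x'}(\Aa')$ such that $q^{\Aa'}_\init \runxx{u}{1} \wmu'_1(p')$, and a suffix $w' \in \Sigma^\omega$ that is simultaneously "strongly accepted" by $p$ in $\Aa$ and "strongly rejected" by $p'$ in $\Aa'$. Corollary~\ref{cor:equivalences-acceptance} immediately gives the right-to-left direction; for left-to-right, it provides decompositions that may use different prefixes $u_1, u_2$ in $\Aa$ and $\Aa'$, and the key observation is that "strong acceptance" and "strong rejection" are preserved when shifting to a suffix, by replacing the state with any state in the same layer whose ancestor is the state reached along the discarded prefix. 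Taking $u = \max(u_1, u_2)$ and adjusting the state accordingly on the shorter side yields a common prefix.

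Second, I would encode this as a two-phase game on the product. In the first phase positions are in $Q_1(\Aa) \times Q'_1(\Aa')$, starting at $(q^{\Aa}_\init, q^{\Aa'}_\init)$; Prover picks letters and both components evolve deterministically. At any moment Prover may declare entry into phase~2 by picking $(p,p')$ with $\wmu_1(p)$ and $\wmu'_1(p')$ equal to the current $\T_1$-states, and with $p,p'$ in even and odd layers $x, x'$ respectively. In phase~2 positions are pairs of "leaf states" of $\sem{\Aa}$ and $\sem{\Aa'}$; Prover keeps picking letters while Refuter picks successors in each component according to the transitions of $\sem{\Aa}$ and $\sem{\Aa'}$. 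Prover wins if, in the $\Aa$-component, only priorities $\geq x$ appear and $x$ appears infinitely often, and symmetrically for the $\Aa'$-component with $x'$. This is a conjunction of a safety and a generalised Büchi condition on a game whose size is polynomial in $|\Aa| + |\Aa'|$, hence solvable in polynomial time.

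Third, correctness: if there is a witnessing word as characterised above, Prover plays this word and wins whatever Refuter does, by definition of "strong acceptance" and "strong rejection". Conversely, if Prover has a winning strategy, I would let Refuter play "longest suffix resolvers" in both components independently. Applying Lemma~\ref{lem:longest-suffix-ensures} componentwise, any word produced by Prover's strategy against these resolvers is such that every "run" from $p$ in $\sem{\Aa}$ stays in priorities $\geq x$ and hits $x$ infinitely often (hence strongly accepted by $p$), and symmetrically every run from $p'$ in $\sem{\Aa'}$ is strongly rejected by $p'$. The main obstacle is making the Refuter argument tight: one must ensure that the two longest suffix resolvers, played in parallel but in different automata, can both be defeated only when the actual safety/infinity-often conditions truly hold in each component. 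This follows because Lemma~\ref{lem:longest-suffix-ensures} is a statement about each component in isolation, so running it twice in parallel is sound, exactly as in the proof of Proposition~\ref{prop:consistent-PTIME}.
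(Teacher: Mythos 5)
Your proposal is correct and follows essentially the same route as the paper: a polynomial-size two-phase Prover/Refuter game on the product, where Prover supplies letters, Refuter resolves the runs, the winning condition is a conjunction of Büchi-type conditions after entering phase 2, and the Refuter direction of correctness is argued with longest suffix resolvers via Lemma~\ref{lem:longest-suffix-ensures}, just as in Proposition~\ref{prop:consistent-PTIME}. The only deviation is cosmetic: the paper lets Prover enter phase 2 at different moments in $\Aa$ and $\Aa'$, while you first normalise the two witnesses to a common prefix (which is valid, since strong acceptance/rejection shifts along suffixes by following the deterministic layer-$x$ run) and then enter phase 2 simultaneously.
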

\begin{proof}
  We show that we can reduce this problem to solving a Rabin game with $3$ Rabin pairs of polynomial size.
  This allows to conclude, as solving Rabin games with a fix number of Rabin pairs can be done in polynomial time (see e.g.~\cite[Thm.~3.1]{BookGamesOnGraphs2025}).
  The winning condition is in fact slightly simpler: 
  if the game stays in the first phase Prover loses, otherwise Prover wins in
  phase 2 if two B\"uchi conditions are satisfied.

In this game, Prover tries to show that $L(\Aa) \nsubseteq L(\Aa')$ by giving a word in $L(\Aa)\setminus L(\Aa')$.
Prover gives letters one by one.
In a first phase, these letters determine two runs:
\[q_\init \actxx{a_1}{1}q_2 \actxx{a_2}{1} q_3 \actxx{a_3}{1} \dots \text{ in } \T_1 \quad \tand \quad q'_\init \actxx{a_1}{1}q'_2 \actxx{a_2}{1} q'_3 \actxx{a_3}{1} \dots \text{ in } \T'_1. \]

At any point, Prover can decide to enter a phase 2 in $\Aa$: he goes to some pair $(p_i,s_i)$, with $p_i\in \wmu_1^{-1}(q_i)$ in some even layer $\T_x$ and $s_i\in \mu_x^{-1}(p_i)$.
Similar, at any point Prover can enter a phase 2 in $\Aa'$, going to $(p'_i,s'_i)$, with $p'_i\in \wmu_1^{-1}(q'_i)$ in some odd layer $\T'_{x'}$ and $s'_i\in \mu_{x'}^{-1}(p'_i)$.

If for $\Aa$ or $\Aa'$, Prover never enters phase 2, then he loses.

When automaton $\Aa$ is in phase 2 and in a pair $(p_i,q_i)$, the game is as in the previous proposition.
The transition $p_i\actx{a_i}p_{i+1}$ must exist (else, Prover loses automatically).
If the transition $q_i \actxx{a_i}{x+1} q_{i+1}$ exists in $\T_{x+1}$, then we go to $q_{i+1}$ in the second component.
If this transition does not exist, then this is an accepting edge for the first B\"uchi condition, and Refuter resets the second component to any $q_{i+1}\in \mu_x^{-1}(p_{i+1})$.

Phase $2$ in $\Aa'$ is identical. We follow the run in $\T'_{x'}$ in the first component, which must exist (else, Prover loses). Refuter can choose how to reset the second component in $\T'_{x'+1}$ whenever necessary. 
Each time a reset happens, we see an accepting edge for the second B\"uchi condition.

It is clear that there exists a word in $L(\Aa)\setminus L(\Aa')$ if and only if Prover can see infinitely often edges in both B\"uchi conditions.
\end{proof}

\subsection{Connections and comparison with other models}\label{subsec:comparison}

In this section, we discuss how "layered automata" relate to
other models from the literature.
Sometimes we refer to the minimal layered automaton for a language that
is introduced in Section~\ref{sec:concise}.
We leave for future work a thorough analysis on the size of "layered automata" compared to other representations of "$\omega$-regular languages"; we include here just some remarks on this matter.

\subparagraph{Deterministic and alternating automata.} First, we notice that "consistent" "layered automata" can be exponentially smaller than "deterministic" automata, as this is already the case for "HD" "coB\"uchi automata"~\cite[Thm.~1]{Kup.Skr.Determinisation2015}.
On the other hand, they can be doubly exponentially larger than alternating
parity automata. The double exponential gap already occurs for automata over
finite words~\cite[Thm.~5.3]{CKSL.Altenation1981}. It transfers to "consistent"
"layered automata"  because on the first layer they need to have at least as
many states as there are residuals in the language. 

\subparagraph{Minimal "HD" "coB\"uchi" automata.} As discussed in Proposition~\ref{prop:minimalCoBuchi-are-layered}, minimal "HD" "coB\"uchi" automata can be seen as a special case of "layered automata".
The properties characterising minimal "layered automata" that we introduce in Section~\ref{sec:concise} are a generalisation of the notions of safe minimality and safe centrality introduced by Abu Radi and Kupferman~\cite{Abu.Kup.Minimization2022}.

\subparagraph{Muller languages.}
    The reader familiar with Zielonka trees~\cite{Zielonka.Infinite1998} and
    their parity automata~\cite{DJW.Memory1997,CCFL.ACD2024} may have noticed
    close resemblances between "layered automata" and these notions.
    Indeed, it is not difficult to see that the Zielonka tree of a Muller
    language $L$ can be taken as the "tree" of a "layered automaton"
    "recognising@@lay" $L$. In this case, the associated automaton $\sem{\Aa}$
    is deterministic by pruning. Any such pruning produces the deterministic
    parity automaton of the Zielonka tree for the language $L$.
    We refer to~\cite[Section~4]{CCFL.ACD2024} for definitions on the parity automaton of a Zielonka tree.

\subparagraph{Positional languages.} In 2024, Casares and Ohlmann characterised which "$\omega$-regular languages" are \emph{positional}, that is, languages $L$ such that the existential player can always play optimally using positional strategies in games with winning condition $L$~\cite{Cas.Ohl.Positional2024}.
In this characterisation, they use so-called \emph{signature automata}. 
It turns out that signature automata are just "layered automata" with extra properties that ensure positionality.

More precisely, we can restate their characterisation (point (2) in~\cite[Thm.~3.1]{Cas.Ohl.Positional2024}) as follows: An "$\omega$-regular language" $L$ is positional if and only if the minimal "layered automaton" $\Aa_L$ is such that:
\begin{enumerate}
    \item Nodes at even layers have at most one child,
    \item\label{it:total-order} For each $x$, the states in every SCC of $\T_x$ are totally ordered by inclusion of "$x$-safe languages",
    \item The residuals are totally ordered by inclusion (this can be seen as a special case of the previous condition), and
    \item Each even layer $x$ is \emph{progress consistent}. That is, if $p\runx{u} q$ and $\Lsafex(p)\subsetneq \Lsafex(q)$, then $u^\w$ is "strongly accepted" by $p$.
\end{enumerate}

\subparagraph{Chains of coB\"uchi automata (COCOA).} 
In~\cite{Ehl.Sch.Natural2022}, Ehlers and Schewe propose to represent an "$\omega$-regular language" by a decreasing sequence of "coB\"uchi" languages  $L_1\supseteq L_2\supseteq \dots \supseteq L_d$,\footnote{The definitions in \cite{Ehl.Sch.Natural2022} use $0$ as minimal color, whereas we use $1$ as minimal layer. As mentioned in \cite{Ehl.Sch.Natural2022}, the definitions can easily be adapted to the setting with $1$ as minimal color.} where $L_x$ are the words that are \emph{not at home at level $x-1$} (we refer to~\cite[Def.~1, Thm.~7]{Ehl.Sch.Natural2022} for definitions\footnote{We believe that in~\cite[Thm.~7]{Ehl.Sch.Natural2022} it should read ``i.e., $w$ is \emph{not} at home in color $i-1$''.}). A word $w$ is in the language $L$ represented by the chain if the maximal index $x$ such that $w \in L_x$ is even. 

It is straightforward to obtain a COCOA representation from a "consistent" "layered automaton".
For every layer $x$ define an "HD" "coB\"uchi automata" from the transition
system $\T_x$ by interpreting the existing transitions as
$2$-transitions of the "coB\"uchi automaton", and for all missing transitions
adding $1$-transitions to all states of the corresponding residual class.
Therefore, the size of a COCOA representation of a language is never bigger than the size of the minimal "layered automaton" for that language. But the minimal layered automaton may be exponentially larger than the minimal COCOA for a language, as illustrated by Example~\ref{ex:COCOA-to-layered-exponential} below.

Since COCOA represent languages as Boolean combinations of "coB\"uchi
automata", it is therefore not surprising that they are smaller than a
representation by a single automaton. In order to use COCOA in applications like
synthesis, one needs to obtain a "history deterministic" automaton from it. This
requires to make a product of all the "coB\"uchi automata" that constitute it~\cite{Ehl.Kha.Fully2024}.
Therefore, the number of states of the final "alternating parity automaton" coming from a COCOA may be exponentially larger than the "alternating parity automaton" given by a "layered automaton". 
In Example~\ref{ex:COCOA-exponential-product} we give such an example.\footnote{We consider here the naive product of \cite{Ehl.Kha.Fully2024}. They also define a reduced and an optimised product that ignores some product states. A more fine-grained comparison of these products with layered automata is left as future work.} 
Moreover, the languages of this example is prefix-independent (it has a single residual), so this blow-up is not due exclusively to taking a product of all residuals of the language.

One of the central definitions in~\cite{Ehl.Sch.Natural2022} is the \emph{natural color} of a word $w$ with respect to a language $L$.
We believe that the natural color is visible in the minimal "layered automaton"
for $L$, namely it coincides with the maximal layer $x$ in which $w$ is
"ultimately safe".

\begin{example}\label{ex:COCOA-to-layered-exponential}
The language in this example is a minor modification of a language used in
\cite{AngluinBF18} for showing an exponential gap between saturated families of DFAs and
DPAs.
Let $\S = \{1,\dots, k\}$ and consider the language $L$ of all $w \in \S^\omega$ satisfying the following two conditions:
\begin{itemize}
\item $w$ contains an even number of different letters infinitely often,
\item from some point onwards, for every two-letter infix $ij$, we have $j \in \{1, \ldots, i+1\}$.
\end{itemize}
The second condition ensures that for all words in $L$, the set of symbols that occur infinitely often forms an interval inside $\{1, \ldots, k\}$. 
For $k=4$, the minimal "layered automaton" for this language is shown in Figure~\ref{fig:COCOA-to-layered-exponential}. The automaton has $k+1$ layers. Since the language is prefix independent, the first layer has only one state looping on all letters. For $x \ge 1$, the SCCs on layer $x$ correspond to the intervals of size $k-x+2$. However, an interval might appear several times. For example, on layer $4$, the interval $[2,3]$ appears twice because it is contained in $[1,3]$ and in $[2,4]$. And on layer $5$ the singleton interval $[3,3]$ appears three times because it is contained in $[3,4]$ and in $[2,3]$ which itself appears already twice on the previous layer. Because of the binary tree structure starting from layer $2$, there are $2^{k-1}$ states on layer $k+1$. These are the states used by the corresponding alternating "HD" automaton. If one builds the COCOA for this language, then each interval appears only once on each level of the COCOA, so the number of states of each coB\"uchi automaton in the COCOA is at most quadratic in $k$.

As a further remark, for those readers familiar with $\omega$-semigroups, the size of the syntactic $\omega$-semigroup of this language is of order $k^4$ because an element of the semigroup only needs to store the first, last, minimal, and maximal number occurring in a word. The first and the last number in a word are needed to keep track of the second condition when concatenating two words. And assuming that the second condition is satisfied, the number of different letters occurring infinitely often can be derived from the minimal and maximal number occurring infinitely often, because letters can increase by at most one in each step. 
\end{example}
\begin{figure}
\centering
  \includegraphics[width=0.9\textwidth]{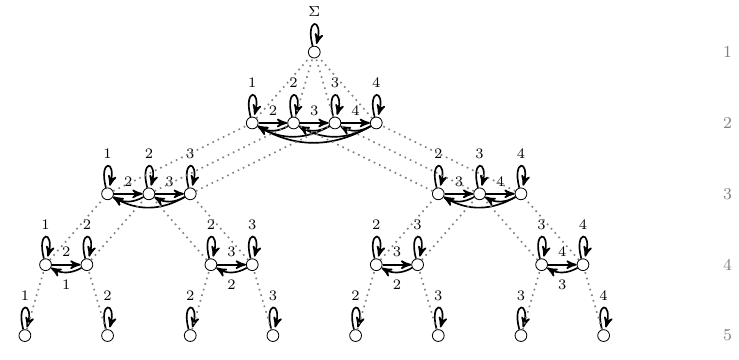}
    \caption{The layered automaton for $k=4$ in Example~\ref{ex:COCOA-to-layered-exponential}. In each SCC on each layer, all edges going to the same state have the same label. For readability, we have omitted the labels for the edges from right to left on layers~2 and~3. \label{fig:COCOA-to-layered-exponential}}
\end{figure}


\begin{example}\label{ex:COCOA-exponential-product}
    Let $\S = \{a_1,\dots, a_k,b_1,\dots, b_k,c\}$ and consider the language containing all words satisfying one of the following two conditions:
    \begin{itemize}
    \item $\{b_1, \ldots, b_k\}$ occur finitely often, and $\min(\{k+1\} \cup \{x \mid a_x \text{ occurs infinitely often}\}$ is even, or
    \item $\{a_1, \ldots, a_k\}$ occur finitely often, and $\min(\{k+1\} \cup \{x \mid b_x \text{ occurs infinitely often}\}$ is even. 
    \end{itemize}
That is, this is the language of words where one sort of the numbered symbols occurs finitely often, while the parity condition is satisfied on the other sort of symbols (the $k+1$ in the minimum takes care of the case that only $c$ occurs infinitely often). 
The minimal "layered automaton" needs two strands for checking the two parity conditions, as shown in Figure~\ref{fig:COCOA-exponential-product} for $k=4$. The minimal COCOA is obtained by taking each layer $x$, interpreting the loops on the states as $2$-transitions, and saturating it with $1$-transitions for the missing letters on that layer. Taking the (naive) product of these coB\"uchi automata, results in exponentially many states, most of which only have a $c$-loop.
\end{example}
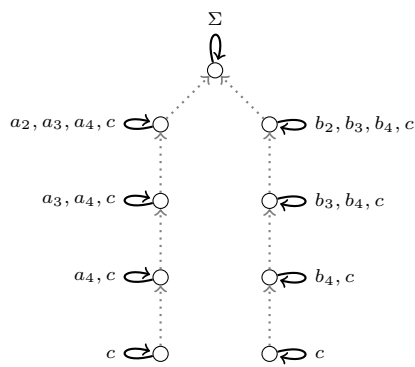
\begin{figure}
\def\xd{.7cm}  
\def\yd{-1.2cm} 
\begin{center}
\begin{tikzpicture}[node distance=.8cm,
      parent/.style={
        dotted, ->, thick, draw=black!50, font=\footnotesize
      },
      layer/.style={
        ->,thick, draw=black, font=\scriptsize
      },
      ln/.style={
        font=\footnotesize,text=black!50
      },
      state/.style={circle, draw=black, minimum size=2mm, inner sep=0pt}
  ]

  \node[state] (e) at (0,0) {};
  \node[state,below left=of e] (a2) {};
  \node[state,below right=of e] (b2) {};
  \node[state,below=of a2] (a3) {};
  \node[state,below=of b2] (b3) {};
  \node[state,below=of a3] (a4) {};
  \node[state,below=of b3] (b4) {};
  \node[state,below=of a4] (a5) {};
  \node[state,below=of b4] (b5) {};
  
                 
    \path[layer] (e) edge[loop above] node{$\S$} ();

    \path[layer] (a2) edge[loop left] node[left]{$a_2,a_3,a_4,c$} ();
    \path[layer] (b2) edge[loop right] node[right]{$b_2, b_3,b_4,c$} ();
    \path[layer] (a3) edge[loop left] node[left]{$a_3,a_4,c$} ();
    \path[layer] (b3) edge[loop right] node[right]{$b_3,b_4,c$} ();
    \path[layer] (a4) edge[loop left] node[left]{$a_4,c$} ();
    \path[layer] (b4) edge[loop right] node[right]{$b_4,c$} ();
    \path[layer] (a5) edge[loop left] node[left]{$c$} ();
    \path[layer] (b5) edge[loop right] node[right]{$c$} ();

    \path[parent] (a2) -- (e);
    \path[parent] (b2) -- (e);
    \path[parent] (a3) -- (a2);
    \path[parent] (b3) -- (b2);
    \path[parent] (a4) -- (a3);
    \path[parent] (b4) -- (b3);
    \path[parent] (a5) -- (a4);
    \path[parent] (b5) -- (b4);
  \end{tikzpicture}
    \caption{The layered automaton for $k=4$ in Example~\ref{ex:COCOA-exponential-product}. \label{fig:COCOA-exponential-product}}
    \end{center}
\end{figure}

\subparagraph{Rerailing automata.} 
In 2025, Ehlers introduced \emph{rerailing automata}~\cite{Ehlers.Rerailing2025}, a kind of automata with specific semantics:
a word is accepted if the maximal priority of its runs is even. 

We believe that a  "consistent" "layered automaton" $\Aa$ can be transformed into a rerailing automaton by further saturating $\sem{\Aa}$. More precisely, whenever there is a transition $p\act{a:x} \classx{q}$, we can add transitions $p\act{a:y} \classxx{q}{y}$ for all $y<x$ without changing the language recognised by the automaton. The automaton obtained in this way has the rerailing property.
Moreover, a word is "accepted@@layered" by $\Aa$ if and only if it is accepted with the semantics of rerailing automata after saturation.

For the other direction, it is unclear to us if rerailing automata are "layered", or even simply "HD" "alternating". However, it may be the case that the minimal rerailing automaton of a language and its minimal "layered automaton" are essentially the same object.

\section{A canonical minimal layered automaton}\label{sec:concise}
In this section, we show that each "$\omega$-regular" language admits a unique minimal "consistent" "layered automaton".
This automaton is characterised by three properties: "normality", "centrality" and "safe minimality" (extending the properties of minimal "HD" coB\"uchi automata~\cite{Abu.Kup.Minimization2022}).

\begin{restatable}{theorem}{conciseAreUnique}
\label{thm:concise-are-unique}
    Let $\Aa$ and $\Bb$ be two equivalent "consistent" "layered automata" that are "normal", "central" and "safe minimal". 
    Then, $\Aa$ and $\Bb$ are "isomorphic@@lay" (as "layered automata").
\end{restatable}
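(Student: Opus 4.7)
The plan is to build an isomorphism of layered automata $\varphi : \Aa \to \Bb$ by induction on the layer index, exploiting that the three defining properties together pin down every layer from the language $L = L(\sem{\Aa}) = L(\sem{\Bb})$ and from what has been built on lower layers.

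For the base case (layer $1$), consistency gives uniform semantic determinism (Corollary~\ref{cor:consistent-iff-unifSD}), so both $\T_1^\Aa$ and $\T_1^\Bb$ admit morphisms into the automaton of residuals of $L$. The role of normality should be precisely to force these morphisms to be isomorphisms -- states in $\T_1$ are in bijection with the reachable residuals of $L$ -- giving a canonical $\varphi_1 : \T_1^\Aa \to \T_1^\Bb$ commuting with the transition functions and mapping initial state to initial state.

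For the inductive step, assume compatible isomorphisms $\varphi_y : \T_y^\Aa \to \T_y^\Bb$ have been fixed for all $y \le x$, satisfying $\varphi_y \circ \mu_y^\Aa = \mu_y^\Bb \circ \varphi_{y+1}$ for $y < x$. I extend to layer $x+1$ fibre-wise: for each $p \in Q_x^\Aa$, define $\varphi_{x+1}$ on $(\mu_x^\Aa)^{-1}(p)$ by sending a state $q$ to the unique $q' \in (\mu_x^\Bb)^{-1}(\varphi_x(p))$ such that $\LsafexxA{x+1}{\Aa}(q) = \LsafexxA{x+1}{\Bb}(q')$. This is well-defined and bijective because:
\begin{itemize}
\item Safe minimality guarantees that, inside a common fibre, distinct states at layer $x+1$ carry distinct safe languages, so no two sources collide, and likewise on the target side.
\item The multiset of safe languages appearing in $(\mu_x^\Aa)^{-1}(p)$ depends only on $L$ and on $\wmu_1(p)$, which the induction hypothesis has already matched on both sides. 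Here the characterisation of $L(\sem{\Aa})$ via strong acceptance (Corollary~\ref{cor:equivalences-acceptance}) is crucial: whether a word is strongly accepted or strongly rejected from a state is an invariant of $L$ together with its layer-$1$ ancestor, so the two sides necessarily produce the same collection of safe languages.
\end{itemize}

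It remains to show that $\varphi_{x+1}$ is a morphism of transition systems on $\T_{x+1}$. Whether an $a$-transition is defined out of $q$ in $\T_{x+1}$ is encoded in the safe language of $q$, so the only freedom is in the choice of target within a class $\classxx{r}{x+1}$. This is the freedom that centrality is designed to remove: it prescribes, within every class reached, a canonical target. Normality rules out any residual redundancy (such as unreachable or interchangeable states), forcing $\varphi_{x+1}$ to preserve transitions and to be compatible with $\mu_x$.

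The hard part is this last step: turning the abstract statements of normality and centrality into the concrete claim that the transition function of $\T_{x+1}$ is uniquely determined, once the bijection on lower layers and on the classes of $\T_{x+1}$ has been fixed, by the safe languages of the endpoints. A subtlety running throughout is that consistency must propagate along the induced bijection, so that strong acceptance and rejection in $\Aa$ transport to the matched states of $\Bb$; Corollary~\ref{cor:equivalences-acceptance} is the main vehicle for expressing every relevant notion purely in terms of $L$, which is what ultimately allows these properties to be transferred from one side of the construction to the other.
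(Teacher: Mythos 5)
Your overall skeleton matches the paper's proof: build $\f_x$ layer by layer, match states within a fibre over the already-matched parent by their safe languages, and use safe minimality for uniqueness/injectivity. But there is a genuine gap at the step you yourself flag as "the hard part", and it is not where you think it is. The difficult point is not showing that $\f_{x+1}$ is a morphism once the matching exists (that is routine: safe languages are preserved under derivatives, $\Lsafexx{x+1}(\delta(q,a))=a^{-1}\Lsafexx{x+1}(q)$, and the parent of the target is forced by induction, which is exactly the paper's short Claim~\ref{claim:safe-induces-isomorphism}). The difficult point is the \emph{existence} half of your fibre-wise matching: that for every $q$ in $(\mu_x^\Aa)^{-1}(p)$ there actually is some $q'$ in $(\mu_x^\Bb)^{-1}(\f_x(p))$ with the same safe language. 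Your justification — that "whether a word is strongly accepted or strongly rejected from a state is an invariant of $L$ together with its layer-$1$ ancestor", hence the two fibres carry the same collection of safe languages — is not correct as stated. Strong acceptance is a property of a state of a \emph{particular} automaton; Corollary~\ref{cor:equivalences-acceptance} only says that acceptance of a word by the automaton is witnessed by strong acceptance from \emph{some} state reachable with the right residual, and it gives no state-level invariance across different automata. A priori, two consistent, normal, central, safe-minimal automata for $L$ could distribute safe languages differently among the children of matched parents; ruling this out is precisely the content of the theorem and cannot be obtained by quoting the acceptance characterisation.

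The paper closes this gap with the machinery of "central sequences" (Section~\ref{sub:central-sequences}): Lemma~\ref{lem:central-sequence} shows that in a "normal", "centralised", "safe minimal" automaton every state $p\in Q_x$ admits a word $z_p$ that loops on $p$, sends every other state of $\classxx{p}{x-1}$ either to $p$ or out of layer $x$, and kills all children. In Claim~\ref{claim:states-with-same-safeL}, one reads $z_p^\omega$ from the matched parent $s$ in $\Bb$ and uses language equivalence to force a child $q$ of $s$ on which it is $x$-safe, then combines central sequences of $p$ and $q$ into words of the form $(z_q w z_p u)^\omega$ and $(z_q w z_p v)^\omega$ to derive a contradiction from any discrepancy $\Lsafex^{\Aa}(p)\neq\Lsafex^{\Bb}(q)$. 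This is where all three properties are genuinely used (centrality, in particular, enters in constructing central sequences for non-$\fleqx$-maximal states), not in the morphism step as you suggest. Without this (or an equivalent) argument, your induction does not go through.
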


Two "layered automata" $\Aa$, $\Bb$ are \emph{isomorphic} if there is a bijective function between their states that preserves both the structure of the "transition systems" and the "tree structure".
That is, a function $\f\colon \Aa \to \Bb$ such that the restriction to each layer $\T_x$ is an "isomorphism@@TS" of "transition systems" and preserving the child relation given by the morphisms $\mu_x$. 
(See Section~\ref{subsec:morphisms} for further definitions on "morphisms of layered automata".)

\subsection{Normality, centrality and safe minimality}

\AP We define the three properties that identify minimal "consistent" "layered automata".

\begin{definition}
  A "layered automaton" is in ""normal form"" if for all $x\geq 2$ and $q\in Q_x$ the following two properties hold.  
  \begin{description}
    \item[""N1""]\label{it:normal-return} If $q\runx{u}q'$ for some $u\in \S^*$ then $q'\runx{v}q$ for some $v\in \S^+$.
    \item[""N2""]\label{it:normal-small} For all $p\in \mu_x^{-1}(q)$, there is some $u$ for which $q \runxx{u}{x}$ is defined in $\T_{x}$ but $p\runxx{u}{x+1} \bot$ in $\T_{x+1}$.
  \end{description} 
\end{definition}

Being in "normal form" ensures that there are no evidently useless transitions,
and moreover the "priorities" of transitions in $\sem{\Aa}$ cannot be lowered without modifying the recognised language.
The first property means that each $\T_x$ is a union of non-trivial SCCs.
Intuitively, this can be achieved just by removing transitions changing of SSC. 
A removal of such transition leads to its "priority" being lowered to $x-1$ in $\sem{\Aa}$. 
This does not change the accepted language, as such a transition cannot appear
infinitely often on a run without another transition of "priority" $\leq x-1$
appearing infinitely often too.

The second property says that $\mu_x$ does not completely cover all edges of
some SCC of $\T_{x}$, or in other words that the "safe languages" of children are always strictly included in the "safe language" of the parent. Otherwise, the SCC of $p$ does not "strongly accept"/reject anything. 
In the "semantics automaton" the property means that if $p\act{a:\geq x}$, then 
there is $u$ and a run $p\run{u:x}$ from $p$ producing $x$ as minimal priority.\\

\AP An ""$x$-SCC"" of a "layered automaton" is just an SCC of $\T_x$.
The next property requires that no "$x$-SCC" can be simulated by another one.
We define this formally below.

\begin{definition}
  For $x\geq 2$, and $p,q\in Q_x$ we define $p\intro*\fleqx q$ if $p\eqxx{x-1} q$ (that is, $\mu_{x-1}(q)=\mu_{x-1}(p)$)  and $\Lsafex(p)\incl \Lsafex(q)$.
\end{definition}

\begin{definition}
  \AP A "layered automaton" is ""centralised"" if for every $x\geq 2$ and for
  all states $p,q\in Q_{x}$,  if $p\fleq_{x}q$ then $p,q$ are in the same
  "$x$-SCC".
\end{definition}
Intuitively, if $p\fleq_{x}q$ are in different $x$-SCCs, then the $x$-SCC of $p$ and all its descendants can be removed. Because $p \fleq_x q$, this removal does not change anything about words being "strongly accepted"/rejected in the parent SCC of $p$ and $q$. Therefore, such states can be removed without changing the language of the automaton.
\vskip1em

Finally, on each layer we can perform a similar minimisation construction as for
finite automata by merging all states on layer $x$ that are language equivalent for all layers $y \le x$. 

\AP We define by induction on $x$ the relation $p\intro*\seq_x q$, for  $p,q\in \Qgeq{x}$:
\begin{align*}
  p \seq_1 q \quad  &\text{ iff } \;\;  L(p) = L(q), \;\; \tand \\[2mm]
  p \seq_x q \quad &\text{ iff } \;\; p \seq_{x-1} q  \;\; \tand \;\; \Lsafex(p) = \Lsafex(q) \;\; \text{ (for $x>1$)}.
\end{align*}

\begin{definition}
  A "layered automaton" is ""safe minimal"" when for all
  $x \ge 1$ and states $p,q \in Q_x$: if $p \seq_x q$ then $p=q$, 
\end{definition}

\begin{remark}\label{lem:sim-subset-approx}
  In a "consistent" "layered automaton" $\eqx \, \subseteq \,
  \seq_x$ for all $x$.
  This is because $p\eqx q$ means $\wmu_x(p)=\wmu_x(q)$. By the "consistency" assumption, $L(p)=L(q)$, and for $1<y\leq x$ the equality
  $\Lsafexx{y}(p)=\Lsafexx{y}(q)$ follows trivially.
  "Safe minimality" says that the "safe language" equivalence implies $\eqx$. 
  Namely, if $p,q\in Q_x$ with $L(p)=L(q)$ and $\Lsafexx{y}(p)=\Lsafexx{y}(q)$ for all $y\leq x$ then $p=q$.
\end{remark}

\subsection{Central sequences}\label{sub:central-sequences}
As an important technical tool for minimal "layered automata", we introduce the notion of a "central sequence" of a state $p$, which is a finite word that identifies the state $p$ uniquely (extending the synchronising property from~\cite[Sect.~4.2.2]{Cas.Ohl.Positional2024} and the notion of central sequence for coBüchi automata
from the full version of \cite{Lod.Wal.Congruences2025}).
The three properties "normal", "safe minimal", and "centralised" characterize minimal layered automata (as stated in Theorem~\ref{thm:concise-are-unique}), and they also guarantee that every state has a
"central sequence" (Lemma~\ref{lem:central-sequence}). 
This is an important step towards showing Theorem~\ref{thm:concise-are-unique}.

Let $\Aa$ be a "layered automaton" and $p\in Q_x$ an "$x$-state".
\AP We say that $z_p\in \S^+$ is a ""central sequence"" for $p$ if:
\begin{itemize}
  \item $p \runx{z_p} p$ in $\T_x$;
  \item for all $q\in Q_x$ with $q\eqxx{x-1} p$, either $q\runx{z_p} p$ or
  $q\runx{z_p} \bot$;
  \item for all $q'\in Q_{x+1}$ with $q'\eqxx{x-1} p$ we have $q'\runxx{z_p}{x+1} \bot$.
\end{itemize}

In $\sem{\Aa}$, this means that $p \run{z_p:x} \classx{p}$, and that from any
other state in $\classxx{p}{x-1}$, either $q \run{z_p:x}\classx{p}$ or $q
\run{z_p:<x}$, meaning $z_p$ produces a priority $<x$. 
Observe that the third condition is needed to ensure that the runs in question
do not have priority $>x$.


\begin{lemma}\label{lem:central-sequence}
  Let $\Aa$ be a "normal", "centralised" and "safe minimal" "layered automaton".
  Then, for every $x\geq 2$, every state $p\in Q_x$ admits a "central sequence".
\end{lemma}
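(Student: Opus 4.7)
The plan is to construct $z_p$ iteratively by concatenating short segments, each of which is a word looping at $p$ in $\T_x$ while simultaneously handling one obstructing state. The invariant I would maintain throughout the construction is that the current word $z$ satisfies $p \runx{z} p$, and for every state already processed, its run on $z$ has either been brought to $p$ in $\T_x$ or has died at the appropriate layer. Since every future extension is also a loop at $p$, this invariant is automatically preserved: from $p$ the loop keeps us at $p$, and dead states stay dead.

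To initialise, I would apply N1 with $q = q' = p$ and $u = \varepsilon$, which yields a non-empty word $z_0 \in \S^+$ with $p \runx{z_0} p$. I then enumerate the finitely many states of $Q_x \cap \classxx{p}{x-1} \setminus \{p\}$ and process them one by one. For a chosen state $q$, let $\tilde q$ denote the current endpoint of $q$'s run in $\T_x$ on the word $z$ built so far (if $\tilde q = \bot$ there is nothing to do). Because $\mu_{x-1}$ is a morphism and $p$ loops on $z$, we have $\tilde q \eqxx{x-1} p$. If $\tilde q$ lies in a different $x$-SCC than $p$, centrality gives $\Lsafex(p) \not\subseteq \Lsafex(\tilde q)$ (otherwise $p \fleqx \tilde q$ would force them into a common SCC); so I would pick $v$ with $p \runx{v}$ defined but $\tilde q \runx{v} \bot$, and close the loop by finding via N1 a suffix $v'$ bringing the endpoint $p \runx{v} p''$ back to $p$. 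In the case where $\tilde q$ lies in the same $x$-SCC as $p$, safe minimality together with $p \eqxx{x-1} \tilde q$ and consistency forces $\Lsafex(p) \ne \Lsafex(\tilde q)$; a short structural argument based on N1 then rules out strict inclusion of safe languages within a single $x$-SCC, so the two safe languages must be incomparable and the same recipe as in the previous case applies.

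After all states of $Q_x \cap \classxx{p}{x-1}$ have been handled, I would process the states $q' \in Q_{x+1} \cap \classxx{p}{x-1}$ in an analogous fashion. For a $q'$ still alive after the current word $z$, let $\tilde q'$ be its endpoint in $\T_{x+1}$. The morphism property of $\mu_x$ together with the invariant already established on layer $x$ forces $\mu_x(\tilde q') = p$; hence $\tilde q' \in \mu_x^{-1}(p)$. I then apply N2 to $p$ and $\tilde q'$ to obtain a word $v$ with $p \runx{v}$ defined in $\T_x$ but $\tilde q' \runxx{v}{x+1} \bot$, and close the loop again by N1. The resulting concatenation $z_p$ then meets the three conditions defining a central sequence.

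The main obstacle lies in the same-$x$-SCC case during the first phase: the construction needs a word $v$ with $p \runx{v}$ defined and $\tilde q \runx{v} \bot$. This requires showing that within an $x$-SCC of a normal automaton, the safe languages of distinct states cannot stand in strict inclusion; they must be either equal or incomparable. Safe minimality and consistency then rule out equality, leaving incomparability and therefore the desired witness. A secondary care is to make sure the successive extensions do not disturb each other, which is guaranteed uniformly by the requirement that each segment close back into a loop at $p$ in $\T_x$.
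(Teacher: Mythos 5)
Your plan has a genuine gap, and it sits exactly where you flag ``the main obstacle'': the claim that a short structural argument based on N1 rules out strict inclusion of safe languages within a single $x$-SCC is simply false. Nothing in N1 (nor in N2, centrality or safe minimality) prevents two states of one non-trivial SCC from having strictly comparable safe languages. Concretely, take layer $2$ of the minimal automaton for the coB\"uchi language ``finitely many factors $bb$'': its states are $q$ (last letter not $b$) and $p$ (last letter $b$), with $q\actx{a}q$, $q\actx{b}p$, $p\actx{a}q$ and no $b$-transition from $p$; this is a single non-trivial SCC, yet $\Lsafex(p)\subsetneq\Lsafex(q)$. The paper itself relies on such configurations: its characterisation of positional languages requires the states of each SCC of $\T_x$ to be \emph{totally ordered} by inclusion of safe languages, so strict inclusions inside one SCC are typical, not excluded. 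The consequence for your construction is fatal rather than cosmetic: whenever $p$ is not $\fleqx$-maximal, centrality places a state $\tilde q$ with $\Lsafex(p)\subsetneq\Lsafex(\tilde q)$ in the same SCC as $p$, and then \emph{no} word kills the run from $\tilde q$ while keeping $p$ alive. Since your segments only ever send obstructing states to $\bot$, the method cannot succeed in this case; the definition of central sequence deliberately allows the alternative $\tilde q\runx{z_p}p$, and some mechanism for steering runs \emph{into} $p$ is indispensable.

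This is precisely where the paper's proof differs: it splits on whether $p$ is $\fleqx$-maximal in $\classxx{p}{x-1}\cap Q_x$. If it is, then (safe minimality excluding equal safe languages) every other state $p'$ in the class satisfies $\Lsafex(p)\not\subseteq\Lsafex(p')$, so distinguishing words $v_{p'}\in\Lsafex(p)\setminus\Lsafex(p')$ exist and an iterative construction very much like your phase 1 goes through. If $p$ is not maximal, one takes a $\fleqx$-maximal $p'$ above $p$; by centrality they lie in the same SCC, so with $p\runx{u}p'$ and $p'\runx{v}p$ the conjugated word $u\,z_{p'}\,v$ is a central sequence for $p$ --- obstructing states are funnelled through $p'$ and then carried to $p$, instead of being killed. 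Your proposal contains no analogue of this second case, so the missing piece is an idea, not a detail. (Your second phase, eliminating the surviving layer-$(x+1)$ runs one by one via N2 once all layer-$x$ runs sit at $p$ or are dead, is fine as far as it goes.)
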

\begin{proof}
  Let $P_x = \set{p_1,\dots, p_k}$ be the "$x$-states" in $\classxx{p}{x-1}$.

  Assume first that $p$ is $\fleqx$-maximal in $P_x$. 
  By "safe minimality" of $\Aa$, there is no $p'\in P_x\setminus \set{p}$ with $\Lsafex(p')=\Lsafex(p)$.
  Since $p$ is $\fleqx$-maximal, for every $p'\in P_x\setminus
  \set{p}$ we can find $v_{p'}\in\Lsafex(p)\setminus\Lsafex(p')$.
  By normality, we can assume that $p\runx{v_{p'}}p$.

  Let $u_0\in \S^+$ such that $p\runx{u_0}p$ and
  $\mu^{-1}_x(p)\runxx{u_0}{x+1}\bot$ (which exists by "normality"). 
  Define $u_i$ by induction: 
  \begin{itemize}
    \item If $p_i\runx{u_{i-1}} \bot$ or $p_i\runx{u_{i-1}} p$ then $u_i = u_{i-1}$.
    \item If $p_i\runx{u_{i-1}} p' \neq p$, then we take the $v_{p'}$ chosen
    above and define $u_i = u_{i-1}v_{p'}$.
  \end{itemize}
  It is easy to check that $u_k$ is a "central sequence" for $p$. 

Assume now that $p$ is not $\fleqx$-maximal. Let $p'$ be a $\fleqx$-maximal state in $P$ with $p\fleqx p'$. 
As the automaton is  "centralised", $p$ and $p'$ are in the same SCC of $\T_x$. Let $u,v$ be words such that $p\runx{u} p' \runx{v} p$.
By the previous case, $p'$ admits a central sequence $z_{p'}$.
It is easy to check that $uz_{p'}v$ is a "central sequence" for $p$.
\end{proof}

\subsection{Uniqueness of minimal layered automata}
We are ready to prove Theorem~\ref{thm:concise-are-unique}.
\conciseAreUnique*

Let $\Aa$ and $\Bb$ be two "consistent" "layered automata" for $L$ that are "normal", "central" and "safe minimal". We should construct an "isomorphism@@lay" $\f:\Aa \rightarrow \Bb$. Such an "isomorphism" can be equivalently written as a sequence of "isomorphisms of transition systems" $\f_x:\T_x^\Aa \to \T_x^\Bb$ compatible with the $\mu_x$-functions.
We build this sequence inductively on $x$.

The first layers $\T_1^\Aa$ and $\T_1^\Bb$ both consist of the residual classes of $L$, and therefore there is an "isomorphism@@TS" $\f_1:\T_1^\Aa \to \T_1^\Bb$. Now assume that we have constructed "isomorphisms@@TS" $\f_1,\dots,\f_{x-1}$ up to layer $x-1$ and want to extend it to layer $x$. First, we show that for each "$x$-state" $p$ of $\Aa$ with parent $\mu^\Aa_{x-1}(p)=r$, there is an "$x$-state" $q$ of $\Bb$ with parent $\f_{x-1}(r)$, such that $q$ and $p$ have the same "safe language" (Claim~\ref{claim:states-with-same-safeL}).
By "safe minimality", such a state $q$ is unique.
We then define $\f_x(p) = q$.
The roles of the two automata are symmetric in the proof, so we get that both automata have states for the same "safe languages" on layer $x$. This implies that $\f_x$ is an "ismorphism@@TS" between $\T_x^\Aa$ and $\T_x^\Bb$ (Claim~\ref{claim:safe-induces-isomorphism}). 
Moreover, by definition, $\f_x$ is compatible with the $\mu_x$-functions: $\f_{x-1}(\mu^\Aa_{x-1}(p))= \mu^\Bb_{x-1}(\f_x(p))$.
This concludes the proof of Theorem~\ref{thm:concise-are-unique}.

\begin{claim}\label{claim:states-with-same-safeL}
 For each "$x$-state" $p$ of $\Aa$ with parent $\mu^\Aa_{x-1}(p)=r$, there is a unique "$x$-state" $q$ of $\Bb$ with parent $\f_{x-1}(r)$ such that $\Lsafex^{\Aa}(p) = \Lsafex^{\Bb}(q)$.
\end{claim}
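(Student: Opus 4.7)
\emph{Uniqueness.} If $q_1, q_2 \in Q_x^\Bb$ both satisfy the claim, they share the parent $\f_{x-1}(r)$, so $q_1 \eqxx{x-1} q_2$. By Remark~\ref{lem:sim-subset-approx} and consistency of $\Bb$, this lifts to $q_1 \seq_{x-1} q_2$, which combined with the assumed equality of $x$-safe languages gives $q_1 \seq_x q_2$; safe minimality of $\Bb$ forces $q_1 = q_2$.

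\emph{Existence --- setup.} The case $x=1$ is immediate because both layer-1 transition systems are automata of residuals of $L$. For $x \ge 2$, the plan is to apply Lemma~\ref{lem:central-sequence} to obtain a central sequence $z_p$ for $p$, and exploit $z_p^\omega$ as a ``fingerprint'' that can be transported to $\Bb$. The three defining properties of $z_p$ --- in particular the loop $p \runx{z_p} p$ and the condition forcing every $(x+1)$-sibling to reach $\bot$ on $z_p$ --- jointly imply that $z_p^\omega$ is strongly accepted by $p$ in $\Aa$ if $x$ is even, and strongly rejected if $x$ is odd: the loop places $z_p^\omega$ in $\Lsafex^\Aa(p)$, while the $\bot$-behaviour at layer $x+1$ forbids any decomposition witnessing a deeper-layer strong witness. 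Fix a reaching word $u \in \Sigma^*$ with $q_\init^\Aa \runxx{u}{1} \wmu_1^\Aa(r)$; by the inductive hypothesis, $q_\init^\Bb \runxx{u}{1} \wmu_1^\Bb(\f_{x-1}(r))$. Since $L(\sem{\Aa}) = L(\sem{\Bb})$, we have $uz_p^\omega \in L$ iff $x$ is even. Applying Corollary~\ref{cor:equivalences-acceptance} to $\Bb$ yields a state $\hat q$ of $\Bb$ at some layer $\hat x$ of the same parity as $x$, together with a decomposition $u'w' = uz_p^\omega$ such that $\hat q$ strongly accepts/rejects $w'$; eventual periodicity of $z_p^\omega$ lets us take $u' = uz_p^k$ and $w' = z_p^\omega$.

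\emph{Main obstacle and conclusion.} It remains to verify that $q := \hat q$ is the claimed state, which reduces to showing (i) $\hat x = x$, (ii) $\mu_{x-1}^\Bb(\hat q) = \f_{x-1}(r)$, and (iii) $\Lsafex^\Bb(\hat q) = \Lsafex^\Aa(p)$. For (ii), the plan is to strengthen the reaching word $u$ by pre-composing with central sequences of the ancestors $\wmu_y^\Aa(p)$ for $y = 2, \ldots, x-1$ (obtained inductively from Lemma~\ref{lem:central-sequence}); the rigidity of these central sequences, combined with the inductive isomorphism $\f_{x-1}$, forces the runs on $u$ in both $\sem{\Aa}$ and $\sem{\Bb}$ to arrive at leaves whose $(x-1)$-ancestor is respectively $r$ and $\f_{x-1}(r)$. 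I expect (i) to be the hardest step: $\hat x > x$ would yield a deeper ultimate safety for $z_p^\omega$ in $\Bb$ than exists in $\Aa$, and a careful symmetric analysis --- applying the central-sequence construction to $\hat q$ and transporting back --- should contradict the defining properties of $z_p$ via normality of $\Aa$; the case $\hat x < x$ is ruled out dually. Once (i) and (ii) are established, (iii) follows: both $p$ and $\hat q$ are pinned down within their $(x-1)$-class as the unique $x$-state containing $z_p^\omega$ in its safe language with the appropriate strong witness, and combining this rigidity with safe minimality of both automata yields $\Lsafex^\Bb(\hat q) = \Lsafex^\Aa(p)$.
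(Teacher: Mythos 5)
Your uniqueness argument and your starting point for existence (the central sequence $z_p$ of Lemma~\ref{lem:central-sequence}, whose loop on $p$ together with the $\bot$-behaviour of the $(x+1)$-children makes $z_p^\omega$ produce minimal priority exactly $x$ from $\classx{p}$) coincide with the paper's. But the existence part of your proposal is a plan rather than a proof, and the plan breaks at exactly the point you flag as hardest. Corollary~\ref{cor:equivalences-acceptance} only hands you \emph{some} state $\hat q$ of $\Bb$, at \emph{some} layer $\hat x$ of the right parity, strongly accepting/rejecting a tail of $uz_p^\omega$; nothing forces $\hat x = x$, nor that $\hat q$ lies below $\f_{x-1}(r)$. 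The layer at which a word is strongly accepted is not a unique invariant: a word can be strongly accepted at different even layers in different subtrees, so properties (i) and (ii) cannot be recovered afterwards by a ``symmetric analysis'' of whatever $\hat q$ the corollary produced --- the state has to be constructed under the correct parent from the start. The paper does this by working directly with $s=\f_{x-1}(r)$: since $z_p$ loops on $r$ and layers $x-1$ are isomorphic, $z_p$ loops on $s$, so every run of $z_p^\omega$ below $s$ has minimal priority $\geq x-1$; if no child of $s$ kept $z_p^\omega$ eventually $x$-safe, the priority seen infinitely often below $s$ would be $x-1$, contradicting language equivalence of $r$ and $s$ (parity mismatch with the priority-$x$ behaviour below $p$). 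This yields at once a child $q$ of $s$ at layer exactly $x$, disposing of your (i) and (ii).

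Even granting (i) and (ii), your step (iii) is asserted, not proved: several children of $s$ may contain $z_p^\omega$ in their $x$-safe language (safe minimality only says their safe languages differ, not that they are separated by this particular word), so ``pinned down \dots as the unique $x$-state containing $z_p^\omega$ in its safe language'' is unjustified, and safe minimality of $\Bb$ gives nothing about equality with a safe language of $\Aa$. The paper needs an extra round trip here: take a central sequence $z_q$ for $q$ and a return word $w$ with $q\runx{w}q'$ (normality), argue that $(z_qwz_p)^\omega$ forces a priority-$x$ loop through $p$ in $\Aa$, and then turn any hypothetical $u \in \Lsafex^{\Aa}(p)\setminus\Lsafex^{\Bb}(q)$ (chosen to loop on $p$, again by normality) into the word $(z_qwz_pu)^\omega$, which produces priority $x$ from $p$ but $x-1$ from $q$, contradicting language equivalence; symmetrically for $v \in \Lsafex^{\Bb}(q)\setminus\Lsafex^{\Aa}(p)$. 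Without an argument of this sort the safe-language equality, and hence the claim, does not follow from what you have written.
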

\begin{claimproof}
  Unicity of $q$ directly follows from "safe minimality". We show existence via the following construction, which is illustrated in Figure~\ref{fig:minimality-proof}. 
  Let $p$ be a "$x$-state" of $\Aa$ with parent $r$, and let $\f_{x-1}(r) = s$ be the state on layer $x-1$ of $\Bb$ that $r$ is mapped to. Let $z_p$ be a "central sequence" for $p$. 
  Then $z_p^\omega$ generates minimal "priority" $x$ when read in $\sem{\Aa}$
  starting from any state $p'\sim_x p$. 
  Moreover, we have that $z_p$ loops on $r$ since it
  loops on $p$, and thus, $z_p$ also loops on $s$ in $\Bb$ because layers $x-1$
  of $\Aa$ and $\Bb$ are isomorphic. 
  Therefore, $z_p^\omega$ generates minimal  "priority" at least $x-1$ when read from $s$ in $\Bb$. 
  Since $r$ and $s$ are language equivalent, there needs to be some child $q'$ of $s$ such that
  $z_p^\omega$ is "$x$-safe" from $q'$. Let $q$ be such that $q' \runx{z_p} q$.
  Note that $q$ also has parent $s$ because $z_p$ loops on $s$.  

  Let $z_q$ be a "central sequence" for $q$, and $w$ be such that $q \runx{w} q'$ (which exists because $\Bb$ is "normal"). Note that $z_q$ and $w$ loop on $s$ and hence also on $r$. The $\omega$-word $(z_qwz_p)^\omega$ produces "priority" $x$ when read starting from $q$ in $\Bb$. Since $p$ and $q$ are language equivalent (because their parents are), it has to be possible in $\Aa$ to produce a "priority" $\ge x$ on $(z_qwz_p)^\omega$ from a child of $r$. Since $z_p$ is "central@@seq" for $p$, such a run is in $p$ after the first $z_qwz_p$, and hence $(z_qwz_p)^\omega$ must be safe from $p$. Let $p_1,p_2$ such that $p \xrightarrow{z_q} p_1 \xrightarrow{w} p_2$. Since $z_p$ is central for $p$, we get that  $p_2 \xrightarrow{z_p:x} p$.

We claim that $\LsafexxA{x}{\Aa}(p) = \LsafexxA{x}{\Bb}(q)$. Assume the contrary.
The first case is that there is $u \in \Lsafex(p) \setminus \Lsafex(q)$. Since $\Aa$ is "normal", we can choose $u$ such that it loops on $p$. Hence it also loops on $r$ and $s$. Then the $\omega$-word $(z_qwz_pu)^\omega$ generates "priority" $x$ from $p$ and "priority" $x-1$ from $q$. The latter holds because either "priority" $x-1$ is generated on $z_q$, and if not, then the state after $z_q$ is $q$ and hence also after $z_qwz_p$, and thus "priority" $x-1$ is produced on $u$. Since $p$ and $q$ are language equivalent, this is a contradiction.

The second case is that there is $v \in \Lsafex(q) \setminus \Lsafex(p)$. Then we get that $(z_qwz_pv)^\omega$ generates "priority" $x$ from $q$ and "priority" $x-1$ from $p$ using the same type of argument as in the first case.
\end{claimproof}

\begin{figure}[h]
  \includegraphics[width=1\textwidth]{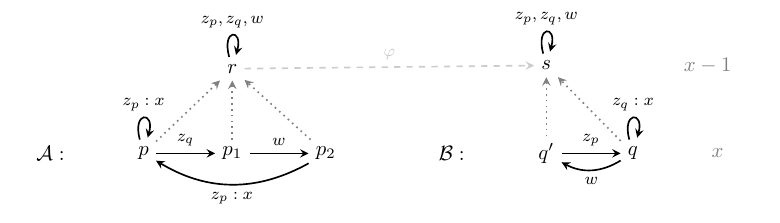}
  \caption{Illustration for the construction of the isomorphism in the proof of Claim~\ref{claim:states-with-same-safeL}.\label{fig:minimality-proof}}
\end{figure}

We let $\f_x(p) =q$, for $q$ the unique "$x$-state" in $\Bb$ satisfying $\f_{x-1}(\mu^\Aa_{x-1}(p))= \mu^\Bb_{x-1}(q)$ and $\Lsafex^{\Aa}(p) = \Lsafex^{\Bb}(q)$.

\begin{claim}\label{claim:safe-induces-isomorphism}
 The function $\f_x$ is an "isomorphism of transition systems". 
\end{claim}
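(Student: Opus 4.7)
The plan is to use the induction hypothesis that $\f_{x-1}$ is an "isomorphism@@TS" between $\T_{x-1}^\Aa$ and $\T_{x-1}^\Bb$, and to establish two things: that $\f_x$ is a bijection $Q_x^\Aa \to Q_x^\Bb$, and that $\f_x$ preserves transitions, i.e.\ $p\actx{a} p'$ in $\T_x^\Aa$ if and only if $\f_x(p)\actx{a} \f_x(p')$ in $\T_x^\Bb$. Both parts rely on the defining property of $\f_x$ together with "safe minimality".

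First I would settle bijectivity. For injectivity, suppose $\f_x(p) = \f_x(p') = q$ for some $p,p'\in Q_x^\Aa$. By construction $\LsafexxA{x}{\Aa}(p) = \LsafexxA{x}{\Bb}(q) = \LsafexxA{x}{\Aa}(p')$, while $\mu_{x-1}^\Bb(q) = \f_{x-1}(\mu_{x-1}^\Aa(p)) = \f_{x-1}(\mu_{x-1}^\Aa(p'))$; the induction hypothesis (injectivity of $\f_{x-1}$) then gives $\mu_{x-1}^\Aa(p) = \mu_{x-1}^\Aa(p')$. Hence $p$ and $p'$ share every ancestor at layers below $x$, so $\LsafexxA{y}{\Aa}(p) = \LsafexxA{y}{\Aa}(p')$ for every $y<x$, and by "consistency" we also have $L(p) = L(p')$. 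This gives $p\seq_x p'$, so "safe minimality" forces $p=p'$. Surjectivity is then obtained by running the construction of Claim~\ref{claim:states-with-same-safeL} with the roles of $\Aa$ and $\Bb$ exchanged, producing an inverse assignment $\psi_x \colon Q_x^\Bb \to Q_x^\Aa$ that matches $\f_x$ by unicity.

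Next I would verify transition-preservation. Assume $p\actx{a} p'$ in $\T_x^\Aa$. Then $a$ is a valid first letter of a word in $\LsafexxA{x}{\Aa}(p) = \LsafexxA{x}{\Bb}(\f_x(p))$, so by determinism of $\T_x^\Bb$ there is a unique $q''\in Q_x^\Bb$ with $\f_x(p)\actx{a} q''$. It then suffices to prove $q'' = \f_x(p')$, and via "safe minimality" this reduces to the two identities $\mu_{x-1}^\Bb(q'') = \f_{x-1}(\mu_{x-1}^\Aa(p'))$ and $\LsafexxA{x}{\Bb}(q'') = \LsafexxA{x}{\Aa}(p')$. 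The first follows by chasing the $a$-transition through the morphisms: the morphism property yields $\mu_{x-1}^\Bb(\f_x(p))\actxx{a}{x-1}\mu_{x-1}^\Bb(q'')$ in $\T_{x-1}^\Bb$ and $\mu_{x-1}^\Aa(p)\actxx{a}{x-1}\mu_{x-1}^\Aa(p')$ in $\T_{x-1}^\Aa$; applying the inductive isomorphism $\f_{x-1}$ to the latter and using determinism of $\T_{x-1}^\Bb$ delivers the required equality. The second identity is immediate from determinism of the $x$-th layer: $\LsafexxA{x}{\Bb}(q'') = a^{-1}\LsafexxA{x}{\Bb}(\f_x(p)) = a^{-1}\LsafexxA{x}{\Aa}(p) = \LsafexxA{x}{\Aa}(p')$. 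The converse implication then follows by the symmetric argument using the already-established surjectivity.

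The main obstacle I anticipate is the careful coordination of three simultaneous equalities --- matching of $(x-1)$-ancestors via the inductive iso, equality of $x$-safe languages via the defining property of $\f_x$, and full language equivalence via "consistency" --- so that "safe minimality" can be invoked to collapse them to a genuine state equality. Once this bookkeeping is set up, the rest is a routine verification.
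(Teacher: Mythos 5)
Your proposal is correct and follows essentially the same route as the paper: transition-preservation by matching $x$-safe languages (via determinism and $a^{-1}$-quotients) and parents (via the inductive isomorphism $\f_{x-1}$ and determinism of layer $x-1$), and bijectivity by applying Claim~\ref{claim:states-with-same-safeL} symmetrically with the roles of $\Aa$ and $\Bb$ exchanged. Your separate injectivity argument via "safe minimality" and "consistency" is a harmless extra step that the paper subsumes into the symmetric application of Claim~\ref{claim:states-with-same-safeL}.
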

\begin{claimproof}
  We first show that $\f_x$ is a "morphism@@TS".
  Let $p\actx{a}p'$ in $\Aa$ and $q = \f_x(p)$.
  Since $\Lsafex^{\Aa}(p) = \Lsafex^{\Bb}(q)$, there is a transition $q\actx{a}q'$ 
  in $\Bb$, and $\Lsafex^{\Aa}(p') = \Lsafex^{\Bb}(q')$. 
  If $r$ is the parent of $p$ we have $r\act{a}r'$ for some $r'$ in $\Aa$.
  By definition of $\f_x(p)$, the parent of $q=\f_x(p)$ is $\f_{x-1}(r)$.
  By isomorphism, $\f_{x-1}(r)\act{a}\f_{x-1}(r')$ thus $\f_{x-1}(r')$ is the
  parent of $q'$.
  So by definition, we have $\f_x(p')=q'$.
  Thus, $\f_x$ is a morphism, as desired.

  By Claim~\ref{claim:states-with-same-safeL}, $\f_x$ is a bijection, and the function $\f_x^{-1}$ is also a "morphism@@TS" by symmetry.
\end{claimproof}

\section{Minimisation in polynomial time}\label{sec:minimisation}

This section presents a minimisation procedure for "consistent" "layered automata". 
More precisely, the procedure constructs the unique minimal "layered automaton" appearing in Theorem~\ref{thm:concise-are-unique}.
The procedure establishes that every "consistent" "layered automaton" contains a "subautomaton" admiting a surjective morphism into this minimal automaton, thereby proving canonicity. 
We formalise these notions through "morphisms of layered automata", which we
introduce in Section~\ref{subsec:morphisms}.

\begin{theorem}\label{thm:minimisation}
    Let $\Aa$ be a "consistent" "layered automaton".
    We can build in polynomial time an equivalent "consistent" "layered
    automaton" $\Aa_L$ that is "normal", "central" and "safe minimal".
    
    Moreover, $\Aa$ contains a "subautomaton"\footnote{We note that this subautomaton may not be "consistent".} that admits a surjective "strong morphism" to $\Aa_L$.
    In particular, $\Aa_L$ has no more states nor "leaf states" than $\Aa$.
\end{theorem}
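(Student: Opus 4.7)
The plan is to build $\Aa_L$ from $\Aa$ by a sequence of three polynomial-time transformations, one per defining property of minimal "consistent" "layered automata". The first two produce a "subautomaton" of $\Aa$ by pruning nodes — enforcing "normality" and then "centrality" — while the third is a quotient enforcing "safe minimality". Their composition is the required surjective "strong morphism" from a subautomaton of $\Aa$ onto $\Aa_L$, and combined with Theorem~\ref{thm:concise-are-unique} this also yields canonicity. For "normalisation", I would process the layers top-down, from $x=d$ to $x=2$. Enforcing \ref{it:normal-return} amounts to restricting each $\T_x$ to the union of its non-trivial SCCs: states not lying on a non-trivial SCC are deleted together with the subtree they root. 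Semantically, the corresponding $x$-transitions of $\sem{\Aa}$ are demoted to priority $x-1$, which cannot affect strong acceptance or strong rejection and hence, by Corollary~\ref{cor:equivalences-acceptance}, preserves the language. Enforcing \ref{it:normal-small} amounts to detecting, for each $q\in Q_x$ and each child $p\in\mu_x^{-1}(q)$, whether the $\T_{x+1}$-transitions from $p$ already cover all $\T_x$-transitions outgoing from $q$; if so, the subtree rooted at $p$ is discarded. Both tests reduce to standard polynomial graph operations (SCC decomposition, reachability, transition comparison).

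Next, for "centralisation", I would iterate over layers and, whenever two distinct "$x$-SCCs" $C_1,C_2$ sharing a parent at layer $x-1$ satisfy $C_1\fleqx C_2$, delete $C_1$ together with its subtree. By definition of $\fleqx$ and by "consistency", every word strongly accepted or rejected via a state of $C_1$ is also strongly accepted or rejected via a corresponding state of $C_2$, so the language is preserved. Since we remove only full "$x$-SCCs" together with their subtrees, "normality" established in the previous step is maintained.

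For "safe minimisation", I would compute the relations $\seq_x$ by partition refinement: on layer $1$ merge language-equivalent states (classical DFA minimisation of the residual automaton), and for $x>1$ merge two $x$-states whose ancestors at layer $x-1$ have already been merged and whose $\Lsafex$ coincide, the latter being a polynomial DFA-equivalence test on the deterministic systems $\T_x$. The resulting quotient is a well-defined "layered automaton" because the $\seq_x$ form nested congruences compatible with the $\mu_x$, it remains "consistent" because equality of safe languages implies equality of strongly-accepted and strongly-rejected sets, and by construction it is "safe minimal" while inheriting "normality" and "centrality" from the previous steps. The quotient map is a surjective "strong morphism", and composing it with the initial pruning yields the map promised by the statement.

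The main obstacle I anticipate lies in the interaction between the three transformations. Safe minimisation could a priori merge states coming from different "$x$-SCCs", which would threaten \ref{it:normal-return}, \ref{it:normal-small}, or "centrality". The key point to establish is that the already-enforced "centrality" together with equality of $\Lsafex$ forces any two $\seq_x$-equivalent $x$-states sharing a parent to lie in the same "$x$-SCC", so that no cross-SCC merges actually occur; symmetric arguments are needed to check that normality is preserved by the quotient. Making this rigorous, verifying that the quotient map really preserves strong acceptance in both directions (so that it is a "strong morphism" in the sense of the formalism), and bounding the whole pipeline by a single polynomial in $|Q_{\geq 1}|$, is where the bulk of the technical work lies.
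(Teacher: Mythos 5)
Your overall skeleton — three polynomial-time operations whose composition gives a surjective "strong morphism" from a subautomaton of $\Aa$ onto $\Aa_L$ — is the same as the paper's, but two of your concrete steps do not work as described. First, your enforcement of N2 ("if the child $p$ covers $q$, the subtree rooted at $p$ is discarded") changes the language. Take $\Sigma=\{a,b\}$ and three layers with one state each, every state looping on both letters, each child mapped to its parent. This automaton is consistent: nothing is strongly accepted (the layer-3 state covers the layer-2 state), every word is strongly rejected at the odd layer 3, so the language is empty. Deleting the subtree rooted at the layer-3 state leaves the two upper layers, where now every word is strongly accepted at layer 2, so the language becomes $\Sigma^\omega$. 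The point is that the covering child is not useless — it is the \emph{parent} SCC that has become vacuous. The paper's lowering operation therefore removes the covered SCC at layer $x$ (together with everything mapping to it) and splices the SCCs lying two layers below the covering SCC up by two layers, i.e.\ it demotes the priorities of that whole subtree by $2$; proving that this preserves consistency and the language is the technical heart of normalisation and is absent from your proposal. (Relatedly, for N1 it is not enough to restrict each $\T_x$ to its non-trivial SCCs: you must delete the transitions between distinct SCCs themselves, together with all transitions above them that map onto them, as your own "demote to priority $x-1$" gloss suggests.)

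Second, your ordering (normalise, centralise, then safe-minimise) relies on the claim that safe minimisation preserves centrality, and your proposed argument does not establish it. Centrality is a condition on the preorder $\preccurlyeq_x$, i.e.\ on \emph{inclusion} of safe languages among $x$-states with the same parent. The quotient by $\approx_{x-1}$ identifies parents that were previously distinct, so $x$-states that were not siblings before the quotient become siblings after it; their safe languages may then be strictly comparable while the states lie in different SCCs of the quotient layer, and centrality fails. Your key claim ("centrality plus equality of safe languages forces $\approx_x$-equivalent siblings into the same SCC, so no cross-SCC merges occur") only addresses equality, not these newly created strict inclusions, so it does not rescue the order. This is precisely why the paper performs safe minimisation \emph{before} centralisation and then proves (Lemma~\ref{lem:centralisation-is-safe}) that centralisation preserves normality and safe minimality; it even remarks explicitly, in contrast to the coB\"uchi case, that safe minimisation does not in general preserve centralisation. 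With the paper's order one also needs the extra step you would avoid, namely pulling the centralisation-subautomaton back through the quotient map to obtain the subautomaton of $\Aa$ with a surjective strong morphism onto $\Aa_L$; in your order that composition would indeed be immediate, but the order itself is not sound.
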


Together with Theorem~\ref{thm:concise-are-unique}, this gives us minimality and canonicity of "normal", "central" and "safe minimal" "consistent" "layered automata".
\begin{corollary}\label{cor:canonicity}
  Let $\Aa_L$ be a "consistent" "layered automaton" that is "normal", "central" and "safe minimal".
  Then, any equivalent "consistent" "layered automaton" $\Aa$ contains a "subautomaton" that admits a surjective "strong morphism" to $\Aa_L$.
  Therefore, $\Aa$ has at least as many states as $\Aa_L$, and $\sem{\Aa}$ has at least as many states as $\sem{\Aa_L}$.
\end{corollary}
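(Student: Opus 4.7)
The plan is to construct $\Aa_L$ by applying three polynomial-time transformations to $\Aa$ in succession, each one establishing one of the three defining properties \emph{normal}, \emph{centralised}, and \emph{safe minimal}. The operations will be arranged so that later ones preserve the properties established earlier, and each one either restricts $\Aa$ to a subautomaton or takes a quotient by a language-preserving equivalence; the composition then witnesses both the size claim and the existence of a subautomaton of $\Aa$ admitting a surjective strong morphism onto the result.

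\textbf{Step 1 (Normalisation).} For each layer $x \ge 2$, compute the SCCs of $\T_x$ and delete every transition leaving a non-trivial SCC; such a transition carries priority $x$ in $\sem{\Aa}$ but cannot occur infinitely often without a strictly smaller priority also occurring, so the language is unchanged. To enforce \nameref*{it:normal-small} I would then prune, for each $p \in Q_{x+1}$, any state whose $(x{+}1)$-safe language equals that of its parent on layer $x$; using the safe-language inclusion check (Step~2) this is polynomial. Both operations amount to removing transitions/states and thus produce a subautomaton. Consistency is preserved because strong acceptance and strong rejection can only shrink.

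\textbf{Step 2 (Centralisation).} Once normal, I would iteratively look for distinct $x$-SCCs $C_1, C_2$ containing states $p \in C_1$, $q \in C_2$ with $p \fleqx q$; this is decidable in polynomial time since each $\T_x$ is deterministic, so safe-language inclusion reduces to reachability checks in a product. Whenever such a pair is found, delete $C_1$ together with all its descendants in higher layers. Since every word strongly accepted or strongly rejected from $p$ is strongly accepted/rejected from $q$ in the same way, $L(\sem{\Aa})$ and consistency are preserved; moreover the deletion removes full subtrees, so \nameref*{it:normal-return} and \nameref*{it:normal-small} on the remaining states are untouched. The process terminates in polynomially many rounds. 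This step is again a restriction to a subautomaton.

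\textbf{Step 3 (Safe minimisation).} I would compute the equivalences $\seq_x$ layer-by-layer from $x=1$ upwards by a partition-refinement fixed point (as in classical DFA minimisation, using the fact that $\seq_x$ refines $\seq_{x-1}$ and the safe languages are those of deterministic transition systems). Quotienting each layer by $\seq_x$ is well defined because $\seq_{x+1} \subseteq \seq_x$, hence the morphisms $\mu_x$ factor. The quotient preserves $L(\sem{\Aa})$ (states identified share the full language, by Remark~\ref{lem:sim-subset-approx} and consistency), preserves consistency, and preserves the previously established normality and centrality because $\seq_x$ refines the partition by $x$-SCC and by parent. This step is a surjective strong morphism from the output of Step~2 to $\Aa_L$.

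\textbf{Main obstacle.} The delicate point is compatibility between the three steps: I need centralisation not to break normality (addressed by deleting full subtrees) and safe minimisation not to break normality or centrality (addressed by the fact that $\seq_x$ is at least as fine as both the parent relation and the SCC partition, which one proves by a short induction on $x$ using that two $\seq_x$-equivalent states must agree on all safe languages $\LsafexxA{y}{}$ for $y \le x$, hence live in the same SCC). A second subtle point is checking that the composite mapping $\Aa \supseteq \Aa' \twoheadrightarrow \Aa_L$ is a \emph{strong morphism} in the sense of Section~\ref{subsec:morphisms}, i.e.\ preserves strong acceptance; this reduces to the two observations that strong acceptance is preserved by restriction to a subautomaton when no safe paths are removed from the retained classes, and by quotients identifying states with equal safe languages. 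Canonicity and uniqueness of $\Aa_L$ then follow immediately from Theorem~\ref{thm:concise-are-unique}.
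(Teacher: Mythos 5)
At top level you follow the paper's route: you re-derive the minimisation theorem (Theorem~\ref{thm:minimisation}) and then invoke uniqueness (Theorem~\ref{thm:concise-are-unique}). But two of your three steps have genuine gaps. First, your way of enforcing property N2 --- pruning any layer-$(x{+}1)$ state whose safe language equals that of its parent --- is not language-preserving. Take the one-state parity automaton over $\{a,b\}$ in which both letters have priority $3$, so $L=\emptyset$; as a layered automaton it has one state on each of layers $1,2,3$, each looping on $a,b$, and the layer-$3$ state covers its layer-$2$ parent (N2 fails at $x=2$). Deleting the layer-$3$ state leaves a two-layer automaton in which every word is ultimately safe on the even layer $2$, i.e.\ the language becomes $\Sigma^\omega$. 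The correct operation is the paper's lowering step: the covered layer-$x$ SCC is removed and the SCCs lying two layers below are moved up by two (priorities decreased by $2$), with the grandchildren re-parented to the layer-$(x{-}1)$ ancestor; a mere deletion of the child (or of the parent alone) does not achieve this.

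Second, your ordering (centralisation before safe minimisation) rests on the claim that safe minimisation preserves centrality because $\seq_x$ refines the parent relation and the SCC partition. Neither refinement holds: the quotient deliberately identifies states with different parents and lying in different SCCs, and once parents are merged, new pairs $p\fleqx q$ appear across distinct SCCs, so centrality can be destroyed. The paper explicitly records that safe minimisation does not always preserve centralisation; this is precisely why its order is normalisation, then safe minimisation, then centralisation (Lemma~\ref{lem:centralisation-is-safe} shows the last step preserves the other two), followed by a pull-back argument through the quotient to exhibit the required subautomaton of $\Aa$ with a surjective strong morphism onto $\Aa_L$. A smaller point: in the paper a strong morphism is one preserving undefined transitions, whereas preservation of strong acceptance is a separate property used in the correctness lemmas; your quotient map is strong in the official sense because $\seq_x$-equivalent states have identical safe languages, but the two notions should not be conflated. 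With the corrected N2 operation and the paper's ordering (or a genuine proof that your ordering can be repaired), your final appeal to Theorem~\ref{thm:concise-are-unique} is indeed the right way to conclude.
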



The rest of the section is devoted to the proof of Theorem~\ref{thm:minimisation}.
We start by introducing "morphisms@@lay" and proving some technical lemmas that are used to show correctness of the minimisation procedure.

\subsection{Morphisms of layered automata}\label{subsec:morphisms}

\AP A ""morphism@@layered"" between two layered automata $\Aa$ and $\Aa'$ is a function $\varphi\colon Q_{\geq 1} \to Q'_{\geq 1}$ that preserves the structure of the "transition systems" and the "ancestor" relation in the tree representation. That is:
\begin{itemize}\setlength\itemsep{2mm}
  \item $\varphi(q_\init) = q_\init'$,
  \item if $q\actx{a}q'$ in $\T_x$, then $\varphi(q)\actxx{a}{y}\varphi(q')$ in $\T_y'$, for $y = \llayer(\varphi(q))$, and
  \item if $p$ is an "ancestor" of $q$ in $\Aa$, then $\f(p)$ is an "ancestor" of $\f(q)$ in $\Aa'$.
\end{itemize}
\AP We say that it is a ""strong morphism"" if moreover it preserves non-transitions, that is:
\begin{itemize}\setlength\itemsep{2mm}
  \item if $q\actx{a}\bot$, then $\varphi(q)\actxx{a}{y} \bot$.
  \end{itemize}
Note that, while states may change layer via $\varphi$, some rigidity is imposed
by the conditions. Notably, if $q\in Q_x$ is sent to $\T_y'$, then all states
reachable from $q$ in $\T_x$ are also sent to the layer $y$.
In particular, due to our reachability assumption, all
states in $\T_1$ are sent
to $\T_1'$.

\AP A "morphism of layered automata" $\varphi$ is an ""isomorphism"" if it is bijective and $\varphi^{-1}$ is also a "morphism@@layered".
Note that in this case 
it preserves layers, that is, $\restr{\varphi}{Q_x}$ is an "isomorphism of transition systems" from $\T_x$ to $\T_x'$.



\subparagraph{Subautomata.} \AP A "layered automaton" $\Aa'$ is a ""subautomaton"" of $\Aa$ if there exists an injective "morphism@@lay" $\f \colon \Aa' \to \Aa$. 

Note that $\Aa'$ may be obtained not just by removing states from some layers in
$\Aa$, but also by changing some SCCs from one layer to another.

\subparagraph{Morphisms preserving strong acceptance.}
In the upcoming sections, we start with a "consistent" "layered automaton" $\Aa$ and tweak it to obtain a modified automaton $\Aa'$. 
In all except one modification, we obtain $\Aa'$ by \emph{removing} states or transitions of $\Aa$ in such a way that there is an injection $\Aa' \hookrightarrow \Aa$.
In two out of three cases the correctness of this operation is ensured by
Lemma~\ref{lem:consistent-morphism} below. 
In the "safe minimisation" step (Section~\ref{subsec:safe-minimisation}), $\Aa'$ is obtained by \emph{merging} states of $\Aa$, which leads to a surjection $\Aa
\twoheadrightarrow \Aa'$.

\AP We say that a "morphism of layered automata" $\f:\Aa'\to \Aa$ ""preserves strong acceptance"" if for every  $p'\in Q'_{\geq 1}$ and every $w\in\S^\w$ we have:
  \begin{itemize}
    \item if $w$ is "strongly accepted" by $p'$ then $w$ is "strongly accepted" by $\f(p')$,
    \item if $w$ is "strongly rejected" by $p'$ then $w$ is "strongly rejected" by $\f(p')$.
  \end{itemize}


\begin{lemma}\label{lem:consistent-morphism}
  Let $\f:\Aa' \to \Aa$ be a "morphism that preserves strong acceptance". 
  If $\Aa$ is "consistent" then $\Aa'$ is "consistent" and $L(\Aa')=L(\Aa)$.
\end{lemma}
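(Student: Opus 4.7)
The plan is to prove the consistency of $\Aa'$ first, and then deduce the equality of languages by invoking Corollary~\ref{cor:equivalences-acceptance}, which characterises acceptance in a "consistent" layered automaton through strong acceptance (or, dually, strong rejection) of a suffix by a state reachable from the initial state at layer~$1$.

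The key observation, used throughout, is that any "morphism@@lay" $\f$ preserves layer~$1$: since all states of $\T_1'$ are reachable from $q_\init'$ and $\f(q_\init')=q_\init$ lies in $Q_1$, the rigidity of morphisms forces $\f(Q_1')\subseteq Q_1$. Together with the preservation of the "ancestor" relation, this implies $\wmu_1(\f(p'))=\f(\wmu_1(p'))$ for every $p'\in Q_{\geq 1}'$, and it also implies that every path $q_\init' \runxx{u}{1} r'$ in $\T_1'$ maps to a path $q_\init \runxx{u}{1} \f(r')$ in $\T_1$.

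To show $\Aa'$ "consistent", I would argue by contradiction. Suppose there were states $p_1',p_2'\in Q_{\geq 1}'$ with $\wmu_1(p_1')=\wmu_1(p_2')$ and a word $w$ that is "strongly accepted" by $p_1'$ and "strongly rejected" by $p_2'$. Applying $\f$ and using the hypothesis that $\f$ "preserves strong acceptance", $w$ is "strongly accepted" by $\f(p_1')$ and "strongly rejected" by $\f(p_2')$ in $\Aa$. By the key observation, $\wmu_1(\f(p_1'))=\f(\wmu_1(p_1'))=\f(\wmu_1(p_2'))=\wmu_1(\f(p_2'))$, which contradicts the "consistency" of $\Aa$.

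For $L(\Aa')=L(\Aa)$, I would use both directions of Corollary~\ref{cor:equivalences-acceptance} (available now that both automata are "consistent"). If $w\in L(\Aa')$, there is a decomposition $w=uw'$ and a state $p'$ with $q_\init' \runxx{u}{1}\wmu_1(p')$ and $p'$ "strongly accepting" $w'$. The key observation transports the path to $q_\init \runxx{u}{1}\wmu_1(\f(p'))$ in $\Aa$, while preservation of strong acceptance gives that $\f(p')$ "strongly accepts" $w'$; hence $w\in L(\Aa)$. Conversely, if $w\notin L(\Aa')$, Corollary~\ref{cor:equivalences-acceptance} (second characterisation via strong rejection) provides a decomposition and a state $p'$ "strongly rejecting" $w'$, and the same transfer argument using preservation of strong rejection yields $w\notin L(\Aa)$. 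The only mild subtlety is remembering that we need both directions of the corollary, which is why consistency of $\Aa'$ must be established first; otherwise there is essentially no calculation, everything reduces to shuffling the definitions through $\f$.
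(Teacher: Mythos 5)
Your proposal is correct and follows essentially the same route as the paper's proof: consistency of $\Aa'$ by contradiction via preservation of strong acceptance, followed by transferring the strongly accepted (resp.\ strongly rejected) suffix and its layer-$1$ access path through $\f$ and invoking the characterisation of acceptance in Corollary~\ref{cor:equivalences-acceptance} (the paper phrases the forward step through Lemma~\ref{lem:layered-acceptance-strong}, which is the same machinery). Your explicit remark that $\f$ maps layer~$1$ to layer~$1$ and commutes with $\wmu_1$ is a point the paper leaves implicit, and it is a welcome clarification rather than a deviation.
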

\begin{proof}
  Take $p_1'\eqxx{1} p_2'$ in $\Aa'$.
  Towards a contradiction suppose $w$ is "strongly accepted" by $p_1'$ and "strongly rejected" by $p_2'$.
  By the "morphism@@lay" property we have $\f(p_1)\eqxx{1}\f(p_2)$.
  Then, since $\f$ "preserves strong acceptance",  $w$ is "strongly accepted" by $\f(p_1')$ and "strongly rejected" by $\f(p_2')$.
  A contradiction with consistency of $\Aa$.

  If $w$ is "accepted@@layered" by $\Aa'$ then by
  Lemma~\ref{lem:layered-acceptance-strong} there is a state $p$ together with a
  decomposition $uw'=w$ such that $q'_\init\act{u}\wmu_1(p)$ and $w'$ strongly
  accepted from $p$.
  Then in $\Aa$ we have $q_\init\act{u}\wmu_1(\f(p))$ and $w'$ is strongly
  accepted from $\f(p)$. 
  By consistency of $\Aa'$ and Lemma~\ref{lem:layered-acceptance-strong}, $w$ is
  "accepted@@layered" by $\Aa'$.
  The argument when $w$ is "rejected@@layered" by $\Aa'$ is analogous.
\end{proof}

\subsection{Normalisation}
Let $\Aa$ be a "consistent" "layered automaton". We construct an equivalent
automaton satisfying both "N1" and "N2" in two steps.
First, we ensure "N1" by separating any SCCs that have transitions between them.
Then, we establish "N2" without affecting "N1".

\subsubsection{Separating SCCs}

In order to ensure property "N1" we separate every $\T_x$ into disconnected, non-trivial SCCs.
Let us fix $x>1$ and suppose that in $\T_x$ there is a transition $p\act{a}_x q$ of $\T_x$, such that the two states are not in the same SCC.
(Such a transition can be detected in linear time performing a SCC decomposition~\cite{Tarjan72DepthFirst}.)
The operation is simply to remove the transition $p\act{a}_x q$ from $\T_x$ and all
transitions mapping to it via $\wmu_x$ in $\T_{y}$ for $y>x$.
We also remove trivial SCCs that may have been generated (states in $\T_y$ with
no outgoing transition) until there are no such states in the result.
\AP Let us denote the resulting automaton $\intro*\sepA$.
\begin{lemma}
  Let $\Aa$ be a "consistent" "layered automaton".
  Then, $\sepA$ is "consistent". Moreover, $L(\sepA)=L(\Aa)$, and  $\sepA$ is a "subautomaton" of $\Aa$.
\end{lemma}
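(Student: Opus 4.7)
The natural strategy is to apply Lemma~\ref{lem:consistent-morphism} to the inclusion map $\iota\colon \sepA \to \Aa$ that sends each surviving state to itself. Since the construction of $\sepA$ only deletes transitions (at layer $x$ and, via $\wmu_x$, at higher layers) and then iteratively removes states with no outgoing transitions, $\iota$ preserves $q_\init$ (which sits in $\T_1$, untouched), preserves all transitions, and preserves the ancestor relation. Thus $\iota$ is an injective "morphism@@lay", witnessing that $\sepA$ is a "subautomaton" of $\Aa$.

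The content of the proof is to verify that $\iota$ "preserves strong acceptance". Consider the strongly accepted case (the strongly rejected one being symmetric). Assume $w$ is "strongly accepted by $p$" at some even layer $x$ in $\sepA$. Since $\sepA$ has fewer transitions than $\Aa$, we have $\Lsafex^{\sepA}(p) \subseteq \Lsafex^{\Aa}(p)$, so $w \in \Lsafex^{\Aa}(p)$. For the remaining clause, suppose for contradiction that there is a decomposition $w = uw'$ and a state $p'' \in Q_{x+1}^\Aa$ with $\wmu_x^{\Aa}(p) \runx{u} \wmu_x^{\Aa}(p'')$ in $\T_x^\Aa$ and $w' \in \Lsafex^{\Aa}_{x+1}(p'')$. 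The run of $w$ from $\wmu_x(p)$ in $\T_x^{\sepA}$ stays inside the SCC of $\wmu_x(p)$ (because $\sepA$ no longer has cross-SCC transitions at layer $x$), and by determinism of $\T_x$ this is literally the same run as in $\T_x^\Aa$. In particular, $\wmu_x^{\Aa}(p'')$ lies in that SCC. Similarly, the infinite run of $w'$ from $p''$ in $\T_{x+1}^\Aa$ projects via $\wmu_x$ to the continuation in $\T_x^\Aa$, which likewise stays in the SCC of $\wmu_x(p)$.

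Hence every transition on this infinite run at layer $x+1$ projects to a within-SCC transition at layer $x$; none of them were deleted during the separation step, and none of the visited layer-$(x+1)$ states become trivial (they have an outgoing transition kept in $\sepA$), so they all survive the iterative cleanup. In the same way, $\wmu_x^{\sepA}(p) \runx{u}^{\sepA} \wmu_x^{\sepA}(p'')$ holds in $\sepA$. This reconstructs, inside $\sepA$, a decomposition witnessing that $p$ does not strongly accept $w$, contradicting the hypothesis. Therefore $\iota$ preserves strong acceptance, and Lemma~\ref{lem:consistent-morphism} yields both "consistency" of $\sepA$ and $L(\sepA) = L(\Aa)$.

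The main obstacle is the bookkeeping in the second paragraph: one must argue that the hypothetical witness $p''$ and the entire infinite run from it at layer $x+1$ genuinely persist in $\sepA$, i.e., that none of them are iteratively pruned as trivial SCCs. The key insight making this clean is that a state carrying an infinite forward run over a layer-$x$-safe word whose $\T_x$-projection stays in one SCC can only involve transitions that the separation procedure retains.
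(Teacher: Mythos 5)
Your overall strategy is the same as the paper's (exhibit the identity map $\sepA\to\Aa$ as an injective morphism that preserves strong acceptance and invoke Lemma~\ref{lem:consistent-morphism}), but the crucial bookkeeping step is based on a false premise. The automaton $\sepA$ in this lemma is the result of \emph{one} application of the separation operation: a single cross-SCC transition $p\act{a}_x q$ at one fixed layer is removed, together with the transitions above it mapping to it via $\wmu_x$, followed by the cleanup of states without outgoing transitions. In particular, $\sepA$ may still contain cross-SCC transitions at that layer (and at every other layer), and the layer at which the operation was performed need not be the layer of the strongly accepting state. Hence your assertion that ``the run of $w$ from $\wmu_x(p)$ in $\T_x^{\sepA}$ stays inside the SCC of $\wmu_x(p)$ because $\sepA$ no longer has cross-SCC transitions at layer $x$'' does not hold, and with it the conclusion that the \emph{entire} hypothetical witness run over $w'$ from $p''$ survives in $\sepA$ fails: that run may perfectly well use the deleted transition (or a transition projecting to it) a finite number of times before settling into an SCC, so it need not exist in $\sepA$ at all. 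Your argument would only go through for a different construction in which all cross-SCC transitions have already been removed, which is precisely what the iterated application of this lemma is meant to establish, so assuming it is circular with respect to the statement being proved.

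The repair is the extra decomposition that the paper performs and your proof omits: instead of transferring the whole witness, write $w'=u'w''$ so that the run $q\runxx{u'}{x+1}q'\runxx{w''}{x+1}$ in $\T_{x+1}^{\Aa}$ no longer changes SCC along $w''$ (possible since any infinite run eventually stays in one SCC). Transitions that are within an SCC of $\T_{x+1}$ project to within-SCC transitions at every lower layer, hence are never among the deleted (cross-SCC) transitions, and the states on this tail keep outgoing transitions and thus survive the cleanup; therefore $w''\in \LsafexxA{x+1}{\sepA}(q')$, and the decomposition $w=uu'\cdot w''$ already witnesses inside $\sepA$ that $w$ is not strongly accepted by $p$, giving the contradiction. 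With this modification the rest of your argument (the morphism and injectivity part, and the appeal to Lemma~\ref{lem:consistent-morphism}) is fine.
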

\begin{proof}
    We show that the mapping $\f\colon \sepA \to \Aa$ given by $\f(p)=p$ is an injective "morphism" that "preserves strong acceptance". The result then follows by Lemma~\ref{lem:consistent-morphism}.    
    The fact that $\f$ is a "morphism" and injective is straightforward.

    Let $w$ be "strongly accepted" by $p$ in $\sepA$ (the case "strongly rejected" is symmetric). In particular, $w\in \LsafexxA{x}{\sepA}(p)$, so  $w\in \LsafexxA{x}{\Aa}(p)$ by the "morphism" property.
    Assume by contradiction that $w$ is not "strongly accepted" by $p$ in $\Aa$, that is, there is a decomposition $w = uw'$  and state $q\in Q_{x+1}$ with $p\runx{u}\wmu_x(q)$ in $\Aa$ and $w'\in \LsafexxA{x+1}{\Aa}(q)$.
    Let $w' = u'w''$ such that the run $q\runxx{u'}{x+1} q' \runxx{w''}{x+1}$ does not change of SCC during the suffix $w''$.
    Then, $w''\in \LsafexxA{x+1}{\sepA}(q')$.
    Therefore, the decomposition $w = uu'\cdot w''$ witnesses that $w$ is not "strongly accepted" by $p$ in $\sepA$, a contradiction.
\end{proof}

After applying this operation we remove at least one transition. 
So, by repeating this operation at most $|Q_{\geq 1}|^2$ times we obtain an equivalent "consistent" "subautomaton" satisfying~"N1".

\subsubsection{Lowering}
We suppose that $\Aa$ satisfies the property "N1", and we make it satisfy "N2"
without affecting "N1".
Let us suppose that "N2" does not hold for a state $q$ at level $x$.
This means that $q$ has a child $p$ such that for every $u\in \S^*$ such that $q\runxx{u}{x}$ is defined, we also have $p\runxx{u}{x+1}$ defined.
(We can detect in $\LOGSPACE$ if there is such a pair of parent and child, as this simply corresponds to equivalence of deterministic safety automata.)
Take $S_{x+1}$, the SCC of $p$ in $\T_{x+1}$, as well as $S_x$, the SCC of $q=\mu_x(p)$ in $\T_x$.
We have that  $S_x$ is ""covered"" by $S_{x+1}$, meaning that
for every $p'\in S_{x+1}$ whenever there is a transition $\m_x(p')\act{a}_x$ then we also have $p'\act{a}_{x+1}$.
In $\sem{\Aa}$ this means that states mapping to $S_{x+1}$ do not have outgoing
transitions of "priority" $x$; if $\wmu_{x+1}(p')\in S_{x+1}$ then for no $a$ there is a
transition $p'\act{a:x}$ in $\sem{\Aa}$.
We show how to eliminate such a covered SCC $S_x$ at level $x$.

Here is a small lemma saying that SCCs at higher levels are included in SCCs of
lower levels.
\begin{lemma}\label{lem:SCC-is-in-SCC}
  Let $S\subseteq Q_x$ be an SCC in $\T_x$, and $S'\subseteq {Q_{x+1}}$ be an SCC in
  $\T_{x+1}$.
  If $\m_x(S')\cap S\not=\es$ then $\m_x(S')\incl S$.
\end{lemma}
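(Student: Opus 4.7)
The plan is to unfold the definition of SCC in both layers and use the fact that $\mu_x\colon \T_{x+1}\to \T_x$ is a morphism of transition systems, so it preserves runs.

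First, I would pick any witness $p'\in S'$ with $\mu_x(p')\in S$, which exists by the hypothesis $\mu_x(S')\cap S\neq \emptyset$. The goal is then to show that for an arbitrary $q'\in S'$, the state $\mu_x(q')$ lies in the same SCC of $\T_x$ as $\mu_x(p')$, and therefore in $S$.

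Second, since $S'$ is an SCC of $\T_{x+1}$ containing both $p'$ and $q'$, there exist words $u,v\in \Sigma^*$ with $p'\runxx{u}{x+1} q'$ and $q'\runxx{v}{x+1} p'$. Applying the morphism $\mu_x$ to each of these runs yields runs $\mu_x(p')\runx{u} \mu_x(q')$ and $\mu_x(q')\runx{v} \mu_x(p')$ in $\T_x$. Hence $\mu_x(p')$ and $\mu_x(q')$ are mutually reachable in $\T_x$, so they belong to the same SCC of $\T_x$. Since $\mu_x(p')\in S$, we conclude $\mu_x(q')\in S$, and consequently $\mu_x(S')\subseteq S$.

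I do not expect any real obstacle here: the statement is essentially the observation that morphisms of transition systems map SCCs into SCCs, combined with the fact that the image SCC is already pinned down by the non-empty intersection hypothesis. The only thing to watch out for is that the runs used to witness strong connectivity in $S'$ must be taken within $S'$ (so that their images live inside a single SCC of $\T_x$ after projection), which is automatic since $S'$ is itself an SCC.
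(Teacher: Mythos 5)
Your proof is correct and is essentially the paper's argument, which is the one-line observation that $\mu_x$ preserves paths, so mutual reachability in $S'$ projects to mutual reachability in $\T_x$ and the image is pinned to $S$ by the nonempty intersection. You have merely spelled out the same reasoning in detail; no gap.
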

\begin{proof}
  This is just because $\mu_x$ preserves paths.
\end{proof}

Suppose $\Aa$ satisfies "N1" and fix $S_x$, $S_{x+1}$ such that $S_{x}$ is
"covered" by $S_{x+1}$ as described above.
For every $y\geq x+2$ let $S_y$ be the part of $\T_y$ induced by states of
$\T_y$ mapping to $S_{x+1}$, namely by $\set{p\in \T_y: \wmu_{x+1}(p)\in S_{x+1}}$.
Similarly, for $y \ge x+1$, let $Z_y$ be the part of $\T_y$ mapping to $S_x$, namely induced by
states $\set{p\in Q_{\geq x}: \wmu_x(p)\in S_{x}}$.
We need to make a difference between states mapping to $S_x$ and to $S_{x+1}$,
as even though $S_x$ is covered by $S_{x+1}$ there may be states outside
$S_{x+1}$ mapping to $S_x$. 
Observe that by Lemma~\ref{lem:SCC-is-in-SCC}, every $S_y$ and $Z_y$ is a
set of SCCs in $\T_y$. 
\begin{example}\label{ex:N2}
Before giving the general definition of the lowering operation, consider the example automaton shown on the left-hand side in Figure~\ref{fig:N2}. There are several candidate SCCs at which the lowering operation can be applied. For this example, we pick $S_x = \{p_2\}$ and $S_{x+1} = \{p_3\}$, so $x=2$ in this case. Note that both states loop on $a,b$, so $S_x$ is covered by $S_{x+1}$. The collections of SCCs that are mapped to $S_{x+1}$ by $\wmu_3$ are $S_4 = \{p_4,q_4\}$ and $S_5 = \{q_5\}$, and the collections of SCCs that are mapped to $S_{x}$ by $\wmu_2$ are $Z_3 = \{p_3,r_3\}$, $Z_4 = \{p_4,q_4,r_4\}$, and $Z_5 = \{q_5,r_5\}$. The automaton $\lowA$ is obtained by removing $Z_y$ from layer $y$, and adding $S_{y+2}$ to layer $y$ for every $y \ge x$. The result is shown on the right-hand side of Figure~\ref{fig:N2}. The parent relation is preserved for all states except for the ones in $S_{x+2}$, which is $S_4$ in this case. The new parents of $p_4$ and $q_5$ are now $\wmu_{x-1}(p_4)$ and $\wmu_{x-1}(q_5)$, which are both $p_1$. This corresponds to the third case in the definition of $\mu^{\mathit{low}}_{y-1}$ below. The parent of $q_5$ is still $q_4$, but since the states changed the layers, the definition of $\mu^{\mathit{low}}_{y-1}$ needs to shift indices compared to $\mu$. So we get $\mu^{\mathit{low}}_2(q_5) = \mu_4(q_5) = q_4$. This is the second case in the definition of $\mu^{\mathit{low}}_{y-1}$ below for $y=3$. The parent of states that did not change layer and were not deleted (those that are in $Q^{\mathit{low}}_y$ and in $Q_y$) is just copied, as for $q_2$ in this example: $\mu^{\mathit{low}}_1(q_2) = \mu_1(q_2) = p_1$.  This is the first case in the definition of $\mu^{\mathit{low}}_{y-1}$ below.
\end{example}
\begin{figure}
\def\xd{.7cm}  
\def\yd{-.9cm} 
\begin{center}
    \includegraphics[width=0.9\textwidth]{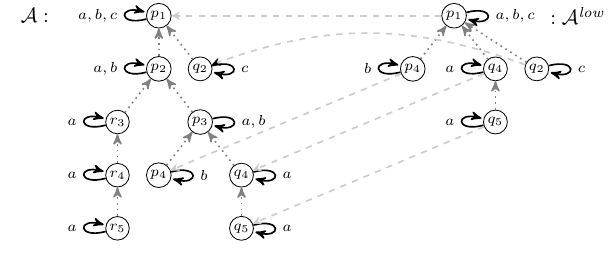}
    \caption{Illustration of the lowering operation for establishing "N2" from Example~\ref{ex:N2}. \label{fig:N2}}
    \end{center}
\end{figure}

\AP A new layered automaton $\intro*\lowA$ is obtained by lowering some
components.
\begin{itemize}
  \item $\Tlow_{y}=
  \begin{cases}
    (\T_{y}-Z_y)\cup S_{y+2} & y\geq x,\\
    \T_y & \text{otherwise.}
  \end{cases}$
  \item For $p\in\Tlow_{y}$, $y\geq 2$ we set
  $\mu^{\mathit{low}}_{y-1}(p)=\begin{cases}
    \mu_{y-1}(p) & \text{if $p\in Q_{y}$},\\
    \mu_{y+1}(p) & \text{if $p\in S_{y+2}$, $y>x$},\\
    \wmu_{x-1}(p) & \text{if $p\in S_{x+2}$}.
  \end{cases}$ 
\end{itemize}
That is, the SCCs in $\T_y$ that have $S_x$ as ancestor (that are mapped to
$S_x$ by $\wmu_x$) are overwritten by the ones in $\T_{y+2}$. 
So $S_x$ is replaced by $S_{x+2}$, while not only $S_{x+1}$ but all
the SCCs on layer $x+1$ mapping to $S_{x}$ are replaced with $S_{x+3}$.
In terms of $\sem{\lowA}$ this means that "priorities" of transitions coming
from states mapping to $S_{x}$ are lowered by $2$, and some states are removed.
 \begin{lemma}
  Let $\Aa$ be a "consistent" "layered automaton" satisfying "N1".
  Then, $\lowA$ is "consistent", and satisfies "N1". Moreover, $L(\lowA)=L(\Aa)$, 
  and $\lowA$ is a "subautomaton" of $\Aa$.
\end{lemma}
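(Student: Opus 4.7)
The strategy is to exhibit an injective "strong morphism" $\varphi \colon \lowA \to \Aa$ that "preserves strong acceptance"; by Lemma~\ref{lem:consistent-morphism} this yields both the "consistency" of $\lowA$ and the equality $L(\lowA) = L(\Aa)$, while injectivity gives the "subautomaton" property. I take $\varphi$ to be the identity on underlying states: a state $p \in Q_y \setminus Z_y$ is viewed in $\Aa$ at its original layer $y$, whereas a state $p \in S_{y+2}$, which has been moved to layer $y$ of $\lowA$, is viewed in $\Aa$ at layer $y+2$. That $\varphi$ preserves transitions and non-transitions follows because the transitions inside $\T_y \setminus Z_y$ are inherited unchanged from $\T_y$ and those inside $S_{y+2}$ are inherited unchanged from $\T_{y+2}$; the absence of cross-transitions between these two pieces, and between $Z_y$ and $\T_y \setminus Z_y$, is guaranteed by N1 in $\Aa$ together with Lemma~\ref{lem:SCC-is-in-SCC}. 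The "ancestor" relation is preserved by direct inspection of the three clauses defining $\mlow$.

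Property N1 in $\lowA$ follows from the same SCC decomposition: Lemma~\ref{lem:SCC-is-in-SCC} implies that $Z_y$ is a union of entire SCCs of $\T_y$, so removing it does not split any SCC; similarly, $S_{y+2}$ is a union of entire SCCs of $\T_{y+2}$; and the two pieces are disjoint with no transitions between them. Hence every SCC of $\Tlow_y$ is already an SCC of some $\T_z$ in $\Aa$ and is therefore non-trivial.

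The core step is showing that $\varphi$ "preserves strong acceptance" (the case of "strongly rejected" words is entirely symmetric). Suppose $p \in \Qlow_y$ "strongly accepts" $w$ in $\lowA$, and split into cases according to whether $p$ has been lowered. If $p \in Q_y \setminus Z_y$, then N1 in $\Aa$ implies that every path from $p$ in $\T_y$ stays in $p$'s SCC, which lies inside $\T_y \setminus Z_y$; hence $L^{\Aa}_y(p) = L^{\lowA}_y(p)$, and any hypothetical witness $p' \in Q_{y+1}$ of a strong-acceptance failure in $\Aa$ has $\mu_y(p')$ in this SCC, forcing $p' \notin Z_{y+1}$ and therefore $p' \in \Qlow_{y+1}$, with $L^{\Aa}_{y+1}(p') = L^{\lowA}_{y+1}(p')$ by the same SCC argument; this transfers the failure to $\lowA$, contradicting the assumption. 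If instead $p \in S_{y+2}$, then paths from $p$ in $\Tlow_y$ stay inside $S_{y+2}$ and coincide with paths from $p$ in $\T_{y+2}$, so $L^{\lowA}_y(p) = L^{\Aa}_{y+2}(p)$; any witness $p' \in Q_{y+3}$ of a failure in $\Aa$ at layer $y+2$ is forced to lie in $S_{y+3}$, and the second clause of the definition of $\mlow_y$ on $S_{y+3}$ ensures that $p'$'s parent in $\lowA$ coincides with $\mu_{y+2}(p')$, so the failure again transfers to $\lowA$.

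The main obstacle will be keeping the two bookkeepings consistent in the lowered case: for each identification between $\lowA$'s layer-$y$ quantities and $\Aa$'s layer-$(y+2)$ quantities one must verify the correct clause of the three-case definition of $\mlow$, in particular the special clause for $S_{x+2}$ when $y = x$.
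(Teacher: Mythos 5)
There is a genuine gap, and it sits exactly at the case your case analysis skips over: states on layer $x-1$ that are parents of the removed SCC $S_x$. For such a state $p$ (your first case, $p\in Q_y\setminus Z_y$ with $y=x-1$), you argue that a witness $p'\in Q_{y+1}$ of a strong-acceptance failure in $\Aa$ has $\mu_y(p')$ in the SCC of $p$ and is therefore not in $Z_{y+1}$. That inference is false precisely when $y+1=x$: the set $Z_x$ is $S_x$ itself, and the parents of $S_x$ lie in an SCC of $\T_{x-1}$ which may well be the SCC of $p$. So the witness can be a state of the removed SCC $S_x$, and then it does not transfer to $\lowA$. In fact the property you are trying to establish is simply not true: the identity map $\lowA\to\Aa$ does not preserve strong acceptance/rejection. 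A small example: layer $1$ has one state $r$ looping on $a,b,c$; layer $2$ has $q$ looping on $a,b$; layer $3$ has $s$ (child of $q$) looping on $a,b$; layer $4$ has $t$ (child of $s$) looping on $a$. Here $S_2=\{q\}$ is covered by $S_3=\{s\}$, and lowering yields $\lowA$ with $t$ directly below $r$. The word $(ab)^\omega$ is strongly rejected by $r$ in $\lowA$ (no suffix is safe on the new layer $2$), but it is not strongly rejected by $r$ in $\Aa$, since $(ab)^\omega$ is $2$-safe from $q$. Consequently the route through Lemma~\ref{lem:consistent-morphism} cannot work for the lowering step; this is the one modification for which the paper explicitly does not invoke that lemma.

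The paper's proof handles this by proving a deliberately weaker transfer claim: if $w$ is strongly accepted (resp.\ rejected) from $p$ in $\lowA$, then some \emph{suffix} $w'$ of $w$ is strongly accepted (resp.\ rejected) in $\Aa$ from some state $q$ reachable from $p$, possibly on a deeper layer. The delicate case is again $p$ on layer $x-1$: if a suffix is safe from some $q\in S_x$ in $\Aa$, the covering lifts that run into $S_{x+1}$; either the suffix is strongly accepted there (which proves the claim, on layer $x+1$ rather than $x-1$), or one can descend further to a state of $S_{x+2}$, which lives on layer $x$ of $\lowA$ and contradicts strong acceptance in $\lowA$. Consistency of $\lowA$ and $L(\lowA)=L(\Aa)$ are then derived from this claim together with Corollary~\ref{cor:equivalences-acceptance} (applied to the consistent automaton $\Aa$), not from Lemma~\ref{lem:consistent-morphism}. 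Your treatment of N1 and of the morphism/subautomaton part is fine and matches the paper, but the core "preserves strong acceptance" step needs to be replaced by this weaker suffix-level argument.
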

\begin{proof}
The automaton $\lowA$ satisfies "N1", as no transitions between SCCs are added.

To check that $\lowA$ is a "subautomaton" of $\Aa$ we verify that the identity mapping $\f\colon \lowA \to \Aa$ is an
injective "morphism@@lay". 
Clearly if $p\act{a}_yq$ in $\lowA$ then $\f(p)\act{a}_{z}\f(q)$ in $\Aa$ where
$z=y+2$ or $z=y$ depending on whether $p\in S_{y+2}$ or not. 
The property of preserving the "ancestor relation" follows from the definition of $\mlow_{y-1}$.
This is because $\Aa$ satisfies "N1", and $\lowA$ is obtained by removing whole SCCs.
Observe that if $p\in S_{x+2}$ then $p$ is on layer $x$ of $\Aal$, on layer
$x+2$ of $\Aa$, and $\f(\m_{x-1}(p))$ is on layer $x-1$ of $\Aa$.
So $\f$ preserves the "ancestor relation" but not the parent relation. 

We claim that if $w$ is "strongly accepted" from $p$ on layer $y$ in $\Aal$ then
there is a decomposition $uw'=w$ and a state $q$ such that
$p\act{u}_z\wmu_z(q)$ and $w'$ is strongly 
accepted from $q$ in $\Aa$; here $z=y+2$ or $z=y$ depending on whether $p\in
S_{y}$ or not.
We also have the analogous statement with the roles of accepted/rejected interchanged.

To show the claim take $w$ that is "strongly accepted" or "strongly rejected" from $p$ on
layer $y$ of $\lowA$.
If $p\in S_{y+2}$ then the whole run on $w$ from $p$ is in $S_{y+2}$ because $S_{y+2}$
is a union of SCCs by Lemma~\ref{lem:SCC-is-in-SCC}. 
But then all states of layer $y+1$ of $\Aal$ mapping to $S_{y+2}$ are in
$S_{y+3}$. 
So strong acceptance/rejection of $w$ from $p$ also holds in $\Aa$.
Thus, the claim is true for the trivial decomposition $u=\e$, $w'=w$ and $q=p$.
If $p\in \T_{y}$ for $y\not=x-1$ then the argument is the same because the run
on $w$ stays in $\T_y$.

The remaining case to complete the proof of the claim is when $p\in \Tlow_{x-1}$.
For the ease of writing assume $x-1$ is even, so $w$ is strongly accepted from
$p$ in $\Aal$. The other case is the same.
Suppose $w$ is not strongly accepted from $p$ in $\Aa$.
This means that there is a decomposition $uw'=w$ and a state $q$ on layer $x$
such that $p\act{u}_{x-1}\m_{x-1}(q)$ and $w'\in L_x(q)$ in $\Aa$.
Clearly, $q\in S_x$, as otherwise $w$ would not be strongly accepted from $p$ in
$\Aal$. 
Since $S_x$ is covered by $S_{x+1}$ there is $q'$ in $S_{x+1}$ with
$\m_x(q')=q$, $w'\in L_{x+1}(q')$.
If $w'$ is strongly accepted from $q'$ then we have proved our claim as
$p\act{u}_{x-1}\wmu_{x-1}(q')$.
Otherwise, we have a decomposition $u'w''=w'$ and a state $q''$ on layer $x+2$
such that $q'\act{u'}_{x+1}\m_{x+1}(q'')$ and $w''\in L_{x+2}(q'')$.
But then in $\lowA$ we have $p\act{u}_{x-1}\m_{x-1}(q)\act{u'}\wmu_{x-1}(q'')$,
because $x-1$ layers of $\Aa$ and $\lowA$ are the same; and $w''\in
L_x(\mlow_x(q''))$, because $q''\in S_{x+2}$.
Moreover, $\wmu_{x-1}(q'')=\mlow_{x-1}(q'')$ by definition.
This is a  contradiction with $w$ being strongly accepted from $p$ in $\Aal$.

Equipped with our claim we can show that $\lowA$ is "consistent".
For this suppose to the contrary that there are $p_1\sim p_2$ and $w$ such that
$w$ is strongly accepted from $p_1$ and strongly rejected from $p_2$ in $\lowA$.
By our claim there are states $q_1$, $q_2$ and decompositions $u_1w_1=u_2w_2=w$
such that $p_1\act{u_1}_1\wmu_{y_2}(q_1)$, $p_2\act{u_2}_1\wmu_{y_2}(q_2)$, and
$w_1$ strongly accepted from $q_1$ in $\Aa$ and $w_2$ strongly rejected from $q_2$ in
$\Aa$.
As $\Aa$ is consistent, by Corollary~\ref{cor:equivalences-acceptance} this
would mean that $u_1w_1=w$ is accepted by $\Aa$, and $u_2w_2=w$ is rejected by
$\Aa$. 
A contradiction.

Finally, we show that $L(\Aa)=L(\lowA)$. 
If $w\in L(\lowA)$ then by Corollary~\ref{cor:equivalences-acceptance} there
is a decomposition $uw'=w$ and a state $p$ such that $q_\init\act{u}_1\wmu_1(p)$
and $w'$ is "strongly accepted" from $p$ in $\lowA$.
From our claim it follows that there is a decomposition $u'w''=w'$ and a state
$q$ in $\Aa$ such that $q_\init\act{uu'}_1\wmu_1(q)$ and $w''$ is "strongly
accepted" from $q$ in $\Aa$.
Thus, by Corollary~\ref{cor:equivalences-acceptance}, $uu'w''=w$ is accepted
by $\Aa$.
The case when $w\not\in L(\lowA)$ is the same.
\end{proof}

After each lowering operation we remove entire SCCs, because of Lemma~\ref{lem:SCC-is-in-SCC},
so "N1" property still holds. 
When there are no "covered" SCCs we obtain an equivalent "subautomaton" that in
"normal form", namely, satisfying both "N1" and "N2".
Finding a covered SCC and removing it can clearly be done in PTIME.

\subsection{Safe minimisation}\label{subsec:safe-minimisation}
We show how we can "safe minimise" a "layered automaton" while maintaining "normalisation".
Recall that $p\seq_x q$ if $p$ and $q$ are language equivalent and $\Lsafexx{y}(p) = \Lsafexx{y}(q)$ for all $y\leq x$, and that an automaton is "safe minimal" if $\eqx$ and $\seq_x$ coincide.
First, we obtain directly from the definition that the $\seq_x$ relation is a bisimulation in $T_x$.
\begin{lemma}\label{lem:seq-is-bisimulaton}
  Let $p,p'\in Q_x$ such that $p\seq_x p'$. If $p\act{a}_x q$ then $p'\actx{a}
  q'$ and $q'\seq_x q$ .
\end{lemma}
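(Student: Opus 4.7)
The plan is to verify the two parts of the bisimulation property for $\seq_x$, using the fact that both the safe languages and the full language $L$ behave well under first-letter derivatives across deterministic transitions.

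First, I would establish the existence (and uniqueness) of $q'$. Since $p\actx{a}q$, we have $a\in\Lsafexx{x}(p)$; from $p\seq_x p'$ the clause $\Lsafexx{x}(p)=\Lsafexx{x}(p')$ yields $a\in\Lsafexx{x}(p')$, and determinism of $\T_x$ produces a unique $q'\in Q_x$ with $p'\actx{a}q'$.

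Next, I would check $\Lsafexx{y}(q)=\Lsafexx{y}(q')$ for every $y\leq x$. Composing the morphisms $\mu_y,\dots,\mu_{x-1}$ lifts $p\actx{a}q$ to $\wmu_y(p)\act{a}_y\wmu_y(q)$ in $\T_y$, and similarly for the primed pair. Since $\T_y$ is deterministic, for every $w\in\Sigma^*\cup\Sigma^\omega$ we have $w\in\Lsafexx{y}(q)$ iff $aw\in\Lsafexx{y}(p)$, and symmetrically on the $p'$ side. Therefore $\Lsafexx{y}(q)=a^{-1}\Lsafexx{y}(p)=a^{-1}\Lsafexx{y}(p')=\Lsafexx{y}(q')$.

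Finally, I would check $L(q)=L(q')$. By Lemma~\ref{lem:SD-1-equivalent}, $\sem{\Aa}$ is semantically deterministic, so any $a$-transition in $\sem{\Aa}$ from a leaf under $\wmu_1(p)$ lands on a leaf under $\wmu_1(q)$; hence $L(q)=a^{-1}L(p)$ and likewise $L(q')=a^{-1}L(p')$. Combined with $p\seq_1 p'$, this gives $L(q)=L(q')$, so $q\seq_x q'$.

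The only subtlety is that $L(\cdot)$ is used on states that need not be leaves of $\Aa$; this is well-defined thanks to consistency (Corollary~\ref{cor:consistent-iff-unifSD}), which forces all leaf descendants of a common layer-$1$ ancestor to recognise the same language. Beyond this, the argument is a direct first-letter decomposition and I do not anticipate a substantial obstacle.
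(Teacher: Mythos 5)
Your proof is correct, and it is essentially the direct unfolding of the definitions that the paper leaves implicit (the paper gives no written proof, asserting the lemma follows directly from the definition): existence of $q'$ via equality of the $x$-safe languages (or completeness of $\T_1$ when $x=1$), the safe-language components of $\seq_x$ via the morphisms $\wmu_y$ and determinism of each $\T_y$, and the language component via residuation by $a$. The one point worth flagging is your appeal to consistency for the $L$-component: the lemma carries no consistency hypothesis, and consistency is avoidable here. Reading $L(q)$ for a non-leaf state as the layered-acceptance language from $\wmu_1(q)$ (as the paper does, e.g.\ in Lemma~\ref{lem:lss-are-reslovers}), one gets $L(q)=a^{-1}L(p)$ unconditionally whenever $\wmu_1(p)\act{a}\wmu_1(q)$ in the deterministic and complete $\T_1$, simply by shifting every decomposition witnessing ultimate safety by the letter $a$. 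Your detour through $\sem{\Aa}$, Lemma~\ref{lem:SD-1-equivalent} and Corollary~\ref{cor:consistent-iff-unifSD} therefore establishes the statement only for "consistent" automata; since the paper applies the lemma exclusively inside the safe minimisation of consistent layered automata, this restriction is harmless, but the unconditional argument is both simpler and matches the generality of the statement.
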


The desired safe minimal automaton $\safemin{\Aa}$ is the automaton obtained by using $\seq_x$ equivalence classes as $x$-states.
In the following, when using the notation $\scl{q}{x}$, we assume that $q\in Q_x$.
\AP Formally, $\intro*\safeminA$ is given by:
\begin{itemize}\setlength\itemsep{2mm}
\item $\safemin{\T_x} = (\safemin{Q_x},\safemin{\delta_x})$ with 
 \begin{itemize}\setlength\itemsep{2mm}
 \item $\safemin{Q_x} = \{\scl{q}{x} \mid q \in Q_x\}$,
 \item$\safemin{\delta_x}([q]_{\seq_x},a) = 
   \begin{cases}
     [\delta_x(q,a)]_{\seq_x} & \text{if } \delta_x(q,a) \text{ defined},\\[1.5mm]
     \bot & \text{else}
   \end{cases}$ 
 \end{itemize}
\item The initial state is $\scl{q_0}{1}$, for $q_0$ the initial state in $\T_1$.
\item $\safemin{\m_x}([q]_{\seq_{x+1}}) = [\m_x(q)]_{\seq_x}$.
\end{itemize}

The transition function $\safemin{\delta}$ is well-defined by Lemma~\ref{lem:seq-is-bisimulaton}, and $\safemin{\m_x}$ is well defined because $q \seq_{x+1} q'$ implies that $q \seq_{x} q'$ by definition, and hence also $\mu_x(q) \seq_{x} \mu_x(q')$.
By a direct application of the definitions and the fact that $\mu_x$ is a morphism, one easily shows that $\safemin{\m_x}$ is a "morphism".
Lemma~\ref{lem:seq-is-bisimulaton} implies that
$\Lsafex^\Aa(p)=\Lsafex^{\safeminA}(\scl{p}{x})$.
Thus, $\Aar$ is safe minimal.
Moreover, the other properties we have already established are not invalidated
by this construction, as shown in the next lemma.

\begin{lemma}\label{lem:safe-minimisation-is-safe}
  If $\Aa$ is "normal", then so is $\Aar$.
\end{lemma}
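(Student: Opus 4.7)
The plan is to verify "N1" and "N2" separately for $\safeminA$, in each case lifting the relevant structural object from $\safemin{\T_x}$ back to $\T_x$ through a representative, applying the corresponding property in $\Aa$, and projecting the conclusion back to the quotient. The two ingredients that make the argument go through are Lemma~\ref{lem:seq-is-bisimulaton} --- which shows that $\seq_x$ is a bisimulation in $\T_x$, so that transitions lift and project cleanly --- and the fact that, by the recursive definition of $\seq_x$, any $p \seq_x q$ satisfies $\Lsafexx{y}(p) = \Lsafexx{y}(q)$ for every $y \leq x$.

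For "N1", I would take a path $[q]_{\seq_x} \runx{u} [q']_{\seq_x}$ in $\safemin{\T_x}$, with $x \geq 2$. A straightforward induction on $|u|$, using $\safemin{\delta_x}([p]_{\seq_x},a) = [\delta_x(p,a)]_{\seq_x}$, produces a representative run $q \runx{u} q''$ in $\T_x$ with $q'' \seq_x q'$. Applying "N1" in $\Aa$ yields some $v \in \Sigma^+$ with $q'' \runx{v} q$, and projecting gives $[q'] = [q''] \runx{v} [q]$ in $\safemin{\T_x}$, as required.

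For "N2", I would take $[p]_{\seq_{x+1}}$ with parent $[q]_{\seq_x}$ in $\safeminA$, and set $r := \mu_x(p)$, so that $r \seq_x q$ by the definition of $\safemin{\mu_x}$. Applying "N2" in $\Aa$ to $r$ and its child $p$ produces some $u$ with $r \runxx{u}{x}$ defined in $\T_x$ and $p \runxx{u}{x+1} \bot$ in $\T_{x+1}$. Since $r \seq_x q$ we have $\Lsafex(r) = \Lsafex(q)$, so $u$ is also safe from $q$, and therefore $[q]_{\seq_x} \runxx{u}{x}$ is defined in $\safemin{\T_x}$. For the undefined half, the same bisimulation lemma applied at layer $x+1$ makes $\safemin{\delta_{x+1}}$ well-defined in the stronger sense that it is defined on an equivalence class exactly when $\delta_{x+1}$ is defined on any (equivalently, every) representative, so $p \runxx{u}{x+1} \bot$ lifts directly to $[p]_{\seq_{x+1}} \runxx{u}{x+1} \bot$ in $\safemin{\T_{x+1}}$.

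I do not anticipate any real obstacle: the entire argument reduces to Lemma~\ref{lem:seq-is-bisimulaton} together with the defining property that $\seq_x$-equivalent states share identical safe languages at all layers $\leq x$. The only subtle point is transferring the bisimulation reasoning to the \emph{absence} of transitions needed in "N2", which is just the contrapositive of the usual bisimulation clause and is already encoded in the well-definedness of $\safemin{\delta_{x+1}}$.
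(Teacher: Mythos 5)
Your proof is correct and takes essentially the same route as the paper's: choose representatives so that runs and the parent relation lift to $\Aa$, apply N1 resp.\ N2 there, and project back to the quotient, with Lemma~\ref{lem:seq-is-bisimulaton} (equivalently, equality of safe languages on $\seq_x$-classes) guaranteeing that both defined and undefined transitions pass between a class and its representatives. The only difference is presentational: you make explicit the well-definedness argument for the missing transition in N2, which the paper leaves implicit in the definition of $\safemin{\delta_{x+1}}$.
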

\begin{proof}
  We check the properties "N1", and "N2", for $x\geq 2$.
  \begin{description}
    \item["N1".]  Let $\scl{q}{x}\runx{u} \scl{q'}{x}$ in $\safemin{\T_x}$. Then, we can assume that the representatives $q,q'$ are chosen such that $q\runx{u} q'$ in $\T_x$. Therefore $q'\runx{v} q$ for some $v \in \Sigma^+$, as $\Aa$ satisfies "N1", and hence $\scl{q'}{x}\runx{v} \scl{q}{x}$.
    \item["N2".] Let $\scl{q}{x}$ be a state in $\safeminA$, and $\scl{p}{x+1}$ a child of it.
    Then, we can assume that the representatives are chosen such that $\mu_x(p) = q$ in $\Aa$.    
    As $\Aa$ satisfies "N2", there is $u$ such that $q\runxx{u}{x}$ is defined in $\T_{x}$ but $p\runxx{u}{x+1}\bot$ in $\T_{x+1}$.
    Then, $\scl{q}{x} \runxx{u}{x}$ is defined in $\safemin{\T_{x}}$ but $\scl{p}{x}\runxx{u}{x+1}\bot$ in~$\safemin{\T_{x+1}}$.  
    \qedhere
  \end{description}
\end{proof}

 The following lemma shows the key property to obtain correctness of $\safeminA$. 

\begin{lemma}\label{lem:safeMin-strongAcceptance}
  Let $\Aa$ be a "consistent" "layered automaton" and $p\in Q_x$ an "$x$-state".
  If a word $w\in \S^\omega$ is "strongly accepted" (resp. "rejected@@strong") by $\scl{p}{x}$ in $\safeminA$, then it is "strongly accepted" (resp. "rejected@@strong") by $p$ in $\Aa$.
\end{lemma}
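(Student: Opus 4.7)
My plan is to prove the strongly accepted case (the rejected case is symmetric by swapping parities) by a direct contradiction argument, leveraging Lemma~\ref{lem:seq-is-bisimulaton} to transfer $x$- and $(x+1)$-safe language information between $\Aa$ and $\safeminA$.

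The first step is to establish a simple uniform fact: for every layer $y \ge 1$ and every state $q \in Q_y$, the $y$-safe languages are preserved by safe minimisation, i.e.\ $\Lsafexx{y}^{\Aa}(q) = \Lsafexx{y}^{\safeminA}(\scl{q}{y})$. This is immediate from Lemma~\ref{lem:seq-is-bisimulaton} applied to $\T_y$: since $\seq_y$ is a bisimulation on $\T_y$, a path $q \runxx{u}{y} q'$ exists in $\T_y$ if and only if a corresponding path $\scl{q}{y} \runxx{u}{y} \scl{q'}{y}$ exists in $\safemin{\T_y}$. In particular, if $w$ is "strongly accepted" by $\scl{p}{x}$ in $\safeminA$ then $w \in \Lsafexx{x}^{\safeminA}(\scl{p}{x}) = \Lsafexx{x}^{\Aa}(p)$.

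Next, suppose for contradiction that $w$ fails to be "strongly accepted" by $p$ in $\Aa$. Since $w \in \Lsafexx{x}^{\Aa}(p)$ and $x$ is even, the failure must come from the second clause: there exists a decomposition $w = uw'$ and a state $q \in Q_{x+1}$ such that $p \runxx{u}{x} \mu_x(q)$ in $\T_x$ and $w' \in \Lsafexx{x+1}^{\Aa}(q)$. Now I transport this witness to $\safeminA$. By bisimulation on $\T_x$, we obtain $\scl{p}{x} \runxx{u}{x} \scl{\mu_x(q)}{x}$ in $\safemin{\T_x}$; by the definition of $\safemin{\mu_x}$, this target equals $\safemin{\mu_x}(\scl{q}{x+1})$. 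By the preservation of safe languages applied to layer $x+1$, we also get $w' \in \Lsafexx{x+1}^{\safeminA}(\scl{q}{x+1})$. This is exactly a witness that $w$ is not "strongly accepted" by $\scl{p}{x}$ in $\safeminA$, contradicting the hypothesis.

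The argument for "strong rejection" is identical up to parity: if $w$ is "strongly rejected" by $\scl{p}{x}$ with $x$ odd, the same bisimulation transfer turns any alleged witness of failure in $\Aa$ into a failure witness in $\safeminA$. I do not foresee a substantial obstacle: the only thing to be careful about is that in the definition of strong acceptance it is $\wmu_x(p) = p$ and $\wmu_x(p') = \mu_x(p')$ that are used, which matches exactly the data transported by the bisimulation across one layer. The core content of the lemma is really that Lemma~\ref{lem:seq-is-bisimulaton} preserves safe languages layer-by-layer, and that the morphism $\safemin{\mu_x}$ commutes with the equivalence classes in the expected way.
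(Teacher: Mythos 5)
Your proposal is correct and follows essentially the same route as the paper's proof: use the fact (from Lemma~\ref{lem:seq-is-bisimulaton}) that $\Lsafexx{y}^{\Aa}(q)=\Lsafexx{y}^{\safeminA}(\scl{q}{y})$, argue by contradiction, and transport the witnessing decomposition $w=uw'$ and $(x{+}1)$-state from $\Aa$ to $\safeminA$ via the quotient transitions and $\safemin{\mu_x}$. No gaps.
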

\begin{proof}
  Assume that $w$ is "strongly accepted" by $\scl{p}{x}$ in $\safeminA$(so $x$ is even).
  In particular, $w\in \LsafexxA{x}{\safeminA}(\scl{p}{x})$, and therefore $w\in \LsafexxA{x}{\Aa}(p)$.
  Assume by contradiction that $w$ is not "strongly accepted" by  $p$. Then, there is a decomposition $w=uw'$ and a state $q'\in Q_{x+1}$ such that
  $p\actx{u}q$, where $q=\wmu_x(q')$, and $w'\in \LsafexxA{x+1}{\Aa}(q')$.
  We claim that the decomposition $w=uw'$ witnesses that $w$ is not "strongly accepted" by $\scl{p}{x}$ in $\safeminA$, a contradiction.
  Indeed, by definition of $\safemin{\delta_x}$ we have that $\scl{p}{x} \actx{u} \scl{q}{x}$,  and also $w'\in \LsafexxA{x+1}{\safeminA}(\scl{q'}{x+1})$.
  As $\safemin{\m_x}([q']_{\seq_{x+1}}) = \scl{q}{x}$, this concludes the proof.
\end{proof}

We are ready to prove the correctness of the construction of $\Aar$.
\begin{lemma}
  Let $\Aa$ be a "normal" and "consistent" "layered automaton".
  Then, $\safeminA$ is "safe minimal", "normal", and "consistent" "layered automaton".
  Moreover, $L(\Aa)=L(\safemin{\Aa})$, and $\Aa$ admits a surjective "strong
  morphism" to $\safeminA$. 
\end{lemma}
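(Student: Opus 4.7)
The backbone of the proof will be the natural quotient map $\varphi\colon Q_{\geq 1}^{\Aa}\to Q_{\geq 1}^{\safeminA}$ given by $\varphi(q)=\scl{q}{x}$ for $q\in Q_x$. A routine check from the definitions of $\safemin{\d_x}$ and $\safemin{\m_x}$ shows that $\varphi$ is a layer-preserving surjective strong morphism, which will also witness the second part of the statement once the other properties are established. Safe minimality of $\safeminA$ follows immediately from the construction together with Lemma~\ref{lem:seq-is-bisimulaton}, which gives $\Lsafex^{\safeminA}(\scl{p}{x})=\Lsafex^{\Aa}(p)$, and normality has already been proved in Lemma~\ref{lem:safe-minimisation-is-safe}.

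For consistency of $\safeminA$, I will assume for contradiction two states $\scl{p_1}{x_1}$ and $\scl{p_2}{x_2}$ sharing a common layer-$1$ ancestor and a word $w$ strongly accepted by the first and strongly rejected by the second in $\safeminA$. Iterating the defining equation $\safemin{\m_y}(\scl{q}{y+1})=\scl{\m_y(q)}{y}$, the shared ancestor condition unfolds to $\wmu_1(p_1)\seq_1\wmu_1(p_2)$, that is, $L(\Aa,\wmu_1(p_1))=L(\Aa,\wmu_1(p_2))$. Lemma~\ref{lem:safeMin-strongAcceptance} transfers the strong acceptance and rejection of $w$ to $p_1$ and $p_2$ in $\Aa$; since consistency of $\Aa$ is a structural property, each pointed automaton $(\Aa,\wmu_1(p_i))$ remains consistent, so Corollary~\ref{cor:equivalences-acceptance} applied to both of them yields $w\in L(\Aa,\wmu_1(p_1))$ and $w\notin L(\Aa,\wmu_1(p_2))$, contradicting the residual equality above.

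Finally, for $L(\Aa)=L(\safeminA)$, I will apply Corollary~\ref{cor:equivalences-acceptance} to both automata, using Lemma~\ref{lem:safeMin-strongAcceptance} as the bridge. Given $w\in L(\safeminA)$, the Corollary supplies a decomposition $uw'=w$ and a state $\scl{p_0}{y}$ with $\scl{q_\init}{1}\runxx{u}{1}\scl{\wmu_1(p_0)}{1}$ strongly accepting $w'$; the unique $\Aa$-path $q_\init\runxx{u}{1}q_*$ satisfies $q_*\seq_1\wmu_1(p_0)$ by determinism of $\T_1^{\Aa}$, hence $L(\Aa,q_*)=L(\Aa,\wmu_1(p_0))$; then Lemma~\ref{lem:safeMin-strongAcceptance} and the Corollary applied to $(\Aa,\wmu_1(p_0))$ give $w'\in L(\Aa,\wmu_1(p_0))=L(\Aa,q_*)$, and semantic determinism of $\Aa$ (Corollary~\ref{cor:consistent-iff-unifSD} with Lemma~\ref{lem:SD-1-equivalent}) concludes $w\in L(\Aa)$. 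The reverse inclusion is symmetric, using the "strongly rejects" equivalent clause of the Corollary to forbid a counterexample in $\safeminA$ whenever $w\in L(\Aa)$. The main obstacle is this last step: the asymmetry of Lemma~\ref{lem:safeMin-strongAcceptance}—which only transfers strong acceptance and rejection from $\safeminA$ to $\Aa$, not back—forces routing every claim about $\safeminA$'s language through its $\Aa$-counterpart via the bisimulation $\seq_1$ on layer~$1$, and one must be careful that the Corollary applies verbatim to each pointed automaton used.
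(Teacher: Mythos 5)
Your proposal is correct and follows essentially the same route as the paper: the quotient map as a surjective strong morphism (via Lemma~\ref{lem:seq-is-bisimulaton}), normality from Lemma~\ref{lem:safe-minimisation-is-safe}, and consistency and language equality by pulling strong acceptance/rejection back to $\Aa$ through Lemma~\ref{lem:safeMin-strongAcceptance} and then invoking uniform semantic determinism. The only cosmetic difference is that you route the membership conclusions through Corollary~\ref{cor:equivalences-acceptance} applied to the pointed automata $(\Aa,\wmu_1(p_i))$, where the paper uses Lemma~\ref{lem:language-p-strong-acc} directly; both are sound.
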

\begin{proof}
  By Lemma~\ref{lem:seq-is-bisimulaton},  the mapping $\f\colon \Aa \to
  \safeminA$ with $\f(p) = [p]_{\seq_x}$ for $q \in Q_x$ is a "strong
  morphism@@lay", which is clearly surjective. 
  Automaton $\safeminA$ is "normal" thanks to Lemma~\ref{lem:safe-minimisation-is-safe}.

  We show "consistency" of $\safeminA$. Let $\scl{p}{x}$ and $\scl{p'}{x'}$ be $\eqxx{1}$-equivalent in $\safeminA$, that is, $L(\Aa,p) = L(\Aa,p')$.
  Assume by contradiction that a word $w$ is "strongly accepted" by $\scl{p}{x}$ and "strongly rejected" by $\scl{p'}{x'}$ in $\safeminA$.
  By Lemma~\ref{lem:safeMin-strongAcceptance}, $w$ is "strongly accepted" by $p$ and "strongly rejected" by $p'$ in $\Aa$.
  Therefore, by Lemma~\ref{lem:language-p-strong-acc}, $w\in L(\Aa,p)$ and $w\notin L(\Aa,p')$,
  contradicting language equivalence of these states.

  We prove that $L(\safeminA) = L(\Aa)$.
  Let $w$ be a word "accepted@@layered" by $\safeminA$ (similar proof for $w$ "rejected@@layered"), that is, 
   there is $x$ even and a decomposition $w=uw'$ such that $\scl{q_\init}{1} \actxx{u}{1} \scl{q}{1}$ and $w'$ is "strongly accepted" by $\scl{q'}{x}$, with $\wmu_1(q') \seq_1 q$.
   By Lemma~\ref{lem:safeMin-strongAcceptance}, $w'$ is "strongly accepted" by $q'$ in $\Aa$, and therefore by Lemma~\ref{lem:language-p-strong-acc}, $w'\in  L(\Aa,q') = u^{-1}L(\Aa)$, where the last equality follows from "uniform semantic determinism" of $\Aa$.

  Finally, "safe minimality" of $\safeminA$ follows from language equivalence and the fact that $\Lsafex(p)=L_x^{\safeminA}(\scl{p}{x})$.
\end{proof}

\begin{lemma}
  Given a "consistent" "layered automaton" $\Aa$, we can compute $\safeminA$ in polynomial time. 
\end{lemma}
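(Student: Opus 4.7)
The plan is to compute the $\seq_x$-equivalence partition of each $Q_x$ in polynomial time, and then assemble $\safeminA$ in the canonical way. Once the partitions are known, the transition function $\safemin{\delta_x}$ and the morphism $\safemin{\mu_x}$ are defined by representative-independent formulas (well-definedness follows from Lemma~\ref{lem:seq-is-bisimulaton}), so the quotient construction itself takes polynomial time.

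Unfolding the inductive definition of $\seq_x$, two states $p,q \in Q_x$ satisfy $p \seq_x q$ if and only if $L(\Aa, \wmu_1(p)) = L(\Aa, \wmu_1(q))$ and $\Lsafexx{y}(p) = \Lsafexx{y}(q)$ for every $y \in \{2, \ldots, x\}$. For each layer $y \geq 2$, I would compute the partition of $Q_y$ under safe-language equivalence using the standard polynomial-time algorithm for equivalence of deterministic automata applied to $\T_y$ (treating every state as accepting, so that the accepted language matches $\Lsafexx{y}$). For language equivalence at layer~$1$, I would invoke Proposition~\ref{prop:inclusion-PTIME}: for each pair $p, q \in Q_1$, inclusion in both directions between $(\Aa, p)$ and $(\Aa, q)$ can be tested in polynomial time. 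After trimming states unreachable from the new initial state, both $(\Aa, p)$ and $(\Aa, q)$ remain consistent layered automata, so the hypothesis of the proposition applies.

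Combining the precomputed partitions, the $\seq_x$-partition of $Q_x$ is obtained by intersection: $p \seq_x q$ if and only if $\wmu_1(p)$ and $\wmu_1(q)$ lie in the same $\seq_1$-class of $Q_1$ and, for every $2 \leq y \leq x$, $\wmu_y(p)$ and $\wmu_y(q)$ lie in the same safe-language class of $Q_y$. Since there are polynomially many pairs of states and polynomially many layers, the total running time is polynomial. The only non-elementary ingredient is the appeal to Proposition~\ref{prop:inclusion-PTIME}; the remaining steps are routine applications of the standard quotient and deterministic-automaton equivalence algorithms. I do not anticipate a major obstacle, only the routine verification that consistency is preserved when the initial state of $\T_1$ is swapped for any other state of $Q_1$, which is immediate from the definition of consistency (a property quantifying over pairs of states in the same $\eqxx{1}$-class).
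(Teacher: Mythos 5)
Your proposal is correct and follows essentially the same route as the paper: compute the $\seq_1$-classes by pairwise language-equivalence tests via Proposition~\ref{prop:inclusion-PTIME}, compute the safe-language classes on each layer $y\geq 2$ by equivalence of deterministic safety automata, and intersect to get $\seq_x$ before forming the quotient. The only (harmless) differences are that the paper notes the safety-automata checks are even in \LOGSPACE, and your extra remark on consistency being preserved under re-initialisation is a detail the paper leaves implicit.
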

\begin{proof}
  For building $\safeminA$, we just need to compute the $\seq_x$ classes of $\Aa$, as the transitions and morphism functions are defined directly on them.
  For computing the classes of $\seq_1$, we need to check for language equivalence of states, which can be done in polynomial time by Proposition~\ref{prop:inclusion-PTIME}.
  The computation of $\seq_x$ for $x>1$ consists in checking the equivalence of
  deterministic safety automata, which can be done in $\LOGSPACE$.
\end{proof}

\subsection{Centralisation}
We now show how to make an automaton "centralised" 
while preserving "normalisation" and "safe minimality".
The following lemma shows that we can extend the $\fleqx$ relation to compare "$x$-SCCs".

\begin{lemma}\label{lem:comparison-x-SCCs}
  Let $\Aa$ be a "consistent" "layered automaton" satisfying "N1".
  Let $S_1,S_2$ be two $x$-SCCs in $\T_x$. If there are $p_1\in S_1$ and $p_2\in S_2$ such
  that $p_1\fleq_x p_2$ then for every $q_1\in S_1$ there is $q_2\in S_2$ with $q_1\fleq_x q_2$.
\end{lemma}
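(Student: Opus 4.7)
\textbf{Proof plan for Lemma~\ref{lem:comparison-x-SCCs}.}
The plan is to transport the witness $q_1$ from $S_1$ to $S_2$ by following the same word that reaches $q_1$ from $p_1$, and then to verify the two defining conditions of $\fleqx$. The only structural input we need beyond the hypothesis $p_1 \fleq_x p_2$ is that \kl{N1} forces the $x$-SCCs of $\T_x$ to be pairwise disjoint in the sense that no transition crosses between two distinct $x$-SCCs (otherwise the return path guaranteed by \kl{N1} would merge them).

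First, using that $S_1$ is an $x$-SCC and $q_1 \in S_1$, pick $u \in \Sigma^*$ with $p_1 \runx{u} q_1$ in $\T_x$. This witnesses $u \in \Lsafex(p_1)$, and since $p_1 \fleq_x p_2$ we have $\Lsafex(p_1) \subseteq \Lsafex(p_2)$, so the run $p_2 \runx{u} q_2$ is defined in $\T_x$ for some unique $q_2$. The observation on \kl{N1} above then forces $q_2 \in S_2$, since $p_2 \in S_2$ and the run starting at $p_2$ cannot leave $S_2$.

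Next, I verify $q_1 \fleq_x q_2$. For the parent condition, the morphism $\mu_{x-1}$ sends the above runs to runs $\mu_{x-1}(p_1) \runxx{u}{x-1} \mu_{x-1}(q_1)$ and $\mu_{x-1}(p_2) \runxx{u}{x-1} \mu_{x-1}(q_2)$ in $\T_{x-1}$. Since $p_1 \fleq_x p_2$ gives $\mu_{x-1}(p_1) = \mu_{x-1}(p_2)$, and $\T_{x-1}$ is deterministic, we conclude $\mu_{x-1}(q_1) = \mu_{x-1}(q_2)$, i.e.\ $q_1 \eqxx{x-1} q_2$.

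For the safe-language inclusion, take any $w \in \Lsafex(q_1)$. Concatenating with the prefix from $p_1$, we get $p_1 \runx{uw}$, hence $uw \in \Lsafex(p_1) \subseteq \Lsafex(p_2)$. By determinism of $\T_x$, the run $p_2 \runx{u}$ ends in $q_2$, so $q_2 \runx{w}$ is defined, giving $w \in \Lsafex(q_2)$. Therefore $\Lsafex(q_1) \subseteq \Lsafex(q_2)$, which combined with the parent condition yields $q_1 \fleq_x q_2$, as required.

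I do not anticipate any serious obstacle: the argument is essentially a deterministic-simulation by the single word $u$. The only subtlety is the one-line use of \kl{N1} to ensure that the target $q_2$ lands in $S_2$ and not in some other $x$-SCC of $\T_x$; without \kl{N1} the statement would fail because an $S_2$-external successor could exist. Consistency of $\Aa$ does not seem to be used here and is presumably assumed just to keep the hypotheses uniform with the rest of the section.
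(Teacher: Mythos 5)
Your proof is correct and follows essentially the same route as the paper's: pick $u$ with $p_1 \runx{u} q_1$, run $u$ from $p_2$ (possible since $\Lsafex(p_1)\subseteq\Lsafex(p_2)$) to get $q_2$, use "N1" to place $q_2$ in $S_2$, and conclude $q_1\fleqx q_2$ via the common parent (morphism plus determinism) and $\Lsafex(q_1)=u^{-1}\Lsafex(p_1)\subseteq u^{-1}\Lsafex(p_2)=\Lsafex(q_2)$. Your side remarks — that "N1" is exactly what keeps the run from $p_2$ inside $S_2$, and that "consistency" is not actually used — are also consistent with the paper's argument.
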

\begin{proof}
  Take a path $p_1\act{u}_xq_1$ in $\T_x$. 
  Since $p_1\fleq_x p_2$, $u\in \Lsafex(p_2)$. Let $q_2$ be the state reached in $\T_x$ by the path $p_2\act{u}_xq_2$.
  As $\Aa$ satisfies "N1", $q_2\in S_2$.
  Clearly, $q_1\sim_{x-1} q_2$ because $p_1\sim_{x-1} p_2$.
  Finally, $\Lsafex(q_1)=u^{-1}\Lsafex(p_1) \subseteq u^{-1}\Lsafex(p_2) = \Lsafex(q_2)$, so $q_1\fleq_x q_2$.
\end{proof}

We write $S\fleq_x S'$ if there are $p\in S$ and $p'\in S'$ such that $p\fleq_x p'$.

We define a ""centralisation operation"" that removes "$x$-SCCs" that are not $\fleq_x$-maximal.
Fix $x\geq 2$ odd and two components $S\fleq_x S'$.
\AP Let $\intro*\centralA$ be the layered automaton with $S$ removed.
This means that for every $y\geq x$ we remove all $p\in Q_y$ such that $\wmu_x(p)\in S$.


\begin{lemma}\label{lem:centralisation-is-safe}
  Let $\Aa$ be a "safe minimal", "normal", and "consistent" "layered automaton".
  Then, $\centralA$ is also  "safe minimal", "normal" and "consistent".
  Moreover, $L(\centralA)=L(\Aa)$, and $\centralA$ is a "subautomaton" of $\Aa$.
\end{lemma}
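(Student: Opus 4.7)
\textbf{Proof plan for Lemma~\ref{lem:centralisation-is-safe}.} The plan is to show that the inclusion map $\iota\colon \centralA \to \Aa$ is an injective "morphism@@lay" that "preserves strong acceptance", from which consistency and language preservation follow via Lemma~\ref{lem:consistent-morphism}. The subautomaton claim is immediate from injectivity of $\iota$. For every $p \in Q^{\centralA}$ and every $y \leq \llayer(p)$, we have $\wmu_y^{\centralA}(p) = \wmu_y^{\Aa}(p)$ and the transitions out of this ancestor in $\T_y$ are identical in both automata (we only removed whole $\fleqx$-comparable $x$-SCCs and their descendants), so $\Lsafex_y^{\centralA}(p) = \Lsafex_y^{\Aa}(p)$.

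The heart of the argument is preserving strong acceptance. Suppose $w$ is strongly accepted by $p$ at an even layer $y$ in $\centralA$ but not in $\Aa$. Then there exist a decomposition $w = uw'$ and $p' \in Q_{y+1}^{\Aa}$ with $\wmu_y(p) \runx{u} \wmu_y(p')$ in $\T_y$ and $w' \in \Lsafex_{y+1}^{\Aa}(p')$. If $p' \in Q^{\centralA}$, the same data witnesses the failure in $\centralA$, contradiction. Otherwise $\wmu_x(p') \in S$, and I split cases on $y$ vs $x$. Since $x$ is odd and $y$ is even, $y \neq x$; the cases $y+1 < x$ and $y > x$ immediately force $p' \in \centralA$. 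The only nontrivial case is $y = x-1$: then $p'$ lies in $S \subseteq Q_x$, and by Lemma~\ref{lem:comparison-x-SCCs} applied to $S \fleq_x S'$ there is $p'' \in S'$ with $\mu_{x-1}(p'') = \mu_{x-1}(p')$ and $\Lsafex_x(p') \subseteq \Lsafex_x(p'')$; since $p'' \notin S$, the decomposition $w = uw'$ together with $p''$ witnesses the failure inside $\centralA$, a contradiction. The symmetric argument handles strong rejection at odd layer $y$; the delicate subcase is $y = x$, where N1 forces $\wmu_x(p')$ to lie in the same SCC as $p \notin S$, so $p' \in \centralA$.

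Given preservation of strong acceptance, Lemma~\ref{lem:consistent-morphism} yields consistency of $\centralA$ and $L(\centralA) = L(\Aa)$. For normality, N1 holds because removing entire SCCs cannot create transitions between SCCs, and every surviving SCC is internally unchanged; N2 holds because for any surviving parent–child pair $(q,p)$ the witnessing word $u$ from $\Aa$ still gives $q \runx{u}$ defined and $p \runxx{u}{x+1}\bot$ in $\centralA$ (removing states can only destroy transitions, not create them). For safe minimality, any two states $p \neq q$ in $Q_y^{\centralA}$ that were $\seq_y$-equivalent in $\centralA$ would have matching safe languages $\Lsafex_z^{\centralA}(\cdot) = \Lsafex_z^{\Aa}(\cdot)$ for all $z \leq y$ and matching languages (using consistency of both automata and $L(\centralA) = L(\Aa)$ with identical first layer), hence would be $\seq_y$-equivalent in $\Aa$, contradicting its safe minimality.

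The main obstacle is the strong acceptance preservation argument, specifically the case $y = x-1$: this is where one must invoke $S \fleq_x S'$ together with Lemma~\ref{lem:comparison-x-SCCs} to reroute the would-be witness $p' \in S$ to a surviving state $p'' \in S'$ with the same parent and a larger safe language. Every other step is either a direct case analysis on layer indices or a mechanical transfer of properties along the identity inclusion.
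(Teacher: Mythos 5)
Your proposal is correct and follows essentially the same route as the paper: the inclusion $\centralA \hookrightarrow \Aa$ is an injective morphism preserving strong acceptance (so Lemma~\ref{lem:consistent-morphism} gives consistency and $L(\centralA)=L(\Aa)$), with normality and safe minimality transferred directly, and the crucial case being a removed child in $S$ rerouted to $S'$ via $S\fleq_x S'$ and Lemma~\ref{lem:comparison-x-SCCs}. Your explicit case split on $y$ versus $x$ (using "N1" to dispose of the cases $y\geq x$) is just a more detailed rendering of the paper's remark that the only interesting state is a parent of $S$.
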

\begin{proof}
  Since $\centralA$ has been obtained by removing whole "$x$-SCCs", no transitions between existent SCCs are added, nor any "$x$-SCC" is "covered" in $\centralA$.
  Therefore, $\centralA$ is also in "normal form".

  For all relations $\seq_x$ with $x\geq 2$, the relation $\seq_x$ in $\centralA$ is the restriction of $\seq_x$ to the states of the new automaton, as for every state $p$ in $\centralA$, $\LsafexxA{x}{\centralA}(p) =  \LsafexxA{x}{\Aa}(p)$.
  Therefore, the construction cannot create two $\seq_x$ equivalent states.

  For the second part of the lemma, we show that the mapping $\f\colon \centralA \to \Aa$ given by $\f(p)=p$ is an injective "morphism" that "preserves strong acceptance". The result then follows by Lemma~\ref{lem:consistent-morphism}.
  The fact that $\f$ is a "morphism" and injective is straightforward.


For showing that $\f$ "preserves strong acceptance", assume that $w$ is "strongly accepted" (resp. "rejected") from some state in $\centralA$. The only interesting case here is when this state is a parent of $S$, because for all other states, their SCCs and those of the children have not changed. By  Lemma~\ref{lem:SCC-is-in-SCC} we know that all parents of $S$ are in the same SCC, so let $R \subseteq Q_{x-1}$ be the parent SCC of $S$. Assume that $x-1$ is even, the other case being symmetric. Let $w$ be "strongly accepted" from $r \in R$ in $\centralA$. Assume that $w$ is not "strongly accepted" from $r$ in $\Aa$. Then there is a decomposition $w=uw'$, a run $r \runxx{u}{x-1} q$, and $w' \in \Lsafex(p)$ for some $p \in Q_x$ with $\mu_{x-1}(p) = q$. Since $w$ is strongly accepted from $r$ in $\centralA$, we get that the SCC of $p$ has been removed in $\centralA$, that is, $p \in S$. Since $S\fleq_x S'$, there is a state $p' \in S'$ with $\Lsafex(p) \subseteq \Lsafex(p')$ and $\mu_{x-1}(p) = \mu_{x-1}(p')$. But then $w' \in \Lsafex(p')$ and $\mu_{x-1}(p') = q$, which would mean that $w$ is not "strongly accepted" from $r$ in $\centralA$ because $S'$ is still a child SCC of $R$ in $\centralA$.
\end{proof}

\begin{lemma}
  Given a "consistent" "normalised" "layered automaton" $\Aa$, we can compute $\centralA$ in polynomial time. 
\end{lemma}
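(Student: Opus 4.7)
The plan is to iterate the "centralisation operation" defined just above until no further reduction is possible. At each round, I will search for two distinct "$x$-SCCs" $S \neq S'$ (for some odd $x \geq 2$) with $S \fleqx S'$; if such a pair exists, I remove $S$ and all its descendants, obtaining a smaller automaton, and recurse on it. By Lemma~\ref{lem:centralisation-is-safe}, every round preserves "normality", "safe minimality", "consistency", and the recognised language, so correctness is already packaged. Since each round strictly decreases the number of states, the procedure terminates after at most $|Q_{\geq 1}|$ rounds, and it is clear that the final automaton is "centralised".

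For a single round, I would first compute the SCC decomposition of each layer $\T_x$ using, say, Tarjan's algorithm in linear time. Then, for each odd $x \geq 2$ and each ordered pair of distinct "$x$-SCCs" $(S,S')$, I test whether $S \fleqx S'$. By Lemma~\ref{lem:comparison-x-SCCs}, it suffices to find a single witness pair $p \in S$, $p' \in S'$ with $\mu_{x-1}(p) = \mu_{x-1}(p')$ and $\Lsafex(p) \subseteq \Lsafex(p')$. There are polynomially many candidate pairs, and for each one the parent-equality check is immediate, while the "safe language" inclusion test reduces to inclusion between two deterministic safety automata.

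The most technical step is the safe-language inclusion test, which I would implement as an emptiness check on the product $\T_x \times \T_x$ restricted to pairs $(p,p')$ sharing a common parent at layer $x-1$, searching for a word along which the run from $p$ stays defined while the run from $p'$ reaches $\bot$; this runs in polynomial time (in fact, in $\NLOGSPACE$). Everything else is routine graph and bookkeeping work, and combining the $O(|Q_{\geq 1}|)$ outer rounds with the polynomial cost per round yields the claimed polynomial-time bound.
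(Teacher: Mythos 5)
You are correct and follow essentially the same route as the paper: decide $\fleqx$ on same-parent pairs via inclusion of deterministic safety automata (your product/emptiness test is the standard implementation of this check), then delete a non-maximal $x$-SCC together with all states mapping to it under $\wmu_x$, delegating correctness of each round to Lemma~\ref{lem:centralisation-is-safe} exactly as the paper does. One caveat: the paper's proof searches over all layers $x\geq 2$, not only odd ones; if you only ever centralise odd layers, the iterated procedure need not yield an automaton that is centralised in the sense of the definition (which quantifies over every $x\geq 2$), so you should drop the parity restriction.
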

\begin{proof}
  To obtain $\centralA$ it suffices to identify $x\geq 2$ and two states $p,p'\in Q_x$ in different SCCs of $\T_x$ such that $p\fleq_x p'$.
  Once such states are found, we just remove from $\Aa$ the "$x$-SCC" of $p$ and the states that map to it via $\wmu_x$.

  For this, we compute the relations $\fleq_x$.
  For each $x\geq 2$, we check for each pair $p\eqxx{x-1}p'$ whether $\Lsafex(p)\subseteq \Lsafex(p')$, which is just language inclusion of deterministic safety automata.
\end{proof}

Every application of the "centralisation operation" removes some states. 
When the "operation@@centralisation" cannot be applied anymore, after at most $|Q_{\geq 1}|$ steps, we get an equivalent "automaton@@lay" that is "normal", "safe minimal" and "centralised".

\subsection{Subautomaton and surjective morphism}
Applying the polynomial-time transformations presented in the previous subsection, we obtain an automaton $\Aa_L$ that is "normal", "central" and "safe minimal", which proves the first part of Theorem~\ref{thm:minimisation}.
We show the second part of this theorem, that is, that the initial automaton $\Aa$ contains a "subautomaton" admiting a surjective "strong morphism" to $\Aa_L$.


This construction took $3$ steps, producing the following sequence of automata:
\[\Aa  \;\; \leadsto \;\;  \Aa_{\mathsf{normal}} \;\;  \leadsto \;\;  \Aa_{\mathsf{safeMin}} \;\; \leadsto  \;\;   \Aa_{\mathsf{central}} = \Aa_L. \]
Moreover, we have proven that these automata satisfy the following relations:
\begin{itemize}
  \item $\Aa_{\mathsf{normal}}$ is a "subautomaton" of $\Aa$,
  \item $\Aa_{\mathsf{normal}}$ admits a surjective "strong morphism" to  $\Aa_{\mathsf{safeMin}}$. 
  \item $\Aa_{\mathsf{central}}$ is a "subautomaton" of $\Aa_{\mathsf{safeMin}}$ obtained by removing subtrees of at levels $\geq 2$ of the latter automaton.
\end{itemize}

We first note the following.
\begin{lemma}
  Let $\f:\Aa\to\Bb$ be a "morphism@@lay" between "consistent" "layered automata", and let $\Bb'$ be a "subautomaton" of $\Bb$ obtained by removing some subtrees
  at layers $\geq 2$. Then, $\f^{-1}(\Bb')$ is a "layered automaton" and a
  "subautomaton" of $\Aa$. 
\end{lemma}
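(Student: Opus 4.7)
I would take $\Aa' := \f^{-1}(\Bb')$ to be the substructure of $\Aa$ whose states are $\{p \in Q_{\geq 1}^\Aa : \f(p) \in Q_{\geq 1}^{\Bb'}\}$, each placed on its original $\Aa$-layer, with transitions and parent morphisms inherited from $\Aa$ by restriction: an $\Aa$-transition $p \actx{a} p'$ survives in $\Aa'$ exactly when $p' \in \Aa'$, and becomes $\bot$ otherwise. The key structural property I would exploit is that ``removing subtrees at layers $\geq 2$'' produces a $\Bb'$ that is closed under "ancestors" (if $q \in \Bb'$ then every $\Bb$-ancestor of $q$ also lies in $\Bb'$) and that leaves the first layer untouched, so $Q_1^{\Bb'} = Q_1^\Bb$.

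Next I would verify well-definedness of $\Aa'$ as a "layered automaton". For layer~$1$, the "morphism@@lay" $\f$ sends $Q_1^\Aa$ into $Q_1^\Bb = Q_1^{\Bb'}$ (recall that by the reachability assumption all layer-$1$ states are sent to layer~$1$), so $\T_1^{\Aa'} = \T_1^\Aa$: it is already "complete", initialised at $q_\init^\Aa$, and has all states reachable from $q_\init^\Aa$. For the parent morphisms $\mu_x^{\Aa'}$, I need that $p \in Q_{x+1}^{\Aa'}$ implies $\mu_x^\Aa(p) \in Q_x^{\Aa'}$. But $\mu_x^\Aa(p)$ is an "ancestor" of $p$ in $\Aa$, so by the ancestor-preservation clause of the morphism $\f$ its image $\f(\mu_x^\Aa(p))$ is an "ancestor" of $\f(p) \in \Bb'$; by ancestor-closure of $\Bb'$ we obtain $\f(\mu_x^\Aa(p)) \in \Bb'$, and hence $\mu_x^\Aa(p) \in \Aa'$. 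That $\mu_x^{\Aa'}$ is itself a "morphism@@TS" is then immediate since $\mu_x^\Aa$ is one and the transitions of $\Aa'$ are a subset of those of $\Aa$.

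Finally, to conclude that $\Aa'$ is a "subautomaton" of $\Aa$, I would exhibit the set-theoretic inclusion $\iota \colon \Aa' \hookrightarrow \Aa$ as an injective "morphism@@lay": injectivity and preservation of the initial state are immediate, and since $\Aa'$ shares $\Aa$'s layer assignment we have $\llayer(\iota(p)) = \llayer(p)$, so preservation of transitions is built into the definition of $\delta_x^{\Aa'}$ and preservation of the "ancestor" relation is tautological. The only real content of the whole argument is the ancestor-closure step that forces $\mu_x^\Aa(p)$ back into $\Aa'$; the main (mild) difficulty I anticipate is simply to pin down a formal interpretation of ``removing subtrees at layers $\geq 2$'' that makes ancestor-closure of $\Bb'$ and $Q_1^{\Bb'} = Q_1^\Bb$ manifest. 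Once that is done, everything else is direct unwinding of the two defining clauses of a "morphism@@lay".
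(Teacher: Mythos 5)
Your proposal is correct and follows essentially the same route as the paper's proof: layer~$1$ is untouched (so completeness, initial state, and reachability are inherited), and the parent maps restrict to $\Aa' = \f^{-1}(\Bb')$ because $\Bb'$ is ancestor-closed, being obtained by removing subtrees at layers $\geq 2$; the inclusion then witnesses the subautomaton property. You merely spell out in more detail (via the ancestor-preservation clause of $\f$) what the paper states tersely.
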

\begin{proof}
  Clearly $\Aa'=\f^{-1}(\Bb')$ is a subautomaton of $\Aa$ because it is obtained
  by removing some nodes. 
  To check that it is a layered automaton first observe that layer $1$ of
  $\Bb'$ is the same as layer $1$ of $\Bb$, so the layer $1$ of $\Aa'$ is the
  same as layer $1$ of $\Aa$.
  It remains to verify that the $\m_x$ functions are still "morphisms" if restricted to $\Aa'$.
  Take $p\in\Tt^{\Aa'}_x$ for some $x\geq 2$. We have that $\m_{x-1}(p)\in
  \Tt^{\Aa'}$ because $\Bb'$ is obtained from $\Bb$ by removing subtrees.
  Clearly the transitions are preserved by $\m_x$ in $\Aa'$ as they are
  preserved in $\Aa$.
\end{proof}

Let $\Aa'$ be the subautomaton of $\Aa_{\mathsf{normal}}$ given by the previous lemma applied to the morphism $\f: \Aa_{\mathsf{normal}} \to \Aa_{\mathsf{safeMin}}$ and the "subautomaton"  $\Aa_{\mathsf{central}}$ of $\Aa_L$.
Then, $\Aa'=\f^{-1}(\Aa_{\mathsf{central}})$ is a "subautomaton" of
$\Aa_{\mathsf{normal}}$, hence also a "subautomaton" of $\Aa$.
Since $\f$ is a "strong morphism" its restriction to $\Aa'$ is also strong.
Moreover, $\f$ is surjective, as $\Aa'$ is the preimage of $\f$.
This concludes the proof of Theorem~\ref{thm:minimisation} as $\Aa_{\mathsf{central}}=\Aa_L$.

\begin{remark}[Order of the operations]
  We order the operations to ensure that each step preserves the properties
  obtained in previous steps. "Normalisation" is applied first, since the
  lowering operation may disrupt "centrality" and "safe minimality".
  Interestingly, safe minimisation does not always preserve
  "centralisation"—unlike the coBüchi case
  (see~\cite{Abu.Kup.Minimization2022}), where "centralisation" can precede safe
  minimisation. By contrast, Lemma~\ref{lem:centralisation-is-safe} establishes
  that "centralisation" preserves both "normalisation" and "safe minimality",
  making it the natural final step. 
\end{remark}

\section{Congruence-based characterisation}\label{sec:congruences}

In this section, we provide an alternative characterization of minimal "layered
automata" based on congruences over tuples of finite words. This approach yields
a direct, algebraic construction of the canonical minimal "layered automaton"
for any $\omega$-regular language. The congruence-based perspective complements
the syntactic minimisation procedure presented in Section~\ref{sec:minimisation} and offers
additional insight into the structure of "layered automata" through the lens of
equivalences on sequences of finite words. 
For example, the maximal length of such sequences reflects a parity index of the
language. 

\subsection{Definitions and basic properties of congruences}

Let $L \subseteq \Sigma^\omega$ (the definitions can be applied to every language of infinite words, but the results concerning "layered automata" only work for regular $\omega$-languages). In our congruence-based description of the minimal "layered automaton" for $L$, the states on layer $x$ are classes of an equivalence relation $\tc$ over $x$-tuples $\bu = (u_1, \ldots, u_x) \in (\Sigma^*)^x$ of finite words. The relation $\tc$ can be seen as one equivalence relation over the set of all finite tuples (of length at least one) of finite words over $\Sigma$, in which equivalent tuples must have the same length. 
For defining the transition relations on the individual layers and the morphisms between the layers, we consider two operations on such finite tuples:\AP
\begin{description}
\item[""Concatenation"":] We write $\bu \cdot u_{x+1}$ for concatenation with a word $u_{x+1} \in \Sigma^*$ in the last component, that is $\bu \cdot u_{x+1} = (u_1, \ldots, u_{x-1}, u_xu_{x+1})$ is the tuple that has the same first $x-1$ components as $\bu$, and $u_xu_{x+1}$ as last component (if $x=1$, then the resulting tuple is just $(u_1u_{2})$). We use this operation for defining the transition relation on the individual layers. \AP
\item[""Merging"":] We write $(\bu,u_{x+1})$ for the tuple $(u_1, \ldots, u_x,u_{x+1})$ obtained by adding $u_{x+1}$ as additional component. Merging the last two components of the tuple $(\bu,u_{x+1})$ of length at least $2$, is the operation of removing the last component and "concatenating" it to the second last component, resulting in the tuple $\bu \cdot u_{x+1}$. We use this operation for the morphism from layer $x+1$ to layer $x$. 
\end{description}
Before going into the formal definitions, let us look an example to get an intuition for the notions that we are going to define.
\begin{example}\label{ex:congruence-aba}
  Consider the language $L$ of all words over $\Sigma = \{a,b\}$ in which $aba$ occurs finitely often, and $a$ and $b$ both occur infinitely often. The congruence automaton for $L$ is shown in Figure~\ref{fig:congruence-aba}. The language is prefix independent, so there is only one class on layer $1$, and all the representatives of classes in the other layers can use $\e$ as first component. \textbf{Layer $2$} ensures that all words that do not contain $aba$ are safe from some state. In order to capture this, we introduce the notion of a tuple being (equivalent to) $\bot$. In the example, $(u_1,u_2) \tc \bot$ iff $u_2$ contains $aba$ as infix. This is captured by looking at possible extensions $u_2w$ of $u_2$ into words $u_1(u_2w)^\omega \in L$.
  Such an extension exists iff $u_2$ does not contain $aba$. And the tuples for which no such extension exists are declared $\bot$ (when we discuss layer $3$ of the example further below, we give some more details on the definition of $\bot$). 
  This notion of tuples being $\bot$ defines a safe language for each tuple: the set of finite words that can be appended to the last component without making the tuple $\bot$. On layer $x$ our equivalence merges $x$-tuples that are equivalent w.r.t.\ this safe language. This results in three classes on layer $2$ in the example that track the prefixes of $aba$, namely $\tccls{(\e,\e)},\tccls{(\e,a)},\tccls{(\e,ab)}$. In the picture, the representative $(\e,bb)$ is used for the class $\tccls{(\e,\e)}$. The intuitive reason is that $bb$ is a word that ensures the following: "concatenating" it to any tuple (on layer $2$) results in a tuple that is either $\bot$ or equivalent to $(\e,bb)$. So in that sense, $bb$ in the second component points to a unique class. This is captured by our definition of "pointed" tuples further below. We use this notion for selecting those classes that are used for the construction of the automaton: only classes that contain a "pointed" tuple are used. There can be classes that contain only "pointed" tuples or both, "pointed" and non-pointed tuples (as $(\e,\e)$ and $(\e,bb)$ in the example), but there can also be classes that do not contain any "pointed" tuple, and these are not used as states. The latter case does not happen in this example, we refer to Example~\ref{ex:abc} for this.
  \textbf{Layer 3} is responsible for checking that infinitely many $a$ and $b$ occur, so words that contain only one of the two letters have to be safe on layer $3$ (and are thus rejected because $3$ is odd). Note that there is no state on layer $3$ that maps to $\tccls{(\e,ab)}$ on layer $2$. The reason is that if $\tccls{(\e,ab)}$ occurs infinitely often in a safe run on layer $2$, then the word has to contain infinitely many $a$ and $b$, because all words looping on that class on layer $2$ contain $a$ and $b$. So there is nothing to check on layer $3$ in this case. 
  This is captured by our definition of $\bot$ as follows. We say that a tuple $(u_1, u_2, u_3)$ on layer $3$ is $\bot$ if either $(u_1,u_2u_3)$ is $\bot$ on layer $2$, or every extension $w \in \Sigma^+$ that closes a loop on the class of $(u_1,u_2)$, written as $(u_1,u_2u_3w) \tc (u_1,u_2)$, is such that the resulting word $u_1u_2(u_3w)^\omega$ is in $L$. Intuitively, this means that we do not have to track anything on layer $3$ because there is nothing to reject if the class on layer $2$ occurs infinitely often. Consider the tuple $(\e,ab,\e)$ in the example. Merging the last two components results in $(\e,ab)$. Our definition ensures that $(\e,ab,\e) \tc \bot$ because every nonempty $w$ such that $(\e,abw) \tc (\e,ab)$ must be such that $abw$ does not contain $aba$ (otherwise $(\e,abw)$ would be $\bot$ and thus not equivalent to $(\e,ab)$), and $w$ ends in $ab$. Then clearly $ab(w)^\omega \in L$ because $w$ contains both $a$ and $b$, and $ab(w)^\omega$ does not contain $aba$, which follows from the facts that $abw$ does not contain $aba$ and that $w$ ends in $ab$.
\end{example}
\begin{figure}
  \begin{center}
  \includegraphics[]{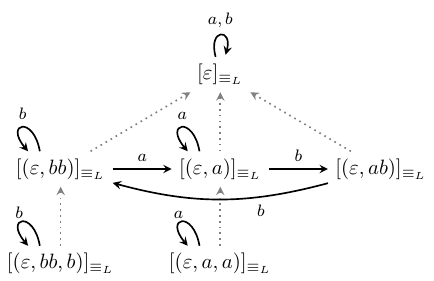}
  \end{center}
  \caption{Congruence automaton for the language of words with finitely many $aba$ and infinitely many $a$ and $b$ from Example~\ref{ex:congruence-aba}.\label{fig:congruence-aba}}
\end{figure}
We now proceed with the formal definitions, which generalise the definitions for the coBüchi case on pairs of words from \cite{Lod.Wal.Congruences2025}.\footnote{There is a small difference. In \cite{Lod.Wal.Congruences2025} for the coBüchi case all states of the congruence automaton are classes of tuples of length $2$. With our definitions here applied to the coB\"uchi case, there might also be states that are classes of tuples of length $1$ in some cases. This flexibility of tuple length actually makes the definition simpler because it avoids a special treatment of $\e$ as in \cite{Lod.Wal.Congruences2025}.}

\AP For tuples $\bu = (u_1, \ldots, u_x)$, $\bv = (v_1, \ldots, v_x)$ of finite words, we define inductively over the length of the tuples the notions $\bu \intro*\tc \bot$, $\bu \tc \bv$, and $\bu$ is "pointed". 

For the induction base $x=1$, define 
\begin{itemize}
\item $(u_1) \not\tc \bot$ for all $u_1$, 
\item $(u_1) \tc (v_1)$ iff $u_1 \sim_L v_1$, 
\item $(u_1)$ "pointed" for all $u_1$.
\end{itemize}

\textbf{Bottom and safe language.} 
For the induction step, let $(\bu, u_{x+1}) \tc \bot$ if
\[
\begin{array}{l}
   \bu\cdot u_{x+1} \tc \bot \text{, or } \\[2mm] 
   \forall w \in \Sigma^+: \bu\cdot u_{x+1}w \tc \bu \Rightarrow [u_1 \cdots u_x(u_{x+1}w)^\omega \in L \; \Leftrightarrow \; x \text{ even}].
\end{array}
\]
Note that we require $w$ to be non-empty in order to ensure that the looping part $u_{x+1}w$ is non-empty. 
\begin{remark}\label{rmk:bot}
  \begin{enumerate}
  \item Let $(\bu,u_{x+1})$ be such that there is no $w\in \Sigma^+$ so that $\bu\cdot u_{x+1}w \tc \bu$. Then, $(\bu,u_{x+1})\tc \bot$, as the quantification ``$\forall w$  [...]'' is vacuous.
  \item In proofs we often show that a tuple is $\not\tc \bot$ by finding a $w \in \Sigma^+$ with $\bu\cdot u_{x+1}w \tc \bu$ and $[u_1 \cdots u_x(u_{x+1}w)^\omega \in L \; \Leftrightarrow \; x+1 \text{ even}]$ (note the change to ``$x+1$ even'' here). We sometimes call such a $w$ a \emph{witness for the tuple not being $\bot$}.
  \end{enumerate}
\end{remark}
The next lemma states that $\bot$ is invariant under right "concatenation" in the last component, and thus we can use it for defining a safety language.
\begin{lemma}\label{lem:bot-properties}
  Let $\bu\in  (\Sigma^*)^x$ and $a \in \Sigma$. Then $\bu \tc \bot \Rightarrow \bu \cdot a \tc \bot$.
\end{lemma}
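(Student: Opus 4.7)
The plan is to proceed by induction on the length $x$ of the tuple $\bu$. The base case $x=1$ is vacuous, since by definition $(u_1) \not\tc \bot$ for every $u_1 \in \Sigma^*$, so the implication to prove has a false hypothesis.

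For the inductive step, assume $\bu$ has length $x+1$ and write $\bu = (\bv, u_{x+1})$ where $\bv = (u_1,\dots,u_x)$. Unfolding the definition of $\bu \tc \bot$ gives two cases: either (a) $\bv \cdot u_{x+1} \tc \bot$, or (b) for every $w \in \Sigma^+$ with $\bv \cdot u_{x+1}w \tc \bv$, one has $u_1 \cdots u_x(u_{x+1}w)^\omega \in L \Leftrightarrow x$ even. I want to derive the analogous disjunction for $\bu \cdot a = (\bv, u_{x+1}a)$.

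In case (a), the tuple $\bv \cdot u_{x+1}$ has length $x$, so by the induction hypothesis I get $(\bv \cdot u_{x+1}) \cdot a = \bv \cdot u_{x+1} a \tc \bot$, which is precisely the first disjunct in the definition of $\bu \cdot a \tc \bot$. In case (b), given any $w' \in \Sigma^+$ with $\bv \cdot u_{x+1}aw' \tc \bv$, the candidate witness is $w := aw' \in \Sigma^+$: it satisfies $\bv \cdot u_{x+1}w \tc \bv$, so the condition in (b) applied to this $w$ yields the desired characterisation of $u_1 \cdots u_x(u_{x+1}aw')^\omega$ with respect to $L$ and the parity of $x$. This establishes the universal condition in the definition of $\bu \cdot a \tc \bot$.

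There is no real obstacle here: the argument is a direct unfolding of the recursive definition of $\tc \bot$, where the key idea is to recognise $aw'$ as a valid witness for the condition at $\bu$ whenever $w'$ is a candidate witness for the condition at $\bu \cdot a$, and to invoke the induction hypothesis to propagate the first disjunct through right "concatenation".
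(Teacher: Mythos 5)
Your proof is correct and essentially the same as the paper's: the same induction on tuple length, with the induction hypothesis handling the first disjunct of the definition of $\bot$ and the witness transfer $w' \mapsto aw'$ handling the second. The only difference is presentational — the paper argues by contraposition (a witness $w$ for $(\bu,u_{x+1}a)\not\tc\bot$ yields the witness $aw$ for $(\bu,u_{x+1})\not\tc\bot$), whereas you unfold the definition directly — which is the same argument read in the opposite direction.
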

\begin{proof}
Induction on $x$. For $x=1$ the claim is true because no tuple of length $1$ is $\bot$. For $x+1$, assume that $(\bu, u_{x+1} a) \not\tc \bot$. Then $\bu \cdot u_{x+1} a \not\tc \bot$, and hence by induction  $\bu \cdot u_{x+1} \not\tc \bot$. Further, there is $w \in \Sigma^+$ such that $\bu \cdot u_{x+1}aw \tc \bu$ and $u_1 \cdots u_x(u_{x+1}aw)^\omega \in L$ iff $x+1$ even. Then $aw$ witnesses that $(\bu, u_{x+1}) \not\tc \bot$.
\end{proof}
For an infinite word $w \in \Sigma^\omega$ we define $(\bu,w) \not\tc \bot$ if $(\bu,v) \not\tc \bot$ for all prefixes $v$ of $w$.
\AP The ""safe language@@congr"" of a tuple is then the set of  finite and infinite words that can be "concatenated" to the tuple without making the tuple $\bot$:
\[
\intro*\Ls((\bu,u_{x+1})) := \{w \in \Sigma^* \cup \Sigma^\omega \mid (\bu,u_{x+1}w) \not\tc \bot\}.
\]
Note that the safe language is completely determined by the finite words it contains (the infinite words in the safe language are those for which all finite prefixes are in the safe language).

\textbf{Equivalence.} Based on that, let
\[
  (\bu, u_{x+1}) \tc (\bv, v_{x+1}) \; \text{ if } \hspace{2mm}
  \begin{array}{l}
    \bu\cdot u_{x+1} \tc \bv\cdot v_{x+1} \text{, and } \Ls((\bu,u_{x+1})) = \Ls((\bv,v_{x+1}))
\end{array}
\]
This is residual equivalence w.r.t.\ the safe language together with the condition that the tuples map to the same class on the previous layer. 
\AP The class of a tuple is written $\intro*\tccls{\bu}$.

\begin{example}
  Let $L = \mathsf{parity}(1,2,\dots, d)$. Then, the tuples that are $\tc \bot$ are those that have length $> d$ or that contain $x$ in some component of index $>x$.  
  Therefore, for every length $x\in\{1, \ldots, d\}$, there are exactly two equivalence classes: $\tccls{(\e,1,2,\dots,x-1)}\tc \bot$, and $\tccls{(\e,\dots,\e)} \not \tc \bot$.  
\end{example}

The following lemma shows that $\tc$ is a congruence w.r.t.\ to the operations that we consider on tuples of words, namely right "concatenation" in the last component, "merging" the last two components, and extension with a new last component. Note that this implies that $\tc$ is preserved under arbitrary sequences of such operations, for example first extending the tuple by a new last component, and then filling this new component with a word by repeated right "concatenation".
\begin{lemma}\label{lem:tc-properties}
  Let $\bu, \bv \in  (\Sigma^*)^x$, $a \in \Sigma$, $u_{x+1},v_{x+1} \in \Sigma^*$.
  The following properties hold.
  \begin{enumerate}
  \item $\bu \tc \bv \Rightarrow \bu\cdot a \tc \bv \cdot a$. \label{item:tc-rc-concat}
  \item $(\bu,u_{x+1}) \tc (\bv,v_{x+1}) \Rightarrow \bu\cdot u_{x+1} \tc \bv \cdot v_{x+1}$.\label{item:tc-rc-merge}
  \item $\bu \tc \bv \Rightarrow (\bu,\e) \tc (\bv,\e)$.\label{item:tc-rc-extend}
  \end{enumerate}
\end{lemma}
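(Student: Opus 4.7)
The plan is to prove the three items by a simultaneous induction on the common length $x$ of $\bu$ and $\bv$. First, part~\ref{item:tc-rc-merge} is essentially free: for tuples of length $\geq 2$, the implication $(\bu,u_{x+1}) \tc (\bv,v_{x+1}) \Rightarrow \bu\cdot u_{x+1} \tc \bv\cdot v_{x+1}$ is literally the first conjunct in the definition of $\tc$, so the real work is in parts~\ref{item:tc-rc-concat} and~\ref{item:tc-rc-extend}. At the base $x=1$, part~\ref{item:tc-rc-concat} is the standard right-congruence property of $\sim_L$. For part~\ref{item:tc-rc-extend} at the base, the first conjunct $u_1 = u_1\cdot\e \tc v_1\cdot\e = v_1$ is trivial, and the safe-language equality reduces, after unfolding $\bot$ on length-$2$ tuples (where no length-$1$ tuple is $\bot$), to observing that both the hypothesis $u_1wv \sim_L u_1$ and the conclusion $u_1(wv)^\omega \in L$ of the universal-quantifier clause transfer verbatim between $u_1$ and $v_1$ thanks to $u_1 \sim_L v_1$.

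For the inductive step at length $x+1$ I handle part~\ref{item:tc-rc-concat} first. Assuming $(\bu,u_{x+1}) \tc (\bv,v_{x+1})$, the first conjunct of the target $(\bu,u_{x+1}a) \tc (\bv,v_{x+1}a)$, namely $\bu\cdot u_{x+1}a \tc \bv\cdot v_{x+1}a$, follows from the identity $(\bu\cdot u_{x+1})\cdot a = \bu\cdot(u_{x+1}a)$ together with the induction hypothesis of part~\ref{item:tc-rc-concat} at length $x$ applied to the shorter tuples $\bu\cdot u_{x+1} \tc \bv\cdot v_{x+1}$. The safe-language equality reduces to the tautology $w \in \Ls((\bu,u_{x+1}a)) \iff aw \in \Ls((\bu,u_{x+1}))$ combined with the assumption $\Ls((\bu,u_{x+1})) = \Ls((\bv,v_{x+1}))$.

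Part~\ref{item:tc-rc-extend} at length $x+1$ is the main step. Assuming $\bu \tc \bv$, the first conjunct of $(\bu,\e) \tc (\bv,\e)$ is trivial since $\bu\cdot\e = \bu$. For the safe-language equality I must show $(\bu,w) \tc \bot \iff (\bv,w) \tc \bot$ for every $w \in \Sigma^*$. Iterating part~\ref{item:tc-rc-concat} at length $x+1$ (just proved) letter by letter over $w$ yields $\bu\cdot w \tc \bv\cdot w$, which handles the first disjunct of the $\bot$-definition. For the universal-quantifier disjunct, the same iteration on $wv$ together with $\bu \tc \bv$ and transitivity give $\bu\cdot wv \tc \bu \iff \bv\cdot wv \tc \bv$, while iterated application of the already-available part~\ref{item:tc-rc-merge} collapses $\bu \tc \bv$ down to $u_1\cdots u_{x+1} \sim_L v_1\cdots v_{x+1}$, so $u_1\cdots u_{x+1}(wv)^\omega \in L \iff v_1\cdots v_{x+1}(wv)^\omega \in L$. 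The only real subtlety, and the point that needs the most care, is the internal ordering of the simultaneous induction: at each length one must establish part~\ref{item:tc-rc-merge} (free from the definition), then~\ref{item:tc-rc-concat}, and then~\ref{item:tc-rc-extend}, because part~\ref{item:tc-rc-extend} at length $x+1$ requires part~\ref{item:tc-rc-concat} at the same length $x+1$ (iterated letter by letter over $w$), not only at length $x$.
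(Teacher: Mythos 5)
Your proof is correct and takes essentially the same route as the paper: item (2) is read off the definition, item (1) is the straightforward induction based on the observation that safe languages are compatible with appending a letter, and item (3) transfers the looping condition via iterated item (1) and the membership of the $\omega$-word via residual equivalence of the merged components. The only difference is packaging — the paper proves (3) directly (via the witness form of being not $\bot$) rather than inside a simultaneous induction, which works because the proof of (1) never uses (3); your ordering (2), (1), (3) at each length is an equally valid organization.
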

\begin{proof}
 The proof is a simple application of the definitions.
    \begin{enumerate}
  \item Clearly, if $\Ls(\bu) = \Ls(\bv)$, then also $\Ls(\bu \cdot a) = \Ls(\bv \cdot a)$. Based on this observation, the proof is a straightforward induction on $x$.
  \item This property is part of the inductive definition of $\tc$.
  \item 
Assume that $\bu \tc \bv$. Then clearly $\bu \cdot \e \tc \bv \cdot \e$. For showing that $\Ls(\bu,\e) = \Ls(\bv,\e)$, let $u_{x+1} \in \Ls(\bu,\e)$. Then there is $w$ such that $\bu \cdot u_{x+1}w \tc \bu$ and $u_1 \cdots u_x(u_{x+1}w)^\omega \in L$ iff $x+1$ even. By assumption we have $\bu \tc \bv$, which by \ref{item:tc-rc-concat} implies that $\bu \cdot u_{x+1}w \tc \bv \cdot u_{x+1}w$, and thus $\bv \cdot u_{x+1}w \tc \bv$. Further, $v_1 \cdots v_x (u_{x+1}w)^\omega \in L$ iff  $u_1 \cdots u_x(u_{x+1}w)^\omega \in L$ because $v_1\cdots v_x \sim_L u_1 \cdots u_x$. Therefore, $u_{x+1} \in \Ls(\bv,v_{x+1})$. By symmetry we obtain that $\Ls(\bu,\e) = \Ls(\bv,\e)$. This shows that $(\bu,\e) \tc (\bv,\e)$.
  \end{enumerate}
\end{proof}

\textbf{"Pointed".}
We can use all classes of $\tc$ (except $\bot$) for the construction of a "layered automaton" that accepts $L$. However, this automaton is not necessarily minimal (see Example~\ref{ex:abc} below). We now proceed with the definition of "pointed" tuples, which are used to select those classes of $\tc$ that are actually used in the definition of the minimal automaton $\Aatc$. The definition of "pointed" is made such that it allows us to construct "central sequences" for the classes that contain "pointed" tuples. Consider a tuple $(\bu,u_{x+1}) \not\tc \bot$. It can be seen as the tuple that is reached from $(\bu,\e)$ by reading $u_{x+1}$. The parent of $(\bu,\e)$ on level $x$ is $\bu$. We now say that $(\bu,u_{x+1})$ is pointed, if reading $u_{x+1}$ from any $x+1$-tuple $(\bv,v_{x+1})$ that has $\bu$ as parent on layer $x$, results in a tuple $(\bv,v_{x+1}u_{x+1})$ that is either $\bot$ or equivalent to $(\bu,u_{x+1})$. In other words, $u_{x+1}$ uniquely points to the class of $(\bu,u_{x+1})$ when starting from a child of $\bu$.

\AP
Formally, $(\bu, u_{x+1}) \not\tc \bot$ is ""pointed"" if $\bu$ is "pointed" and for all "pointed" $\bv$  and all $v_{x+1}$ with $\bv\cdot v_{x+1} \tc \bu$:
\[
(\bv, v_{x+1}u_{x+1}) \not\tc \bot \Rightarrow (\bv, v_{x+1}u_{x+1}) \tc (\bu, u_{x+1}).
\]
This is illustrated in Figure~\ref{fig:pointed}, where the gray triangle represents all the children $(\bv, v_{x+1})$ of $\tccls{\bu}$ that are used in the definition.
\begin{figure}
  \begin{center}
  \begin{tikzpicture}[
      parent/.style={
        dotted, ->,thick, >=stealth, draw=black!50, font=\footnotesize
      },
      layer/.style={
        ->,thick, >=stealth, draw=black, font=\footnotesize
      },
      iso/.style={
        dashed, ->,thick, >=stealth, draw=black!20, text=black!20, font=\footnotesize
      }
    ]
    \node (bu) {$\tccls{\bu}$};
    \node[right=2 of bu] (buuxp1) {$\tccls{\bu\cdot u_{x+1}}$};
    \node[below=of buuxp1](bu-uxp1) {$\tccls{(\bu,u_{x+1})}$};
    \node[below=of bu](bbu) {};
    \node[left=of bbu](bbul) {};
    \node[right=of bbu](bbur) {};
    \node[left=of bbul](bot) {$\bot$};
    \node (a1) at ($(bbul)+(5mm,0)$) {};
    \node (a2) at ($(bbur)+(-5mm,0)$){};
    \node[right=3 of buuxp1,text=black!30] (x) {$x$};
    \node[below=of x,text=black!30] (xp1) {$x+1$};

    \fill[gray!40] (bu.south) --  (bbul.south) --  (bbur.south) -- cycle;

    \path[layer] (bu) edge node[above]{$u_{x+1}$} (buuxp1);
    \path[layer] (a2) edge[bend right]  node[above]{$u_{x+1}$} (bu-uxp1);
    \path[layer] (a1) edge[bend left] node[below]{$u_{x+1}$} (bot);
    \path[parent] (bu-uxp1) -- (buuxp1);
    \path[parent] (bbul.south) -- (bu.south);
    \path[parent] (bbur.south) -- (bu.south);
  \end{tikzpicture}

  \end{center}
  \caption{Illustration of the definition of $(\bu,u_{x+1})$ being "pointed". Basically, from every child of $\tccls{\bu}$, the word $u_{x+1}$ leads to $\bot$ or to $(\bu,u_{x+1})$.  \label{fig:pointed}}
\end{figure}

\begin{example} \label{ex:abc}
Consider the alphabet $\S := \{a,b,c\}$ and the language of all
$\omega$-words that contain at least one of the letters only finitely often. This language is prefix independent, so the first layer has only one class, and we only need to consider $\e$ in the first component. Further, $(\e,u_2) \tc \bot$ iff $u_2$
contains all letters, and $(\e,u_2) \tc (\e,v_2)$ if the same letters occur in $u_2$ and $v_2$. All tuples of length more than $2$ are $\bot$. Otherwise, if $(\e,u_2,u_3) \not\tc \bot$, then $(\e,u_2u_3) \not\tc \bot$ and there is $w \in \Sigma^+$ with $(\e,u_2u_3w) \tc (\e,u_2)$ and $u_2(u_3w)^\omega \not\in L$. From $(\e,u_2u_3w) \tc (\e,u_2)$ we know that $u_2$ and $u_2u_3w$ contain the same letters. From $(\e,u_2u_3) \not\tc \bot$ we conclude that $u_2u_3$ does not contain all the letters. Hence $u_3w$ does not contain all the letters, so $u_2(u_3w)^\omega \in L$, a contradiction.
The full diagram of $\tc$ on layer $2$ is shown in Figure~\ref{fig:abc} (except the class $\bot$, where the missing transitions are going). The parent of all these classes is the unique class on level $1$. For the automaton, we only need the classes $\tccls{(\e,ab)}$, $\tccls{(\e,ac)}$, and $\tccls{(\e,bc)}$, which respectively check that $c$, $b$, or $a$ occur finitely often. And these are precisely the classes that contain "pointed" tuples. Consider, for example, $(\e,ab)$. Reading $ab$ starting from any class in the diagram either goes to $\bot$ or ends in $\tccls{(\e,ab)}$, so $(\e,ab)$ is "pointed". But for example, the class of $(\e,a)$, consisting of tuples $(\e,a^i)$, $i \ge 1$, does not contain any "pointed" tuple. If $(\e,a^i)$ were pointed, then with $(\bv,v_{x+1}) = (\e,ab)$ in the definition of pointed, we would have $(\e,aba^i) \tc \{\bot, (\e,a)\}$ but we have $(\e,aba^i) \tc (\e,ab)$. In other words, reading $a^i$ starting from $(\e,ab)$ leads back to $(\e,ab)$ and not to $(\e,a)$ or $\bot$.
\end{example}

\begin{figure}
  \begin{center}
  \includegraphics[]{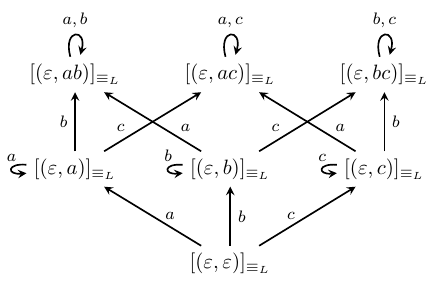}
  \end{center}
  \caption{Diagram of all the $\tc$-classes on layer $2$ for the language $L$ from Example~\ref{ex:abc} (at least one letter finitely often).\label{fig:abc}}
\end{figure}

The following lemma asserts that being "pointed" is a property that is preserved under right "concatenation" in the last component and under "merging" the last two components.

\begin{lemma}\label{lem:pointed-properties}
  Let $\bu \in (\Sigma^*)^x$, $a \in \Sigma$, $u_{x+1} \in \Sigma^*$.
  The following properties hold.
  \begin{enumerate}
  \item If $\bu$ is "pointed" and $\bu \cdot a \not \tc \bot$, then $\bu \cdot a$ is "pointed". \label{item:pointed-concat}
  \item If $(\bu,u_{x+1})$ is "pointed", then $\bu \cdot u_{x+1}$ is "pointed". \label{item:pointed-merge}
  \end{enumerate}
\end{lemma}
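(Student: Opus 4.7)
The plan is to prove both items by directly unpacking the definition of "pointed" and then applying the preservation statements already established for $\bot$ (Lemma~\ref{lem:bot-properties}) and for $\tc$ (Lemma~\ref{lem:tc-properties}). Both cases are trivial when $\bu$ has length one, since every length-one tuple is "pointed" by the induction base; so throughout I would assume $\bu = (\bu'', u_x) \in (\Sigma^*)^x$ with $x \geq 2$ and work relative to this decomposition.

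For item~\ref{item:pointed-concat}, the goal is to show that the length-$x$ tuple $\bu \cdot a = (\bu'', u_x a)$ satisfies the pointedness condition. Its parent $\bu''$ is "pointed" because $\bu$ is. Given any "pointed" $\bv \in (\Sigma^*)^{x-1}$ and $v$ with $\bv \cdot v \tc \bu''$ and $(\bv, v u_x a) \not\tc \bot$, the contrapositive of Lemma~\ref{lem:bot-properties} yields $(\bv, v u_x) \not\tc \bot$. Pointedness of $\bu = (\bu'', u_x)$ then gives $(\bv, v u_x) \tc (\bu'', u_x)$, and a single application of Lemma~\ref{lem:tc-properties}(\ref{item:tc-rc-concat}) with the letter $a$ produces the required $(\bv, v u_x a) \tc (\bu'', u_x a)$.

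For item~\ref{item:pointed-merge}, the goal is to show that the length-$x$ tuple $\bu \cdot u_{x+1} = (\bu'', u_x u_{x+1})$ is "pointed". Since $(\bu,u_{x+1})$ is "pointed" it is in particular not $\tc \bot$, and the first clause in the definition of $\bot$ directly gives $\bu \cdot u_{x+1} \not\tc \bot$, so this is a legitimate candidate. The parent $\bu''$ is "pointed" as before. Given any "pointed" $\bv \in (\Sigma^*)^{x-1}$ and $v$ with $\bv \cdot v \tc \bu''$ and $(\bv, v u_x u_{x+1}) \not\tc \bot$, iterating Lemma~\ref{lem:bot-properties} letter by letter along $u_{x+1}$ gives $(\bv, v u_x) \not\tc \bot$, so pointedness of $\bu$ yields $(\bv, v u_x) \tc (\bu'', u_x)$. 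Iterating Lemma~\ref{lem:tc-properties}(\ref{item:tc-rc-concat}) letter by letter along $u_{x+1}$ then delivers $(\bv, v u_x u_{x+1}) \tc (\bu'', u_x u_{x+1})$, which is exactly what was required.

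The proof is essentially bookkeeping; the only mild obstacle is lifting the single-letter statements of Lemmas~\ref{lem:bot-properties} and~\ref{lem:tc-properties}(\ref{item:tc-rc-concat}) to an arbitrary word $u_{x+1}$, but this is a routine induction on word length. No deeper use of the structure of pointedness beyond its inductive definition and these preservation facts is needed.
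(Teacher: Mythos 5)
Your proof is correct and follows essentially the same route as the paper: item~(1) is exactly the paper's argument (pointedness of the shorter tuple, with Lemma~\ref{lem:bot-properties} handling the $\bot$ case and Lemma~\ref{lem:tc-properties}(\ref{item:tc-rc-concat}) the other, merely phrased contrapositively instead of as a case split). The only cosmetic difference is in item~(2), where the paper iterates item~(1) letter by letter while you inline that iteration into a direct verification of the definition of pointedness; the mathematical content is identical.
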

\begin{proof}
  Note that the property~\ref{item:pointed-merge} follows from property~\ref{item:pointed-concat}: If $(\bu,u_{x+1})$ is "pointed", then by definition $\bu$ is "pointed". By repeated application of property~\ref{item:pointed-concat} we get that $\bu \cdot u_{x+1}$ is "pointed".
  
  Property~\ref{item:pointed-concat} clearly holds for $x=1$ because all tuples of length $1$ are "pointed". We now show that claim for tuples of length at least two, so assume that $(\bu,u_{x+1})$ is "pointed". For showing that  $(\bu,u_{x+1}a)$ is pointed, let  $\bv \in (\Sigma^*)^x$, $v_{x+1} \in \Sigma^*$ such that $\bv \cdot v_{x+1} \tc \bu$. Then  $(\bv,v_{x+1}u_{x+1}) \tc \{\bot,(\bu,u_{x+1})\}$ because $(\bu,u_{x+1})$ is "pointed".

  If $(\bv,v_{x+1}u_{x+1}) \tc \bot$, then $(\bv,v_{x+1}u_{x+1}a) \tc \bot$ by Lemma~\ref{lem:bot-properties}. If  $(\bv,v_{x+1}u_{x+1}) \tc (\bu,u_{x+1})$, then $(\bv,v_{x+1}u_{x+1}a) \tc (\bu,u_{x+1}a)$ by Lemma~\ref{lem:tc-properties}(\ref{item:tc-rc-concat}).
\end{proof}

\subsubsection*{Regularity and Finiteness.}
We show that for $\omega$-regular languages, the number of $\tc$-classes is finite. The proof consists of two parts, the first part (Lemma~\ref{lem:tc-finite-index}) shows that on each layer (for each fixed length of the tuples), there are finitely many $\tc$-classes. The second part (Lemma~\ref{lem:tc-finite-depth}) shows that there are only finitely many layers that contain classes that are different from $\bot$.
\begin{lemma}\label{lem:tc-finite-index}
If $L$ is $\omega$-regular, then $\tc$ has finite index over $(\Sigma^*)^x$ for all $x \ge 1$. 
\end{lemma}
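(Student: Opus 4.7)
The proof proceeds by induction on $x$. For $x=1$, the relation $\tc$ on length-one tuples coincides with the Myhill--Nerode residual congruence $\sim_L$, which has finitely many classes for any $\omega$-regular language $L$; no length-one tuple is $\bot$.

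For the inductive step, the plan is to refine $\tc$ by a finite equivalence coming from the syntactic $\omega$-semigroup of $L$. Let $M$ together with the morphism $h \colon \Sigma^+ \to M$ be this semigroup; $M$ is finite since $L$ is $\omega$-regular, and by its defining property, membership of any ultimately periodic word $u_1 \cdots u_x (u_{x+1} w)^\omega$ in $L$ depends only on the tuple $(h(u_1), \dots, h(u_{x+1}), h(w))$. I would prove by induction on $x$ the following joint claim for every $\bu, \bv \in (\Sigma^*)^x$ satisfying $h(u_i) = h(v_i)$ for all $i$: \textbf{(A)} $\bu \tc \bot$ if and only if $\bv \tc \bot$; and \textbf{(B)} if neither is $\bot$, then $\bu \tc \bv$. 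This immediately bounds the number of $\tc$-classes at length $x$ by $|M|^x + 1$.

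In the inductive step $x \to x+1$, I would first establish (A) and then deduce (B). For (A), the first disjunct $\bu \cdot u_{x+1} \tc \bot$ in the definition of being $\bot$ transports between the two sides via the inductive hypothesis at length $x$, since merging preserves the componentwise $h$-signature ($h(u_x u_{x+1}) = h(u_x) h(u_{x+1})$). For the second, universal disjunct, one shows for each $w \in \Sigma^+$ that the hypothesis $\bu \cdot u_{x+1} w \tc \bu$ transports to $\bv \cdot v_{x+1} w \tc \bv$ using the inductive hypothesis at length $x$, while the conclusion $u_1 \cdots u_x (u_{x+1} w)^\omega \in L \Leftrightarrow x$ even is invariant by the $\omega$-semigroup property of $M$. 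For (B), the merged tuples are $\tc$-equivalent by (B) at length $x$, and the safe-language equality $\Ls(\bu) = \Ls(\bv)$ follows pointwise from (A) at length $x+1$ just established.

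The main subtlety will be handling the cases where some of the involved tuples are $\bot$. If $\bu \not\tc \bot$ but $\bu \cdot u_{x+1} w \tc \bot$, then the hypothesis of the quantified implication fails on both sides, so the implication holds vacuously; if instead $\bu \tc \bot$, Lemma~\ref{lem:bot-properties} forces $\bu \cdot u_{x+1} w \tc \bot$ as well (and similarly for $\bv$), so both sides of the implication agree. The remaining reasoning is a routine application of Lemmas~\ref{lem:bot-properties} and~\ref{lem:tc-properties}, interlocked with the clause-by-clause transport of the definitions.
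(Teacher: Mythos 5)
Your proposal is correct and takes essentially the same route as the paper: where you use the syntactic $\omega$-semigroup, the paper uses transition profiles of a nondeterministic B\"uchi automaton recognising $L$ (each profile plays exactly the role of your value $h(u_i)$), and it proves by the same induction on tuple length that componentwise-equal invariants transport the clauses defining $\tc$ and $\bot$, concluding directly that such tuples are $\tc$-equivalent. Two harmless imprecisions to fix in a write-up: $h$ must be extended to $\Sigma^*$ (tuple components may be empty, e.g.\ the class of $\e$), and two $\bot$-tuples of the same length are $\tc$-equivalent only when their merged tuples are equivalent, so (A) and (B) alone do not give the bound $|M|^x+1$ literally --- though finite index still follows, since the classes of $\bot$-tuples of length $x+1$ are determined by the (finitely many, by induction) classes of their merged length-$x$ tuples.
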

\begin{proof}
  Let $L = L(\Aa)$ for a nondeterministic Büchi automaton (NBA) $\Aa$. For a finite word $u \in \Sigma^*$, let the transition profile of $u$ contain for each pair $p,q$ of states of $\Aa$ the information, whether there is a run from $p$ to $q$ on $u$, and whether there is such a run passing through an accepting state. It is not difficult to show by induction on $x$ that if $\bu,\bv \in (\Sigma^*)^x$ are such that $u_i,v_i$ have the same transition profile for all $i$, then $\bu \tc \bv$. 
  Since there are only finitely many possible transition profiles, the claim follows.  So let $\bu,\bv \in (\Sigma^*)^x$ be such that $u_i,v_i$ have the same transition profile for each $i$. If $u_1,v_1$ have the same transition profile, then they also are in the same residual class, because the same states are reachable from the initial state of $\Aa$ via $u_i$ and via $v_i$. So the claim follows for $x=1$. For the step, consider $u_{x+1},v_{x+1} \in \Sigma^*$ with the same transition profile, and show that $(\bu,u_{x+1}) \tc (\bv,v_{x+1})$. Since $u_x,v_x$ and  $u_{x+1},v_{x+1}$ have, respectively, the same transition profile, also $u_xu_{x+1},v_xv_{x+1}$ have the same transition profile, so $\bu \cdot u_{x+1} \tc \bv \cdot v_{x+1}$ by induction. This implies that  $\bu \cdot u_{x+1}w' \tc \bv \cdot v_{x+1}w'$ for all $w' \in \Sigma^*$ by Lemma~\ref{lem:tc-properties}(\ref{item:tc-rc-concat}). Thus, $(\bu,u_{x+1}w') \tc \bu$ iff $(\bv,v_{x+1}w') \tc \bv$ because also $\bu \tc \bv$ by induction. 
  Further, since all pairs $u_i,v_i$ have the same transition profiles, we get that $u_1\cdots u_x(u_{x+1}w')^\omega \in L$ iff $v_1\cdots v_x(v_{x+1}w')^\omega \in L$. This means that for every $w \in \Ls(\bu,u_{x+1})$, the same words witnessing that $(\bu,u_{x+1}w) \not\tc \bot$ also witness that $(\bv,v_{x+1}w) \not\tc \bot$, and vice versa. Hence $\Ls(\bu,u_{x+1}) = \Ls(\bv,v_{x+1})$.
\end{proof}

The following lemma shows that $\tc$ has ``finite depth'' if $L$ is $\omega$-regular, in the sense that the length of tuples that are not $\bot$ is bounded by the parity index of the language.
\begin{lemma}\label{lem:tc-finite-depth}
If $L$ is $\omega$-regular and accepted by a "DPA" with priorities in $[1,d]$, then $\bu \tc \bot$ for all $\bu \in (\Sigma^*)^{d+1}$ (and hence also for all longer tuples). 
\end{lemma}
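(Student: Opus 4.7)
The plan is to prove, by induction on $x \ge 2$, the following strengthening of the lemma:
\emph{If $(u_1,\ldots,u_x) \not\tc \bot$, then in $\Aa$ there is a cycle at the state $q_{x-1} := \delta(q_0, u_1 u_2 \cdots u_{x-1})$ whose minimum priority is at least $x$ and has parity $\equiv x \pmod 2$.}
Instantiating $x = d+1$ in this strengthening yields a cycle in $\Aa$ of minimum priority at least $d+1$, which is impossible when priorities are drawn from $[1,d]$; hence every tuple of length $d+1$ is $\tc \bot$, as required.

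The base case $x = 2$ is immediate from the definition of $\not\tc \bot$: a witness $w$ supplies a cycle $u_2w$ at $q_1$ along which $u_1(u_2w)^\omega \in L$, so its minimum priority is even and thus at least $2$. For the inductive step, the two clauses in the negation of the $\bot$-condition give (i) the merged length-$(x{-}1)$ tuple $(u_1,\ldots,u_{x-2},u_{x-1}u_x)$ is $\not\tc \bot$, and (ii) there is $w \in \Sigma^+$ such that $(u_1,\ldots,u_{x-2},u_{x-1}u_xw) \tc (u_1,\ldots,u_{x-1})$ and the infinite word $u_1 \cdots u_{x-1}(u_xw)^\omega$ lies in $L$ iff $x$ is even. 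Thus $u_xw$ labels a cycle at $q_{x-1}$ whose minimum priority $p$ has parity $\equiv x \pmod 2$; it only remains to show $p \ge x$.

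This last point is the main technical obstacle. Assume, for a contradiction, that $p < x$; since $p$ has parity $\equiv x \pmod 2$, this actually forces $p \le x-2$. The induction hypothesis applied to (i) produces a cycle at $q_{x-2}$ of minimum priority $\ge x-1$ and parity $\equiv x-1 \pmod 2$, which in turn exhibits a finite word $z$ in the safe language $\Ls((u_1,\ldots,u_{x-1}))$ whose closure yields an accepted $\omega$-word whose run uses only priorities $\ge x-1$ from $q_{x-2}$ onwards. By the congruence in (ii), this $z$ must also lie in $\Ls((u_1,\ldots,u_{x-2},u_{x-1}u_xw))$. However, any continuation $w'$ closing this latter tuple into an accepted $\omega$-word must iterate a cycle at $q_{x-2}$ containing $u_xw$ as a factor, and consequently the priority $p < x-1$ of parity $\equiv x \pmod 2$ lurking in $u_xw$ is visited infinitely often on the run; thus the minimum priority seen infinitely often is at most $p$, with the wrong parity for acceptance at this layer. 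This flips the acceptance verdict away from what $z$ witnessed on the right-hand tuple, contradicting the equality of safe languages that the congruence requires. Hence $p \ge x$, closing the induction and the proof.
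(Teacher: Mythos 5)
Your overall idea---nested cycles in the DPA whose minimal priorities grow and alternate in parity, yielding a contradiction at length $d+1$---is the same as the paper's, but your strengthened induction hypothesis is false as stated, and the device that would repair it is missing. The congruence only carries residual information: from $(u_1,\ldots,u_{x-2},u_{x-1}u_xw) \tc (u_1,\ldots,u_{x-1})$ you can conclude $u_1\cdots u_{x-1}u_xw \sim_L u_1\cdots u_{x-1}$, i.e.\ that $u_xw$ leads from $q_{x-1}$ to a \emph{language-equivalent} state, not that it labels a cycle at $q_{x-1}$ itself. Concretely, for $L$ = ``infinitely many $a$'' the tuple $(\e,a)\not\tc\bot$, yet in the DPA with transitions $q_0\xrightarrow{a:1}q_1$, $q_0\xrightarrow{b:1}q_0$, $q_1\xrightarrow{a:2}q_1$, $q_1\xrightarrow{b:1}q_1$ every cycle at $\delta(q_0,\e)=q_0$ has priority $1$, so your claim already fails for $x=2$. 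This is exactly why the paper states its invariant for \emph{every} state $q\sim q_{x-1}$ on which a nonempty power of $v_x$ loops, and why it builds $v_x=u_xv_{x+1}^nw_x$ with $n$ exceeding the number of states: the $n$-fold repetition plus pigeonhole is what manufactures actual loops at language-equivalent states, and nothing in your argument plays this role.

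The step ``$p\ge x$'' has a second, independent gap. You pass from a DPA cycle at $q_{x-2}$ with priorities $\ge x-1$ to a word $z\in\Ls((u_1,\ldots,u_{x-1}))$, and then back from membership in $\Ls((u_1,\ldots,u_{x-2},u_{x-1}u_xw))$ to claims about which priorities a run must visit. Neither direction is justified: $\Ls$ is defined purely from the congruence, and no bridge between cycles of an arbitrary DPA at a specific state and these safe languages has been (or can easily be) established. Moreover, the assertion that an accepting continuation ``must iterate a cycle at $q_{x-2}$ containing $u_xw$ as a factor'' and therefore re-visits the priority $p$ is unsound, because the priorities produced by the factor $u_xw$ depend on the state from which it is read, which the congruence does not control. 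The paper's proof avoids both problems by reasoning only about loops on powers of the explicitly constructed words $v_x$ and by a priority-counting argument: inside any loop on a power of $v_x$ one finds, by pigeonhole over the $n$ copies of $v_{x+1}$, a loop on a power of $v_{x+1}$ at a state $\sim q_x$ whose least priority has the opposite parity, forcing the outer loop's least priority to be strictly smaller and giving at least $d+2-x$ distinct priorities. To make your proof work you would need to relax ``cycle at $q_{x-1}$'' to language-equivalent states, add the pumping with $n$ repetitions, and replace the safe-language detour by such a counting argument.
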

\begin{proof}
  Let $\Aa$ be a "DPA" for $L$ and let $n$ be bigger than the number of states of $\Aa$. Assume by contradiction that there is $\bu = (u_1, \ldots, u_{d+1}) \in (\Sigma^*)^{d+1}$ with $\bu \not\tc \bot$. 
  We inductively define words $v_x \in \Sigma^*$ for $x \in \{2, \ldots, d+2\}$ and show that the loop that the run of $\Aa$ on $u_1v_2^n$ enters, must visit $d$ different priorities with the smallest one being even. This is not possible if $\Aa$ only uses priorities $1,\ldots,d$.

  We start with the definition of $v_{d+2} := \e$, which is only needed for a simple induction base.

  Assume that  $v_{x+1}$ has already been defined for $x \in \{2, \ldots, d+1\}$, and that $(u_1, \ldots, u_x) \cdot v_{x+1} \tc (u_1, \ldots, u_x)$. Since $v_{d+2} := \e$, this condition is clearly satisfied for the start of the sequence with $x = d+1$. From $(u_1, \ldots, u_x) \cdot v_{x+1} \tc (u_1, \ldots, u_x)$ we also get that $(u_1, \ldots, u_x) \cdot v_{x+1}^i \tc (u_1, \ldots, u_x)$ for all $i$ by Lemma~\ref{lem:tc-properties}(\ref{item:tc-rc-concat}).
  In particular $(u_1, \ldots, u_x) \cdot v_{x+1}^n \not\tc \bot$. Hence, there is a word $w_x \in \Sigma^+$, such that $(u_1, \ldots, u_{x-1}) \cdot u_xv_{x+1}^nw_x \tc (u_1, \ldots, u_{x-1})$ and $[u_1 \cdots u_{x-1}(u_xv_{x+1}^nw_x)^ \omega \in L \Leftrightarrow x \text{ is even}]$.
  Let $v_x := u_xv_{x+1}^nw_x$.
  Note that
  \[
  (u_1, \ldots, u_{x-1}) \cdot v_x = (u_1, \ldots, u_{x-1}) \cdot u_xv_{x+1}^nw_x \tc (u_1, \ldots, u_{x-1}),
  \]
  so $v_x$ satisfies the property that was assumed for the inductive definition.

  Our goal is to show that $v_x$ induces $(d+1)-x$ many nested loops of different parity in $\A$ whose acceptance status alternates. For this we need to read $v_x$ starting from the right type of state. Let $q_0$ be the initial state of $\Aa$, and $q_i$ be the state reached in $\Aa$ after reading $u_1 \cdots u_i$:
  \[
q_0 \xrightarrow{u_1} q_1  \xrightarrow{u_2} q_2  \xrightarrow{u_3} \cdots  \xrightarrow{u_{d+1}} q_{d+1}
\]
We now show that for each $q \sim q_{x-1}$ such that a non-empty power of $v_x$ loops on $q$, this loop contains $d+2-x$ many different priorities with the least one having the parity of $x$. So assume that $q \xrightarrow{v_x^+} q$ where the plus represents some non-empty power.
By the choice of $v_x$, we have that $u_1 \cdots u_{x-1}v_x^\omega \in L$ iff $x$ is even. Hence, $v_x^\omega$ is accepted from $q$ iff $x$ is even, which means that the least priority on the loop has the same parity as $x$.

We show that the loop contains at least $d+2-x$ many priorities by a downward induction starting with $x = d+1$. The induction base is trivial because the loop must contain at least 1 priority.

If $x < d+1$, then consider the states reached after the factors of the first $v_x = u_xv_{x+1}^nw_x$:
\[
q \xrightarrow{u_x} p_0  \xrightarrow{v_{x+1}} p_1  \xrightarrow{v_{x+1}} p_2 \cdots  \xrightarrow{v_{x+1}} p_n \xrightarrow{w_x} q'
\]
Since $q \sim q_{x-1}$ and $p_0$ is reached from $q$ via $u_x$, we get that $p_0 \sim q_{x}$. Furthermore, $(u_1, \cdots, u_x) \cdot v_{x+1}^i \tc  (u_1, \cdots, u_x)$ as already noted above. So we get that $p_i \sim q_x$ for all $i \in \{0, \ldots, n\}$. By the choice of $n$, there must be $i \not= j$ with $p_i = p_j =:p$, so we have found a state $p \sim q_x$ such that a non-empty power of $v_{x+1}$ loops on $p$. By induction, we know that on this loop at least $d+2-(x+1)$ many priorities are visited with the least one having the parity of $x+1$. So the least priority visited on the loop $q \xrightarrow{v_x^+} q$ must be strictly smaller than the minimal one on the loop $p \xrightarrow{v_{x+1}^+} q$, which gives the desired claim.

From this we immediately get that the loop that is entered on $u_1v_2^n$ contains at least $d = d+2-2$ many priorities with the least one being even.
\end{proof}

\subsection{The automaton $\Aatc$}

\AP For the definition of $\intro*\Aatc$ we assume that $L$ is $\omega$-regular. So in all statements that involve $\Aatc$, we assume that $L$ is an $\omega$-regular language (in the main results stated as theorems we mention this again explicitly, but not in all auxiliary lemmas).
Let $d$ be maximal such that there exists $\bu \in (\Sigma^*)^d$ with $\bu \not\tc \bot$. By Lemma~\ref{lem:tc-finite-depth} this is the minimal $d$ such that $L$ is accepted by a $[1,d]$-DPA. 

The $x$-th layer of $\Aatc$ consist of the equivalence classes of the "pointed" tuples of dimension $x$, the transitions corresponding to "concatenation" in the last component. The morphism from layer $x+1$ to layer $x$ is obtained by "merging" the last two components. And the initial state is as usual the class of the empty word. Formally, we define the "layered" automaton $\Aatc = (\qintc, \Ttc_1, \phitc_1, \ldots, \Ttc_{d-1}, \phitc_{d-1}, \Ttc_d)$ by
\begin{itemize}\setlength\itemsep{2mm}
\item $\Ttc_x = (\Qtc_x,\deltc_x)$ with 
\begin{itemize}\setlength\itemsep{2mm}
\item $\Qtc_x = \{\tccls{\bar{u}} \mid \bar{u} \in (\Sigma^*)^x \text{ "pointed"}\}$
\item $\deltc_x(\tccls{\bar{u}},a) =
  \begin{cases}
    \tccls{\bar{u}\cdot a} & \text{if } \bar{u}\cdot a \not\tc \bot\\
    \text{undefined} & \text{else}.
  \end{cases}$
\end{itemize}
\item $\qintc = \tccls{\e}$
\item $\phitc_x: \Qtc_{x+1} \rightarrow \Qtc_x$ with $\phitc_x(\tccls{(\bu,u_{x+1})}) = \tccls{\bu\cdot u_{x+1}}$.
\end{itemize}
Note that in the definition of $\deltc_x$ and $\phitc_x$ the result is independent of the chosen representative by Lemma~\ref{lem:tc-properties}(\ref{item:tc-rc-concat}) and (\ref{item:tc-rc-merge}). Further, $\phitc_x$ is a morphism because it does not matter whether we first "concatenate" and then "merge", or first "merge" and then "concatenate":
\[
\begin{array}{lcl}
  \phitc_x(\delta_{x+1,\tc}(\tccls{(\bar{u},u_{x+1}}),a)) &=& \phitc_x(\tccls{(\bar{u}, u_{x+1}a)}) \\
  &=& \tccls{\bar{u} \cdot u_{x+1}a} \\
  &=& \deltc_x(\tccls{\bar{u} \cdot u_{x+1}},a) \\
  &=& \deltc_x(\phitc_x(\tccls{(\bar{u}, u_{x+1})},a))
\end{array}
\]
So  $\Aatc$ is a "layered automaton".

\subsection{Correctness of $\Aatc$}
We show that $\Aatc$ is a "consistent" "layered automaton" that accepts $L$ (Theorem~\ref{thm:Aatc-correct}). We first need some auxiliary definitions and lemmas.

Our first goal is to show that the "safe language"s of tuples are covered by the "pointed" tuples. This is expressed for finite and infinite words respectively in Lemmas~\ref{lem:safe-pointed-finite} and \ref{lem:safe-pointed-infinite}. For constructing corresponding tuples, we start with the auxiliary Lemma~\ref{lem:tuple-safe-decrease} that is applied in the proof of Lemma~\ref{lem:safe-pointed-finite} (so the reader may skip Lemma~\ref{lem:tuple-safe-decrease} now and come back to it when reading the proof of Lemma~\ref{lem:safe-pointed-finite}).
\begin{lemma}\label{lem:tuple-safe-decrease}
Let $\bu,\bv \in (\Sigma^*)^x$ and $u_{x+1},v_{x+1} \in \Sigma^*$. If $\bu \tc \bv\cdot v_{x+1}$, then $\Ls(\bu,u_{x+1}) \supseteq \Ls(\bv,v_{x+1}u_{x+1})$.
\end{lemma}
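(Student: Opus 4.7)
The plan is to unfold both safe languages to the level of explicit non-$\bot$ witnesses. It suffices to treat finite $w \in \Sigma^*$, since for $w \in \Sigma^\omega$ the statement reduces to the finite case applied to all finite prefixes of $w$. So fix $w \in \Sigma^*$ with $(\bv, v_{x+1}u_{x+1}w) \not\tc \bot$. Unfolding the definition of $\bot$ for this $(x+1)$-tuple gives (a) $\bv \cdot v_{x+1}u_{x+1}w \not\tc \bot$, and (b) a witness $w' \in \Sigma^+$ with $\bv \cdot v_{x+1}u_{x+1}ww' \tc \bv$ and $v_1 \cdots v_x (v_{x+1}u_{x+1}ww')^\omega \in L \Leftrightarrow x+1$ is even. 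The goal is to produce an analogous pair witnessing $(\bu, u_{x+1}w) \not\tc \bot$.

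For the first condition, iterated application of Lemma~\ref{lem:tc-properties}(\ref{item:tc-rc-concat}) to the hypothesis $\bu \tc \bv \cdot v_{x+1}$ yields $\bu \cdot u_{x+1}w \tc \bv \cdot v_{x+1}u_{x+1}w$. Since $\tc$-equivalent tuples share the same $\bot$-status (by definition $\tc$-equivalence forces equal safe languages, and $\e$ lies in the safe language of a tuple iff the tuple is not $\bot$), condition (a) gives $\bu \cdot u_{x+1}w \not\tc \bot$. For the loop witness, I would take $w'' := w' v_{x+1} \in \Sigma^+$. Using Lemma~\ref{lem:tc-properties}(\ref{item:tc-rc-concat}) once more and combining with (b) and the hypothesis we obtain
\[
  \bu \cdot u_{x+1}ww'v_{x+1} \;\tc\; \bv \cdot v_{x+1}u_{x+1}ww'v_{x+1} \;\tc\; \bv \cdot v_{x+1} \;\tc\; \bu,
\]
so $\bu \cdot u_{x+1}ww'' \tc \bu$ as required.

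It remains to verify the parity condition $u_1 \cdots u_x (u_{x+1}ww'v_{x+1})^\omega \in L \Leftrightarrow x+1$ is even. Setting $A := v_{x+1}u_{x+1}ww'$ and $B := u_{x+1}ww'v_{x+1}$, the rotation identity $A^\omega = v_{x+1}B^\omega$ yields
\[
  v_1 \cdots v_x\,A^\omega \;=\; (v_1 \cdots v_x v_{x+1})\,B^\omega.
\]
Iterating Lemma~\ref{lem:tc-properties}(\ref{item:tc-rc-merge}) on $\bu \tc \bv \cdot v_{x+1}$ descends the equivalence to layer~$1$, giving $u_1 \cdots u_x \sim_L v_1 \cdots v_x v_{x+1}$. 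Hence $u_1 \cdots u_x B^\omega \in L \Leftrightarrow (v_1 \cdots v_x v_{x+1}) B^\omega \in L \Leftrightarrow v_1 \cdots v_x A^\omega \in L$, which by (b) is equivalent to $x+1$ being even. The main obstacle is the choice of witness: the naive guess $w'' = w'$ closes the loop back to $\bv$ rather than $\bu$, so one must append $v_{x+1}$ to bridge the equivalent tuples $\bu$ and $\bv \cdot v_{x+1}$; this in turn forces the cyclic rotation argument on the $\omega$-word, which is reconciled by descending the congruence down to $\sim_L$ on layer~$1$.
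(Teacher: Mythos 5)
Your proof is correct and follows essentially the same route as the paper's: reduce to finite $w$, take the witness $w'$ for $(\bv,v_{x+1}u_{x+1}w)\not\tc\bot$, use $w'v_{x+1}$ as the witness for $(\bu,u_{x+1}w)\not\tc\bot$, and handle the parity via the rotation $v_1\cdots v_x(v_{x+1}u_{x+1}ww')^\omega = v_1\cdots v_xv_{x+1}(u_{x+1}ww'v_{x+1})^\omega$ together with $u_1\cdots u_x \sim_L v_1\cdots v_xv_{x+1}$. Your only addition is to spell out explicitly that $\bu\cdot u_{x+1}w\not\tc\bot$ (the first disjunct in the definition of $\bot$), which the paper leaves implicit; this is a fine extra care, not a different argument.
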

\begin{proof}
  Let $w \in \Ls(\bv,v_{x+1}u_{x+1})$, that is, $(\bv,v_{x+1}u_{x+1}w) \not\tc \bot$. We show that $w \in \Ls(\bu,u_{x+1})$. It is sufficient to show this for finite words $w$ because the infinite words in the "safe language"s are determined by the finite ones.
  

  If $(\bv,v_{x+1}u_{x+1}w) \not\tc \bot$, then there is $w' \in \Sigma^+$ with
  $\bv \cdot v_{x+1}u_{x+1}ww' \tc \bv$ and $v_1 \cdots
  v_x(v_{x+1}u_{x+1}ww')^\omega \in L$ iff $x+1$ is even. 
  Observe that $\bu \tc \bv\cdot v_{x+1}$ in
  particular implies $u_1 \cdots u_x \tc v_1\cdots v_xv_{x+1}$. 
  Since $v_1 \cdots v_x(v_{x+1}u_{x+1}ww')^\omega = v_1 \cdots
  v_xv_{x+1}(u_{x+1}ww'v_{x+1})^\omega$, we get 
  $u_1 \cdots u_x(u_{x+1}ww'v_{x+1})^\omega \in L$ iff $x+1$ is even. Further 
  \[
  \bu \cdot u_{x+1}ww'v_{x+1} \tc \underbrace{\bv \cdot v_{x+1}u_{x+1}ww'}_{{}\tc \bv} v_{x+1} \tc \bv \cdot v_{x+1} \tc \bu
  \]
  So the suffix $w'v_{x+1}$ witnesses that $(\bu,u_{x+1}w) \not\tc \bot$ and hence $w \in \Ls(\bu,u_{x+1})$.
\end{proof}
The following lemma states that if we start from a "pointed" tuple $\bu$ of
length $x$, then the "safe language" of the $(x+1)$-length tuple $(\bu,\e)$ with
empty $(x+1)$-st component is covered by the "pointed" children $(\bv,v_{x+1})$ of $\tccls{\bu}$, that is, those "pointed" $(\bv,v_{x+1})$ of
length $x+1$ with $\phitc_x(\tccls{(\bv,v_{x+1})}) = \tccls{\bu}$.
Recall that $\phitc_x(\tccls{(\bv,v_{x+1})}) = \tccls{\bu}$ is equivalent to
$\bv \cdot v_{x+1} \tc \bu$. 

\begin{lemma}\label{lem:safe-pointed-finite}
 Consider $(\bu,u_{x+1})\not\tc\bot$ such that $\bu$ is "pointed".
  Then there is "pointed" $(\bv,v_{x+1})$ such that $\bv\cdot v_{x+1}=\bu$ and $(\bv,v_{x+1}u_{x+1})\not\tc\bot$.
\end{lemma}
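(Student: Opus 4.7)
We target the literal equality $\bv\cdot v_{x+1}=\bu$ by splitting the last component $u_x$ as $v_x v_{x+1}$ and setting $\bv:=(u_1,\dots,u_{x-1},v_x)$. The split is chosen so that $(\bv,v_{x+1})$ is pointed, and the witness of non-$\bot$-ness of $(\bu,u_{x+1})$ is transferred along the split to guarantee $(\bv,v_{x+1}u_{x+1})\not\tc\bot$.

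\textbf{Setup.} By hypothesis, fix $w\in\Sigma^+$ such that $\bu\cdot u_{x+1}w\tc\bu$ and $u_1\cdots u_x(u_{x+1}w)^\omega\in L$ iff $x+1$ is even. Since $\bu$ is pointed and $\bu\cdot u_{x+1}w\not\tc\bot$ (as it is $\tc$-equivalent to $\bu$), iterating Lemma~\ref{lem:pointed-properties}(\ref{item:pointed-concat}) one letter at a time shows that $\bu\cdot u_{x+1}w$ is pointed. This will be the "synchronising reservoir" used later.

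\textbf{Finding a pointed split.} We enumerate the candidate splits of $u_x$ parametrised by $|v_{x+1}|\in\{0,1,\dots,|u_x|\}$, each determining a candidate $(\bv,v_{x+1})$ with $\bv\cdot v_{x+1}=\bu$. For each candidate we check the pointedness condition: for every pointed $\bv''$ and $v''_{x+1}$ with $\bv''\cdot v''_{x+1}\tc\bv$, whenever $(\bv'',v''_{x+1}v_{x+1})\not\tc\bot$ it must equal $\tccls{(\bv,v_{x+1})}$. By finiteness of $\tc$-classes (Lemma~\ref{lem:tc-finite-index}), only finitely many classes of obstructions can arise across all splits. Using the compatibility of pointedness with right-concatenation in the last component (Lemma~\ref{lem:pointed-properties}(\ref{item:pointed-concat})) and with merging the last two components (Lemma~\ref{lem:pointed-properties}(\ref{item:pointed-merge})), the set of obstructions behaves monotonically as the split is moved. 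Combined with the fact that pointedness of $\bu=\bv\cdot v_{x+1}$ rules out certain configurations at the level-$(x+1)$ children, this forces at least one split in the enumeration to be obstruction-free, hence pointed.

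\textbf{Verification and main obstacle.} Once such a pointed split $(\bv,v_{x+1})$ is obtained, $(\bv,v_{x+1}u_{x+1})\not\tc\bot$ follows by transferring $w$: since $\bv\cdot v_{x+1}=\bu$, Lemma~\ref{lem:tc-properties}(\ref{item:tc-rc-concat}) gives $\bv\cdot v_{x+1}u_{x+1}w=\bu\cdot u_{x+1}w\tc\bu=\bv\cdot v_{x+1}$, and the $\omega$-word $u_1\cdots u_x(u_{x+1}w)^\omega$ is literally $v_1\cdots v_{x-1}(v_x v_{x+1})(u_{x+1}w)^\omega$, so the required acceptance condition is the same. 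Thus $w$ serves as a witness for $(\bv,v_{x+1}u_{x+1})\not\tc\bot$. The main obstacle is the existence claim for the pointed split: one must argue that pointedness of $\bu$ together with $(\bu,u_{x+1})\not\tc\bot$ cannot leave the entire enumeration of splits obstructed. This is the heart of the proof and is where the fine structure of pointedness interacts with the finite quotient $\tc$; in boundary situations (notably when $u_x$ is short), the argument must carefully exploit the pointed representative $\bu\cdot u_{x+1}w$ constructed in the setup to rule out obstructions uniformly along the splits.
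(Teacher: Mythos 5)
Your proof has a genuine gap, and it begins with the reading of the conclusion. You interpret $\bv\cdot v_{x+1}=\bu$ as a literal identity of tuples and therefore restrict the search to splits $u_x=v_xv_{x+1}$ of the last component of $\bu$. That search space is too small: being pointed is not invariant within a $\tc$-class (in Example~\ref{ex:congruence-aba}, $(\e,\e)$ and $(\e,bb)$ lie in the same class but only the latter is pointed), so for $\bu=(\e)$ and $u_{x+1}=\e$ the unique literal split is $(\e,\e)$, which is not pointed, and your enumeration contains no pointed candidate at all. The statement is only true --- and is only ever used, e.g.\ in Lemma~\ref{lem:leaf-characterisation} --- with $\bv\cdot v_{x+1}\tc\bu$, and the paper's proof indeed produces a tuple $\bv$ that may differ from $(u_1,\dots,u_{x-1})$ in every component.

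The second, more serious problem is that the step you yourself flag as the heart of the proof is not proved: you assert that the obstructions to pointedness ``behave monotonically as the split is moved'' and that finiteness of $\tc$ ``forces at least one split to be obstruction-free'', but no argument is given, and under the literal-split reading none exists. The paper's argument is a concrete well-founded descent that your sketch does not contain: start from $(\bv^{(0)},v^{(0)}_{x+1})=(\bu,\e)$; whenever $(\bv^{(i)},v^{(i)}_{x+1}u_{x+1})$ fails to be pointed, a witness of that failure yields a new pair $(\bv^{(i+1)},v^{(i+1)}_{x+1})$ with $\bv^{(i+1)}\cdot v^{(i+1)}_{x+1}\tc\bu$ whose safe language $\Ls(\bv^{(i+1)},v^{(i+1)}_{x+1}u_{x+1})$ is \emph{strictly} smaller, by Lemma~\ref{lem:tuple-safe-decrease}; finiteness of the number of $\tc$-classes (Lemma~\ref{lem:tc-finite-index}) then terminates the descent at a pair for which $(\bv,v_{x+1}u_{x+1})$ is pointed. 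A final massaging step (appending $wv_{x+1}$ for a witness $w$ of non-$\bot$-ness and invoking Lemma~\ref{lem:pointed-properties}) converts this into a pointed $(\bv,v'_{x+1})$ with $(\bv,v'_{x+1}u_{x+1})\not\tc\bot$. Note also that your verification paragraph transfers the witness $w$ correctly only under the literal-equality assumption; under the correct $\tc$-reading it needs the residual-equivalence argument via Lemma~\ref{lem:tc-properties}. As written, the proposal does not establish the lemma.
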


\begin{proof}
  Assume that we can show that there is are $\bv \in (\Sigma^*)^x$, $v_{x+1} \in \Sigma^*$ with $\bv \cdot v_{x+1} \tc \bu$ and $(\bv,v_{x+1}u_{x+1})$ is "pointed" (note that the lemma claims that the tuple without $u_{x+1}$ at the end has to be "pointed"). From that we can construct the desired tuple as follows. Since $(\bv,v_{x+1}u_{x+1}) \not\tc \bot$ (because it is "pointed"), there is $w \in \Sigma^+$ such that $\bv\cdot v_{x+1}u_{x+1}w \tc \bv$ and $v_1 \cdots v_x(v_{x+1}u_{x+1}w)^\omega \in L$ iff $x+1$ is even. Then also $(\bv,v_{x+1}u_{x+1}wv_{x+1}u_{x+1}) \not\tc \bot$. Further $\underbrace{\bv\cdot v_{x+1}u_{x+1}w}_{\tc \bv}v_{x+1} \tc \bv\cdot v_{x+1} \tc \bu$  by Lemma~\ref{lem:tc-properties} and $(\bv,v_{x+1}u_{x+1}wv_{x+1})$ is "pointed" as an extension of a "pointed" element by Lemma~\ref{lem:pointed-properties}. So with $v_{x+1}' = v_{x+1}u_{x+1}wv_{x+1}$ we obtain a tuple $(\bv,v_{x+1}')$ as claimed in the lemma.

  It remains to show that we can find $\bv \in (\Sigma^*)^x$, $v_{x+1} \in \Sigma^*$ with $\bv \cdot v_{x+1} \tc \bu$ and $(\bv,v_{x+1}u_{x+1})$ "pointed". The idea is quite simple. We start with $(\bu,u_{x+1})$ as current tuple. As long as the current tuple is not "pointed", there is a witness for that: a new tuple whose last component ends in $u_{x+1}$, and whose "safe language" is different from the "safe language" of the current tuple. By Lemma~\ref{lem:tuple-safe-decrease}, the "safe language" of the new tuple must be strictly included in the "safe language" of the previous tuple. This can happen only finitely often, since the "safe language" of a tuple only depends on its $\tc$-class, and there are only finitely many $\tc$-classes. We formalise this idea below.

  Starting with $i= 0$ we define a sequence $(\bv^{(i)},v_{x+1}^{(i)})$ of tuples with $\bv^{(i)} \in (\Sigma^*)^x$ and $v_{x+1}^{(i)} \in \Sigma^*$ such that
  \begin{enumerate}[label=(\arabic*)]
  \item $\bv^{(i)}$ is "pointed",
  \item $\bv^{(i)}\cdot v_{x+1}^{(i)} \tc \bu$, 
  \item $(\bv^{(i)}, v_{x+1}^{(i)}u_{x+1}) \not\tc \bot$, and 
  \item if $i > 0$, then $\Ls(\bv^{(i)}, v_{x+1}^{(i)}u_{x+1}) \subsetneq \Ls(\bv^{(i-1)}, v_{x+1}^{(i-1)}u_{x+1})$.
  \end{enumerate}
  We show that we can extend the sequence by one more tuple if $(\bv^{(i)}, v_{x+1}^{(i)}u_{x+1})$ is not "pointed". 
  Because of the last property, each such sequence must be finite because the "safe language" of a tuple only depends on its $\tc$-class, and by Lemma~\ref{lem:tc-finite-index} there are only finitely many $\tc$ classes for each tuple length (we are assuming that $L$ is regular).
  
  We start with $\bv^{(0)} = \bu$ and $v_{x+1}^{(0)} = \e$, which is easily seen to have all the properties. If $(\bv^{(i)}, v_{x+1}^{(i)}u_{x+1})$ is "pointed", then we have found the desired tuple with $\bv := \bv^{(i)}$ and $v_{x+1} := v_{x+1}^{(i)}$. Otherwise, we show how to obtain the next tuple in the sequence.

  So assume that $(\bv^{(i)}, v_{x+1}^{(i)}u_{x+1})$ is not "pointed". Since $\bv^{(i)}$ is "pointed", there must be $\bv' \in (\Sigma^*)^x$ and $v_{x+1}' \in \Sigma^*$ such that
  \begin{enumerate}[label=(\roman*)]
  \item  $\bv'$ is "pointed", 
  \item  $\bv' \cdot v_{x+1}'\tc \bv^{(i)}$,
  \item $(\bv', v_{x+1}'v_{x+1}^{(i)}u_{x+1}) \not\tc \bot$, and
  \item $(\bv', v_{x+1}'v_{x+1}^{(i)}u_{x+1}) \not\tc (\bv^{(i)}, v_{x+1}^{(i)}u_{x+1})$.
  \end{enumerate}
  If we define $\bv^{(i+1)} = \bv'$ and $v_{x+1}^{(i+1)} :=  v_{x+1}'v_{x+1}^{(i)}$, then properties (1) and (3) for the new tuple are immediate from (i) and (iii). Property (2) follows from (ii) by $\bv^{(i+1)} \cdot v_{x+1}^{(i+1)} = \bv' \cdot v_{x+1}'v_{x+1}^{(i)} \tc \bv^{(i)}v_{x+1}^{(i)} \tc \bu$, where the second last equivalence is from (ii), and the last equivalence from (2) for $i$. Property (4) follows from (iv) and Lemma~\ref{lem:tuple-safe-decrease} because $(\bv', v_{x+1}'v_{x+1}^{(i)}u_{x+1}) \not\tc (\bv^{(i)}, v_{x+1}^{(i)}u_{x+1})$ is the same as $\Ls(\bv', v_{x+1}'v_{x+1}^{(i)}u_{x+1}) \not= \Ls(\bv^{(i)}, v_{x+1}^{(i)}u_{x+1})$, and thus by Lemma~\ref{lem:tuple-safe-decrease} $\Ls(\bv', v_{x+1}'v_{x+1}^{(i)}u_{x+1}) \subsetneq \Ls(\bv^{(i)}, v_{x+1}^{(i)}u_{x+1})$. That we can apply  Lemma~\ref{lem:tuple-safe-decrease} is guaranteed by (ii).
\end{proof}

%
The following is an analogue of Lemma~\ref{lem:safe-pointed-finite} for infinite words.
\begin{lemma}\label{lem:safe-pointed-infinite}
  Let $\bu \in (\Sigma^*)^x$ be "pointed" and $w \in \Sigma^\omega$ such that $(\bu,w) \not\tc \bot$. Then there are $\bv \in (\Sigma^*)^x$, $v_{x+1} \in \Sigma^*$ with $\bv \cdot v_{x+1} \tc \bu$, $(\bv,v_{x+1})$ is "pointed" and $(\bv,v_{x+1}w) \not\tc \bot$. 
\end{lemma}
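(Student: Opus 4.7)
The plan is to bootstrap Lemma~\ref{lem:safe-pointed-finite} by applying it to every finite prefix of $w$ and then exploiting finite index of $\tc$ to extract a single \emph{pointed} representative that works uniformly. Since the safe language of a tuple is determined by the finite words it contains, the goal $(\bv,v_{x+1}w)\not\tc\bot$ unfolds into $(\bv,v_{x+1}v)\not\tc\bot$ for every finite prefix $v$ of $w$, which is the shape at which Lemma~\ref{lem:safe-pointed-finite} operates.

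Concretely, for each $n\geq 0$ let $u_n$ denote the length-$n$ prefix of $w$. From the hypothesis $(\bu,w)\not\tc\bot$ we have $(\bu,u_n)\not\tc\bot$, so Lemma~\ref{lem:safe-pointed-finite} produces pointed pairs $(\bv^{(n)},v_{x+1}^{(n)})$ with $\bv^{(n)}\cdot v_{x+1}^{(n)}\tc\bu$ and $(\bv^{(n)},v_{x+1}^{(n)}u_n)\not\tc\bot$. By Lemma~\ref{lem:tc-finite-index}, only finitely many $\tc$-classes exist on $(\Sigma^{*})^{x+1}$, so by the pigeonhole principle there is an infinite set $N\subseteq\Nat$ of indices on which all $(\bv^{(n)},v_{x+1}^{(n)})$, $n\in N$, lie in a common $\tc$-class. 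Fix any $n_0\in N$ and set $\bv:=\bv^{(n_0)}$, $v_{x+1}:=v_{x+1}^{(n_0)}$; this pair is pointed and satisfies $\bv\cdot v_{x+1}\tc\bu$.

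It remains to verify $(\bv,v_{x+1}w)\not\tc\bot$. For any finite prefix $v$ of $w$, pick $n\in N$ with $|u_n|\geq|v|$. Since $(\bv^{(n)},v_{x+1}^{(n)})\tc(\bv,v_{x+1})$, their safe languages coincide, so $u_n\in\Ls((\bv^{(n)},v_{x+1}^{(n)}))=\Ls((\bv,v_{x+1}))$, that is, $(\bv,v_{x+1}u_n)\not\tc\bot$. Iterating the contrapositive of Lemma~\ref{lem:bot-properties} (once per letter separating $v$ from $u_n$) propagates $\not\tc\bot$ to the shorter prefix $v$. Since this holds for every finite prefix of $w$, the definition of $(\bv,v_{x+1}w)\not\tc\bot$ is met.

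The only non-routine step is the pigeonhole argument: one needs a single representative $(\bv,v_{x+1})$ that handles arbitrarily long prefixes of $w$ simultaneously, and this is precisely what finite index of $\tc$ provides. I do not foresee any genuine obstacle beyond this; the rest is a direct chain through Lemmas~\ref{lem:bot-properties}, \ref{lem:tc-properties} and~\ref{lem:safe-pointed-finite}.
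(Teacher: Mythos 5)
Your proposal is correct and follows essentially the same route as the paper: apply Lemma~\ref{lem:safe-pointed-finite} to each finite prefix of $w$, use finiteness of the $\tc$-classes to find one class (the paper fixes a canonical representative per class so a single tuple recurs, you pigeonhole on classes and transfer via equality of safe languages within a class), and conclude by monotonicity of $\bot$ (Lemma~\ref{lem:bot-properties}). The explicit transfer step you spell out is exactly what the paper compresses into ``the required properties hold for all pointed tuples of a $\tc$-class or for none.''
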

\begin{proof}
We can apply Lemma~\ref{lem:safe-pointed-finite} to each finite prefix
$u^{(i)}=u_{x+1}$ of $w$, obtaining $\bv^{(i)} \in (\Sigma^*)^x$, $v_{x+1}^{(i)}
\in \Sigma^*$ with $\bv^{(i)} \cdot v_{x+1}^{(i)} \tc \bu$,
$(\bv^{(i)},v_{x+1}^{(i)})$ is "pointed" and $(\bv^{(i)},v_{x+1}^{(i)}u^{(i)})
\not\tc \bot$. We can assume that each $(\bv^{(i)},v_{x+1}^{(i)})$ is the
shortest "pointed" tuple from its $\tc$-class, because the required properties hold for all
"pointed" tuples of a $\tc$-class or for none. Hence, there are only finitely
many candidates for the $(\bv^{(i)},v_{x+1}^{(i)})$, and thus there must be one
tuple $(\bv, v_{x+1})$ that equals $(\bv^{(i)},v_{x+1}^{(i)})$ for infinitely
many prefixes $u^{(i)}$. This tuple then satisfies all the required properties.
\end{proof}
The next lemma is used in the proof of Theorem~\ref{thm:Aatc-correct} for showing that $\Aatc$ is "consistent" and accepts the language $L$.
\begin{lemma}\label{lem:Aatc-strong-acceptance}
Let $x$ be even and $\bu \in (\Sigma^*)^x$ be "pointed" such that $u_{x+1}^\omega$ is "strongly accepted" from $\tccls{\bu}$ in $\Aatc$. Then $u_1\cdots u_xu_{x+1}^\omega \in L$. The corresponding statement holds for $x$ odd, $u_{x+1}^\omega$ "strongly rejected" and $u_1\cdots u_xu_{x+1}^\omega \not\in L$.
\end{lemma}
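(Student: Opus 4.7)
The plan is to argue by contradiction. Assume $x$ is even (the odd case is completely symmetric, with the parities flipped in what follows) and $u_1 \cdots u_x u_{x+1}^\omega \notin L$. I aim to exhibit a layer-$(x+1)$ child of some state reached by the layer-$x$ run of $u_{x+1}^\omega$ from $\tccls{\bu}$, whose $(x+1)$-safe language contains the corresponding suffix---this will directly contradict strong acceptance of $u_{x+1}^\omega$ by $\tccls{\bu}$.

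First, I analyse the layer-$x$ run. Set $\bu_k := \bu \cdot u_{x+1}^k$. Since $u_{x+1}^\omega \in \Lsafex(\tccls{\bu})$, each $\bu_k$ is $\not\tc \bot$; iterating Lemma~\ref{lem:pointed-properties}(\ref{item:pointed-concat}) starting from the "pointed" $\bu$, each $\bu_k$ is "pointed" as well. Since $L$ is $\omega$-regular, $\tc$ has finite index on length-$x$ tuples by Lemma~\ref{lem:tc-finite-index}, so there exist $i < j$ with $\bu_i \tc \bu_j$; set $p := j - i$ and apply Lemma~\ref{lem:tc-properties}(\ref{item:tc-rc-concat}) iteratively to conclude $\bu_{i + mp} \tc \bu_i$ for every $m \ge 0$.

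Next, I show that $(\bu_i, u_{x+1}^k) \not\tc \bot$ for every $k \ge 1$. The first clause of the definition of $\bot$ fails because $\bu_i \cdot u_{x+1}^k = \bu_{i+k} \not\tc \bot$, so I only need to refute the second clause: produce a witness $w \in \Sigma^+$ with $\bu_i \cdot u_{x+1}^k w \tc \bu_i$ such that $u_1 \cdots u_x u_{x+1}^i (u_{x+1}^k w)^\omega \notin L$ (since $x$ is even, the condition ``$\in L$ iff $x$ even'' demands the word to lie in $L$, hence a word outside $L$ refutes it). Choosing $m$ large enough so that $mp > k$ and setting $w := u_{x+1}^{mp-k}$, the first requirement becomes $\bu_i \cdot u_{x+1}^{mp} = \bu_{i + mp} \tc \bu_i$, while $(u_{x+1}^k w)^\omega = u_{x+1}^\omega$, giving $u_1 \cdots u_x u_{x+1}^i (u_{x+1}^k w)^\omega = u_1 \cdots u_x u_{x+1}^\omega \notin L$ by our standing assumption. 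Consequently $u_{x+1}^\omega \in \Ls((\bu_i,\e))$.

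Finally, apply Lemma~\ref{lem:safe-pointed-infinite} to the "pointed" tuple $\bu_i$ and the infinite word $u_{x+1}^\omega$: this yields a "pointed" $(\bar v, v_{x+1})$ with $\bar v \cdot v_{x+1} \tc \bu_i$ and $(\bar v, v_{x+1} u_{x+1}^\omega) \not\tc \bot$. In $\Aatc$, the state $\tccls{(\bar v, v_{x+1})}$ lies at layer $x{+}1$ with ancestor $\tccls{\bar v \cdot v_{x+1}} = \tccls{\bu_i}$ at layer $x$, and the condition $(\bar v, v_{x+1} u_{x+1}^\omega) \not\tc \bot$ translates to $u_{x+1}^\omega \in \Lsafexx{x+1}(\tccls{(\bar v, v_{x+1})})$. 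Since $\tccls{\bu} \runx{u_{x+1}^i} \tccls{\bu_i}$, the decomposition $u_{x+1}^\omega = u_{x+1}^i \cdot u_{x+1}^\omega$ together with this child directly contradicts the strong acceptance of $u_{x+1}^\omega$ from $\tccls{\bu}$, completing the argument. The main obstacle will be the index bookkeeping in the second step: choosing $w$ so that it simultaneously closes the layer-$x$ loop back to $\tccls{\bu_i}$ and reconstitutes exactly $u_{x+1}^\omega$ inside the $\omega$-iteration $(u_{x+1}^k w)^\omega$.
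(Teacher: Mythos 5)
Your proof is correct and is essentially the paper's argument read contrapositively: the paper shows directly that $(\bu\cdot u_{x+1}^k,u_{x+1}^\omega)\tc\bot$ (otherwise Lemma~\ref{lem:safe-pointed-infinite} would yield a pointed layer-$(x+1)$ witness contradicting strong acceptance) and then extracts membership in $L$ from the second clause of the definition of $\bot$ using the same power-of-$u_{x+1}$ loop coming from finite index, whereas you assume non-membership, use it as the witness refuting that clause to get $(\bu_i,u_{x+1}^k)\not\tc\bot$, and then invoke Lemma~\ref{lem:safe-pointed-infinite} to contradict strong acceptance. The only step you leave implicit is that non-$\bot$ness for all full powers $u_{x+1}^k$ extends to all finite prefixes of $u_{x+1}^\omega$ (needed for $(\bu_i,u_{x+1}^\omega)\not\tc\bot$), which follows immediately from Lemma~\ref{lem:bot-properties}.
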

\begin{proof}
  We show the case $x$ even. Since we never refer to layer $x-1$, the proof for $x$ odd is symmetric.

  Since $u_{x+1}^\omega$ is "strongly accepted" from $\tccls{\bu}$, we have $\bu \cdot u_{x+1}^\omega \not\tc \bot$. Let $k \ge 0,\ell \ge 1$ be such that $\bu \cdot u_{x+1}^k \tc \bu \cdot u_{x+1}^{k+\ell}$. Assume that $(\bu \cdot u_{x+1}^k,u_{x+1}^\omega) \not\tc \bot$. Then, by Lemma~\ref{lem:safe-pointed-infinite}, there would be $\bv \in (\Sigma^*)^x$, $v_{x+1} \in \Sigma^*$ with $\bv \cdot v_{x+1} \tc \bu\cdot u_{x+1}^k$, $(\bv,v_{x+1})$ "pointed" and $(\bv,v_{x+1}u_{x+1}^\omega) \not\tc \bot$. This would witness that $u_{x+1}^\omega$ is not "strongly accepted" from $\tccls{\bu}$, as illustrated in Figure~\ref{fig:Aatc-strong-acc-proof}. Hence, we must have $(\bu \cdot u_{x+1}^k,u_{x+1}^\omega) \tc \bot$, which means that $(\bu \cdot u_{x+1}^k,u_{x+1}^m) \tc \bot$ for some $m$.   By iterated application of $\bu \cdot u_{x+1}^k \tc \bu \cdot u_{x+1}^{k+\ell}$, we obtain $\bu \cdot u_{x+1}^{k+\ell \cdot m} \tc \bu \cdot u_{x+1}^k$. By definition of $(\bu \cdot u_{x+1}^k,u_{x+1}^m) \tc \bot$, this implies that $u_1 \cdots u_xu_{x+1}^\omega = u_1 \cdots u_xu_{x+1}^k(u_{x+1}^{\ell \cdot m})^\omega \in L$ because $x$ is even. 
\end{proof}
\begin{figure}
  \begin{center}
    \includegraphics[]{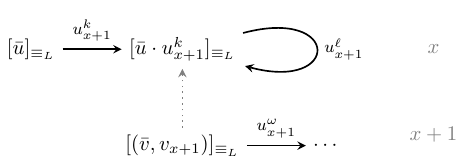}
  \end{center}
  \caption{Illustration of the contradiction in the proof of Lemma~\ref{lem:Aatc-strong-acceptance} under the assumption that $(\bu \cdot v^k,v^\omega) \not\tc \bot$.\label{fig:Aatc-strong-acc-proof}}
\end{figure}

\begin{theorem}\label{thm:Aatc-correct}
  Let $L$ be an $\omega$-regular language. Then $\Aatc$ is a "consistent" 
  "layered automaton" with $L(\Aatc) = L$.
\end{theorem}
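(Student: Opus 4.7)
My plan splits into two parts: verifying that $\Aatc$ is consistent, and showing $L(\Aatc) = L$. Both follow from a single key claim extending Lemma~\ref{lem:Aatc-strong-acceptance} from ultimately periodic words to arbitrary infinite ones. The structural requirements of a layered automaton are immediate: $\Ttc_1$ is complete because $\deltc_1(\tccls{(u_1)}, a) = \tccls{(u_1 a)}$ is always defined, every layer-$1$ state $\tccls{(u_1)}$ is reached from $\qintc = \tccls{(\e)}$ by reading $u_1$, and well-definedness and the morphism property of $\phitc_x$ have already been verified in the excerpt preceding this theorem.

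The key claim to establish is: for every pointed $\bu \in (\Sigma^*)^x$ and every $w \in \Sigma^\omega$, if $w$ is strongly accepted (resp.\ strongly rejected) by $\tccls{\bu}$ and $x$ is even (resp.\ odd), then $u_1 \cdots u_x w \in L$ (resp.\ $u_1 \cdots u_x w \notin L$). To prove it, I would use Lemma~\ref{lem:tc-finite-index} to find a layer-$x$ class $\tccls{\bv}$ visited infinitely often on the safe run of $w$, giving a decomposition $w = u' w_1 w_2 \cdots$ with $\bu \cdot u' \tc \bv$ and each $w_j \in \Sigma^+$ satisfying $\bv \cdot w_j \tc \bv$. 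Strong acceptance, combined with the contrapositive of Lemma~\ref{lem:safe-pointed-infinite} applied at the pointed tuples $\bu \cdot u' w_1 \cdots w_k$ (which are pointed by Lemma~\ref{lem:pointed-properties}), forces $(\bv, w_{k+1} \cdots w_m) \tc \bot$ for all $k < m$. Unpacking the definition of $\bot$ with the witness $z = w_{m+1}$ (so $\bv \cdot w_{k+1} \cdots w_{m+1} \tc \bv$) delivers $v_1 \cdots v_x (w_{k+1} \cdots w_{m+1})^\omega \in L$ for arbitrarily large $k,m$.

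The main obstacle is the passage from this family of ultimately periodic approximations to $v_1 \cdots v_x w_1 w_2 \cdots \in L$. I would handle this via the finite syntactic $\omega$-semigroup of $L$: Ramsey's theorem refines the factorization $w_1 w_2 \cdots$ so that all factors have the same semigroup image, which can be chosen idempotent $s$; then both the infinite concatenation and the ultimately periodic approximations share an $\omega$-semigroup image of the form $(\text{prefix}) \cdot s^\omega$, and therefore the same $L$-membership. Once the claim is established, $\bu \cdot u' \tc \bv$ gives $u_1 \cdots u_x u' \sim_L v_1 \cdots v_x$, transferring the conclusion to $u_1 \cdots u_x w$. Consistency is then immediate: two states with a common layer-$1$ ancestor have $\sim_L$-equivalent prefix words, so the claim forbids one strongly accepting and the other strongly rejecting the same $w$. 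Finally, $L(\Aatc) = L$ follows from Lemma~\ref{lem:layered-acceptance-strong}: any $w$ accepted by $\Aatc$ decomposes as $uw'$ with $w'$ strongly accepted from some $\tccls{\bu}$ on an even layer and $u \sim_L u_1 \cdots u_x$ (traced through the morphism and the layer-$1$ transitions), whence $w = uw' \in L$; the rejection direction is symmetric.
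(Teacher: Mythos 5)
Your proposal is correct in substance, but it takes a genuinely different route from the paper. The paper keeps Lemma~\ref{lem:Aatc-strong-acceptance} as stated (purely periodic words $u_{x+1}^\omega$ only) and reduces everything to ultimately periodic words: consistency is proved by contradiction, observing that the set of words strongly accepted from one class and strongly rejected from another (with $\sim_L$-equivalent prefixes) is $\omega$-regular and therefore, if nonempty, contains an ultimately periodic witness $uv^\omega$, to which the lemma applies after shifting by $u$; and $L(\Aatc)=L$ is then obtained by comparing the two $\omega$-regular languages on ultimately periodic words only and invoking the standard fact that agreement on these implies equality. You instead strengthen Lemma~\ref{lem:Aatc-strong-acceptance} to arbitrary infinite words: the contrapositive of Lemma~\ref{lem:safe-pointed-infinite} at each cut point of a loop factorisation through a class $\tccls{\bv}$ (recurring by Lemma~\ref{lem:tc-finite-index}) yields $\bot$-prefixes, unpacking $\bot$ yields ultimately periodic approximations $v_1\cdots v_x(w_{k+1}\cdots w_{m+1})^\omega\in L$, and a Ramsey factorisation together with the finite syntactic $\omega$-semigroup transfers membership to the actual word. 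Consistency and $L(\Aatc)=L$ then follow directly for all words, with no detour through ultimately periodic witnesses. Both routes ultimately rest on $\omega$-regularity of $L$ — the paper via regularity of the conflict sets and of $L(\Aatc)$, you via finiteness of the syntactic $\omega$-semigroup — and your version buys a stronger intermediate statement at the cost of the Ramsey/$\omega$-semigroup machinery, while the paper's is shorter given the lemmas already in place.

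Two small points to tighten. First, the assertion that strong acceptance forces $(\bv, w_{k+1}\cdots w_m)\tc\bot$ \emph{for all} $k<m$ is too strong: the contrapositive of Lemma~\ref{lem:safe-pointed-infinite} only gives, for each $k$, \emph{some} finite prefix of the remaining tail that is $\bot$, hence $(\bv,w_{k+1}\cdots w_m)\tc\bot$ for all sufficiently large $m$ (using Lemma~\ref{lem:bot-properties}); this weaker form is all your argument needs, and the Ramsey regrouping must be compatible with it (grouped blocks still loop on $\tccls{\bv}$, which they do by Lemma~\ref{lem:tc-properties}). Second, passing from $w\in L(\Aatc)=L(\sem{\Aatc})$ to a strongly accepted (or, for the converse, strongly rejected) suffix is Corollary~\ref{cor:equivalences-acceptance} rather than Lemma~\ref{lem:layered-acceptance-strong} alone, and this corollary requires consistency; so the order in which you argue — key claim, then consistency, then language equality — is not optional but essential, and you do respect it.
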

\begin{proof}
  Assume that $\Aatc$ is not "uniformly semantically deterministic" and thus not "consistent" by Corollary~\ref{cor:consistent-iff-unifSD}. Then there is an even $x$ and an odd $y$ and "pointed" tuples $\bu \in (\Sigma^*)^x$, $\bv \in (\Sigma^*)^y$ with $u_1 \cdots u_x \sim_L v_1 \cdots v_y$ and such that some $w \in  \Sigma^\omega$ is "strongly accepted" from $\tccls{\bu}$ and "strongly rejected" from $\tccls{\bv}$. Clearly, the set of all these $w$ is $\omega$-regular, so there is also an ultimately periodic $w = uv^\omega$ with this property. Then $v^\omega$ is "strongly accepted" from $\tccls{\bu \cdot u}$ and "strongly rejected" from  $\tccls{\bv \cdot u}$. By Lemma~\ref{lem:Aatc-strong-acceptance} we get that $u_1 \cdots u_xuv^\omega \in L$ and $v_1 \cdots v_yuv^\omega \notin L$, which contradicts $u_1 \cdots u_x \sim_L v_1 \cdots v_y$. We conclude that $\Aatc$ is "consistent". 

  For showing that $L(\Aatc) = L$, let $uv^\omega$ be an ultimately periodic word. By Corollary~\ref{cor:equivalences-acceptance}, $uv^\omega$ contains a suffix that is "strongly accepted" or "strongly rejected" by some state in  $\Aatc$, and not both because we have already shown that $\Aatc$ is "consistent". Then $v^\omega$ is "strongly accepted" or "strongly rejected" from some $\tccls{\bu}$ for a "pointed" tuple $\bu = (u_1, \ldots, u_x) \in \Sigma^x$ such that $(u_1 \cdots u_x)$ is reachable on layer~1 from $(\e)$ by a prefix $uv^k$ of $uv^\omega$. Since all transitions in $\Aatc$ respect the $\sim_L$ class, we get that $u_1 \cdots u_x \sim_L uv^k$. By Lemma~\ref{lem:Aatc-strong-acceptance} we get that $uv^kv^\omega \in L$ iff $x$ is even. Hence, $L(\Aatc)$ and $L$ contain the same ultimately periodic words and are thus equal (this follows from the closure properties of $\omega$-regular languages, see e.g.\ \cite[Fact~1]{CalbrixNP93}).
\end{proof}

\subsection{Minimality of $\Aatc$}
We show that $\Aatc$ is "safe minimal", "normal", and "centralised" (Lemma~\ref{lem:Aatc-minimal-properties}) and then use Theorem~\ref{thm:concise-are-unique} to conclude that $\Aatc$ is the minimal "layered automaton" for $L$.

We start by showing that $\Aatc$ has "central sequences". The proof is split into three lemmas: Lemma~\ref{lem:central-from-bot-child} shows how to obtain a "central sequence" for each "pointed" class that has a child that is $\bot$. Lemma~\ref{lem:leaf-characterisation} shows that all children of leafs of $\Aatc$ are $\bot$, and in the proof of Lemma~\ref{lem:Aatc-central-sequences} we show that Lemma~\ref{lem:central-from-bot-child} can also be applied to the inner states of $\Aatc$.

\begin{lemma}\label{lem:central-from-bot-child}
Let $x\ge 2$, $\bu \in (\Sigma^*)^{x-1}$ and $u_x,u_{x+1} \in \Sigma^*$ such that $(\bu,u_x)$ is pointed, $(\bu,u_x,u_{x+1}) \tc \bot$ and $(\bu,u_xu_{x+1}) \not\tc \bot$. Then $\tccls{(\bu,u_xu_{x+1})}$ has a "central sequence".
\end{lemma}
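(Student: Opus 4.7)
The plan is to construct an explicit central sequence $z \in \Sigma^+$ for $p := \tccls{(\bu,u_xu_{x+1})}$ and verify the three defining conditions in turn, leveraging the hypothesis $(\bu,u_x,u_{x+1}) \tc \bot$ crucially for the third.

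\textbf{Preliminaries.} Since $(\bu,u_x)$ is pointed and $(\bu,u_xu_{x+1}) \not\tc \bot$, repeated application of Lemma~\ref{lem:pointed-properties}(\ref{item:pointed-concat}) to the letters of $u_{x+1}$ (using Lemma~\ref{lem:bot-properties} to check that no intermediate tuple becomes $\bot$) yields that $(\bu,u_xu_{x+1})$ is pointed, so $p$ is a state of $\Aatc$. Unpacking the second clause of the definition of $\bot$ for $(\bu,u_xu_{x+1}) \not\tc \bot$ gives a word $w \in \Sigma^+$ such that $\bu \cdot u_xu_{x+1}w \tc \bu$ on layer $x-1$ and $u_1\cdots u_{x-1}(u_xu_{x+1}w)^\omega \in L \iff x$ is even. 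This $w$ is the engine of the construction, and the candidate I propose is $z := w\,u_xu_{x+1}$.

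\textbf{Loop and layer-$x$ siblings.} Reading $w$ from $p$ reaches $(\bu,u_xu_{x+1}w)$, whose layer-$(x{-}1)$ parent is $\bu$ by choice of $w$. This tuple is pointed (by Lemma~\ref{lem:pointed-properties}(\ref{item:pointed-concat})) and not $\bot$ ($w$ itself serves as witness, since $u_xu_{x+1}w$ squared still loops on $\bu$), so it is a pointed layer-$x$ child of $\bu$. Applying the pointedness of $(\bu,u_xu_{x+1})$ then forces the subsequent read of $u_xu_{x+1}$ to land at $\bot$ or $p$; a direct witness check rules out $\bot$, giving $p \runx{z} p$. The same argument works for an arbitrary pointed sibling $(\bv,v_x)$ of $p$ (so $\bv\cdot v_x \tc \bu\cdot u_xu_{x+1}$): the prefix $w$ brings it to $(\bv,v_xw)$, which is either $\bot$ (and then stays $\bot$ by Lemma~\ref{lem:bot-properties}) or a pointed layer-$x$ child of $\bu$ to which pointedness of $(\bu,u_xu_{x+1})$ applies, yielding the sibling condition.

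\textbf{Killing layer-$(x{+}1)$ siblings.} This is the technical core. Fix a pointed layer-$(x{+}1)$ tuple $(\bv,v_x,v_{x+1})$ with $\bv\cdot v_xv_{x+1} \tc \bu \cdot u_xu_{x+1}$, and show $(\bv,v_x,v_{x+1}z) \tc \bot$. If the merged tuple $(\bv,v_xv_{x+1}z)$ is already $\bot$ on layer $x$, the first clause of the definition of $\bot$ at layer $x+1$ suffices. Otherwise, the layer-$x$ sibling step applied to the pointed sibling $(\bv,v_xv_{x+1})$ of $p$ gives $(\bv,v_xv_{x+1}z) \tc p$, and we must verify the second clause: for every $w' \in \Sigma^+$ with $(\bv,v_x)\cdot v_{x+1}zw' \tc (\bv,v_x)$ we need $v_1\cdots v_x(v_{x+1}zw')^\omega \in L \iff x+1$ is even. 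Rewriting the $\omega$-word with the shift $v_1\cdots v_x(v_{x+1}zw')^\omega = v_1\cdots v_xv_{x+1}(wu_xu_{x+1}w'v_{x+1})^\omega$ and using $v_1\cdots v_xv_{x+1} \sim_L u_1\cdots u_xu_{x+1}$ reduces this to showing the analogous parity for $u_1\cdots u_x(u_{x+1}w'')^\omega$ with $w'' := w\,u_xu_{x+1}w'v_{x+1}$; chasing the equivalences $\bu\cdot u_xu_{x+1}w \tc \bu$ and $\bv\cdot v_xv_{x+1}zw' \tc \bv\cdot v_x$ through Lemma~\ref{lem:tc-properties} then shows that $w''$ triggers the premise of the second clause of $(\bu,u_x,u_{x+1}) \tc \bot$, which delivers the required parity.

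\textbf{Main obstacle.} The real subtlety is the last step: the naive $w''$ produced by the substitution above lands in the $\tc$-class of $\bu\cdot u_xu_{x+1}$ at layer $x-1$ rather than in $\bu\cdot u_x$, so the loop condition $(\bu,u_xu_{x+1}w'') \tc (\bu,u_x)$ required by the second clause of the hypothesis does not come for free. Overcoming this requires either refining $z$ (for instance by inserting additional return-words or iterating $wu_xu_{x+1}$ enough times to reach a $\tc$-idempotent, which exists by Lemma~\ref{lem:tc-finite-index}) or reformulating the appeal to $(\bu,u_x,u_{x+1}) \tc \bot$ so that only the weaker alignment actually needed is invoked. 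Each individual equivalence is a direct application of Lemmas~\ref{lem:tc-properties} and~\ref{lem:bot-properties}, but orchestrating them into the precise algebraic identity that triggers the hypothesis is the crux of the proof.
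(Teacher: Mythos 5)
Your choice of $z = wu_xu_{x+1}$ and your handling of the loop and of the layer-$x$ siblings coincide with the paper's proof, but the third requirement of a central sequence --- that $z$ sends every layer-$(x+1)$ state below $\tccls{\bu\cdot u_xu_{x+1}}$ to $\bot$ --- is precisely where your argument is incomplete, as you yourself concede in the final paragraph. Your plan is to verify the second clause of the definition of $\bot$ for $(\bv,v_x,v_{x+1}z)$ by transferring each looping extension $w'$ to the second clause of the hypothesis $(\bu,u_x,u_{x+1})\tc\bot$; but that clause can only be invoked for words $w''$ satisfying $(\bu,u_x)\cdot u_{x+1}w''\tc(\bu,u_x)$, i.e.\ looping on the class of $(\bu,u_x)$, whereas everything you have established about reading $z$ from siblings only returns runs to the class of $(\bu,u_xu_{x+1})$. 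Nothing in your argument brings the relevant layer-$x$ tuple back to $\tccls{(\bu,u_x)}$, and passing to an idempotent power of $z$ does not address this, since the problem is the mismatch between the two distinct classes $\tccls{(\bu,u_x)}$ and $\tccls{(\bu,u_xu_{x+1})}$, not a lack of idempotency. (There is also a parity slip: to establish $\bot$ for a tuple whose prefix has length $x$, the second clause requires membership in $L$ iff $x$ is even, not $x+1$.)

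The missing ingredient is the pointedness of $(\bu,u_x)$ --- which your third step never uses, although it is an explicit hypothesis --- combined with Lemma~\ref{lem:tuple-safe-decrease}. The paper argues: for a pointed $(\bv,v_x,v_{x+1})$ with $\bv\cdot v_xv_{x+1}\tc\bu\cdot u_xu_{x+1}$ one gets $\bv\cdot v_xv_{x+1}w\tc\bu$, so pointedness of $(\bu,u_x)$ forces $(\bv,v_xv_{x+1}wu_x)$ to be either $\tc\bot$ (then appending $u_{x+1}$ stays $\bot$ by Lemma~\ref{lem:bot-properties}, and the first clause of the definition of $\bot$ kills the grandchild) or $\tc(\bu,u_x)$; in the latter case Lemma~\ref{lem:tuple-safe-decrease} yields $\Ls((\bv,v_x,v_{x+1}z))\subseteq\Ls((\bu,u_x,u_{x+1}))=\emptyset$, where the emptiness is exactly the hypothesis $(\bu,u_x,u_{x+1})\tc\bot$ read through its safe language. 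This bypasses all parity bookkeeping and the realignment of ultimately periodic words that blocks your reduction. Your route can in fact be repaired along the same lines (use pointedness of $(\bu,u_x)$ after the prefix $wu_x$ of $z$ to pin down the layer-$x$ class, and only then chase the loop condition of the hypothesis), but as written the crucial step is a gap rather than a proof.
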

\begin{proof}
  Since $(\bu,u_xu_{x+1}) \not\tc \bot$, there is $w$ with $\bu \cdot u_xu_{x+1}w \tc \bu$ and $u_1 \cdots u_{x-1}(u_xu_{x+1}w)^\omega \in L$ iff $x$ even. We show that $z := wu_xu_{x+1}$ is a "central sequence" for $\tccls{(\bu,u_xu_{x+1})}$.
  
  First, every child of $\tccls{\bu \cdot u_xu_{x+1}}$ is taken to $\tccls{(\bu,u_xu_{x+1})}$ or $\bot$ by $z$, which follows from $(\bu,u_xu_{x+1})$ being pointed: Let $(\bv,v_x)$ be "pointed" such that $\bv \cdot v_x \tc \bu\cdot u_xu_{x+1}$. Then by Lemma~\ref{lem:tc-properties}, $\bv  \cdot v_xw \tc \bu\cdot u_xu_{x+1}w \tc \bu$ and thus
  \[
  (\bv,v_xwu_xu_{x+1}) \tc \{(\bu,u_xu_{x+1}),\bot\}.
  \]
  Second, by choice of $w$, $(\bu,u_xu_{x+1}wu_xu_{x+1}) \not \tc \bot$, and hence $z$ loops on $\tccls{(\bu,u_xu_{x+1})}$.

  Third, every grand child of $\tccls{\bu \cdot u_xu_{x+1}}$ is taken to $\bot$ by $z$: Let  $(\bv,v_x,v_{x+1})$ be "pointed" such that $\bv \cdot v_xv_{x+1} \tc \bu\cdot u_xu_{x+1}$. Then again by Lemma~\ref{lem:tc-properties}, $\bv  \cdot v_xv_{x+1}w \tc \bu\cdot u_xu_{x+1}w \tc \bu$. Since $(\bu,u_x)$ is pointed, we get $(\bv,v_xv_{x+1}wu_x) \tc \{(\bu,u_x),\bot\}$. If $(\bv,v_xv_{x+1}wu_x) \tc \bot$, then clearly $(\bv,v_xv_{x+1}z) = (\bv,v_xv_{x+1}wu_xu_{x+1}) \tc \bot$ by Lemma~\ref{lem:bot-properties}. If  $(\bv,v_xv_{x+1}wu_x) \tc (\bu,u_x)$, then by Lemma~\ref{lem:tuple-safe-decrease}
  \[
  \Ls((\bv,v_x,v_{x+1}z)) = \Ls((\bv,v_x,v_{x+1}wu_xu_{x+1})) \subseteq \Ls((\bu,u_x,u_{x+1})) = \emptyset.
  \]
This means that $(\bv,v_x,v_{x+1}z) \tc \bot$, and thus $z$ satisfies all properties of a "central sequence" for $\tccls{(\bu,u_xu_{x+1})}$.
\end{proof}

We now give that characterisation of leaf states by showing that all tuples that extend a "pointed" tuple in a leaf class by another component are $\bot$. The proof is a simple instantiation of Lemma~\ref{lem:safe-pointed-finite}.
\begin{lemma}\label{lem:leaf-characterisation}
  Let $x \ge 1$ and $\bu \in (\Sigma^*)^x$ be "pointed". Then $\tccls{\bu}$ is a leaf of $\Aatc$ iff $(\bu,\e) \tc \bot$.
\end{lemma}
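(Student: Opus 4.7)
The plan is to prove both directions directly, using Lemma~\ref{lem:safe-pointed-finite} and Lemma~\ref{lem:tuple-safe-decrease} as the main tools.

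First, I would unpack what ``$\tccls{\bu}$ is a leaf'' means in $\Aatc$: it says that no "pointed" tuple $(\bv,v_{x+1})\in(\Sigma^*)^{x+1}$ satisfies $\phitc_x(\tccls{(\bv,v_{x+1})}) = \tccls{\bu}$, i.e.\ that no such tuple has $\bv\cdot v_{x+1} \tc \bu$. Moreover, being a state of $\Aatc$ requires $(\bv,v_{x+1})\not\tc\bot$.

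For the direction $(\bu,\e)\tc\bot \Rightarrow \tccls{\bu}$ is a leaf, I would argue by contrapositive. Suppose $\tccls{\bu}$ has some child $\tccls{(\bv,v_{x+1})}$ in $\Aatc$. Then $(\bv,v_{x+1})$ is "pointed", so in particular $(\bv,v_{x+1})\not\tc\bot$, and $\bv\cdot v_{x+1}\tc\bu$. Applying Lemma~\ref{lem:tuple-safe-decrease} with $u_{x+1}=\e$ (and swapping the roles, since $\bu\tc\bv\cdot v_{x+1}$) yields $\Ls(\bu,\e) \supseteq \Ls(\bv,v_{x+1})$. Since $(\bv,v_{x+1})\not\tc\bot$, we have $\e\in\Ls(\bv,v_{x+1})$, hence $\e\in\Ls(\bu,\e)$, which by definition gives $(\bu,\e)\not\tc\bot$, a contradiction.

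For the converse $\tccls{\bu}$ is a leaf $\Rightarrow (\bu,\e)\tc\bot$, again by contrapositive: assume $(\bu,\e)\not\tc\bot$. Since $\bu$ is "pointed" by hypothesis, I apply Lemma~\ref{lem:safe-pointed-finite} to the tuple $(\bu,\e)$ (taking $u_{x+1}=\e$). This produces a "pointed" tuple $(\bv,v_{x+1})$ with $\bv\cdot v_{x+1}\tc\bu$ and $(\bv,v_{x+1}\cdot\e)=(\bv,v_{x+1})\not\tc\bot$. Thus $\tccls{(\bv,v_{x+1})}$ is a state of $\Aatc$ and $\phitc_x(\tccls{(\bv,v_{x+1})})=\tccls{\bu}$, showing $\tccls{\bu}$ is not a leaf.

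The proof is essentially a direct bookkeeping exercise; the nontrivial work is already done in Lemmas~\ref{lem:safe-pointed-finite} and~\ref{lem:tuple-safe-decrease}. The only minor care point is to remember that being a state of $\Aatc$ requires the representative to be "pointed" (hence not $\bot$), so one should not confuse ``$\tccls{\bu}$ has no pointed child in $\Aatc$'' with the purely syntactic condition ``no $\tc$-successor of $\tccls{\bu}$ in the tuple algebra,'' but both lemmas already manipulate "pointed" tuples so this causes no real obstacle.
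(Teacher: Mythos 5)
Your proof is correct and follows essentially the same route as the paper: one direction is Lemma~\ref{lem:safe-pointed-finite} instantiated with $u_{x+1}=\e$, the other is Lemma~\ref{lem:tuple-safe-decrease} with $u_{x+1}=\e$ (the paper phrases the latter via nonemptiness of $\Ls(\bu,\e)$, you via $\e$ belonging to it, which is the same observation). No gaps.
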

\begin{proof}
  Let $\bu \in (\Sigma^*)^x$ be "pointed". If $(\bu,\e) \not\tc \bot$, then we apply Lemma~\ref{lem:safe-pointed-finite} with $u_{x+1}=\e$. We obtain $\bv \in (\Sigma^*)^x$, $v_{x+1} \in \Sigma^*$ such that $(\bv,v_{x+1})$ is "pointed" and  $\bv \cdot v_{x+1} \tc \bu$. So $\tccls{(\bv,v_{x+1})}$ is a state in $\Ttc_{x+1}$ with $\phitc_x(\tccls{(\bv,v_{x+1})}) = \tccls{\bv \cdot v_{x+1}} = \tccls{\bu}$. Hence $\tccls{\bu}$ is not a leaf.

  For the other direction, assume that $\tccls{\bu}$ is not a leaf and let $\bv \in (\Sigma^*)^x$, $v_{x+1} \in \Sigma^*$ such that $(\bv,v_{x+1})$ is "pointed" and  $\bv \cdot v_{x+1} \tc \bu$, that is, $\phitc_x(\tccls{(\bv,v_{x+1})}) = \tccls{\bv \cdot v_{x+1}} = \tccls{\bu}$. By Lemma~\ref{lem:tuple-safe-decrease} with $u_{x+1} = \e$ we get that $\Ls(\bv,v_{x+1}) \subseteq \Ls(\bu,\e)$. Since $\Ls(\bv,v_{x+1}) \not= \emptyset$, we also have $\Ls(\bu,\e) \not= \emptyset$, which means that $(\bu,\e) \not\tc \bot$.
\end{proof}

\begin{lemma}\label{lem:Aatc-central-sequences}
In $\Aatc$ all states on layers $x \ge 2$ have a "central sequence".
\end{lemma}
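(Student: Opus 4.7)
The argument proceeds by case analysis on whether the state $\tccls{\bw}$ is a leaf of $\Aatc$, using the characterisation given by Lemma~\ref{lem:leaf-characterisation}. In both cases the goal is to decompose a pointed representative of $\tccls{\bw}$ as $(\bu, u_x u_{x+1})$ satisfying the hypotheses of Lemma~\ref{lem:central-from-bot-child}: that $(\bu, u_x)$ is pointed, $(\bu, u_x, u_{x+1}) \tc \bot$, and $(\bu, u_x u_{x+1}) \not\tc \bot$.

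If $\tccls{\bw}$ is a leaf, pick a pointed representative $\bw = (w_1, \ldots, w_x)$. Lemma~\ref{lem:leaf-characterisation} gives $(\bw, \e) \tc \bot$, so the natural decomposition $\bu := (w_1, \ldots, w_{x-1})$, $u_x := w_x$, $u_{x+1} := \e$ satisfies all three hypotheses: $(\bu, u_x) = \bw$ is pointed by assumption, $(\bu, u_x, u_{x+1}) = (\bw, \e) \tc \bot$ is the leaf condition, and $(\bu, u_x u_{x+1}) = \bw \not\tc \bot$ since $\bw$ represents a state. Lemma~\ref{lem:central-from-bot-child} then yields a central sequence for $\tccls{\bw}$.

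If $\tccls{\bw}$ is not a leaf, then $(\bw, \e) \not\tc \bot$ and the naive choice $u_{x+1} = \e$ fails. The plan is to exhibit a word $u_{x+1} \in \Sigma^+$ with $(\bw, u_{x+1}) \tc \bot$ and $\bw \cdot u_{x+1} \tc \bw$; with any such $u_{x+1}$, Lemma~\ref{lem:central-from-bot-child} applied with $\bu := (w_1, \ldots, w_{x-1})$ and $u_x := w_x$ delivers the central sequence. To construct $u_{x+1}$, I would use Lemma~\ref{lem:safe-pointed-finite} applied to $(\bw, \e)$ to extract a pointed child $(\bv, v_{x+1})$ of $\tccls{\bw}$ on layer $x+1$, and then extend $v_{x+1}$ by a word that drives the layer-$(x+1)$ tuple to $\bot$ while closing a loop on $\tccls{\bw}$ on layer $x$. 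The finite index of $\tc$ (Lemma~\ref{lem:tc-finite-index}) bounds the exploration, while the pointedness of $\bw$ together with the parity-alternation built into the definition of $\tc \bot$ force the resulting extension to simultaneously loop on layer~$x$ and kill on layer~$x+1$.

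The main obstacle is precisely this non-leaf case: establishing the combined \emph{loop on layer~$x$ / kill on layer~$x+1$} property of $u_{x+1}$. Once this existence result is secured, the rest is a routine verification of Lemma~\ref{lem:central-from-bot-child}'s hypotheses.
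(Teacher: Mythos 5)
Your leaf case is exactly the paper's argument (Lemma~\ref{lem:leaf-characterisation} followed by Lemma~\ref{lem:central-from-bot-child} with $u_{x+1}=\e$), and your reduction of the internal case to exhibiting a word that loops on the layer-$x$ class while sending a layer-$(x+1)$ tuple to $\bot$ has the right shape. The gap is that you never establish the existence of such a word, and this is precisely the hard content of the lemma. A pigeonhole over the finitely many $\tc$-classes (Lemma~\ref{lem:tc-finite-index}) only shows that classes repeat along looping words; it gives no reason why some looping word must leave the safe language of the child tuple, i.e.\ why the child cannot ``cover'' all loops of its parent. Ruling that out is a semantic statement about $L$, and the paper proves it by induction over the layers from the leaves upwards: for an internal state it takes a pointed child $(\bv,v_x,v_{x+1})$ that \emph{already has} a central sequence $z$ (the inductive hypothesis, not merely the existence of a pointed child that Lemma~\ref{lem:safe-pointed-finite} provides), sets $z'=zwv_xv_{x+1}$ for a suitable $w$, and proves the kill condition $(\bv,v_x,v_{x+1}z')\tc\bot$ by contradiction: if it failed, then—because $z$ sends every layer-$(x+2)$ descendant to $\bot$—the word $(zwv_xv_{x+1})^\omega$ would be strongly accepted/rejected from the child class, and Lemma~\ref{lem:Aatc-strong-acceptance} would force a parity of membership in $L$ contradicting the choice of $w$. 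Your sketch uses neither the inductive hypothesis nor any semantic ingredient such as Lemma~\ref{lem:Aatc-strong-acceptance}; ``pointedness plus parity-alternation'' does not by itself yield the loop-and-kill word, so the main obstacle you yourself identify is left unresolved.

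A second, smaller problem is the decomposition you fix. Applying Lemma~\ref{lem:central-from-bot-child} with $\bu:=(w_1,\dots,w_{x-1})$, $u_x:=w_x$ requires killing the tuple $(\bar w,u_{x+1})$ whose new component starts at $\e$ after $\bar w$. By Lemma~\ref{lem:tuple-safe-decrease} this tuple has the largest safe language among all $(x{+}1)$-tuples merging to $\tccls{\bar w}$, so driving the pointed child $(\bv,v_{x+1})$ extracted via Lemma~\ref{lem:safe-pointed-finite} to $\bot$ does not imply $(\bar w,u_{x+1})\tc\bot$; your construction and your intended application of Lemma~\ref{lem:central-from-bot-child} do not match. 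The lemma allows an arbitrary split, and the paper exploits this by using the split inherited from the pointed child ($\bu:=\bv$, $u_x:=v_x$, $u_{x+1}:=v_{x+1}z'$), which is the weaker and actually achievable kill condition; if you keep the fresh split you are proving a strictly stronger statement, which in the paper only follows \emph{a posteriori} from the central sequence you are trying to build.
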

\begin{proof}
  If $\tccls{(\bu,u_x)}$ is a leaf of $\Aatc$, then $(\bu,u_x,\e) \tc \bot$ by Lemma~\ref{lem:leaf-characterisation}, and thus $\tccls{(\bu,u_x)}$ has a "central sequence" by Lemma~\ref{lem:central-from-bot-child}.

  For showing that internal states have a central sequence, let $(\bv,v_x,v_{x+1})$ be "pointed" and show that its parent $\tccls{(\bv,v_xv_{x+1})}$ has a central sequence, assuming that $\tccls{(\bv,v_x,v_{x+1})}$ has a "central sequence" $z$. We want to show that we can apply Lemma~\ref{lem:central-from-bot-child}, so we construct a word $z'$ that loops on $\tccls{(\bv,v_xv_{x+1})}$ and takes $(\bv,v_x,v_{x+1})$ to $\bot$, so it has the following properties:
  \begin{enumerate}[label=(\roman*)]
  \item\label{item:zp-x-loop} $(\bv,v_xv_{x+1}z') \tc (\bv,v_xv_{x+1})$ and
  \item\label{item:zp-xp1-bot} $(\bv,v_x,v_{x+1}z') \tc \bot$.
  \end{enumerate}
  We can apply Lemma~\ref{lem:central-from-bot-child} to $(\bv,v_x,v_{x+1}z')$ and obtain a "central sequence" for $\tccls{(\bv,v_xv_{x+1}z')} = \tccls{(\bv,v_xv_{x+1})}$ as desired. 

  The following arguments for establishing \ref{item:zp-x-loop} and \ref{item:zp-xp1-bot} are not very difficult but might be confusing because they involve several tuples. Figure~\ref{fig:Aatc-central-sequences} gives an overview of the situation that is constructed below. The tuple on the left-hand side on layer $x$ is the one for which we want to construct the word $z'$. 
  
  We start by explaining the $z$-edges in Figure~\ref{fig:Aatc-central-sequences}. Since $z$ is central for $\tccls{(\bv,v_x,v_{x+1})}$, every child of $\tccls{(\bv,v_x,v_{x+1})}$ is taken to $\bot$ by $z$ (this is indicated by $\tccls{(\cdots)}$ on layer $x+2$ in Figure~\ref{fig:Aatc-central-sequences}). Further, we have $(\bv,v_x,v_{x+1}z) \tc (\bv,v_x,v_{x+1})$ and thus also $(\bv,v_xv_{x+1}z) \tc (\bv,v_xv_{x+1})$ and $\bv\cdot v_xv_{x+1}z \tc \bv \cdot v_xv_{x+1}$ by Lemma~\ref{lem:tc-properties} (these are the $z$-loops in Figure~\ref{fig:Aatc-central-sequences}). Hence, $(\bv,v_xv_{x+1}z) \not\tc \bot$, and there is $w$ such that $(\bv,v_xv_{x+1}zw) \tc \bv$ and $v_1 \cdots v_{x-1}(v_xv_{x+1}zw)^\omega \in L$ iff $x$ is even. This is the $w$-edge on layer $x-1$ in Figure~\ref{fig:Aatc-central-sequences}. Further, the choice of $w$ ensures that  $(\bv,v_xv_{x+1}(zwv_xv_{x+1})^\omega) \not\tc \bot$.

  We let $z' := zwv_xv_{x+1}$. The paths induced by $z'$ on the layers $x-1$, $x$, and $x+1$ are shown in Figure~\ref{fig:Aatc-central-sequences}. Most edges shown on the layers are just obtained by appending the corresponding word to the last component of the corresponding tuple. The parent edges are obtained by the morphism property. There are two thick edges on the layers for which the target class is determined by the definition of "pointed". First, consider the thick $v_x$-edge on layer $x$.  
  Since $(\bv,v_x)$ is "pointed" and $\bv \cdot v_xv_{x+1}zw \tc \bv$ (as can be seen in Figure~\ref{fig:Aatc-central-sequences}), we get that $(\bv,v_xv_{x+1}zwv_x) \tc \{(\bv,v_x),\bot\}$. And as noted earlier, the choice of $w$ ensures that $(\bv,v_xv_{x+1}(zwv_xv_{x+1})^\omega) \not\tc \bot$, so in particular $(\bv,v_xv_{x+1}zwv_x) \not\tc \bot$. Hence, we obtain that $(\bv,v_xv_{x+1}zwv_x) \tc (\bv,v_x)$ and thus also
  \[
  (\bv,v_xv_{x+1}z') = (\bv,v_xv_{x+1}zwv_xv_{x+1}) \tc (\bv,v_xv_{x+1}).
  \]
This explains the edges on layer $x$ in Figure~\ref{fig:Aatc-central-sequences} and shows \ref{item:zp-x-loop}.

We now show \ref{item:zp-xp1-bot}. Actually, the thick $v_{x+1}$-edge on layer $x+1$ is drawn under the assumption that \ref{item:zp-xp1-bot} does not hold, and is then used for a contradiction.
Since $(\bv,v_x,v_{x+1})$ is "pointed" and $(\bv,v_xv_{x+1}zwv_x) \tc (\bv,v_x)$ (as established before), we get that $(\bv,v_x,v_{x+1}zwv_xv_{x+1}) \tc \{(\bv,v_x,v_{x+1}),\bot\}$. If $(\bv,v_x,v_{x+1}zwv_xv_{x+1}) \tc \bot$, then \ref{item:zp-xp1-bot} holds. Otherwise, we get $(\bv,v_x,v_{x+1}zwv_xv_{x+1}) \tc (\bv,v_x,v_{x+1})$, which is indicated by the thick $v_{x+1}$-edge on layer $x+1$. Now it can be seen from Figure~\ref{fig:Aatc-central-sequences}, that in this case, $(zwv_xv_{x+1})^\omega$ is strongly accepted/rejected from $\tccls{(\bv,v_x,v_{x+1})}$ iff $x+1$ is even/odd. By Lemma~\ref{lem:Aatc-strong-acceptance}, this would mean that $v_1 \cdots v_{x+1}(zwv_xv_{x+1})^\omega \in L$ iff $x+1$ is even. But we have chosen $w$ such that $v_1 \cdots v_{x-1}(v_xv_{x+1}zw)^ \omega = v_1 \cdots v_{x+1}(zwv_xv_{x+1})^\omega \in L$ iff $x$ is even, a contradiction.
\end{proof}

\begin{figure}
  \begin{center}
  \begin{tikzpicture}[
      parent/.style={
        dotted, ->,thick, >=stealth, draw=black!50, font=\footnotesize
      },
      layer/.style={
        ->, >=stealth, draw=black, font=\footnotesize
      },
      layerp/.style={
        ->,very thick, >=stealth, draw=black, font=\footnotesize
      },
      iso/.style={
        dashed, ->,thick, >=stealth, draw=black!20, text=black!20, font=\footnotesize
      },
      mloop/.style={
        out=175, in=185, looseness=6
      }
    ]\footnotesize
    \node                        (bvvxvxp1) {$\tccls{\bv \cdot v_xv_{x+1}}$};
    \node [right=2 of bvvxvxp1]   (bv) {$\tccls{\bv}$};
    \node [right=2 of bv]         (bvvx) {$\tccls{\bv \cdot v_x}$};
    \node [below= of bvvxvxp1]   (bv-vxvxp1) {$\tccls{(\bv, v_xv_{x+1})}$};
    \node [below= of bv]         (bv-vxvxp1zw) {$\tccls{(\bv, v_xv_{x+1}zw)}$};
    \node [below= of bvvx]       (bv-vx) {$\tccls{(\bv, v_x)}$};
    \node [below= of bv-vxvxp1]  (bv-vx-vxp1) {$\tccls{(\bv, v_x,v_{x+1})}$};
    \node [below= of bv-vxvxp1zw](bv-vx-vxp1zw) {$\tccls{(\bv, v_x,v_{x+1}zw)}$};
    \node [below= of bv-vx]      (bv-vx-vxp1zwvx) {$\tccls{(\bv, v_x,v_{x+1}zwv_x))}$};
    \node [below= of bv-vx-vxp1] (xp2) {$\tccls{(\cdots)}$};
    \node [left= of xp2.center] (bot) {$\bot$};
    \node[left=3 of bvvxvxp1.center,text=black!50] () {$x-1$};
    \node[left=3 of bv-vxvxp1.center,text=black!50] () {$x$};
    \node[left=3 of bv-vx-vxp1.center,text=black!50] () {$x+1$};
    \node[left=3 of xp2.center,text=black!50] () {$x+2$};

    \path[layer] (bvvxvxp1) edge[mloop] node[above]{$z$} (bvvxvxp1);
    \path[layer] (bv-vxvxp1) edge[mloop] node[above]{$z$} (bv-vxvxp1);
    \path[layer] (bv-vx-vxp1) edge[mloop] node[above]{$z$} (bv-vx-vxp1);
    \path[layer] (bvvxvxp1) edge node[above]{$w$} (bv);
    \path[layer] (bv-vxvxp1) edge node[above]{$w$} (bv-vxvxp1zw);
    \path[layer] (bv-vx-vxp1) edge node[above]{$w$} (bv-vx-vxp1zw);
    \path[layer] (bv) edge node[above]{$v_x$} (bvvx);
    \path[layerp] (bv-vxvxp1zw) edge node[above]{$v_x$} (bv-vx);
    \path[layer] (bv-vx-vxp1zw) edge node[above]{$v_x$} (bv-vx-vxp1zwvx);
    \path[layer] (bvvx) edge[bend right=17] node[above,pos=0.3]{$v_{x+1}$} (bvvxvxp1);
    \path[layer] (bv-vx) edge[bend right=17] node[above,pos=0.3]{$v_{x+1}$} (bv-vxvxp1);
    \path[layerp] (bv-vx-vxp1zwvx) edge[bend right=17] node[above,pos=0.3]{$v_{x+1}$} (bv-vx-vxp1);
    \path[layer] (xp2) edge node[above]{$z$} (bot);

    \path[parent] (bv-vx-vxp1) -- (bv-vxvxp1);
    \path[parent] (bv-vxvxp1) -- (bvvxvxp1);
    \path[parent] (bv-vx-vxp1zw) -- (bv-vxvxp1zw);
    \path[parent] (bv-vxvxp1zw) -- (bv);
    \path[parent] (bv-vx-vxp1zwvx) -- (bv-vx);
    \path[parent] (bv-vx) -- (bvvx);
    \path[parent] (xp2) -- (bv-vx-vxp1);
  \end{tikzpicture}

  \end{center}
  \caption{Illustration supporting the reasoning in the proof of Lemma~\ref{lem:Aatc-central-sequences}. Dotted edges represent the morphism to the parent layer. The $v_{x+1}$-edge on layer $x+1$ is drawn under an assumption that leads to a condtradiction, so the $wv_xv_{x+1}$-loop on layer $x+1$ does not exist.  \label{fig:Aatc-central-sequences}}
\end{figure}
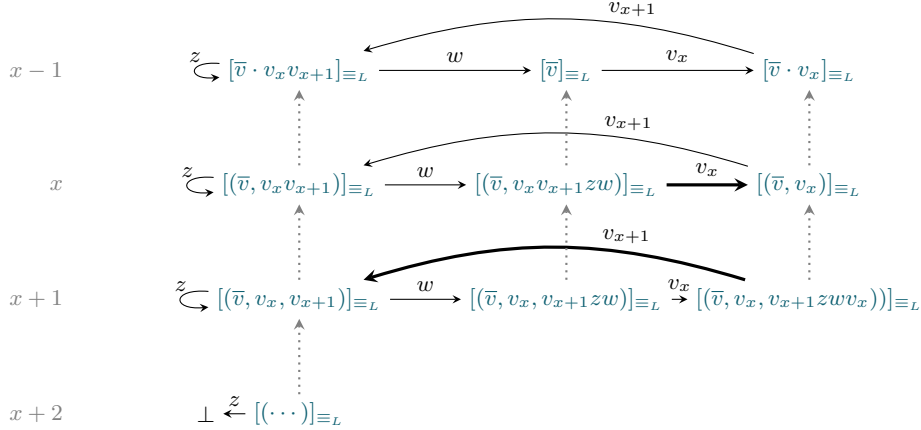

\begin{lemma}\label{lem:Aatc-minimal-properties}
Let $L$ be an $\omega$-regular language. Then $\Aatc$ is "safe minimal", "normal", and "centralised".
\end{lemma}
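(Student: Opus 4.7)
The plan is to verify each property in turn, leveraging the inductive definition of $\tc$, the pointedness of tuples, and the central sequences from Lemma~\ref{lem:Aatc-central-sequences}.

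Safe minimality will follow by unfolding definitions. For a length-$x$ tuple $\bu = (u_1, \ldots, u_x)$, the safe language at layer $y \le x$ of $\tccls{\bu}$ in $\Aatc$ coincides with $\Ls((u_1, \ldots, u_{y-1}, u_y \cdots u_x))$ because of how the morphisms $\phitc$ merge components. Hence $\tccls{\bu} \seq_x \tccls{\bv}$ imposes equality of residuals on layer~$1$ together with equality of the $\Ls$-sets of all ancestor tuples up to layer $x$. A straightforward induction on the ancestor length $y \le x$ shows that these conditions are exactly the recursive clause defining $\tc$, so they force $\bu \tc \bv$, and hence $\tccls{\bu} = \tccls{\bv}$ in $\Aatc$.

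For N2, I will use the central sequence of the parent. Let $\bu$ be a pointed tuple of length $x \ge 2$ and $(\bu, u_{x+1})$ a child on layer $x+1$. The morphism identity $\phitc_x(\tccls{(\bu, u_{x+1})}) = \tccls{\bu}$ gives $\bu \cdot u_{x+1} \tc \bu$ on length $x$, and Lemma~\ref{lem:tc-properties}(\ref{item:tc-rc-merge}) then yields $\tccls{(\bu, u_{x+1})} \eqxx{x-1} \tccls{\bu}$. Letting $z$ be a central sequence of $\tccls{\bu}$ from Lemma~\ref{lem:Aatc-central-sequences}, property~(3) forces $\tccls{(\bu, u_{x+1})} \runxx{z}{x+1} \bot$ while property~(1) keeps $\tccls{\bu} \runxx{z}{x}$ defined---exactly what N2 requires.

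For N1 and centrality I will use pointedness to promote loops on layer $x-1$ to transitions on layer $x$. For N1, I proceed by induction on $x \ge 2$. Given $\tccls{\bu} \runxx{u}{x} \tccls{\bu \cdot u}$ with $\bu = (u_1, \ldots, u_x)$, the inductive hypothesis at layer $x-1$ supplies $v_0 \in \Sigma^+$ with $(u_1, \ldots, u_{x-1}) \cdot u_x u v_0 \tc (u_1, \ldots, u_{x-1}) \cdot u_x$, and the parity witness $w_0$ of $\bu \not\tc \bot$ combines with it to yield a loop $(u_1, \ldots, u_{x-1}) \cdot u_x u v_0 w_0 \tc (u_1, \ldots, u_{x-1})$. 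Pointedness of $\bu$ applied with $\bv = (u_1, \ldots, u_{x-1})$ and $v_x = u_x u v_0 w_0$ then forces $\bu \cdot u v_0 w_0 u_x$ to be either $\tc \bu$ or $\tc \bot$; ruling out $\bot$ yields the return path $\tccls{\bu \cdot u} \runxx{v_0 w_0 u_x}{x} \tccls{\bu}$. For centrality, given siblings $p = (\bu', a_x)$ and $q = (\bu', b_x)$ on layer $x$ with $\Ls(p) \subseteq \Ls(q)$, the direction $q \to p$ is immediate via property~(2) of the central sequence $z_p$ of $p$: since $z_p \in \Ls(p) \subseteq \Ls(q)$, the $\bot$ alternative is ruled out, giving $q \runxx{z_p}{x} p$. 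The reverse direction $p \to q$ is handled by applying pointedness of $q$ to the pair $(\bu', a_x u)$ for an appropriate $u$ supplied by N1 at layer $x-1$, by the same pattern as N1.

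The main obstacle, shared by N1 and the direction $p \to q$ of centrality, will be verifying the parity witness required to rule out $\bot$. Concretely, one must show that an infinite word of the form $u_1 \cdots u_{x-1}(\alpha \beta)^\omega$, built by concatenating two loops on $\tccls{(u_1, \ldots, u_{x-1})}$, lies in $L$ with the parity of $x$, given that one of the loops already has this parity. I plan to handle this via the consistency of $\Aatc$ (Theorem~\ref{thm:Aatc-correct}) combined with Lemma~\ref{lem:Aatc-strong-acceptance}, in the same spirit as the parity-tracking arguments deployed in the proof of Lemma~\ref{lem:Aatc-central-sequences}.
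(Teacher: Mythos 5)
Your treatment of "safe minimality" (unfolding the definition of $\tc$ layer by layer), of "N2" (the "central sequence" of the layer-$x$ parent, whose third property sends every $\eqxx{x-1}$-equivalent $(x+1)$-state to $\bot$ while the first keeps the parent's run defined), and of the direction "larger sibling reaches smaller sibling" in "centrality" all coincide with the paper's proof. The divergence, and the genuine gap, is in your argument for "N1" and in the return direction of "centrality", which you handle "by the same pattern as N1" instead of simply invoking N1 once it is established (which is all the paper does there).

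Concretely, your N1 plan has two problems. First, the base case of your induction is unavailable: at $x=2$ the "inductive hypothesis at layer $x-1$" is N1 on layer $1$, but layer $1$ is the complete residual automaton and is in general \emph{not} a union of SCCs, so no such $v_0$ is guaranteed by an IH (a return to the parent class does exist, but one must extract it from the definition of $\bot$, not from N1). Second, and more seriously, the step you defer --- ruling out $\bot$ for $(\,(u_1,\dots,u_{x-1}),\,u_xuv_0w_0u_x)$ --- is not a routine parity-tracking patch: the statement "a concatenation of two loops on $\tccls{(u_1,\dots,u_{x-1})}$, one of which has the parity of $x$, again has the parity of $x$" is false, and with the freedom your construction allows the combined tuple can genuinely be $\tc\bot$. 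For instance, for $L=$ "finitely many $a$" with $\bu=(\e,\e)$, $u=b$, a legitimate layer-$1$ return $v_0=a$ and witness $w_0=b$ give the tuple $(\e,bab)\tc\bot$, so pointedness yields no information; the failure is caused by $v_0$ leaving the layer-$x$ safe region, which nothing in your setup prevents. The paper avoids all of this by taking the witness of the \emph{target} tuple instead: since $\bu\cdot u\not\tc\bot$, there is $w\in\Sigma^+$ with $(u_1,\dots,u_{x-1})\cdot u_xuw\tc(u_1,\dots,u_{x-1})$ and $u_1\cdots u_{x-1}(u_xuw)^\omega\in L$ iff $x$ is even; then $uw$ itself witnesses $(\,(u_1,\dots,u_{x-1}),u_xuwu_x)\not\tc\bot$ (because $(u_xuwu_xuw)^\omega=(u_xuw)^\omega$), pointedness of $\bu$ forces this tuple $\tc\bu$, and $wu_x$ is the return path --- no induction on layers, no consistency, and no extra parity argument are needed. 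With N1 proved this way, the second half of centrality follows immediately from the path given by the central sequence plus N1, as in the paper, so your separate pointedness argument there is unnecessary and would otherwise inherit the same gap.
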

\begin{proof}
  A layered automaton is "safe minimal" if its first layer corresponds to residual equivalence of $L$, and for all $x \ge 2$ and for all states $p,q$ on layer $x$ that have the same parent, the "safe language" of $p$ and $q$ are different. So let $(\bu,u_x)$ and $(\bv,v_x)$ be two pointed tuples with $\bu \cdot u_x \tc \bv \cdot v_x$. Then, by definition of $\tc$, we get $(\bu,u_x) \tc (\bv,v_x)$ iff $\Ls(\bu,u_x) = \Ls(\bv,v_x)$. So two pointed tuples on layer $x$ with the same parent class correspond to different states iff they have a different "safe language".

  For showing that $\Aatc$ is normal, we first need to show property N1 from Definition \ref{it:normal-return}, namely that each layer $x \ge 2$ is a union of non-trivial SCCs. So let $(\bu,u_x)$ be some pointed tuple on level $x$, and let $v \in \Sigma^*$ be such that $(\bu,u_xv) \not\tc \bot$, that is, in $\Aatc$ there is the path $\tccls{(\bu,u_x)} \xrightarrow{v} \tccls{(\bu,u_xv)}$. Since $(\bu,u_xv) \not\tc \bot$, there is $w \in \Sigma^+$ such that $\bu \cdot u_xvw \tc \bu$ and $u_1 \cdots u_{x-1}(u_xvw)^\omega \in L$ iff $x$ is even. Then $(\bu,u_xvwu_x) \not\tc \bot$ and since $(\bu,u_x)$ is "pointed", we get $(\bu,u_xvwu_x) \tc (\bu,u_x)$, so in $\Aatc$ we get the following path $\tccls{(\bu,u_xv)} \xrightarrow{wu_x} \tccls{(\bu,u_x)}$. 

  The property "N2" required for "normal" follows from the existence of central sequences (the central sequence of a state has the properties required for "N2"). That $\Aatc$ is "centralised" also follows from the existence of "central sequences": If $\tccls{\bu} \fleq_{x} \tccls{\bv}$ for two tuples of length $x \ge 2$, then a central sequence $z$ for $\tccls{\bu}$ cannot send $\tccls{\bv}$ to $\bot$, and hence $\tccls{\bv \cdot z} = \tccls{\bu}$, which means $\tccls{\bv} \run{z}_x \tccls{\bu}$. By property N1 of "normal", $\tccls{\bu}$ and $\tccls{\bv}$ are in the same $x$-SCC.
\end{proof}

Combining Lemma~\ref{lem:Aatc-minimal-properties}, Theorem~\ref{thm:Aatc-correct},  and Theorem~\ref{thm:concise-are-unique} yields the main result of this section.
\begin{theorem}\label{thm:Aatc-minimal}
Let $L$ be an $\omega$-regular language. Then $\Aatc$ is the minimal "layered automaton" for $L$.
\end{theorem}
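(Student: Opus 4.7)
The plan is to assemble this result directly from the three preceding items cited in the paper: Theorem~\ref{thm:Aatc-correct}, Lemma~\ref{lem:Aatc-minimal-properties}, and Theorem~\ref{thm:concise-are-unique}. The entire argument is a short synthesis rather than a new construction. First I would invoke Theorem~\ref{thm:Aatc-correct} to record that $\Aatc$ is a \emph{consistent} "layered automaton" with $L(\Aatc) = L$; this is what allows us to apply the canonicity result at all, since Theorem~\ref{thm:concise-are-unique} is stated for consistent layered automata.

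Next I would apply Lemma~\ref{lem:Aatc-minimal-properties} to obtain that $\Aatc$ is "normal", "centralised", and "safe minimal". Together with the previous step, this shows that $\Aatc$ satisfies the three syntactic conditions identified in Section~\ref{sec:concise} as characterising minimality.

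With both ingredients in place, I would conclude by invoking Theorem~\ref{thm:concise-are-unique}: any "consistent" "layered automaton" $\Bb$ equivalent to $\Aatc$ (equivalently, recognising $L$) that is also "normal", "central", and "safe minimal" must be "isomorphic@@lay" to $\Aatc$. Combined with Corollary~\ref{cor:canonicity}, which provides that every consistent layered automaton for $L$ contains a subautomaton admitting a surjective strong morphism to such a canonical minimal object, this shows that $\Aatc$ is in fact the (unique up to isomorphism) minimal "layered automaton" for $L$.

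Since all the work has already been done in the preceding lemmas and theorems, there is no real obstacle in this final step; the only care needed is to make sure the hypotheses of Theorem~\ref{thm:concise-are-unique} are matched exactly — namely that consistency is verified separately (by Theorem~\ref{thm:Aatc-correct}) from the three structural properties (by Lemma~\ref{lem:Aatc-minimal-properties}) — and to note explicitly that ``minimal'' here is in the sense of Corollary~\ref{cor:canonicity}, i.e., no consistent layered automaton for $L$ has fewer states, and in particular fewer leaf states, than $\Aatc$.
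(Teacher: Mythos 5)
Your proposal is correct and follows exactly the paper's own argument, which is a one-line combination of Theorem~\ref{thm:Aatc-correct}, Lemma~\ref{lem:Aatc-minimal-properties}, and Theorem~\ref{thm:concise-are-unique} (with Corollary~\ref{cor:canonicity} supplying the precise sense of minimality). No gaps; nothing further is needed.
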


\section{Conclusions}\label{sec:conclusions}
We have introduced "layered automata", a model merging and generalising minimal
"HD" coBüchi automata and Zielonka trees. This model provides canonical
representations for "$\omega$-regular languages". We have characterised these
representations via 
machine-independent congruences over tuples of finite words.
Beyond this foundational characterisation, "consistent" "layered automata"
exhibit strong algorithmic properties. 
Emptiness and inclusion testing are in PTIME for "consistent" "layered automata".
Furthermore, "consistent" "layered automata" can be minimised in PTIME, and the
result is never bigger than the smallest equivalent deterministic automaton.
These properties make "consistent" "layered automata" an attractive model of
acceptors for $\w$-regular languages as they have good
algorithmic properties and at the same time are both "history deterministic" and "0-1
probabilistic".

While we establish that canonical "layered
automata" are minimal within the class of "consistent" "layered automata"
(Corollary~\ref{cor:canonicity}), we conjecture they are also minimal among
all "alternating" "history deterministic" automata. 

\begin{conjecture}\label{conj:minimality-HD}
Let $\Bb$ be a "history deterministic" "alternating parity automaton".
Then, there is an equivalent "layered automaton" $\Aa$ such that $\sem{\Aa}$ has no more states than $\Bb$.

In particular, the minimal "layered automaton" of a language is minimal among all "history deterministic" "alternating parity automata".\footnote{We have a draft of a proof for Conjecture~\ref{conj:minimality-HD} for the case of 2 priorities (B\"uchi and coB\"uchi automata). That is, the minimal "HD" "coB\"uchi automaton" from Abu Radi and Kupferman~\cite{Abu.Kup.Minimization2022} is minimal among all "HD" "alternating" "coB\"uchi automata".}
\end{conjecture}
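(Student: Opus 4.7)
}

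The plan is to prove the stronger statement: for every "history deterministic" "alternating parity automaton" $\Bb$ recognising $L$, there is an injection from the "leaf states" of the minimal "layered automaton" $\Aa_L$ into the states of $\Bb$. Since the states of $\sem{\Aa_L}$ are exactly the "leaf states" of $\Aa_L$, this immediately gives the ``in particular'' part. The existence part then follows by taking $\Aa = \Aa_L$, using Corollary~\ref{cor:canonicity} to dominate any other equivalent "layered automaton".

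The strategy is to parametrise leaves of $\Aa_L$ by "pointed" tuples via the congruence characterisation (Theorem~\ref{thm:Aatc-minimal}), and to send each leaf to a state of $\Bb$ obtained by executing a canonical play on a representative word. Fix "valid" HD resolvers $\sigma$ for Eve and $\tau$ for Adam in $\Bb$: every finite word $u$ then determines a unique $\sigma\tau$-consistent run, ending in a state $\delta_{\sigma\tau}(q_\init,u)\in Q^\Bb$. For each "leaf state" $\tccls{\bu}$ of $\Aa_L$, pick a "pointed" representative $\bu=(u_1,\dots,u_x)$ and define $f(\tccls{\bu}) := \delta_{\sigma\tau}(q_\init,u_1u_2\cdots u_x)$. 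To make this well-defined one first checks independence of the representative, using that any two "pointed" representatives of $\tccls{\bu}$ induce the same residual, the same $y$-safe languages for all $y\le x$, and, crucially, the same strong acceptance/rejection behaviour on suffixes (by Lemma~\ref{lem:Aatc-strong-acceptance} together with the HD property of $\sigma,\tau$).

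The core step is injectivity. Suppose $\tccls{\bu}\ne\tccls{\bv}$ are leaves of $\Aa_L$ with $f(\tccls{\bu})=f(\tccls{\bv})=q$. Two cases arise. If the two leaves have distinct residual classes on layer $1$, then $L(\Bb,q)$ would have to equal two different residuals of $L$, contradicting validity of $\sigma,\tau$. Otherwise, by "safe minimality" of $\Aa_L$, there is a layer $x\ge 2$ where their "$x$-safe languages" differ, say there is $w\in\Sigma^*\cdot\Sigma^\omega$ with $w\in\Lsafex(\tccls{\bu})\setminus\Lsafex(\tccls{\bv})$. Using "central sequences" from Lemma~\ref{lem:Aatc-central-sequences} together with Lemmas~\ref{lem:safe-pointed-finite}--\ref{lem:safe-pointed-infinite}, we can engineer an infinite word of the form $(\text{prefix})\cdot (zwz')^\omega$ that is "strongly accepted" by some descendant of $\tccls{\bu}$ yet "strongly rejected" by every descendant of $\tccls{\bv}$. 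Pushing this word through $q$ in $\Bb$ using $\sigma$ and $\tau$ and exploiting that the "longest suffix" style reasoning of Lemma~\ref{lem:lss-are-reslovers} transfers to $\Bb$ (via the HD resolvers) would yield both an accepting and a rejecting run on the same word from $q$, contradicting the HD property of $\Bb$.

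The main obstacle is this last transfer step: "longest suffix resolvers" and the notion of "strong acceptance" are defined for "layered automata", and one needs an analogue inside an arbitrary HD alternating $\Bb$ that has no a priori layered structure. Concretely, the challenge is to define, purely from $\sigma$ and $\tau$, ``effective safe languages'' $\Lsafex^\Bb(q)$ on states of $\Bb$ and prove that they satisfy an HD-safe ensuring lemma in the spirit of Lemma~\ref{lem:longest-suffix-ensures}. A promising route is to use the transition priority structure of $\Bb$ to filter plays and then apply Kuperberg--Skrzypczak-style pruning arguments layer by layer, akin to the ones used for the coBüchi case in~\cite{Abu.Kup.Minimization2022} and sketched in the footnote. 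Handling the interplay between Eve and Adam priorities in the same automaton, rather than separating them as in the two-priority case, is where new ideas will be needed.
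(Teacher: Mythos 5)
The statement you are proving is stated in the paper as a \emph{conjecture}: the authors give no proof (the footnote only mentions a draft for the two-priority case), so there is no paper argument to compare against, and your text does not close the gap either. What you have written is a strategy sketch whose decisive step is explicitly missing, and it is missing at exactly the point where the real difficulty of the conjecture lies. Your injectivity argument needs to take a word that is "strongly accepted" from a descendant of one leaf class of $\Aa_L$ and "strongly rejected" from the other, and turn this into a contradiction at the single state $q$ of $\Bb$; but "strong acceptance", "$x$-safe languages" and "longest suffix resolvers" are all defined through the layered structure, and an arbitrary "HD" "alternating parity automaton" $\Bb$ carries no such structure. Without an analogue of Lemma~\ref{lem:longest-suffix-ensures} (or of the safe-language stratification) \emph{inside} $\Bb$, nothing forces $q$ to behave inconsistently on your engineered word: $\Bb$ may well resolve the two behaviours through different continuations of the play after $q$, since alternation allows one state to serve several safe-language profiles. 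Identifying and proving such an internal stratification of HD alternating automata is the content of the conjecture, not a technical detail one can defer.

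There are also two smaller soundness issues in the parts you do spell out. First, the map $f$ is not well defined the way you claim: two "pointed" representatives of the same class $\tccls{\bu}$ will in general drive the $\sigma\tau$-run of $\Bb$ to different states, and nothing in Lemma~\ref{lem:Aatc-strong-acceptance} or HDness forces these states to coincide; you should instead fix one representative per leaf and put all the weight on injectivity (which, as above, is the open part). Second, your residual case assumes that the state reached by the $\sigma\tau$-run on $u$ recognises exactly $u^{-1}L$; for alternating HD automata the resolvers need not be residual-consistent, so this requires either an additional normalisation of $\sigma,\tau$ or a separate argument. In summary: the approach (injecting the "leaf states" of $\Aa_L$ into the states of $\Bb$) is a reasonable first attempt, but the proposal does not constitute a proof, and the conjecture remains open as stated in the paper.
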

If true, this conjecture would establish that "layered automata" provide a 
canonical and algorithmically efficient representation for all "history deterministic" automata.

\subparagraph{Other future directions.}
Naturally, there are many aspects of the theory of "layered automata" itself
that remain to be developed. 
We believe
that "layered automata" offer a promising framework for broader practical and
theoretical applications. A natural first direction is to provide efficient
constructions for Boolean operations on languages and direct translations from
logical formalisms such as Linear Temporal Logic to "layered automata". A second
avenue is to build the minimal "layered automaton" directly from COCOA
representations using techniques from~\cite{Ehl.Kha.Fully2024}, thereby
circumventing the full product construction discussed in
Section~\ref{subsec:comparison}. Finally, we envision applying "layered
automata" to passive and active learning of omega-regular languages, extending
the passive learning results for coBüchi
automata~\cite{Lod.Wal.Congruences2025}.

\newpage
\bibliography{Bib-history-determinism}

@ARTICLE{Tarjan72DepthFirst,
	author = {Robert Tarjan},
	title = {Depth first search and linear graph algorithms},
	journal = {{Siam Journal On Computing}},
	year = {1972},
	volume = {1},
	number = {2},
	doi = {10.1137/0201010}
}

@misc{BookGamesOnGraphs2025,
      title={Games on Graphs: From Logic and Automata to Algorithms}, 
      author={Nathanaël Fijalkow and C. Aiswarya and Guy Avni and Nathalie Bertrand and Patricia Bouyer and Romain Brenguier and Arnaud Carayol and Antonio Casares and John Fearnley and Paul Gastin and Hugo Gimbert and Thomas A. Henzinger and Florian Horn and Rasmus Ibsen-Jensen and Nicolas Markey and Benjamin Monmege and Petr Novotný and Pierre Ohlmann and Mickael Randour and Ocan Sankur and Sylvain Schmitz and Olivier Serre and Mateusz Skomra and Nathalie Sznajder and Pierre Vandenhove},
      year={2025},
      eprint={2305.10546},
      archivePrefix={arXiv},
      url={https://arxiv.org/abs/2305.10546}, 
	  note = {To appear in Cambridge University Press}
    }

@article{Cla.Eme.Sis.Automatic1986,
  title = {Automatic Verification of Finite-State Concurrent Systems Using Temporal Logic Specifications},
  author = {Clarke, E. M. and Emerson, E. A. and Sistla, A. P.},
  year = 1986,
  journal = {ACM Transactions on Programming Languages and Systems},
  volume = {8},
  number = {2},
  pages = {244--263},
  issn = {0164-0925, 1558-4593},
  doi = {10.1145/5397.5399},
  urldate = {2025-12-30}
}

@article{KlarlundMS02,
  author       = {Nils Klarlund and
                  Anders M{\o}ller and
                  Michael I. Schwartzbach},
  title        = {{MONA} Implementation Secrets},
  journal      = {Int. J. Found. Comput. Sci.},
  volume       = {13},
  number       = {4},
  pages        = {571--586},
  year         = {2002},
  url          = {https://doi.org/10.1142/S012905410200128X},
  doi          = {10.1142/S012905410200128X},
  timestamp    = {Mon, 26 Oct 2020 08:56:55 +0100},
  biburl       = {https://dblp.org/rec/journals/ijfcs/KlarlundMS02.bib},
  bibsource    = {dblp computer science bibliography, https://dblp.org}
}

@article{CKSL.Altenation1981,
author = {Chandra, Ashok K. and Kozen, Dexter C. and Stockmeyer, Larry J.},
title = {Alternation},
year = {1981},
issue_date = {Jan. 1981},
publisher = {Association for Computing Machinery},
volume = {28},
number = {1},
issn = {0004-5411},
doi = {10.1145/322234.322243},
journal = {J. ACM},
month = jan,
pages = {114–133},
numpages = {20}
}

@inproceedings{BKLS.AlternatingGFG2020,
  author       = {Udi Boker and
                  Denis Kuperberg and
                  Karoliina Lehtinen and
                  Michal Skrzypczak},
  editor       = {Nitin Saxena and
                  Sunil Simon},
  title        = {On the Succinctness of Alternating Parity Good-For-Games Automata},
  booktitle    = {{FSTTCS}},
  series       = {LIPIcs},
  volume       = {182},
  pages        = {41:1--41:13},
  publisher    = {Schloss Dagstuhl - Leibniz-Zentrum f{\"{u}}r Informatik},
  year         = {2020},
  doi          = {10.4230/LIPICS.FSTTCS.2020.41}
}

@inproceedings{Wilke.Eilenberg1991,
	author       = {Thomas Wilke},
	title        = {An {E}ilenberg Theorem for Infinity-Languages},
	booktitle    = {{ICALP}},
	volume       = {510},
	pages        = {588--599},
	year         = {1991},
	url          = {https://doi.org/10.1007/3-540-54233-7_166},
	doi          = {10.1007/3-540-54233-7_166}
}

@article{Ang.Fis.Learning2016,
	author       = {Dana Angluin and
	Dana Fisman},
	title        = {Learning regular omega languages},
	journal      = {Theor. Comput. Sci.},
	volume       = {650},
	pages        = {57--72},
	year         = {2016},
	doi          = {10.1016/J.TCS.2016.07.031}
}

@article{Mal.Sta.Syntactic1997,
	title = {On syntactic congruences for {\(\omega\)}-languages},
	journal = {{Theoretical Computer Science}},
	volume = {183},
	number = {1},
	pages = {93-112},
	year = {1997},
	note = {Formal Language Theory},
	issn = {0304-3975},
	doi = {10.1016/S0304-3975(96)00312-X},
	url = {https://www.sciencedirect.com/science/article/pii/S030439759600312X},
	author = {Oded Maler and Ludwig Staiger}
}

@article{Boh.Lod.Constructing2024,
	author       = {Le{\'{o}}n Bohn and
	Christof L{\"{o}}ding},
	title        = {Constructing Deterministic Parity Automata from Positive and Negative
	Examples},
	journal      = {TheoretiCS},
	volume       = {3},
	year         = {2024},
	doi          = {10.46298/THEORETICS.24.17}
}

@article{Arnold.Congruence1885,
	author       = {Andr{\'{e}} Arnold},
	title        = {A Syntactic Congruence for Rational omega-Languages},
	journal      = {{Theor. Comput. Sci.}},
	volume       = {39},
	pages        = {333--335},
	year         = {1985},
	url          = {https://doi.org/10.1016/0304-3975(85)90148-3},
	doi          = {10.1016/0304-3975(85)90148-3}
}

@inproceedings{HPSS0W.GoodForMdp2020,
  author       = {Ernst Moritz Hahn and
                  Mateo Perez and
                  Sven Schewe and
                  Fabio Somenzi and
                  Ashutosh Trivedi and
                  Dominik Wojtczak},
  editor       = {Armin Biere and
                  David Parker},
  title        = {Good-for-MDPs Automata for Probabilistic Analysis and Reinforcement
                  Learning},
  booktitle    = {{TACAS}},
  series       = {Lecture Notes in Computer Science},
  volume       = {12078},
  pages        = {306--323},
  year         = {2020},
  doi          = {10.1007/978-3-030-45190-5\_17}
}

@inproceedings{PPSTTY.Resolving2025,
  author       = {Soumyajit Paul and
                  David Purser and
                  Sven Schewe and
                  Qiyi Tang and
                  Patrick Totzke and
                  Di{-}De Yen},
  editor       = {Patricia Bouyer and
                  Jaco van de Pol},
  title        = {Resolving Nondeterminism by Chance},
  booktitle    = {{CONCUR}},
  series       = {LIPIcs},
  volume       = {348},
  pages        = {32:1--32:22},
  publisher    = {Schloss Dagstuhl - Leibniz-Zentrum f{\"{u}}r Informatik},
  year         = {2025},
  doi          = {10.4230/LIPICS.CONCUR.2025.32}
}

@phdthesis{Colcombet.HDR2013,
	type = {Habilitation (HDR)},
	title = {Fonctions r{\'{e}}guli{\`{e}}res de co{\^{u}}t},
	author = {Thomas Colcombet},
	school = {Universit{\'{e}} Paris Diderot – Paris 7},
	year = {2013}
}

@inproceedings{Kup.Saf.Var.Relating1996Relating,
	author       = {Orna Kupferman and
	Shmuel Safra and
	Moshe Y. Vardi},
	title        = {Relating Word and Tree Automata},
	booktitle    = {{LICS}},
	pages        = {322--332},
	year         = {1996},
	url          = {https://doi.org/10.1109/LICS.1996.561360},
	doi          = {10.1109/LICS.1996.561360}
}

@InProceedings{Hen.Pit.GFG2006,
author="Henzinger, Thomas A.
and Piterman, Nir",
title="Solving Games Without Determinization",
booktitle="Computer Science Logic",
year="2006",
pages="395--410",
doi       = {10.1007/11874683_26}
}

@inproceedings{Man.Pnu.Hierarchy1989,
  author       = {Zohar Manna and
                  Amir Pnueli},
  editor       = {Cynthia Dwork},
  title        = {A Hierarchy of Temporal Properties},
  booktitle    = {{PODC}},
  pages        = {377--410},
  publisher    = {{ACM}},
  year         = {1990},
  doi          = {10.1145/93385.93442}
}

@inproceedings{Casares.Chromatic2022,
	author    = {Antonio Casares},
	title     = {On the Minimisation of Transition-Based {R}abin Automata and the Chromatic
		Memory Requirements of {M}uller Conditions},
	booktitle = {{CSL}},
	volume    = {216},
	pages     = {12:1--12:17},
	year      = {2022},
	doi       = {10.4230/LIPIcs.CSL.2022.12}
  }

@Incollection{Kupferman.Handbook2018,
	author="Kupferman, Orna",
	editor="Clarke, Edmund M.
	and Henzinger, Thomas A.
	and Veith, Helmut
	and Bloem, Roderick",
	title="Automata Theory and Model Checking",
	bookTitle="{Handbook of Model Checking}",
	year="2018",
	publisher="Springer International Publishing",
	pages="107--151",
	doi="10.1007/978-3-319-10575-8_4"
	}

@inproceedings{Pnu.Ros.Synthesis1989,
	author = {Pnueli, Amir and Rosner, Roni},
	title = {On the Synthesis of a Reactive Module},
	year = {1989},
	doi = {10.1145/75277.75293},
	booktitle = {{POPL}},
	pages = {179–190}	
}

@article{Abu.Ehl.NPhard2025,
  author       = {Bader {Abu Radi} and
                  R{\"{u}}diger Ehlers},
  title        = {Characterizing the Polynomial-Time Minimizable {\(\omega\)}-Automata},
  journal      = {CoRR},
  volume       = {abs/2504.20553},
  year         = {2025},
  url          = {https://doi.org/10.48550/arXiv.2504.20553},
  doi          = {10.48550/ARXIV.2504.20553},
  eprinttype    = {arXiv},
  eprint       = {2504.20553},
  timestamp    = {Sun, 25 May 2025 20:50:37 +0200},
  biburl       = {https://dblp.org/rec/journals/corr/abs-2504-20553.bib},
  bibsource    = {dblp computer science bibliography, https://dblp.org}
}

@inproceedings{Schewe.MinimisingDet2010,
	author    = {Sven Schewe},
	title     = {Beyond Hyper-Minimisation---Minimising {DBA}s and {DPA}s is {NP}-Complete},
	booktitle = {{FSTTCS}},
	volume    = {8},
	pages     = {400--411},
	year      = {2010},
	doi       = {10.4230/LIPIcs.FSTTCS.2010.400}
}

@inproceedings{Lod.Wal.Congruences2025,
  author       = {Christof L{\"{o}}ding and
                  Igor Walukiewicz},
  title        = {Minimal History-Deterministic Co-{B}{\"{u}}chi Automata: Congruences and Passive Learning},
  booktitle    = {{LICS}},
  pages        = {431--443},
  publisher    = {{IEEE}},
  year         = {2025},
  doi          = {10.1109/LICS65433.2025.00039}
}

@article{Buchi.Decision1962,
	title={On a Decision Method in Restricted Second Order Arithmetic},
	author={B{\"{u}}chi, J. Richard},
	journal={{Proc. Internat. Congr. on Logic, Methodology and Philosophy of Science}},
	year={1962},
	pages={1--11}
}

@article{Ehlers.Rerailing2025,
  author       = {R{\"{u}}diger Ehlers},
  title        = {Rerailing Automata},
  journal      = {CoRR},
  volume       = {abs/2503.08438},
  year         = {2025},
  doi          = {10.48550/ARXIV.2503.08438},
  eprinttype    = {arXiv}
}

@incollection{Ehl.Kha.Fully2024,
  title = {Fully {{Generalized Reactivity}}(1) {{Synthesis}}},
  booktitle = {Tools and {{Algorithms}} for the {{Construction}} and {{Analysis}} of {{Systems}}},
  author = {Ehlers, R{\"u}diger and Khalimov, Ayrat},
  editor = {Finkbeiner, Bernd and Kov{\'a}cs, Laura},
  year = {2024},
  volume = {14570},
  pages = {83--102},
  doi = {10.1007/978-3-031-57246-3_6},
  urldate = {2024-10-25},
  isbn = {978-3-031-57245-6 978-3-031-57246-3},
  keywords = {history determinism}
}

@article{Ehl.Kha.NaturallyColored2024,
  title = {A {{Naturally-Colored Translation}} from {{LTL}} to {{Parity}} and {{COCOA}}},
  author = {Ehlers, R{\"u}diger and Khalimov, Ayrat},
  year = {2024},
  number = {arXiv:2410.01021},
  eprint = {2410.01021},
  primaryclass = {cs},
  publisher = {arXiv},
  urldate = {2024-10-04},
  archiveprefix = {arXiv},
  doi          = {10.48550/ARXIV.2410.01021},
  keywords = {history determinism}
}

@article{Ehl.Sch.Natural2022,
  title = {Natural {{Colors}} of {{Infinite Words}}},
  author = {Ehlers, R{\"u}diger and Schewe, Sven},
  year = {2022},
  journal = {LIPIcs, Volume 250, FSTTCS 2022},
  volume = {250},
  pages = {36:1-36:17},
  publisher = {Schloss Dagstuhl -- Leibniz-Zentrum f{\"u}r Informatik},
  issn = {1868-8969},
  doi = {10.4230/LIPICS.FSTTCS.2022.36},
  urldate = {2024-11-01},
  collaborator = {Dawar, Anuj and Guruswami, Venkatesan},
  copyright = {Creative Commons Attribution 4.0 International license, info:eu-repo/semantics/openAccess},
  isbn = {9783959772617},
  keywords = {history determinism}
}

@article{Abu.Kup.Minimization2022,
  title = {Minimization and Canonization of {{GFG}} Transition-Based Automata},
  author = {{Abu Radi}, Bader and Kupferman, Orna},
  year = {2022},
  journal = {Logical Methods in Computer Science},
  volume = {Volume 18, Issue 3},
  pages = {7587},
  issn = {1860-5974},
  doi = {10.46298/lmcs-18(3:16)2022},
  urldate = {2024-06-16},
  copyright = {https://creativecommons.org/licenses/by/4.0},
  keywords = {history determinism,learning}
}

@inproceedings{Bok.Leh.Alternating19,
  author       = {Udi Boker and
                  Karoliina Lehtinen},
  editor       = {Wan J. Fokkink and
                  Rob van Glabbeek},
  title        = {Good for Games Automata: From Nondeterminism to Alternation},
  booktitle    = {{CONCUR}},
  series       = {LIPIcs},
  volume       = {140},
  pages        = {19:1--19:16},
  year         = {2019},
  doi          = {10.4230/LIPICS.CONCUR.2019.19}
}

@article{Martin.Determinacy1975,
  author      = {Donald A. Martin},
  title       = {{Borel} Determinacy},
  journal     = {Annals of Mathematics},
  volume      = {102},
  number      = {2},
  year        = {1975},
  pages       = {363--371}
}

@article{Zielonka.Infinite1998,
	title={Infinite games on finitely coloured graphs with applications to automata on infinite trees},
	author={Zielonka, Wies{\l}aw},
	journal={{Theoretical Computer Science}},
	volume={200},
	number={1-2},
	pages={135--183},
	year={1998},
	doi = {10.1016/S0304-3975(98)00009-7}
}

@inproceedings{DJW.Memory1997,
	title={How much memory is needed to win infinite games?},
	author={Dziembowski, Stefan and Jurdzi{ń}ski, Marcin and Walukiewicz, Igor},
	booktitle={{LICS}},
	pages = {99--110},
	year={1997},
	doi = {10.1109/LICS.1997.614939}
}

@article{CCFL.ACD2024,
	author       = {Antonio Casares and
	Thomas Colcombet and
	Nathana{\"{e}}l Fijalkow and
	Karoliina Lehtinen},
	title        = {From {M}uller to Parity and {R}abin Automata: {O}ptimal Transformations
	Preserving (History) Determinism},
	journal      = {TheoretiCS},
	volume       = {3},
	year         = {2024},
	doi          = {10.46298/THEORETICS.24.12}
	}

@inproceedings{Hen.Pra.The.Resolving2025,
  author       = {Thomas A. Henzinger and
                  Aditya Prakash and
                  K. S. Thejaswini},
  editor       = {Pawel Gawrychowski and
                  Filip Mazowiecki and
                  Michal Skrzypczak},
  title        = {Resolving Nondeterminism with Randomness},
  booktitle    = {{MFCS}},
  series       = {LIPIcs},
  volume       = {345},
  pages        = {57:1--57:18},
  year         = {2025},
  doi          = {10.4230/LIPICS.MFCS.2025.57}
}

@article{Bai.Gro.Ber.Probabilistic2012,
author = {Baier, Christel and Gr\"{o}sser, Marcus and Bertrand, Nathalie},
title = {Probabilistic {\(\omega\)}--automata},
year = {2012},
issue_date = {February 2012},
publisher = {Association for Computing Machinery},
volume = {59},
number = {1},
issn = {0004-5411},
doi = {10.1145/2108242.2108243},
journal = {J. ACM},
articleno = {1},
numpages = {52}
}

@book{Gra.Tho.Wil.Book2002,
	editor="Grädel, Erich and Thomas, Wolfgang and Wilke, Thomas",
	title="Automata Logics, and Infinite Games",
	year="2002",
	publisher="Springer, Berlin, Heidelberg",
	doi="10.1007/3-540-36387-4"
}

@inproceedings{Cas.Ohl.Positional2024,
	author       = {Antonio Casares and
	Pierre Ohlmann},
	title        = {Positional {\(\omega\)}-regular languages},
	booktitle    = {LICS},
	pages        = {21:1--21:14},
	publisher    = {{ACM}},
	year         = {2024},
	doi          = {10.1145/3661814.3662087}
  }

@inproceedings{Kup.Skr.Determinisation2015,
	author = {Kuperberg, Denis and Skrzypczak, Micha{\l}},
	year = {2015},
	title = {On Determinisation of Good-for-Games Automata},
	booktitle={{ICALP}},
	pages = {299-310},
	doi = {10.1007/978-3-662-47666-6_24}
}

@inproceedings{CalbrixNP93,
  author       = {Hugues Calbrix and
                  Maurice Nivat and
                  Andreas Podelski},
  editor       = {Stephen D. Brookes and
                  Michael G. Main and
                  Austin Melton and
                  Michael W. Mislove and
                  David A. Schmidt},
  title        = {Ultimately Periodic Words of Rational \emph{w}-Languages},
  booktitle    = {Mathematical Foundations of Programming Semantics, 9th International
                  Conference, New Orleans, LA, USA, April 7-10, 1993, Proceedings},
  series       = {Lecture Notes in Computer Science},
  volume       = {802},
  pages        = {554--566},
  publisher    = {Springer},
  year         = {1993},
  url          = {https://doi.org/10.1007/3-540-58027-1\_27},
  doi          = {10.1007/3-540-58027-1\_27},
  timestamp    = {Tue, 14 May 2019 10:00:55 +0200},
  biburl       = {https://dblp.org/rec/conf/mfps/CalbrixNP93.bib},
  bibsource    = {dblp computer science bibliography, https://dblp.org}
}

@inproceedings{Klarlund94,
  author    = {Nils Klarlund},
  title     = {A Homomorphism Concepts for omega-Regularity},
  booktitle = {Computer Science Logic, 8th International Workshop, {CSL} '94, Kazimierz,
               Poland, September 25-30, 1994, Selected Papers},
  pages     = {471--485},
  year      = {1994},
  url       = {https://doi.org/10.1007/BFb0022276},
  doi       = {10.1007/BFb0022276},
  series    = {Lecture Notes in Computer Science},
  volume    = {933},
  publisher = {Springer}
}

@article{AngluinBF18,
  author       = {Dana Angluin and
                  Udi Boker and
                  Dana Fisman},
  title        = {Families of {DFAs} as Acceptors of {\(\omega\)}-Regular Languages},
  journal      = {Log. Methods Comput. Sci.},
  volume       = {14},
  number       = {1},
  year         = {2018},
  url          = {https://doi.org/10.23638/LMCS-14(1:15)2018},
  doi          = {10.23638/LMCS-14(1:15)2018},
  timestamp    = {Thu, 18 Jun 2020 22:15:50 +0200},
  biburl       = {https://dblp.org/rec/journals/lmcs/AngluinBF18.bib},
  bibsource    = {dblp computer science bibliography, https://dblp.org}
}

@inproceedings{Per.Pin.OmegaSemigroup1995,
  author    = {Perrin, Dominique and Pin, Jean-\'{E}ric},
  title     = {Semigroups and automata on infinite words},
  booktitle = {Semigroups, formal languages and groups (York, 1993)},
  pages     = {49--72},
  year      = {1995},
  volume    = {933},
  publisher = {Kluwer Acad. Publ.}
}

@inproceedings{HolzerK02,
  author       = {Markus Holzer and
                  Barbara K{\"{o}}nig},
  title        = {On Deterministic Finite Automata and Syntactic Monoid Size},
  booktitle    = {Developments in Language Theory, {DLT}
                  2002},
  pages        = {258--269},
  year         = {2002},
  url          = {https://doi.org/10.1007/3-540-45005-X\_22},
  doi          = {10.1007/3-540-45005-X\_22},
  series       = {Lecture Notes in Computer Science},
  volume       = {2450},
  publisher    = {Springer},
}

@inproceedings{IdirL25,
  author       = {Olivier Idir and
                  Karoliina Lehtinen},
  title        = {Using Games and Universal Trees to Characterise the Nondeterministic
                  Index of Tree Languages},
  booktitle    = {52nd International Colloquium on Automata, Languages, and Programming,
                  {ICALP} 2025},
  pages        = {160:1--160:19},
  year         = {2025},
  url          = {https://doi.org/10.4230/LIPIcs.ICALP.2025.160},
  doi          = {10.4230/LIPICS.ICALP.2025.160},
  series       = {LIPIcs},
  volume       = {334},
  publisher    = {Schloss Dagstuhl - Leibniz-Zentrum f{\"{u}}r Informatik},
}

@inproceedings{ColcombetL08,
  author       = {Thomas Colcombet and
                  Christof L{\"{o}}ding},
  title        = {The Non-deterministic {M}ostowski Hierarchy and Distance-Parity Automata},
  booktitle    = {Automata, Languages and Programming, 35th International Colloquium,
                  {ICALP} 2008},
  pages        = {398--409},
  year         = {2008},
  url          = {https://doi.org/10.1007/978-3-540-70583-3\_33},
  doi          = {10.1007/978-3-540-70583-3\_33},
  series       = {Lecture Notes in Computer Science},
  volume       = {5126},
  publisher    = {Springer},
}

@inproceedings{AvniK15,
  author       = {Guy Avni and
                  Orna Kupferman},
  title        = {Stochastization of Weighted Automata},
  booktitle    = {Mathematical Foundations of Computer Science 2015 - 40th International
                  Symposium, {MFCS} 2015, Milan, Italy, August 24-28, 2015, Proceedings,
                  Part {I}},
  pages        = {89--102},
  year         = {2015},
  url          = {https://doi.org/10.1007/978-3-662-48057-1\_7},
  doi          = {10.1007/978-3-662-48057-1\_7},
  series       = {Lecture Notes in Computer Science},
  volume       = {9234},
  publisher    = {Springer},
}

\end{document}